\documentclass[12pt,onecolumn,a4paper]{article}
\usepackage[ breaklinks,
			 colorlinks = true,
             linkcolor = blue,
             urlcolor  = blue,
             citecolor = red,
             anchorcolor = blue,
]{hyperref}
\usepackage[left=1in,right=1in,top=1in,bottom=1in]{geometry}
\usepackage{cite}

\usepackage[noblocks]{authblk}
\usepackage[dvipsnames]{xcolor}
\usepackage{comment}
\usepackage{bm}

\usepackage{subcaption}
\usepackage{hyperref}
\usepackage{enumerate}

\usepackage{amsfonts}
\usepackage{amsmath}
\usepackage{graphicx}
 
\usepackage{mathtools}
\mathtoolsset{showonlyrefs}

\usepackage{dcolumn}

\usepackage{enumerate}
\usepackage{amssymb}
\usepackage{amsthm}
\usepackage{calc} 
\usepackage{accents}

\newcommand{\Mod}[1]{\ (\mathrm{mod}\ #1)}

\newcommand{\dtilde}[1]{\tilde{\raisebox{0pt}[0.90\height]{$\tilde{#1}$}}}

\newcommand{\appref}[1]{\hyperref[#1]{{Appendix~\ref*{#1}}}}
\newcommand{\be}{\begin{eqnarray} \begin{aligned}}
\newcommand{\ee}{\end{aligned} \end{eqnarray} }
\newcommand{\benn}{\begin{eqnarray*} \begin{aligned}}
\newcommand{\eenn}{\end{aligned} \end{eqnarray*}}
\newcommand*{\textfrac}[2]{{{#1}/{#2}}}

\newcommand*{\cA}{\mathcal{A}} 
\newcommand*{\cB}{\mathcal{B}}
\newcommand*{\cC}{\mathcal{C}}
\newcommand*{\cE}{\mathcal{E}}
\newcommand*{\cF}{\mathcal{F}}

\newcommand*{\cH}{\mathcal{H}}

\newcommand*{\cN}{\mathcal{N}}

\newcommand*{\cR}{\mathcal{R}}
\newcommand*{\cO}{\mathcal{O}}

\newcommand*{\cS}{\mathcal{S}}

\newcommand*{\cX}{\mathcal{X}}

\newcommand*{\tr}{\mathop{\mathrm{tr}}\nolimits}

\newcommand*{\supp}{\mathrm{supp}}
\newcommand*{\myspan}{\mathrm{span}}

\newcommand{\bc}{\begin{center}}
\newcommand{\ec}{\end{center}}


\makeatletter
\newtheorem*{rep@theorem}{\rep@title}
\newcommand{\newreptheorem}[2]{%
\newenvironment{rep#1}[1]{%
 \def\rep@title{#2 \ref{##1}}%
 \begin{rep@theorem}}%
 {\end{rep@theorem}}}
\makeatother

\newtheorem{theorem}{Theorem}[section]
\newreptheorem{theorem}{Theorem}
\newtheorem{lemma}[theorem]{Lemma}
\newreptheorem{lemma}{Lemma}

\newreptheorem{claim}{Claim}
\newtheorem{definition}[theorem]{Definition}
\newreptheorem{definition}{Definition}

\newreptheorem{remark}{Remark}
\newtheorem{corollary}[theorem]{Corollary}
\newreptheorem{corollary}{Corollary}

\newreptheorem{observation}{Observation}

\newcommand{\myacknowledgements}{\begin{center}{\bf Acknowledgments}\end{center}\par}


\usepackage{amsfonts}

\def\01{\{0,1\}}

\newcommand{\ket}[1]{|#1\rangle}

\newcommand{\kket}[1]{|#1\rangle\hspace{-0.5ex}\rangle}
\newcommand{\bbra}[1]{\langle\hspace{-0.5ex}\langle#1|}
\newcommand{\sspr}[2]{\langle\hspace{-0.5ex}\langle#1|#2\rangle\hspace{-0.5ex}\rangle}

\newcommand*{\mylabel}[1]{\label{#1}}
\newcommand{\bra}[1]{\langle#1|}

\newcommand{\proj}[1]{|#1\rangle\langle#1|}


\newcommand{\rank}{\operatorname{rank}}

\newcommand*{\physical}{\mathsf{p}}

\newcommand*{\bond}{D}

\newcommand*{\systemsize}{n}


\newcommand*{\hspin}{\mathbf{s}}  
\newcommand*{\halfj}{\mathbf{j}}  
\newcommand*{\identityoperator}{I}

\usepackage[symbol]{footmisc}

\begin{document}

\title{\LARGE Quantum Error-Detection at Low Energies}

\author[1]{Martina Gschwendtner\thanks{martina.gschwendtner@tum.de}}
\author[1,2]{Robert K\"onig\thanks{robert.koenig@tum.de}}
\author[3]{Burak \c{S}ahino\u{g}lu\thanks{sahinoglu@caltech.edu}}
\author[3]{Eugene Tang\thanks{eugene.tang@caltech.edu}}
\affil[1]{\footnotesize Zentrum Mathematik, Technical University of Munich, 85748 Garching, Germany}
\affil[2]{\footnotesize Institute for Advanced Study, Technical University of Munich, 85748 Garching, Germany}
\affil[3]{\footnotesize Department of Physics and Institute for Quantum Information and Matter, California Institute of Technology, Pasadena, CA 91125, USA}

\date{\today}
\maketitle
\begin{abstract}

Motivated by the close relationship between quantum error-correction, topological order, the holographic AdS/CFT duality, and tensor networks, we initiate the study of approximate quantum error-detecting codes in matrix product states (MPS). We first show that using open-boundary MPS to define boundary to bulk encoding maps yields at most constant distance error-detecting  codes. These are degenerate ground spaces of gapped local Hamiltonians. To get around this no-go result, we consider excited states, i.e., we use the excitation ansatz to construct encoding maps: these yield error-detecting codes with distance $\Omega(n^{1-\nu})$ for any $\nu\in (0,1)$ and $\Omega(\log n)$ encoded qubits. This shows that gapped systems contain -- within isolated energy bands -- error-detecting codes spanned by momentum eigenstates. We also consider the gapless Heisenberg-XXX model, whose energy eigenstates can be described via Bethe ansatz tensor networks. We show that it contains -- within its low-energy eigenspace -- an error-detecting code with the same parameter scaling.  All these codes detect arbitrary $d$-local (not necessarily geometrically local) errors even though they are not permutation-invariant. This suggests that a wide range of naturally occurring many-body systems possess intrinsic error-detecting features.

\end{abstract}
\newcommand*{\ad}[1]{\mathsf{Ad}_{#1}}
\newcommand{\Vod}{V_{\text{od}}}
\newcommand{\Vd}{V_{\text{d}}}
\newcommand*{\flip}{\mathbb{F}}
\newcommand*{\hP}{\hat{P}}
\newcommand*{\E}{\mathbb{E}}
\newcommand*{\diam}{\mathsf{diam}}
\newcommand*{\delt}{p_{\mathsf{dec}}}
\newcommand*{\mpo}{\mathcal{O}}
\newcommand*{\F}{\mathbb{F}}

\newcommand*{\err}{\mathsf{err}}
\newcommand*{\errapprox}{\epsilon_{\mathsf{approx}}}

\newpage
\tableofcontents

\newpage

\section{Introduction}
Quantum error-correcting codes are fundamental for achieving robust quantum memories and fault-tolerant quantum computation. Following seminal work by Shor~\cite{shor1995scheme} and others~\cite{calderbank1996good, kitaev1997quantum, knill1997theory, knill1998resilient},  the study of quantum error-correction has seen tremendous progress both from both the theoretical and the experimental point of view. Beyond its operational implications for the use of faulty quantum hardware, quantum error-correction is closely connected to fundamental physics, as shown early on by the work of Kitaev~\cite{kitaev2003fault}: the ground space of a topologically ordered model constitutes a quantum error-correcting code whose dimension depends on the topology of the underlying surface containing the physical degrees of freedom.  In addition to giving rise to a new field called topological quantum computing~\cite{freedman2003topological,ogburn1999topological,dennis2002topological, nayak2008non,raussendorf2006fault,stern2013topological,terhal2015quantum}, this work has had a significant impact on the problem of classifying topologically ordered phases in two spatial dimensions~\cite{kitaev2006anyons, levin2005string}. Motivated by the success of this program, follow-up work has pursued the classification of gapped phases of matter  with or without global symmetries, starting from one spatial dimension~\cite{kitaev2009periodic, fidkowski2011topological, chen2011classification, chen2011complete} up to arbitrarily high dimensions~\cite{chen2011two, chen2013symmetry}.

More recently, concepts from quantum error-correction have helped to resolve conceptual puzzles in AdS/CFT holographic duality. Almheiri, Dong, and Harlow~\cite{almheiri15}  have proposed that subspaces of holographic conformal field theories (CFTs) which are dual to perturbations around a particular classical bulk AdS geometry constitute a quantum error-correcting code robust against erasure errors. In this proposal, the bulk and boundary degrees of freedom correspond to the logical and the physical degrees of freedom of the code, respectively. Puzzling features such as subregion-subregion duality and radial commutativity can naturally be understood in this language, under the hypothesis that the duality map works as a code which recovers, from erasure, part of the boundary degrees of freedom. Related to this picture, Ryu-Takayanagi type  formulas have been shown to hold in any quantum error-correcting code that corrects against erasure~\cite{harlow2017ryu}. 

Key to many of these results in the context of topological order and the AdS/CFT holographic duality is the language of tensor networks. The latter, originating in work by Fannes, Nachtergaele, and Werner on finitely correlated states~\cite{fannes1992finitely} and the density matrix renormalization group~\cite{white1992density, white1993density}, has  seen a revival in the last 15~years. Major conceptual contributions include the introduction of matrix product states by~\cite{vidal2003efficient, vidal2004efficient,perez2006matrix, verstraete2004renormalization, verstraete2008matrix}, the introduction of the multi-scale entanglement renormalization ansatz (MERA)~\cite{vidal2008class} by Vidal, and various projected entangled-pair states (PEPS) techniques~\cite{verstraete2004renormalization,verstraete2006projected, perez2007peps, kraus2010fermionic, schwarz2012preparing, fishman2018faster} for higher dimensional systems. 

It has been shown that tensor network techniques provide exact descriptions of topologically ordered states~\cite{buerschaper2009explicit, gu2009tensor, konig2009exact}, and furthermore, tensor networks have been instrumental in the characterization and classification of topological order~\cite{schuch2010peps, buerschaper2014twisted,csahinouglu2014characterizing, williamson2016matrix, bultinck2017anyons}. This approach has also been generalized to higher dimensions, clarifying the   connections to topological quantum field theories~\cite{TN-TQFT}.

A similar success story for the use of tensor networks is emerging in the area of AdS/CFT duality. Aspects of holographic duality have been explored in terms of toy models based on tensor networks~\cite{swingle2012entanglement, pastawski2015holographic, hayden2016holographic}. Indeed, many (though not all) conjectured features of this duality can be recovered in these examples. This field, while still in its infancy, has provided new appealing conjectures which point to a potentially more concrete understanding of the yet to be uncovered physics of quantum gravity~\cite{akers2018holographic, dong2018flat}.

Given the existing close connections between quantum error-correction and a variety of physical systems ranging from topological order to AdS/CFT, it is natural to ask how generic the appearance of error-correcting features is  in naturally occurring quantum many-body systems. A first step towards showing the ubiquity of such features is the work of Brandao, et. al.~\cite{brandao2017quantum}. There, it is shown that quantum chaotic systems satisfying the Eigenstate Thermalization Hypothesis (ETH) have energy eigenstates that form approximate quantum error-correcting codes. Nearby extensive energy eigenstates of 1D translation invariant Hamiltonians, as well as ground spaces of certain gapless systems (including the Heisenberg and Motzkin models), also contain approximate quantum error-correcting codes. Motivated by this work, we ask if one can demonstrate the existence of error-correcting codes within the low-energy eigenspaces of generic Hamiltonians, whether or not they are gapped or gapless. Specifically, we ask this question for 1D~systems.

Our work goes beyond earlier work by considering errors (that is, noise) of a more general form: existing studies of error-correction in the context of entanglement renormalization and/or holography have primarily concentrated on qubit loss, modeled by so-called erasure errors (see e.g.,~\cite{pastawski2015holographic,Kim2017, pastawski2017towards}). This erasure noise model has several theoretical advantages. In particular, it permits one to argue about the existence of recovery maps in terms of entanglement entropies of the associated erased regions. This can be connected to well-known results on entanglement entropies in critical 1D~systems. Furthermore, the appearance of entanglement entropies in these considerations is natural in the context of the AdS/CFT duality, where these quantities are involved in the connection of the boundary field theory to the bulk geometry via the Ryu-Takayangi formula. However, compared to other forms of errors typically studied in the quantum fault-tolerance community, erasure is quite a restricted form of noise: it is, in a certain sense, much easier to correct than, e.g.,~depolarizing noise. As an example to illustrate this point, we recall that the toric code can recover from loss of half its qubits~\cite{toriccodequbitloss}, whereas it can only tolerate depolarizing noise up to a noise rate of 11\% even  given perfect syndrome measurements~\cite{dennis2002topological}. Motivated by this, we aim to analyze error-correcting properties with respect to more generic noise even though this precludes the use of entanglement entropies. Again, the work~\cite{brandao2017quantum} provides first results in this direction by considering errors on a fixed, connected subset of sites (that is,  geometrically localized errors). The restriction to a connected subset was motivated in part by the consideration of permutation-invariant subspaces (note other previous works on permutation-invariant code spaces~\cite{pollatsek2004permutationally, ouyang2014permutation}). In our work, we lift the restriction to permutation-invariant codes and instead analyze arbitrary weight-$d$ errors with potentially disconnected supports. Furthermore, we study an operational task -- that of error-detection -- with respect to a noise model where errors can occur on any subset of qubits of a certain size, instead of only a fixed subset.

We find that the language of matrix product states (MPS) and the related excitation ansatz states provides a powerful analytical tool for studying error-detection in 1D systems. In particular, we relate properties of  transfer operators  to error-detection features: for MPS describing (degenerate) ground spaces of gapped Hamiltonians, injectivity of the transfer operators gives rise to a no-go theorem. For excitation ansatz states describing the low-energy excitations of gapped systems, we use injectivity and a certain normal form to establish error-correction properties.  Finally, for a gapless integrable model, we analyze the Jordan structure of (generalized) transfer matrices to find bounds on code parameters. In this way, our work connects locally defined features of tensor networks to global error-correction properties. This can be seen as a first step in an organized program of studying approximate quantum error-correction in tensor network states.

\section{Our contribution}
We focus on error-detection, a natural primitive in fault-tolerant quantum computation. Contrary to full error-correction, where the goal is to recover the initial encoded state from its corrupted version, error-detection  merely permits one to decide whether or not an error has occurred. Errors (such as local observables) detected by an error-detecting code have expectation values independent of the particular logical state. In the context of topological order, where local errors are considered, error-detection has been referred to as TQO-$1$ (topological quantum order condition~$1$); see, e.g.,~\cite{bravyihastingsmichalakis}. An approximate version of the latter is discussed in~\cite{hastings11}.  

A code, i.e., a subspace of the physical Hilbert space, is said to be error-detecting (for a set of errors) if the projection back onto the code space after the application of an error results in the original encoded state, up to normalization. Operationally, this means that one can ensure that no error occurred by performing a binary-outcome POVM consisting of the projection onto the code space or its complement. This notion of an error-detecting code is standard, though quite stringent: unless the code is constructed algebraically (e.g., in terms of Pauli operators), it is typically not going to have this property.

Our first contribution is  a relaxed, yet  still operationally meaningful  definition for approximate error-detection. It relaxes the former notion in two directions: first, the post-measurement state is only required to approximate the original encoded state. Second, we only demand that this approximation condition is satisfied if the projection onto the code space occurs with non-negligible probability. This is motivated by the fact that if this projection does not succeed with any significant probability, the error-detection measurement has little effect (by the gentle measurement lemma~\cite{gentlemeasurementlemma}) and may as well be omitted. More precisely, we consider a CPTP map $\cN:\cB((\mathbb{C}^\physical)^{\otimes n})\rightarrow\cB((\mathbb{C}^\physical)^{\otimes n})$ modeling noise on~$n$ physical qudits (of dimension~$\physical$). Here the Kraus operators of~$\cN$ take the role of errors (considered in the original definition). We define the following notion:
\begin{repdefinition}{def:errdet1}[Approximate quantum error-detecting code] 
A subspace $\cC\subset(\mathbb{C}^\physical)^{\otimes n}$ (with associated projection $P$) is an $(\epsilon,\delta)$-approximate error-detecting code for $\cN$ if for any state~$\ket{\Psi}\in\cC$ the following holds:
\begin{align}
\textrm{if}\qquad \tr(P\cN(\proj{\Psi}))\geq \delta\qquad\textrm{then}\qquad  \bra{\Psi}\rho_{\cN,P}\ket{\Psi}\geq 1-\epsilon\ ,
\end{align}
where $\rho_{\cN,P}=\tr(P\cN(\proj{\Psi}))^{-1}\cdot P\cN(\proj{\Psi})P$.
\end{repdefinition}
This definition ensures that the post-measurement state~$\rho_{\cN,P}$ is close (as quantified by~$\epsilon$) to the initial code state when the outcome of the POVM is~$P$. Furthermore, we only demand this in the case where $\cN(\proj{\Psi})$ has an overlap with the code space of at least~$\delta$. 

In the following, we often consider families of codes~$\{\cC_n\}_n$ indexed by the number~$n$ of physical spins. In this case, we demand that both approximation parameters $\epsilon_n$ and $\delta_n$ tend to zero as $n\rightarrow\infty$. This is how we make sure that we have a working error-detecting code in the asymptotic or thermodynamic limit of the physical Hilbert space.

Of particular interest are errors of weight~$d$, i.e., errors which only act non-trivially on a subset of~$d$ of the $n$~subsystems in the product space~$(\mathbb{C}^\physical)^{\otimes n}$. We call this subset the \it support \rm of the error, and refer to the error as {\em  $d$-local}. We emphasize that throughout this paper, $d$-local only refers to the weight of the errors: they do not need to be geometrically local, i.e., their support may be disconnected. In contrast, earlier work on approximate error-correction such as~\cite{brandao2017quantum} only considered errors with support on a (fixed) connected subset of $d$~sites. We then define the following:
\begin{repdefinition}{def:errdet2}[Error-detection for $d$-local errors]
A subspace $\cC\subset(\mathbb{C}^\physical)^{\otimes n}$  is called an $(\epsilon,\delta)[[n,k,d]]$-approximate quantum error-detecting code (AQEDC) if $\dim\cC=\physical^k$ and if~$\cC$ is an $(\epsilon,\delta)$-approximate error-detecting code for any CPTP map $\cN:\cB((\mathbb{C}^\physical)^{\otimes n})\rightarrow\cB((\mathbb{C}^\physical)^{\otimes n})$ of the form
\begin{align}
\cN(\rho)=\sum_{j\in [J]} p_j F_j\rho F_j^{\dagger}\ ,\label{eq:cnrhoxzv}
\end{align}
where each $F_j$ is a $d$-local operator with $\|F_j\|\leq 1$ and $\{p_j\}_{j\in [J]}$ is a probability distribution. We refer to~$d$ as the {\em distance} of the code.
\end{repdefinition}
In other words, an $(\epsilon,\delta)[[n,k,d]]$-AQEDC deals with error channels which are convex combinations of $d$-local errors. This includes for example the commonly considered case of random Pauli noise (assuming the distribution is supported on errors having weight at most~$d$). However, it does not cover the most general case of (arbitrary) $d$-local errors/error channels because of the restriction to convex combinations. The consideration of convex combinations of $d$-local errors greatly facilitates our estimates and allows us to consider settings that go beyond earlier work. We leave it as an open problem to lift this restriction, and only provide some tentative statements in this direction.

To exemplify in what sense our definition of AQEDC for $d$-local errors extends earlier considerations, consider the case where the distribution over errors in~\eqref{eq:cnrhoxzv} is the uniform distribution over all $d$-qudit Pauli errors on $n$~qubits. In this case, the number of Kraus operators in the representation~\eqref{eq:cnrhoxzv} is polynomial in~$n$ even for constant distance~$d$. In particular, arguments involving the number of terms in~\eqref{eq:cnrhoxzv} cannot be used to establish bounds on the code distance as in~\cite{brandao2017quantum}, where instead, only Pauli errors acting on $d$~fixed sites were considered: The number of such operators is only~$4^d$ instead of the number $\binom{n}{d}4^d$ of all weight-$\leq d$-Paulis, and, in particular, does not depend on the system size~$n$.

We establish the following approximate Knill-Laflamme type conditions which are sufficient for error-detection: 
\begin{repcorollary}{cor:errordectionapproximate}
Let $\cC\subset (\mathbb{C}^{\physical})^{\otimes n}$ be a code with orthonormal basis $\{\psi_\alpha\}_{\alpha\in [\physical^k]}$ such that (for some~$\gamma>0$),
\begin{align}
\big|\bra{\psi_\alpha} F\ket{\psi_\beta}-\delta_{\alpha,\beta}\bra{\psi_1} F\ket{\psi_1}\big|\leq \gamma\cdot \|F\|\qquad\textrm{ for all }\alpha,\beta\in [\physical^k]\ ,
\end{align}
for every $d$-local operator $F$ on~$(\mathbb{C}^{\physical})^{\otimes n}$. Let $\delta > \physical^{5k}\gamma^2$. Then $\cC$ is an $(\epsilon=\physical^{5k}\gamma^2 \delta^{-1},\delta)[[n,k,d]]$-AQEDC.
\end{repcorollary}
This condition, which is applicable for ``small'' code space dimension, i.e., $k=O(\log n)$, allows us to reduce the consideration of approximate error-detection to the estimation of matrix elements of local operators. We also establish a partial converse to this statement: if a subspace~$\cC\subset(\mathbb{C}^{\physical})^{\otimes n}$ contains two orthonormal vectors whose reduced $d$-local density operators (for some subset of $d$~sites) are almost orthogonal, then~$\cC$ cannot be an error-detecting code with distance~$d$ (see Lemma~\ref{lem:distancenecessary} for a precise statement).

Equipped with these notions of approximate error-detection, we study quantum many-body systems in terms of their error-detecting properties using tensor network techniques. More specifically, we consider two types of code families, namely:
\begin{enumerate}[(i)]
\item\label{it:tnscodesdef}
codes that are degenerate ground spaces of local Hamiltonians and permits a description in terms of tensor networks, and 
\item\label{it:generalcodesdefmps}
codes defined by low-energy eigenstates of (geometrically) local Hamiltonians, with the property that these can be efficiently described  in terms of tensor networks.
\end{enumerate}
As we explain below,~\eqref{it:tnscodesdef} and~\eqref{it:generalcodesdefmps} are closely connected via the parent Hamiltonian construction. For~\eqref{it:tnscodesdef}, we follow a correspondence between tensor networks and codes which is implicit in many existing constructions: we may think of a tensor as a map from certain virtual to physical degrees of freedom. To define this map, consider a tensor network given by a graph~$G=(V,E)$ and a collection of tensors~$A$. Let us say that an edge~$e\in E$ is a dangling edge if one of its vertices has degree~$1$, and let us call the corresponding vertices the dangling vertices of the tensor network. An edge $e\in E$ is an internal edge if it is not a dangling edge; we use an analogous notion for vertices. We assume that each internal edge~$e\in E$ is associated a virtual space of fixed bond dimension~$D$, and each dangling edge with a physical degree of dimension~$\physical$.  Then the tensor network associates a tensor~$T$ of degree~$\mathsf{deg}(v)$ to each internal vertex~$v$ of~$G$, where it is understood that indices corresponding to internal edges are contracted. The tensor network is fully specified by the family~$A$ of such tensors. We partition the set of dangling vertices into a two subsets $M$ and $M^c$.  Then the tensor network defines a map~$\Gamma(A,G):(\mathbb{C}^\physical)^{\otimes |M|}\rightarrow (\mathbb{C}^\physical)^{\otimes |M^c|}$ as each fixing of the  degrees of freedom in $M$ defines an element of the Hilbert space associated with the degrees of freedom in~$M^c$ by tensor contraction. That is, the map depends on the graph~$G$ specifying the structure of the tensor network, as well as the family~$A$ of local tensors.  In particular, fixing a subspace of~$(\mathbb{C}^\physical)^{\otimes |M|}$,  its image under the map~$\Gamma(A,G)$ defines a subspace~$\cC\subset(\mathbb{C}^\physical)^{\otimes |M^c|}$ which we will think of as an error-correcting code.  In the following, we also allow the physical and virtual (bond) dimensions to vary (depending on the location in the tensor network); however, this description captures the essential construction.

This type of construction is successful in two and higher spatial dimensions, yielding error-correcting codes with macroscopic distance: examples are the ground states of the toric code~\cite{aguadovidal08,schuch2010peps} and other topologically ordered models~\cite{konig2009exact,buerschaper2009explicit,buerschaper2014twisted}. However, in 1D, it seems a priori unlikely that the very same setup can generate any nontrivial quantum error-detection code, at least for gapped systems. This is because of the exponential decay of correlations~\cite{hastings2006spectral, brandao2013area, brandao2015exponential} and the lack of topological order without symmetry protection~\cite{chen2011classification,schuch2011classifying}. We make this precise by stating and proving a no-go theorem. 

More precisely, we follow the above setup provided by the boundary-to-bulk tensor network map $\Gamma(A)=\Gamma(A,G)$. Here, $G$ is the $1D$~line graph with dangling edges attached to internal vertices, which is equivalent to considering the ground space of 1D local gapped Hamiltonians with open boundary conditions. The associated tensor network is a matrix product state.

Generic MPS satisfy a condition called injectivity, which is equivalent to saying that the transfer matrix of the MPS is gapped. Exploiting this property allows us to prove a lower bound on the distinguishability of $d$-local reduced density operators for any two orthogonal states in the code space. This bound is expressed in terms of the virtual bond dimension~$D$ of the  MPS  tensor $A$. In particular, the bound implies the following no-go theorem for codes generated by MPS as described above.

\begin{reptheorem}{thm:nogo}
Let $\mathcal{C}\subset (\mathbb{C}^{\physical})^{\otimes n}$ be an approximate quantum error-detecting code generated by $\Gamma(A)$, i.e., a translation-invariant injective MPS of constant bond dimension $D$ by varying boundary conditions. Then the distance of $\mathcal{C}$ is constant.
\end{reptheorem}

The physical interpretation of this theorem is as follows: for every injective MPS with periodic boundary conditions, there exists a strictly $\log D$-geometrically local gapped Hamiltonian such that the MPS is the unique ground state~\cite{perez2006matrix}. One can further enlarge the ground space by leaving out a few Hamiltonian terms near the boundary. The degeneracy then depends on the number of terms omitted, and the ground states are described by open boundary condition MPS. Then, our no-go theorem implies that the ground space of any such parent Hamiltonian arising from such a constant bond-dimension MPS is a trivial code, i.e., it can have at most a constant distance. This result is equivalent to saying that there is no topological quantum order in the ground space of 1D gapped systems.\footnote{More precisely, this statement holds for systems whose ground states can be approximated by constant bond dimension MPS. It is not clear whether this is sufficient to  make a statement about general 1D local gapped Hamiltonians. The identification of ground states of 1D local gapped Hamiltonians with constant bond dimension MPS is sometimes made in the literature, as for example in the context of classifying phases~\cite{chen2011classification, chen2011complete,schuch2011classifying}.} 

To get around this no-go result, we extend our considerations beyond the ground space and include low-energy subspaces in the code space. We show that this indeed leads to error-detecting codes with macroscopic distance. We identify two ways of constructing nontrivial codes by either considering single-particle excitations of varying momenta, or by considering multi-particle excitations above the ground space. Both constructions provide us with codes having distance scaling asymptotically significantly better than what can be achieved in the setup of our no-go theorem. In fact, the code distance is a polynomial arbitrarily close to  linear in the system size (i.e.,~$n$) in both cases.

Our first approach, using states of different momenta, involves the formalism of the excitation ansatz (see Section~\ref{sec:QEDClow} for a review). This gives a tensor network parametrization of momentum eigenstates associated with a Hamiltonian having quasi-particle excitations. We show the following:
\begin{reptheorem}{thm:excitationansatzparams}
Let $\nu \in (0,1)$ and let $\kappa,\lambda>0$ be such that 
\begin{align}
5\kappa+\lambda < \nu \ .
\end{align}
Let $A,B(p)$ be tensors associated with an injective excitation ansatz state~$\ket{\Phi_p(B;A)}$, where $p$ is the momentum of the state.  Then there is a subspace~$\cC\subset (\mathbb{C}^\physical)^{\otimes n}$ spanned by excitation ansatz states~$\{\ket{\Phi_p(B;A)}\}_p$ with different momenta~$p$ such that $\cC$ is an $(\epsilon,\delta)[[n,k,d]]$-AQEDC with parameters
\begin{align}
k&=\kappa \log_{\physical} n\ ,\\
d&=n^{1-\nu}\ ,\\
\epsilon &=\Theta(n^{-(\nu-(5\kappa+\lambda))})\ ,\\
\delta&=n^{-\lambda}\ .
\end{align}
\end{reptheorem} 
The physical interpretation of this result stems from the fact that excitation ansatz states approximate quasi-particle excitations: given a local gapped Hamiltonian, assuming a good MPS approximation to its ground state, we can construct an arbitrarily good approximation to its isolated quasi-particle excitation bands by the excitation ansatz. This approximation guarantee is shown using Lieb-Robinson type bounds~\cite{hastings2006spectral, hastings2006solving} based on  a previous result~\cite{haegeman2013elementary} which employs the method of energy filtering operators. Thus our result demonstrates that generic low-energy subspaces contain approximate error-detecting codes with the above parameters. Also, note that unlike the codes considered in \cite[Theorem 1]{brandao2017quantum}, the excitation ansatz codes are comprised of finite energy states, and not finite energy \it density \rm states.

We remark that the choice of momenta is irrelevant for this result; it is not necessary to restrict to nearby momenta. Instead, any subset of momentum eigenstates can be used. The only limitation here is that the number of different momenta is bounded by the dimension of the code space. This is related to the fact that localized wave functions (which would lead to a non-extensive code distance) are a superposition of many different momenta, a fact formally expressed by the position-momentum uncertainty relation.

Our second approach for side-stepping the no-go theorem is to consider multi-particle excitations.  We consider a specific model, the periodic Heisenberg-XXX spin chain Hamiltonian $H$ on $n$~qubits. We find that there are good error-detecting codes within the low-energy subspace of this system. For this purpose, we consider the state
\begin{align}
\ket{\Psi}=\sum_{m=1}^n \omega^m \hspin_m^-\ket{1}^{\otimes n}\qquad\textrm{ where }\qquad \omega=e^{2\pi i/n},\label{eq:onemagnonstateintrox} 
\end{align}
and where~$\hspin_m^-=\ket{0}\bra{1}$ changes the state of the $m$-th spin from $\ket{1}$ to $\ket{0}$. This has  energy $O(1/n^2)$ above the ground state energy of~$H$. The corresponding eigenspace is degenerate and contains all ``descendants'' $S_-^r\ket{\Psi}$ for $r\in \{0,\ldots,n-2\}$, where $S_-=\sum_{m=1}^n \hspin_m^-$ is the (total) spin lowering operator. We also note that each state~$S_-^r\ket{\Psi}$ has fixed momentum~$2\pi/n$, and that~$r$ directly corresponds to its total magnetization. We emphasize that these states are, in particular, not permutation-invariant. Our main result concerning these states is the following: 

\begin{reptheorem}{thm:magnoncodemain}
Let $\nu \in (0,1)$ and $\kappa,\lambda>0$ be such that
\begin{align}
6\kappa+\lambda<\nu\ .
\end{align}
Then there is a subspace~$\cC$ spanned by descendant states $\{S_-^r\ket{\Psi}\}_r$ with magnetization~$r$ pairwise differing by at least~$2$ such that  $\cC$ is an $(\epsilon,\delta)[[n,k,d]]$-AQEDC with parameters
\begin{align}
k&=\kappa \log_2 n\ ,\\
d&=n^{1-\nu}\ ,\\
\epsilon &=\Theta(n^{-(\nu-(6\kappa+\lambda))})\ ,\\
\delta&=n^{-\lambda}\ .
\end{align}
\end{reptheorem}

This code, which we call the  magnon-code, can also be seen to be realized by tensor networks. The state~\eqref{eq:onemagnonstateintrox} has an MPS description with bond dimension~$2$ and the  descendants~$S_-^r\ket{\Psi}$ can be expressed using a matrix-product operator (MPO) description of the operator~$S_-$.  More generally, it is known that these states form an example of the algebraic Bethe ansatz, and the latter have a natural tensor network description~\cite{verstraetekorepin}. This suggests that our results may generalize to other exactly solvable models.

\subsubsection*{Outline}
The paper is organized as follows. 
We discuss our notion of approximate error-detection and establish sufficient and necessary conditions in Section~\ref{AQED}. In Section~\ref{sec:expectationmpsops}, we review the basics of matrix product states. We also establish bounds on expectation values in terms of properties of the associated transfer operators. In Section~\ref{sec:nogo}, we prove our no-go theorem and show the limits of error-detection for code spaces limited to the ground space of a gapped local Hamiltonian. We then consider low-energy eigenstates of local Hamiltonians and show how they perform asymptotically better than the limits of the no-go theorem. We first consider  single-particle momentum eigenstates of generic local gapped Hamiltonians in Section~\ref{sec:QEDClow}. In Section~\ref{sec:qedcintegrable}, we consider codes defined by many-particle eigenstates of the Heisenberg-XXX model.

\section{Approximate Quantum Error-Detection\label{AQED}}
Here we introduce our notion of approximate quantum error-detection. In Section~\ref{sec:operationaldefaqed}, we give an operational definition of this notion. In Section~\ref{sec:sufficientconderrordetect}, we provide sufficient conditions for approximate quantum error-detection which are analogous to the Knill-Laflamme conditions for quantum error-correction~\cite{knill1997theory}. Finally, in Section~\ref{sec:necessaryconditions}, we give necessary conditions for a subspace to be an approximate quantum error-detecting code. 

\subsection{Operational definition of approximate error-detection\label{sec:operationaldefaqed}}
Let $\cN:\cB((\mathbb{C}^\physical)^{\otimes n})\rightarrow\cB((\mathbb{C}^\physical)^{\otimes n})$ be a CPTP map modeling noise on $n$ physical qubits. We introduce the following notion:
 
\begin{definition}\label{def:errdet1}
A subspace $\cC\subset (\mathbb{C}^\physical)^{\otimes n}$ (with associated projection $P$) is an $(\epsilon,\delta)$-approximate error-detection code for $\cN$ if for any pure state~$\ket{\Psi}\in\cC$ the following holds:
\begin{align}
\textrm{if}\qquad \tr(P\cN(\proj{\Psi}))\geq \delta\qquad\textrm{then}\qquad  \bra{\Psi}\rho_{\cN,P}\ket{\Psi}\geq 1-\epsilon\ ,
\end{align}
where $\rho_{\cN,P}=\tr(P\cN(\proj{\Psi}))^{-1}\cdot P\cN(\proj{\Psi})P$.
\end{definition}

In this definition, $\rho_{\cN,P}$ is the post-measurement state when applying the POVM~$\{P,I-P\}$ to~$\cN(\proj{\Psi})$. Roughly speaking, this definition ensures that the post-measurement state is $\epsilon$-close to the initial code state if the outcome of the POVM is~$P$. Note, however, that we only demand  this in the case where $\cN(\proj{\Psi})$ has an overlap with the code space of at least~$\delta$. The idea behind this definition is that if this overlap  is negligible, then the outcome~$P$ does not occur with any significant probability and the error-detection measurement may as well be omitted.
 
Definition~\ref{def:errdet1} is similar in spirit to operationally defined notions of approximate quantum error-correction considered previously. In~\cite{crepeauetal05}, approximate error-correction was defined in terms of the ``recoverable fidelity'' of any encoded pure state affected by noise. The restriction to pure states in the definition is justified by means of an earlier result  by Barnum, Knill, and Nielsen~\cite{barnumknillnielsen}.

We note that, by definition, an $(\epsilon,\delta)$-approximate error-detection code for~$\cN$ is also an $(\epsilon',\delta')$-approximate error-detection code for any $\epsilon\leq \epsilon'$ and $\delta\leq \delta'$. The traditional ``exact'' notion of a quantum error-detecting code~$\cC$ (see e.g.,~\cite{errorcorrectionoldintro}) demands that for a set~$\cF\subset \cB((\mathbb{C}^\physical)^{\otimes n})$ of {\em detectable errors}, we have 
\begin{align}
\bra{\Psi}E\ket{\Phi}&=\lambda_E\bra{\Psi}\Phi\rangle\qquad\textrm{ for all }\ket{\Psi},\ket{\Phi}\in\cC
\end{align}
for some scalar $\lambda_E\in\mathbb{C}$ depending only on~$E$, for all $E\in\cF$. It is straightforward to see that such a code defines a~$(0,0)$-approximate error-detecting code of any CPTP map~$\cN$ whose Kraus operators belong to~$\cF$.

\subsection{Sufficient conditions for approximate quantum error-detection\label{sec:sufficientconderrordetect}}
The following theorem shows that certain approximate Knill-Laflamme-type conditions are sufficient for approximate error-detection.
\begin{theorem}\mylabel{thm:AQEDConditions}
Let $\cN(\rho)=\sum_{j\in [J]} R_j\rho R_j^{\dagger}$ be a CPTP map
on~$\cB((\mathbb{C}^{\physical})^{\otimes n})$.   Let $\cC\subset (\mathbb{C}^{\physical})^{\otimes n}$ be a subspace
with orthonormal basis 
$\{\psi_\alpha\}_{\alpha\in [K]}$. Define
\begin{align}\label{eq:errapproxdef}
\errapprox:=\max_{\alpha,\beta\in [K]}\sum_{j\in [J]}\big|\bra{\psi_\alpha} R_j\ket{\psi_\beta}-\delta_{\alpha,\beta}\bra{\psi_1} R_j\ket{\psi_1}\big|^2\ .
\end{align}
Let $\delta>K^5\errapprox$ be arbitrary. Then the subspace $\cC$ is an $(\epsilon,\delta)$-approximate quantum error-detection code for~$\cN$ with $\epsilon = K^5\errapprox \delta^{-1}$.
\end{theorem}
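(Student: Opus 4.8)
The plan is to translate the fidelity in Definition~\ref{def:errdet1} into a statement about the $K\times K$ matrices obtained by compressing the Kraus operators to the code space, and then to exploit a cancellation that renders the unknown ``diagonal'' values irrelevant. Fix $\ket{\Psi}=\sum_{\alpha}c_\alpha\ket{\psi_\alpha}\in\cC$ with $\sum_\alpha|c_\alpha|^2=1$, let $P=\sum_\alpha\proj{\psi_\alpha}$ be the projection onto $\cC$, and set $M^{(j)}_{\alpha\beta}:=\bra{\psi_\alpha}R_j\ket{\psi_\beta}$ and $\ket{c}:=(c_\alpha)_\alpha\in\mathbb{C}^K$. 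First I would rewrite the numerator and denominator of $\bra{\Psi}\rho_{\cN,P}\ket{\Psi}$ in these terms. Since $P\ket{\Psi}=\ket{\Psi}$, the numerator collapses to $\bra{\Psi}\cN(\proj{\Psi})\ket{\Psi}=\sum_j|\bra{\Psi}R_j\ket{\Psi}|^2=\sum_j|\bra{c}M^{(j)}\ket{c}|^2$, while the denominator is $\tr(P\cN(\proj{\Psi}))=\sum_j\|PR_j\ket{\Psi}\|^2=\sum_j\|M^{(j)}\ket{c}\|^2$, which by hypothesis is at least $\delta$. The goal thus reduces to lower bounding the ratio of these two $j$-sums.

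The key step is to isolate the Knill--Laflamme defect. Writing $\lambda_j:=\bra{\psi_1}R_j\ket{\psi_1}$ and $E^{(j)}:=M^{(j)}-\lambda_j I$, the quantity $\errapprox$ equals exactly $\max_{\alpha,\beta}\sum_j|E^{(j)}_{\alpha\beta}|^2$. I would then form the difference of denominator and numerator term by term in $j$: expanding $M^{(j)}=\lambda_j I+E^{(j)}$, the contributions $|\lambda_j|^2$ and the cross terms $2\,\rlprt(\bar\lambda_j\bra{c}E^{(j)}\ket{c})$ occur identically in $\|M^{(j)}\ket{c}\|^2$ and in $|\bra{c}M^{(j)}\ket{c}|^2$, so they cancel, leaving
\[
\sum_j\|M^{(j)}\ket{c}\|^2-\sum_j|\bra{c}M^{(j)}\ket{c}|^2=\sum_j\big(\|E^{(j)}\ket{c}\|^2-|\bra{c}E^{(j)}\ket{c}|^2\big).
\]
This expression is manifestly nonnegative (so the fidelity is at most $1$) and, crucially, no longer refers to the unknown values $\lambda_j$.

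It then remains to bound the defect. Discarding the subtracted term and applying Cauchy--Schwarz row by row together with $\|c\|=1$ gives $\|E^{(j)}\ket{c}\|^2\le\sum_{\alpha,\beta}|E^{(j)}_{\alpha\beta}|^2$; summing over $j$ and invoking the definition of $\errapprox$ yields $\sum_j\|E^{(j)}\ket{c}\|^2\le\sum_{\alpha,\beta}\sum_j|E^{(j)}_{\alpha\beta}|^2\le K^2\,\errapprox$. Hence, whenever the denominator is at least $\delta$,
\[
\bra{\Psi}\rho_{\cN,P}\ket{\Psi}=1-\frac{\sum_j\big(\|E^{(j)}\ket{c}\|^2-|\bra{c}E^{(j)}\ket{c}|^2\big)}{\sum_j\|M^{(j)}\ket{c}\|^2}\ge 1-\frac{K^2\errapprox}{\delta}.
\]
This in fact shows $\cC$ is a $(K^2\errapprox\delta^{-1},\delta)$-code, and since $K^2\le K^5$ the claimed $(K^5\errapprox\delta^{-1},\delta)$ statement follows \emph{a fortiori} from the monotonicity of the code parameters noted after Definition~\ref{def:errdet1}. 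The main obstacle is conceptual rather than computational: spotting the exact cancellation in the middle step. A less careful argument that bounds numerator and denominator separately must instead control $\sum_j|\lambda_j|^2$ (which is at most $1$ by the trace-preserving condition $\sum_jR_j^\dagger R_j=I$) and the cross terms by hand through repeated Cauchy--Schwarz over $j$, and it is precisely this detour that would inflate the power of $K$.
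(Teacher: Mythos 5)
Your proof is correct, and it takes a genuinely different route from the paper's. The paper's argument rotates to an orthonormal basis $\{\varphi_\alpha\}$ of $\cC$ with $\varphi_1=\Psi$, transports the Knill--Laflamme defects through the unitary $U$ relating the two bases, bounds each off-diagonal channel matrix element $\bra{\varphi_\alpha}\cN(\proj{\varphi_1})\ket{\varphi_\alpha}$ by $K^4\errapprox$ (a double sum of $K^2$ defect terms followed by Cauchy--Schwarz), and then sums over $\alpha=2,\dots,K$ to reach $K^5\errapprox$. You avoid the basis change entirely: expanding $\Psi$ in coefficients and splitting $M^{(j)}=\lambda_j I+E^{(j)}$, you observe that the $|\lambda_j|^2$ and cross terms cancel \emph{exactly} between $\sum_j\|M^{(j)}c\|^2$ and $\sum_j|\bra{c}M^{(j)}\ket{c}|^2$, leaving $1-\bra{\Psi}\rho_{\cN,P}\ket{\Psi}$ equal to a ratio whose numerator involves only the defect matrices $E^{(j)}$; row-wise Cauchy--Schwarz then yields $K^2\errapprox$. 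Both proofs rest on the same two structural facts --- $P\Psi=\Psi$ collapses the fidelity to a ratio of quadratic forms in the compressed Kraus matrices, and only the deviation of $M^{(j)}$ from a scalar multiple of the identity matters --- but your cancellation makes the second fact exact where the paper estimates it, saving three powers of $K$; this substantiates the paper's own remark after Corollary~\ref{cor:errordectionapproximate} that its exponents are not optimized. Your final deduction is also sound: you prove the strictly stronger $(K^2\errapprox\delta^{-1},\delta)$ statement (which in fact only needs $\delta>K^2\errapprox$ for a nontrivial $\epsilon$), and the claimed $(K^5\errapprox\delta^{-1},\delta)$ parameters follow from the monotonicity of the code parameters noted after Definition~\ref{def:errdet1}.
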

This theorem deals with cases where the code dimension~$K$ is ``small'' compared to other quantities. We will later apply this theorem to the case where $K$ is polynomial, and where $\errapprox$ and $\delta$ are inverse polynomial in the system size~$n$.

We note that  the conditions of Theorem~\ref{thm:AQEDConditions} may appear more involved than  e.g., the Knill-Laflamme type conditions (see~\cite{knill1997theory}) for (exact) quantum error-correction: the latter involve one or two error operators (interpreted as Kraus operators of the channel), whereas in expression~\eqref{eq:errapproxdef}, we sum over all Kraus operators.  It appears that this is, to some extent, unavoidable when going from exact to approximate error-correction/detection in general. We note that (tight) approximate error-correction conditions~\cite{benyoreshkov}  obtained by considering the decoupling property of the complementary (encoding plus noise) channel similarly depend on the entire noise channel. Nevertheless, we show below that, at least for probabilistic noise, simple sufficient conditions for quantum error-detection involving only individual Kraus operators can be given. 

\begin{proof} Let us define 
\begin{align}
\err^\psi(R,\alpha,\beta):= \bra{\psi_\alpha}R\ket{\psi_\beta}- \delta_{\alpha,\beta}\bra{\psi_1}R\ket{\psi_1}\ .
\end{align}
Consider an arbitrary orthonormal basis  $\{\varphi_\alpha\}_{\alpha\in [K]}\in\cC\subset(\mathbb{C}^\physical)^{\otimes n}$ of~$\cC$. Let~$U$ be a unitary matrix such that 
\begin{align}
\varphi_\alpha &=\sum_{\beta\in [K]}U_{\alpha,\beta}\psi_\beta\qquad\textrm{ for all }\alpha\in [K]\ .
\end{align}
Because $\sum_{\gamma\in [K]} (U^\dagger)_{\alpha,\gamma}U_{\gamma,\beta}=\delta_{\alpha,\beta}$, we obtain by straightforward computation
\begin{align}
\bra{\varphi_\alpha}R\ket{\varphi_\beta}-\delta_{\alpha,\beta}\bra{\psi_1}R\ket{\psi_1}&=\sum_{\gamma,\delta\in [K]}\overline{U_{\alpha,\gamma}}U_{\beta,\delta}\,\err^\psi(R,\gamma,\delta)\ .
\end{align}
We conclude that
\begin{align}
|\bra{\varphi_\alpha}R\ket{\varphi_\beta}|&\leq \sum_{\gamma,\delta\in [K]} |\err^{\psi}(R,\gamma,\delta)|\leq K\cdot \sqrt{\sum_{\gamma,\delta\in [K]} |\err^{\psi}(R,\gamma,\delta)|^2}\qquad\textrm{ for }\alpha\neq \beta
\end{align}
because $\max_{\gamma,\delta} |\overline{U}_{\alpha,\gamma}U_{\beta,\delta}|\leq 1$ for a unitary matrix~$U$ and by using the Cauchy-Schwarz inequality. By definition of $\err$ and $\errapprox$, this implies that 
\begin{align}
\bra{\varphi_\alpha}\cN(\proj{\varphi_\beta})\ket{\varphi_\alpha}&\leq   K^4 \errapprox\qquad\textrm{ for }\alpha\neq \beta \mylabel{eq:upperboundphialphactwo}
\end{align} 
for any orthonormal basis $\{\varphi_\alpha\}_{\alpha\in [K]}$ of $\cC$.

Let now $\delta>0$ be given and let $\Psi\in\cC$ be an arbitrary state in the code space such that 
\begin{align}
\tr(P\cN(\proj{\Psi}))\geq \delta\ .\mylabel{eq:assumptionpsindelta}
\end{align}  
Let us pick an orthonormal basis  $\{\varphi_\alpha\}_{\alpha\in [K]}\in\cC\subset(\mathbb{C}^\physical)^{\otimes n}$ of~$\cC$ such that $\varphi_1=\Psi$. Then
  \begin{align}
1-  \bra{\Psi}\rho_{\cN,P}\ket{\Psi}&=1- \frac{\bra{\Psi}\cN(\proj{\Psi})\ket{\Psi}}{\tr(P\cN(\proj{\Psi}))}\\
&=\frac{1}{\tr(P\cN(\proj{\Psi}))}\cdot \left(\tr(P\cN(\proj{\Psi}))-
\bra{\Psi}\cN(\proj{\Psi})\ket{\Psi}\right)\\
&= \frac{1}{\tr(P\cN(\proj{\Psi}))}\cdot \sum_{\alpha=2}^K \bra{\varphi_\alpha}\cN(\proj{\varphi_1})\ket{\varphi_\alpha}\\
&\leq \frac{1}{\delta }\cdot K^5\errapprox
  \end{align}
because of~\eqref{eq:assumptionpsindelta}  and~\eqref{eq:upperboundphialphactwo}. The claim follows. 

 \end{proof}
 
If there are vectors~$\{\eta_{\alpha,\beta}\}_{\alpha,\beta\in [K]}$ such  that 
\begin{align}
\big|\bra{\psi_\alpha} R_j\ket{\psi_\beta}-\delta_{\alpha,\beta}\bra{\psi_1} R_j\ket{\psi_1}\big|&\leq \|R_j\eta_{\alpha,\beta}\|\qquad\textrm{ for all }j\in [J]\ , \mylabel{eq:retaalphabeta}
\end{align}
then this implies the bound
\begin{align}
\errapprox\leq \max_{\alpha,\beta}\tr(\cN(\proj{\eta_{\alpha,\beta}}))
&=\max_{\alpha,\beta}\|\eta_{\alpha,\beta}\|^2\ .
\end{align}
Unfortunately, good bounds of the form~\eqref{eq:retaalphabeta} are not straightforward to establish in the cases considered here. Instead, we consider a slightly weaker condition (see equation~\eqref{eq:upperbounduniversal}) which still captures many cases of interest. In particular, it provides a simple criterion for establishing that a code can detect probabilistic Pauli errors with a  certain maximum weight. Correspondingly, we introduce the following definition:
\begin{definition}\label{def:errdet2}
An $(\epsilon,\delta)[[n,k,d]]$-AQEDC~$\cC$ is a
$\physical^k$-dimensional subspace of $(\mathbb{C}^\physical)^{\otimes n}$ such that~$\cC$ is an $(\epsilon,\delta)$-error-detecting code for any CPTP map of the form
\begin{align}
\cN(\rho)=\sum_{j\in [J]} p_j F_j\rho F_j^{\dagger}\ ,\mylabel{eq:nrhozchannel}
\end{align}
where each $F_j$ is a $d$-local operator with $\|F\|\leq 1$ and $\{p_j\}_{j\in [J]}$ is a probability distribution. 
\end{definition}
We then have the following sufficient condition: 
\begin{corollary}\mylabel{cor:errordectionapproximate}
Let $K=\physical^k$ and $\cC\subset (\mathbb{C}^{\physical})^{\otimes n}$ be a code with orthonormal basis $\{\psi_\alpha\}_{\alpha\in [K]}$ satisfying (for some $\gamma>0$),  

\begin{align}
\big|\bra{\psi_\alpha} F\ket{\psi_\beta}-\delta_{\alpha,\beta}\bra{\psi_1} F\ket{\psi_1}\big|\leq \gamma\cdot \|F\|\qquad\textrm{ for all }\alpha,\beta\in [K]\ ,\mylabel{eq:upperbounduniversal}
\end{align}
for every $d$-local operator $F$ on~$(\mathbb{C}^{\physical})^{\otimes n}$. Let $\delta > K^5\gamma^2$. Then $\cC$ is an $(\epsilon=K^5\gamma^2 \delta^{-1},\delta)[[n,k,d]]$-AQEDC.
\end{corollary}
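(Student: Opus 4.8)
The plan is to deduce this directly from Theorem~\ref{thm:AQEDConditions}, since by Definition~\ref{def:errdet2} an $(\epsilon,\delta)[[n,k,d]]$-AQEDC is nothing but a subspace that is an $(\epsilon,\delta)$-approximate error-detecting code for \emph{every} channel of the form~\eqref{eq:nrhozchannel}. First I would fix an arbitrary such channel $\cN(\rho)=\sum_{j\in[J]} p_j F_j\rho F_j^{\dagger}$, with each $F_j$ being $d$-local, $\|F_j\|\leq 1$, and $\{p_j\}_{j\in[J]}$ a probability distribution, and bring it into the Kraus form required by Theorem~\ref{thm:AQEDConditions} by absorbing the probabilities into the operators: set $R_j:=\sqrt{p_j}\,F_j$, so that $\cN(\rho)=\sum_j R_j\rho R_j^{\dagger}$.

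The key step is then to bound the quantity $\errapprox$ attached to these Kraus operators. Because $R_j$ differs from $F_j$ only by the overall scalar $\sqrt{p_j}$, the relevant matrix element factorizes as
\[
\bra{\psi_\alpha}R_j\ket{\psi_\beta}-\delta_{\alpha,\beta}\bra{\psi_1}R_j\ket{\psi_1}=\sqrt{p_j}\Big(\bra{\psi_\alpha}F_j\ket{\psi_\beta}-\delta_{\alpha,\beta}\bra{\psi_1}F_j\ket{\psi_1}\Big).
\]
Since each $F_j$ is $d$-local, hypothesis~\eqref{eq:upperbounduniversal} bounds the parenthesized quantity by $\gamma\,\|F_j\|\leq\gamma$. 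Squaring, and then summing over $j$ using only the normalization $\sum_j p_j=1$, gives $\sum_j\big|\bra{\psi_\alpha}R_j\ket{\psi_\beta}-\delta_{\alpha,\beta}\bra{\psi_1}R_j\ket{\psi_1}\big|^2\leq\gamma^2$; taking the maximum over $\alpha,\beta$ yields $\errapprox\leq\gamma^2$.

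Finally I would invoke Theorem~\ref{thm:AQEDConditions} itself. The standing assumption $\delta>K^5\gamma^2\geq K^5\errapprox$ is precisely the admissibility condition of that theorem, so it certifies that $\cC$ is an $(\epsilon',\delta)$-approximate error-detecting code for this particular $\cN$ with $\epsilon'=K^5\errapprox\,\delta^{-1}\leq K^5\gamma^2\,\delta^{-1}=\epsilon$. By the monotonicity of the definition in its first parameter (an $(\epsilon',\delta)$-code is an $(\epsilon,\delta)$-code whenever $\epsilon'\leq\epsilon$, as noted after Definition~\ref{def:errdet1}), $\cC$ is in fact an $(\epsilon,\delta)$-code for $\cN$. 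As $\cN$ was an arbitrary channel of the form~\eqref{eq:nrhozchannel}, this establishes that $\cC$ is an $(\epsilon,\delta)[[n,k,d]]$-AQEDC.

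I expect no substantive obstacle here: the analytic content is entirely carried by Theorem~\ref{thm:AQEDConditions}, and the corollary is essentially a repackaging that trades the somewhat opaque aggregate quantity $\errapprox$ for the cleaner per-operator bound~\eqref{eq:upperbounduniversal}. The only point demanding any care is the normalization bookkeeping, namely folding the $p_j$ into the Kraus operators and collapsing the sum via $\sum_j p_j=1$, together with the $\|F_j\|\leq 1$ normalization built into Definition~\ref{def:errdet2}.
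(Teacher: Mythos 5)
Your proposal is correct and matches the paper's proof, which consists of exactly the same one-line reduction: set $R_j=\sqrt{p_j}\,F_j$ and invoke Theorem~\ref{thm:AQEDConditions}. You have merely made explicit the bookkeeping the paper leaves implicit (the bound $\errapprox\leq\gamma^2$ via $\sum_j p_j=1$ and $\|F_j\|\leq 1$, and the monotonicity in $\epsilon$), all of which is accurate.
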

\begin{proof}
Defining $R_j=\sqrt{p_j} F_j$, the claim follows immediately from Theorem~\ref{thm:AQEDConditions}.
\end{proof}
Note that the exponents in this statement are not optimized, and could presumably be improved. We have instead opted for the presentation of a simple proof, as this ultimately provides the same qualitative statements.

We also note that the setting considered in Corollary~\ref{cor:errordectionapproximate}, i.e., our notion of $(\epsilon,\delta)[[n,k,d]]$-error-detecting codes, goes beyond existing work on approximate error-detection/correction~\cite{brandao2017quantum,Kim2017, pastawski2017towards}, where typically only noise  channels with Kraus (error) operators acting on a fixed, contiguous (i.e., geometrically local) set of $d$~physical spins are considered. At the same time, our results are limited to convex combinations of the form~\eqref{eq:nrhozchannel}. It remains an open problem whether these codes also detect noise given by more general (coherent) channels.

\subsection{Necessary conditions for approximate quantum error-detection\label{sec:necessaryconditions}}
Here we give a partial converse to Corollary~\ref{cor:errordectionapproximate}, which shows that a condition of the form~\eqref{eq:upperbounduniversal} is indeed necessary for approximate quantum error-detection. 
\begin{lemma}\mylabel{lem:necessaryconditiondetect}
Let $\psi_1,\psi_2\in (\mathbb{C}^{\physical})^{\otimes n}$ be two orthonormal states in the code space $\mathcal C$ and $F=F_{\cS}\otimes I_{[n]\backslash \cS}\in \cB((\mathbb{C}^{\physical})^{\otimes d})$ an orthogonal projection acting on~$d$~sites~$\cS\subset [n]$ such that
\begin{align}
|\bra{\psi_1}F\ket{\psi_1}-\bra{\psi_2}F\ket{\psi_2}|&=\eta
\end{align}
for some $\eta\in [0,1]$, with $1-\eta\ll 1$.  Then any subspace~$\cC\subset (\mathbb{C}^{\physical})^{\otimes n}$ of dimension~$\physical^k$ is not an $(\epsilon,\delta)[[n,k,d]]$-code for
\begin{align}
\epsilon&< 1-10(1-\eta),\quad\text{and}\\
\delta &< \eta^2\ .
\end{align}
\end{lemma}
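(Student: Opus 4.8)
The plan is to refute the error-detection property directly: I will exhibit a single admissible $d$-local error channel $\cN$, together with a code state $\ket{\Psi}\in\cC$, for which the detection event occurs with probability above $\delta$ yet the post-measurement fidelity is driven below $1-\epsilon$. The key structural observation is that $F$ separates $\psi_1$ and $\psi_2$: writing $a_i=\bra{\psi_i}F\ket{\psi_i}$ and assuming without loss of generality $a_1-a_2=\eta$, we get $a_1\ge\eta$ and $a_2\le 1-\eta$, so $\psi_1$ lies almost entirely in the range of $F$ and $\psi_2$ almost entirely in its kernel. Hence the $d$-local \emph{Hermitian unitary} $U=I-2F$ (a reflection, supported on $\cS$, with $\|U\|=1$) acts on the code as a near relative sign flip, $U\psi_1\approx-\psi_1$ and $U\psi_2\approx\psi_2$. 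I would therefore take $\cN(\rho)=U\rho U$, which is of the admissible form of Definition~\ref{def:errdet2} (one Kraus operator, $d$-local, norm one, $p_1=1$), and the code state $\ket{\Psi}=\tfrac{1}{\sqrt2}(\ket{\psi_1}+\ket{\psi_2})\in\cC$.

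The estimate then splits into three pieces. First, the off-diagonal term $c=\bra{\psi_1}F\ket{\psi_2}$, which measures the coherence that $U$ destroys, is controlled by Cauchy--Schwarz: $|c|\le\|F\psi_1\|\,\|F\psi_2\|=\sqrt{a_1a_2}\le\sqrt{1-\eta}$. Second, writing $\braket{\Psi}{U\Psi}=1-2\bra{\Psi}F\ket{\Psi}=(1-\eta)-2a_2-2\,\mathrm{Re}\,c$ and using $0\le a_2\le 1-\eta$, I obtain $|\braket{\Psi}{U\Psi}|\le(1-\eta)+2\sqrt{1-\eta}$, so $U\ket{\Psi}$ is nearly orthogonal to $\ket{\Psi}$. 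Third, to lower bound the detection probability $\tr(P\cN(\proj{\Psi}))=\|PU\Psi\|^2$, I would test against the unit code vector $\ket{\chi}=\tfrac{1}{\sqrt2}(-\ket{\psi_1}+\ket{\psi_2})\in\cC$; a direct expansion gives $\braket{\chi}{U\Psi}=\eta+2i\,\mathrm{Im}\,c$, whence $\|PU\Psi\|^2\ge|\braket{\chi}{U\Psi}|^2=\eta^2+4(\mathrm{Im}\,c)^2\ge\eta^2$.

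Combining these, since $P\ket{\Psi}=\ket{\Psi}$ and $\cN(\proj{\Psi})=\proj{U\Psi}$, the post-measurement fidelity equals $\bra{\Psi}\rho_{\cN,P}\ket{\Psi}=|\braket{\Psi}{U\Psi}|^2/\|PU\Psi\|^2\le\big((1-\eta)+2\sqrt{1-\eta}\big)^2/\eta^2$, which for $1-\eta\ll 1$ is bounded by $10(1-\eta)$ (the factor $10$ leaves slack, as the true ratio tends to $4$ as $\eta\to1$). At the same time $\tr(P\cN(\proj{\Psi}))\ge\eta^2>\delta$. Thus the detection test passes with probability above $\delta$, yet the recovered state has fidelity strictly below $1-\epsilon$ whenever $\epsilon<1-10(1-\eta)$, violating the requirement of Definition~\ref{def:errdet1} for the admissible channel $\cN$, which is exactly the asserted non-detectability.

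The main conceptual step, and the place where a naive attempt stalls, is the choice of error. The ``obvious'' channel, the pinching $\rho\mapsto F\rho F+(I-F)\rho(I-F)$, only collapses $\ket{\Psi}$ onto the incoherent mixture $\tfrac12(\proj{\psi_1}+\proj{\psi_2})$, whose fidelity with $\ket{\Psi}$ is $\tfrac12$; that proves failure only for $\epsilon<\tfrac12$, far short of $\epsilon<1-10(1-\eta)$. Using the reflection $U$ instead yields a \emph{pure}, in-code output that is genuinely anti-aligned with $\ket{\Psi}$, pushing the fidelity to $0$. The remaining technical care is the two-sided control of the cross term $c$ and of the projector $P$: I must ensure $U\ket{\Psi}$ stays close enough to $\cC$ that the detection probability does not fall below $\delta$ (secured by the test vector $\ket{\chi}$), and I must track the $\sqrt{1-\eta}$ coherence contribution precisely to obtain the stated constant in the regime $1-\eta\ll 1$.
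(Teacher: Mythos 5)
Your proposal is correct and follows essentially the same route as the paper's proof: the same channel $\cN(\rho)=U\rho U$ with the reflection $U=I-2F$ (the paper writes it as $e^{i\pi F}$), the same test state $\ket{\Psi}=\tfrac{1}{\sqrt2}(\ket{\psi_1}+\ket{\psi_2})$, the same $\eta^2$ lower bound on the detection probability, and the same $9(1-\eta)/\eta^2\le 10(1-\eta)$ fidelity bound via $x\le\sqrt{x}$. The only cosmetic differences are that you keep the cross term $c=\bra{\psi_1}F\ket{\psi_2}$ complex (the paper gauges the phase so it is real and non-negative) and lower-bound $\tr(P\cN(\proj{\Psi}))$ by overlapping with the in-code vector $\ket{\chi}=\tfrac{1}{\sqrt2}(-\ket{\psi_1}+\ket{\psi_2})$, which is a slightly slicker substitute for the paper's explicit minimization of the quadratic in $r$.
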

\begin{proof}
Let 
\begin{align}
F_{j,k}&:=\bra{\psi_j} F\ket{\psi_k}\qquad\textrm{ for } j,k\in \{1,2\}\ .
\end{align}
By choosing the phase of~$\ket{\psi_1}$ appropriately, we may assume that $F_{1,2}\geq 0$. 
Note that  $F_{1,2}=F_{2,1}\leq \|F\psi_2\|=\sqrt{\bra{\psi_2}F\ket{\psi_2}}$ by the Cauchy-Schwarz inequality and because~$F$ is a projection. Let us denote the entries of~$F$ by 
\begin{align}
F&=\begin{pmatrix}
p & r \\
r & q
\end{pmatrix}
\end{align}
where $q\in [0,1-\eta]$, $p=q+\eta$, and $r\in [0,\sqrt{q}]$. Let us define a CPTP map~$\cN$ of the form~\eqref{eq:nrhozchannel} by 
\begin{align}
\cN(\rho)&=e^{i \pi F}\rho e^{-i\pi F}\qquad\textrm{ where }\qquad F=F_{\cS}\otimes I_{[n]\backslash \cS}\ .
\end{align}
Let $\hat{P}=\sum_{j=1}^2 \proj{\psi_j}$.  
Consider the normalized vector~$\ket{\Psi}=\frac{1}{\sqrt{2}}(\ket{\psi_1}+\ket{\psi_2})$.  Then 
\begin{align}
\hat{P}\cN(\proj{\Psi})\hat{P}&=\frac{1}{2}\sum_{i,j,k,\ell}W_{k,i}\overline{W_{\ell,j}} \ket{\psi_k}\bra{\psi_\ell}\ , \mylabel{eq:identitypnpsi}
\end{align}
where 
\begin{align}
W_{j,k}&:=\bra{\psi_j} e^{i\pi F}\ket{\psi_k}\qquad\textrm{ for } j,k\in \{1,2\}\ .
\end{align}
Observe that since $F^2=F$ is a projection, we have
$e^{i\pi F}= I-2F$, thus  the entries of~$W$ are
\begin{align}
W_{j,k}&=\delta_{j,k}-2F_{j,k}\qquad\textrm{ for }j,k\in \{1,2\}\ .
\end{align}
In particular, from~\eqref{eq:identitypnpsi} we obtain  for the projection $P$ onto~$\cC$
\begin{align}
\tr\left(P\cN(\proj{\Psi})P\right)
&\geq \tr\left(
\hat{P}\cN(\proj{\Psi})\hat{P}\right)\\
&=\frac{1}{2}\sum_{i,j,k}
W_{k,i}\overline{W_{k,j}}\nonumber\\
&=1-2p+2p^2-2q+2q^2+4r(p+q-1+r)\nonumber\\
&\geq (p-q)^2=\eta^2\ ,\label{eq:lbnddeltax}
\end{align}
where we used that the last expression is minimal (and equal to $(p-q)^2$) for $r=1/2 (1-p-q)$.
We also have
\begin{align}
\bra{\Psi}\cN(\proj{\Psi})\ket{\Psi}&=\frac{1}{4}\sum_{i,j,k,\ell}W_{k,i}\overline{W_{\ell,j}}\\
&=(2r+p+q-1)^2\\
&=(2(r+q)-(1-\eta))^2\ .
\end{align}
This expression is maximal for $(r,q)$ each maximal (since both are non-negative), hence for $(r,q)=(\sqrt{1-\eta},1-\eta)$ and we obtain the upper bound

\begin{align}
 \bra{\Psi}\cN(\proj{\Psi})\ket{\Psi}&\leq
 (1-\eta+2\sqrt{1-\eta})^2\leq 9 (1-\eta)\ ,
  \end{align}
  where we used that $x\leq \sqrt{x}$ for $x\in [0,1]$.
 This implies with~\eqref{eq:lbnddeltax} that for $\rho_{\cN,P}=\tr(P\cN(\proj{\Psi}))^{-1}\cdot P\cN(\proj{\Psi})P$ we have 
 \begin{align}
 \bra{\Psi}\rho_{\cN,P}\ket{\Psi}&
 \leq \frac{9(1-\eta)}{\eta^2}=\frac{9(1-\eta)}{(1-(1-\eta))^2}\leq 10(1-\eta)
 \end{align}
 for $1-\eta\ll 1$. Thus   
 \begin{align}
 1- \bra{\Psi}\rho_{\cN,P}\ket{\Psi}\geq 1-10(1-\eta)\ .
 \end{align}
 With~\eqref{eq:lbnddeltax}, this implies the claim. 
 
 \end{proof}
 
We reformulate Lemma~\ref{lem:necessaryconditiondetect}, by stating it in terms of reduced density matrices, as follows:
\begin{lemma}\mylabel{lem:distancenecessary}
Let $\psi_1,\psi_2\in (\mathbb{C}^\physical)^{\otimes n}$ be two orthonormal vectors  in a subspace~$\cC\subset  (\mathbb{C}^\physical)^{\otimes n}$ of dimension~$\physical^k$. Fix a region~$R\subset [n]$ of size $|R|=d$ and let $\rho_j=\tr_{[n]\backslash R}\proj{\psi_j}$, $j=1,2$ be the reduced density matrices on~$R$. 
Then~$\cC$ is not a $(\epsilon,\delta)[[n,k,d]]$-error-detecting code for
\begin{align}
\epsilon &< 1-10\zeta(\rho_1,\rho_2)\ ,\qquad \text{and}\\
\delta &< (1-\zeta(\rho_1,\rho_2))^2\ ,
\end{align}
where $\zeta(\rho_1,\rho_2):=\max\{\rank \rho_1,\rank \rho_2\}^2 \cdot\tr(\rho_1\rho_2)$.
\end{lemma}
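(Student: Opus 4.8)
The plan is to derive this from Lemma~\ref{lem:necessaryconditiondetect}. That lemma forbids the code for $\epsilon<1-10(1-\eta)$ and $\delta<\eta^2$ as soon as one produces a single $d$-local orthogonal projection $F=F_R\otimes I_{[n]\backslash R}$ with $\big|\bra{\psi_1}F\ket{\psi_1}-\bra{\psi_2}F\ket{\psi_2}\big|=\eta$. Both thresholds are monotone increasing in $\eta$, and they coincide with the ones claimed here precisely when $\eta=1-\zeta$; hence it suffices to exhibit such an $F$ with $\eta\ge 1-\zeta$. The first step is to observe that, since $F$ acts only on $R$, one has $\bra{\psi_j}F\ket{\psi_j}=\tr(F_R\,\rho_j)$, so the whole problem collapses to a statement about the two reduced states: I must find a projection $F_R$ on $R$ whose measurement statistics separate $\rho_1$ and $\rho_2$ by at least $1-\zeta$, i.e.\ $|\tr(F_R\rho_1)-\tr(F_R\rho_2)|\ge 1-\zeta$.

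For the choice of $F_R$ I would take the optimal distinguishing (Helstrom) projector, namely the spectral projector of $\rho_1-\rho_2$ onto its positive part; for this projector the separation equals the trace distance $D(\rho_1,\rho_2)=\tfrac12\norm{\rho_1-\rho_2}_1$, so the claim reduces to the information-theoretic inequality $1-D(\rho_1,\rho_2)\le\zeta$. The key estimate then converts the operational distance $D$ into the Hilbert--Schmidt overlap $\tr(\rho_1\rho_2)$ that defines $\zeta$. Concretely, I would route this through the fidelity $\mathsf{F}(\rho_1,\rho_2)=\norm{\sqrt{\rho_1}\sqrt{\rho_2}}_1$: the Fuchs--van de Graaf bound gives $1-D\le\mathsf{F}$, while $\tr(\rho_1\rho_2)=\norm{\sqrt{\rho_1}\sqrt{\rho_2}}_2^2$ together with the singular-value Cauchy--Schwarz inequality $\norm{X}_1\le\sqrt{\rank X}\,\norm{X}_2$ and $\rank(\sqrt{\rho_1}\sqrt{\rho_2})\le\min\{\rank\rho_1,\rank\rho_2\}$ yields $\mathsf{F}(\rho_1,\rho_2)\le\sqrt{\min\{\rank\rho_1,\rank\rho_2\}\cdot\tr(\rho_1\rho_2)}$. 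A rank-weighted bound on $1-D$ of this shape is exactly what feeds into the quantity $\zeta=\max\{\rank\rho_1,\rank\rho_2\}^2\,\tr(\rho_1\rho_2)$; an equivalent and more hands-on route is to take $F_R$ to be the support projector $\Pi_1$ of $\rho_1$, so that $1-\eta=\tr(\Pi_1\rho_2)$, and to bound this ``leakage'' after truncating the eigenvalues of $\rho_1$ below a rank-dependent threshold.

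The main obstacle is precisely this last conversion, from the trace-norm quantity $D$ (equivalently the leakage $\tr(\Pi_1\rho_2)$) to the Hilbert--Schmidt quantity $\tr(\rho_1\rho_2)$. There is no rank-free comparison between the two: a reduced state with very small eigenvalues can place appreciable weight on the support of the other state while contributing almost nothing to $\tr(\rho_1\rho_2)$, so a naive support projector leaks and the bound fails unless one either passes to the optimal projector or truncates the small eigenvalues. The rank factor appearing in $\zeta$ is the price of this $\ell_2\to\ell_1$ passage, and pinning down the cleanest rank dependence is the delicate point; once the estimate $\eta\ge 1-\zeta$ is in hand, the remaining steps---checking that $F_R$ is a genuine orthogonal projection supported on the $d$ sites of $R$ and that the two thresholds match those of Lemma~\ref{lem:necessaryconditiondetect}---are routine.
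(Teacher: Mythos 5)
You take essentially the same route as the paper: reduce to Lemma~\ref{lem:necessaryconditiondetect} via the Helstrom projector onto the positive part of $\rho_1-\rho_2$ (so that $\eta=\frac{1}{2}\|\rho_1-\rho_2\|_1$), then convert the trace distance into $\tr(\rho_1\rho_2)$ through the fidelity and the rank inequality $\|X\|_1\le\sqrt{\rank X}\,\|X\|_F$. But the step you yourself flag as delicate is a genuine gap, and your chain cannot close it. From the correctly stated Fuchs--van de Graaf bound $1-\frac{1}{2}\|\rho_1-\rho_2\|_1\le\|\sqrt{\rho_1}\sqrt{\rho_2}\|_1$ together with your rank estimate you get $1-\eta\le\sqrt{m\,t}$, where $m=\min\{\rank\rho_1,\rank\rho_2\}$ and $t=\tr(\rho_1\rho_2)$. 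This does \emph{not} imply $1-\eta\le\zeta=M^2t$ with $M=\max\{\rank\rho_1,\rank\rho_2\}$: whenever $t<m/M^4$ one has $\sqrt{mt}>M^2t$, and the discrepancy is real rather than an artifact of loose estimates. For instance, $\rho_1=\mathrm{diag}(1-\epsilon,\epsilon,0)$ and $\rho_2=\mathrm{diag}(0,\epsilon,1-\epsilon)$ (realizable as reduced states of orthonormal codewords by purifying with orthogonal environments) give $1-\eta=\epsilon$ while $\zeta=4\epsilon^2$. Since the Helstrom projector already maximizes $\eta$ over all projectors, no cleverer choice of $F$ rescues the inequality $\eta\ge 1-\zeta$; the quadratic rank dependence in the stated $\zeta$ is simply not reachable by your (sound) chain.

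For comparison, the paper bridges exactly this step by working with the \emph{squared} fidelity, $F(\rho_1,\rho_2)=\|\sqrt{\rho_1}\sqrt{\rho_2}\|_1^2\le M^2t$, and then invoking $\frac{1}{2}\|\rho_1-\rho_2\|_1\ge 1-F(\rho_1,\rho_2)$, i.e. a one-minus-fidelity-\emph{squared} lower bound on the trace distance. That inequality is strictly stronger than the Fuchs--van de Graaf bound $1-\sqrt{F}\le\frac{1}{2}\|\rho_1-\rho_2\|_1$ you use, and the diagonal example above shows it fails in general; so your hesitation to assert it was well founded. The honest conclusion of your argument is the lemma with $\zeta$ replaced by $\sqrt{m\,\tr(\rho_1\rho_2)}$, which is all the downstream application actually needs: in the proof of Theorem~\ref{thm:nogo} one has $\tr(\rho_X\rho_Y)=O(\lambda_2^{\Delta/2})$, so both the quadratic and the square-root forms of $\zeta$ vanish as $\Delta$ grows and the no-go conclusion is unaffected. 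As written, however, your proposal does not establish the lemma with the stated $\zeta$, and this should be fixed either by weakening $\zeta$ as above or by supplying a proof of the squared-fidelity inequality for the restricted class of states where it is used.
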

\begin{proof}
By definition of the trace distance, the projection~$F$ onto the positive part of $\rho_1-\rho_2$ satisfies
\begin{align}
\eta:=\frac{1}{2}\|\rho_1-\rho_2\|_1&=\tr(F(\rho_1-\rho_2))\ .
\end{align} 
With the inequality $\|A\|_1\leq \sqrt{\rank(A)}\|A\|_F$ we get the bound
\begin{align}
F(\rho_1,\rho_2)=\|\sqrt{\rho_1}\sqrt{\rho_2}\|_1^2\leq D^2 \|\sqrt{\rho_1}\sqrt{\rho_2}\|^2_F=D^2\tr(\rho_1\rho_2)
\end{align}
on the fidelity of $\rho_1$ and $\rho_2$, where $D=\rank(\sqrt{\rho_1}\sqrt{\rho_2})\leq \max\{\rank\rho_1,\rank\rho_2\}$.
Inserting this into the inequality $\frac{1}{2}\|\rho_1-\rho_2\|_1\geq 1-F(\rho_1,\rho_2)$  yields 
\begin{align}
\eta&\geq 1- \max\{\rank \rho_1,\rank \rho_2\}^2 \cdot\tr(\rho_1\rho_2)\ .
\end{align}
The claim then follows from Lemma~\ref{lem:necessaryconditiondetect} and the fact that if~$\cC$ is not an $(\epsilon,\delta)[[n,k,d]]$-code, then it is not an $(\epsilon',\delta')[[n,k,d]]$-code for any $\delta'\leq \delta$ and $\epsilon'\leq \epsilon$. 
\end{proof}
We will use Lemma~\ref{lem:distancenecessary} below to establish our no-go result for codes based on injective MPS with open boundary conditions.

\section{On expectation values of local operators in MPS\label{sec:expectationmpsops}}
Key to our analysis are expectation values of local observables in MPS, and more generally, matrix elements of local operators with respect to different MPS. These directly determine whether or not the considered subspace satisfies the approximate quantum error-detection conditions. To study these quantities, we first review the terminology of transfer operators (and, in particular, injective MPS) in Section~\ref{sec:reviewmps}.  In Section~\ref{sec:transfermatrixtechniques}, we establish bounds on the matrix elements and the norms of transfer operators. These will subsequently be applied in all our derivations.

\subsection{Review of matrix product states\label{sec:reviewmps}}
A matrix product state (or MPS) of bond dimension~$\bond$ is a state $\ket{\Psi}$ on $(\mathbb{C}^\physical)^{\otimes \systemsize}$ which is parametrized by a collection of $D\times D$ matrices.
In this paper, we focus on uniform, site-independent MPS. Such a state is fully specified by a family $A=\{A_j\}_{j=1}^{\physical}$
of $D\times D$ matrices describing the ``bulk properties'' of the state, together with a ``boundary condition'' matrix $X\in \cB(\mathbb{C}^{\bond})$. We write~$\ket{\Psi}=\ket{\Psi(A,X,\systemsize)}$ for such a state, where we often suppress the defining parameters $(A,X,\systemsize)$ for brevity.

Written in the standard computational basis, the state $\ket{\Psi(A,X,\systemsize)}$ is expressed as
\begin{align}
\ket{\Psi}&=\sum_{i_1,\ldots,i_\systemsize\in [\physical]} \tr\left(A_{i_1}\cdots A_{i_\systemsize}X\right) \ket{i_1\cdots i_\systemsize}\ \mylabel{eq:mpssiteindependent}
\end{align}
for a family $\{A_j\}_{j=1}^\physical\subset \cB(\mathbb{C}^{\bond})$ of matrices. 
The number of sites $\systemsize\in \mathbb{N}$ is called the system size, and each site is of local dimension $\physical\in \mathbb{N}$, which is called the physical dimension of the system. The parameter $\bond\in\mathbb{N}$ is called the bond, or virtual, dimension. This state can be represented graphically as a tensor network as in Figure~\ref{fig:matrixproductstates}.

\begin{figure}
\centering
\includegraphics[scale=0.18]{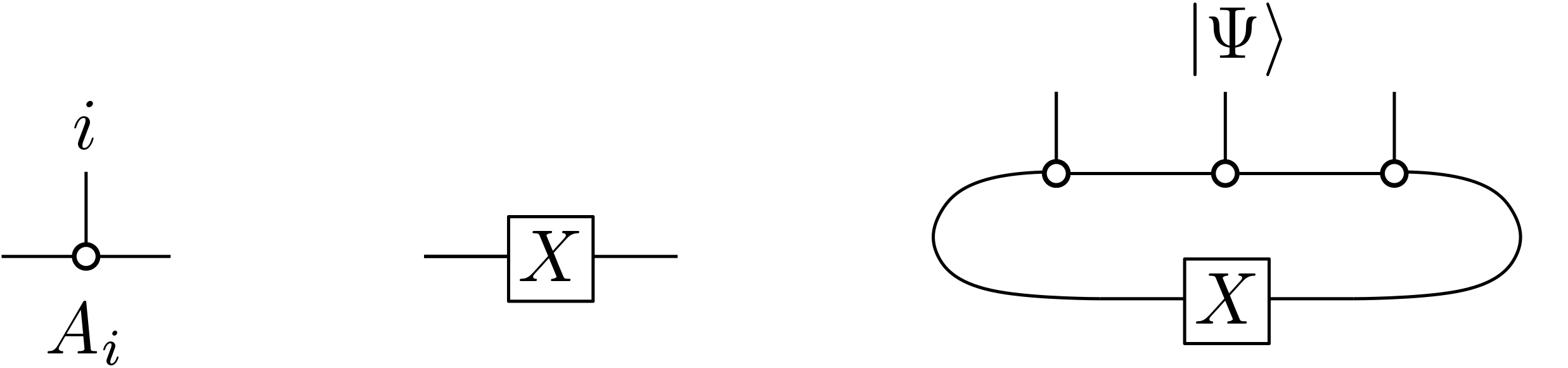}
\caption{This figure illustrates an MPS with $n=3$ physical spins, defined in terms of a family $\{A_j\}_{j=1}^\physical$ of matrices and a matrix~$X$.\mylabel{fig:matrixproductstates}}
\end{figure}

Note that the family of matrices $A=\{A_j\}^\physical_{j=1}$ of a site-independent MPS equivalently defines a three-index tensor $(A_j)_{\alpha\beta}$ with one ``physical'' ($j$) and two ``virtual'' ($\alpha,\beta$) indices. We call this the local MPS tensor associated to $\ket{\Psi(A,X,n)}$.

The matrices $\{A_j\}_{j=1}^\physical$ defining a site-independent MPS~$\ket{\Psi(A,X,n)}$ give rise to a completely positive (CP) linear map $\cE:\cB(\mathbb{C}^D) \rightarrow \cB(\mathbb{C}^D)$ which acts on $Y\in \cB(\mathbb{C}^D)$ by
\begin{equation}\cE(Y) = \sum_{i=1}^\physical A_iYA^\dagger_i\ .\mylabel{eq:transferchannel}\end{equation}
Without loss of generality (by suitably normalizing the matrices~$\{A_j\}_{j=1}^\physical$), we assume  that~$\cE$ has spectral radius~$1$.  This implies that~$\cE$ has a positive semi-definite fixed point $r\in \cB(\mathbb{C}^\bond)$ by the Perron-Frobenius Theorem, see~\cite[Theorem~2.5]{evanshoegh}. We say that the MPS~$\ket{\Psi(A,X,n)}$ is {\em injective}\footnote{Injective MPS are known to be ``generic". More precisely, consider the space $\mathbb{C}^D\otimes \mathbb{C}^D \otimes \mathbb{C}^\physical$ of all defining tensors with physical dimension $\physical$ and bond dimension $D$. Then the set of defining tensors with a primitive transfer operator forms an open, co-measure zero set. The definition of injective that we use here differs from the one commonly used in the literature (cf. \cite{perez2006matrix}), but is ultimately equivalent. For a proof of equivalence, see Definition 8, Lemma 6, and Theorem 18 of \cite{ruizthesis}.} if the associated map~$\cE$ is {\em primitive}, i.e., if the fixed-point $r$ is positive definite (and not just positive semi-definite), and if the eigenvalue~$1$ associated to $r$ is the only eigenvalue on the unit circle, including multiplicity~\cite[Theorem~6.7]{wolfchannel}.

From expression~\eqref{eq:mpssiteindependent}, we can see that there is a  gauge freedom of the form
\begin{align}
\tilde{A}_j&=P^{-1}A_jP,\qquad \tilde{X} = P^{-1}XP,\quad\textrm{ for }j=1,\ldots,\physical, \mylabel{eq:gaugefreedommps}
\end{align}
for every invertible matrix~$P\in GL(\mathbb{C}^D)$, for which $\ket{\Psi(A,X,n)}=\ket{\Psi(\tilde{A},\tilde{X},n)}$. Given an injective MPS, the defining tensors can be brought into a \it canonical form \rm by exploiting this gauge freedom in the definition of the MPS.\footnote{The canonical form holds for non-injective MPS as well, see~\cite{perez2006matrix}. We only consider the injective case here.}

One proceeds as follows: given an injective MPS, let~$r$ denote the unique fixed-point of the transfer operator~$\cE$. We can apply the gauge freedom~\eqref{eq:gaugefreedommps} with $P=\sqrt{r}$ to obtain an equivalent MPS description by matrices $\{\tilde{A}_j:=r^{-1/2}A_j r^{1/2}\}_{j=1}^\physical$, where the associated map~$\tilde{\cE}$  is again primitive with spectral radius~$1$, but now with the identity operator~$\tilde{r}=\identityoperator_{\mathbb{C}^D}$ as the unique fixed-point.
 
Similarly, one can show that the adjoint
\[\cE^\dagger(Y)=\sum_{i=1}^\physical A_i^\dagger Y A_i\]
of a primitive map~$\cE$ is also primitive.\footnote{Note that the adjoint is taken with respect to the Hilbert-Schmidt inner product on~$\cB(\mathbb{C}^D)$. One way to see that the adjoint of a primitive map is primitive is to note that an equivalent definition for primitivity given in ~\cite[Theorem~6.7(2)]{wolfchannel} is in terms of irreducible maps. A map is irreducible if and only if its adjoint is irreducible (see the remarks in~\cite{wolfchannel} after Theorem 6.2). This in turn means that a map is primitive if and only if its adjoint is primitive.}  Since the spectrum of~$\cE^\dagger$ is given by $\mathsf{spec}(\cE^\dagger)=\overline{\mathsf{spec}(\cE)}$, this implies that the map~$\cE^\dagger$ has a unique positive fixed-point~$\ell$ with eigenvalue~$1$, with all other eigenvalues having magnitude less than~$1$. 

Now, let $\tilde{\ell}$ denote the unique fixed-point of the previously defined $\tilde{\cE}$. Since $\tilde{\ell}$ is positive definite, it is unitarily diagonalizable:
\[ \tilde{\ell} = U\Lambda U^\dagger, \]
with $U$ being a unitary matrix, and $\Lambda$ being a diagonal matrix with all diagonal entries positive. Using the gauge freedom~\eqref{eq:gaugefreedommps} in the form
\begin{align}
  \tilde{A}_j \mapsto \dtilde{A}_j = U^\dagger\tilde{A}_jU\qquad\textrm{ for }j=1,\ldots,\physical\ ,\mylabel{eq:tildeakdefgauge}
\end{align} 
we obtain an equivalent MPS description such that the associated channel~$\dtilde{\cE}^\dagger$ has a fixed-point given by a positive definite diagonal matrix~$\Lambda$. We may without loss of generality take $\Lambda$ to be normalized as $\tr(\Lambda) = 1$. It is also easy to check that the identity operator $\mathbb{I}_D$ remains the unique fixed-point of $\dtilde{\cE}$.
 
In summary, given an injective MPS with associated map~$\cE$, one may, by using the gauge freedom, assume without loss of generality that:
\begin{enumerate}[(i)]
    \item The unique  fixed-point~$r$ of $\mathcal{E}$ is equal to the identity, i.e.,~$r=I_{\mathbb{C}^D}$.\mylabel{it:proponeinjectv}
    \item The unique  fixed-point~$\ell$ of $\mathcal{E}^\dagger$ is given by a positive definite diagonal matrix $\ell=\Lambda$, normalized so that $\tr(\Lambda) = 1$.\mylabel{it:proptwoinjectv}
\end{enumerate}
An MPS with defining tensors $A$ satisfying these two properties above is said to be in \it canonical form\rm.

In the following, after fixing a standard orthonormal basis~$\{\ket{\alpha}\}_{\alpha=1}^\bond$ of $\mathbb{C}^D$, we identify elements~$X\in\cB(\mathbb{C}^D)$ with vectors $\kket{X}\in\mathbb{C}^D\otimes\mathbb{C}^D$ via the vectorization isomorphism
\[X\mapsto \kket{X}:=(X^T\otimes I) \sum_{\alpha=1}^D \ket{\alpha}\otimes\ket{\alpha}=\sum_{\alpha,\beta=1}^D X_{\alpha,\beta}\ket{\beta}\otimes\ket{\alpha},\] 
where $X=\sum_{\alpha,\beta=1}^D X_{\alpha,\beta}\ket{\alpha}\bra{\beta}$. It is easy to verify that $\sspr{X}{Y}=\tr(X^\dagger Y)$, i.e., the standard inner product on~$\mathbb{C}^D\otimes\mathbb{C}^D$ directly corresponds to the Hilbert-Schmidt inner product of operators in~$\cB(\mathbb{C}^D)$ under this identification. Furthermore, under this isomorphism, a super-operator~$\cE : \cB(\mathbb{C}^D)\rightarrow \cB(\mathbb{C}^D)$ becomes a linear map $E:\mathbb{C}^D\otimes\mathbb{C}^D\rightarrow\mathbb{C}^D\otimes\mathbb{C}^D$ defined by 
\[\kket{\cE(X)}=E\kket{X}\] 
for all $X\in\cB(\mathbb{C}^D)$. The matrix~$E$ is simply the matrix representation of~$\cE$, thus~$E$ has the same spectrum as~$\cE$. Explicitly, for a map of the form~\eqref{eq:transferchannel}, it is given by   
\begin{align}
E&=\sum_{i = 1}^\physical \overline{A_i}\otimes A_i\ .\mylabel{eq:transferoperatorbasicdef}
\end{align}
The fixed-point equations for a fixed-point~$r$ of~$\cE$ and a fixed-point~$\ell$ of~$\cE^\dagger$ become
\begin{align}
\bbra{\ell}E&=\bbra{\ell}\qquad\textrm{ and }\qquad E\kket{r}=\kket{r}\ ,\mylabel{eq:leftrighteveqs}
\end{align}
i.e., the corresponding vectors are left and right eigenvectors of $E$, respectively.

For a site-independent MPS~$\ket{\Psi(A,X,n)}$, defined by matrices $\{A_j\}_{j=1}^\physical$, 
we call the associated matrix ~$E$ (cf.~\eqref{eq:transferoperatorbasicdef}) the {\em transfer matrix}.  
Many key properties of a site-independent MPS are captured by its transfer matrix. For example, the normalization of the state is given by
\begin{align}
\|\Psi\|^2&=\langle\Psi|\Psi\rangle=\tr(E^{n}(\overline{X}\otimes X))\ .
\end{align}
If the MPS is injective, then, according to~\eqref{it:proponeinjectv}--\eqref{it:proptwoinjectv}, it has a Jordan decomposition of the form 
\begin{align}
    E = \kket{I}\bbra{\Lambda} \oplus \tilde{E}.\mylabel{eq:transferoperatorilambdae}
\end{align}
In this expression, $\kket{I}\bbra{\Lambda}$ is the ($1$-dimensional) Jordan block corresponding to eigenvalue~$1$, whereas~$\tilde{E}$ is a direct sum of Jordan blocks with eigenvalues of modulus less than~$1$. 
The second largest eigenvalue~$\lambda_2$ of~$E$ has a direct interpretation in terms of the correlation length~$\xi$ of the state, which determines two-point correlators $|\langle \sigma_j\sigma_{j'}\rangle-\langle \sigma_j\rangle \cdot\langle\sigma_{j'}\rangle|\sim e^{-|j-j'|/\xi}$. The latter is given by $\xi=\log(1/\lambda_2)$.

For an injective MPS, the fact that~$\kket{I}$ is the unique right-eigenvector of~$E$ with eigenvalue~$1$ implies the normalization condition
\begin{align}
\sspr{\Lambda}{I}=\tr(\Lambda)=1\ .\mylabel{eq:normalizationconditionlambdaI}
\end{align}
We will represent these identities diagrammatically, which is convenient for later reference. The matrix~$\Lambda$ will be shown by a square box, the identity matrix corresponds to a straight line. That is, the normalization condition~\eqref{eq:normalizationconditionlambdaI} takes the form
\begin{align}
\raisebox{-.54\height} {\includegraphics[width=2.5cm]{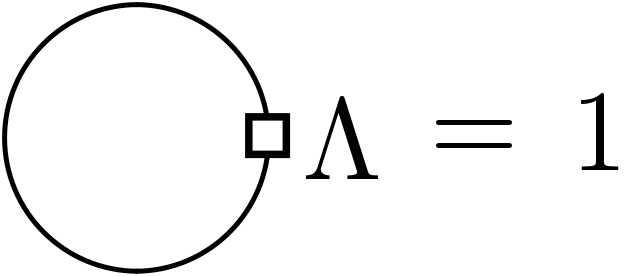}} \end{align}
and the left and right eigenvalue equations~\eqref{eq:leftrighteveqs} 
\begin{align}
\raisebox{-.54\height} {\includegraphics[width=3.7cm]{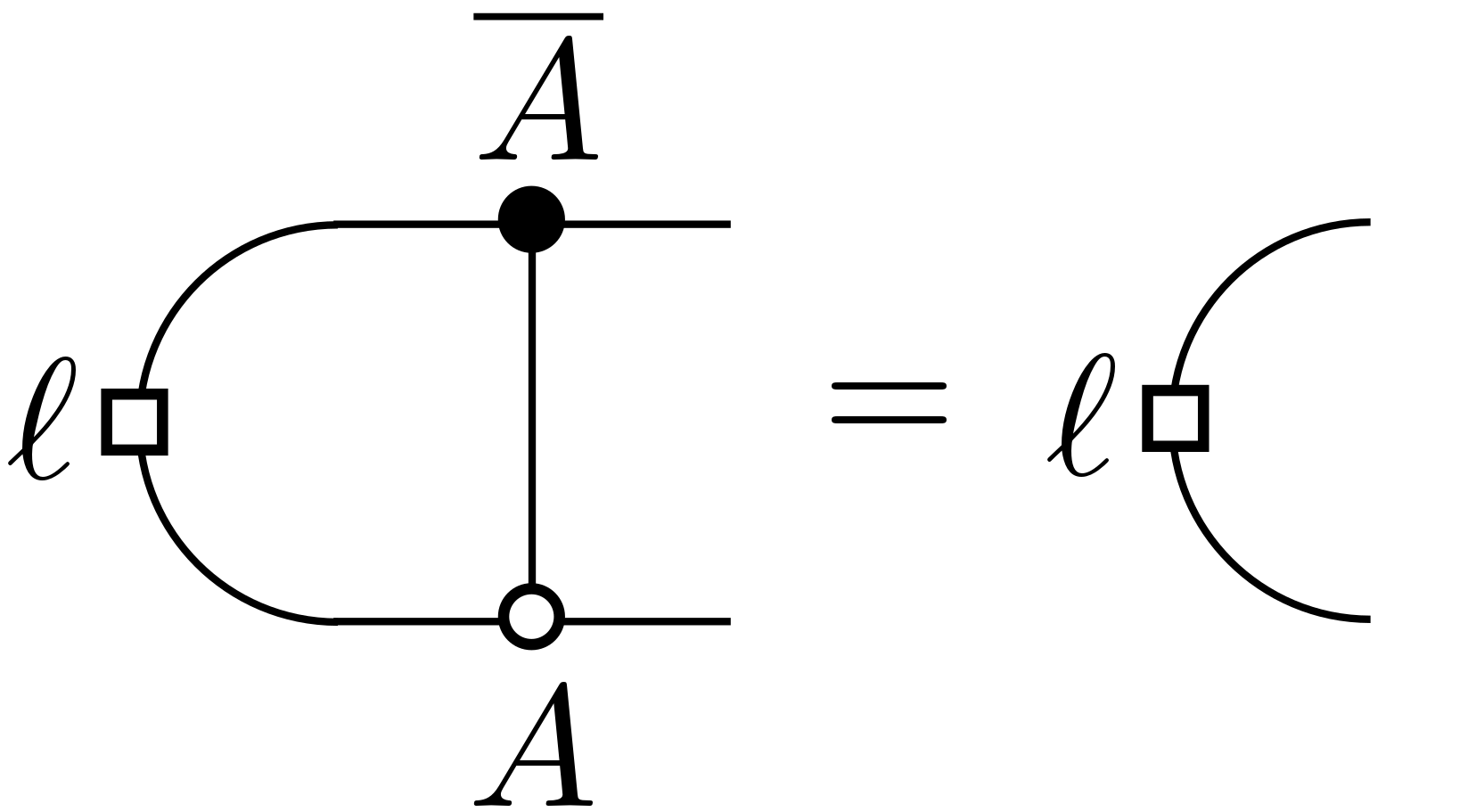}}\mylabel{eq:lefteigenvalueeq}\qquad\qquad\qquad 
\raisebox{-.54\height} {\includegraphics[width=4.4cm]{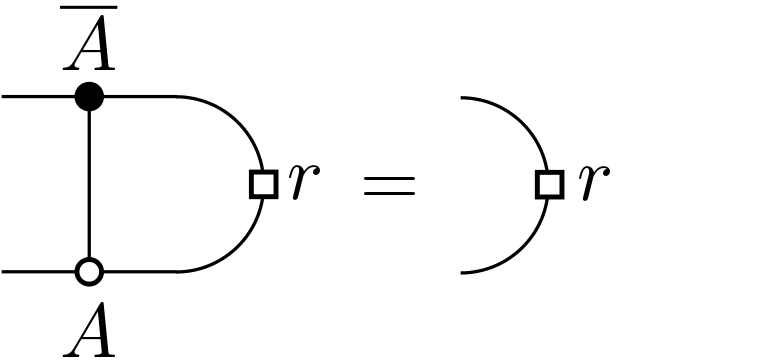}}
\end{align}

\subsection{Transfer matrix techniques\label{sec:transfermatrixtechniques}}
Here we establish some 
essential statements for the analysis of transfer operators. In Section~\ref{sec:generalmixedtransferops}, we  introduce generalized (non-standard) transfer operators: these can be used to express the matrix elements
of the form $\bra{\Psi}F\ket{\Psi'}$ of local operators~$F$ with respect to pairs of MPS $(\Psi,\Psi')$. In Section~\ref{sec:normboundsgeneralizedtrsfop}, we establish bounds on the norm of such operators. Relevant quantities
appearing in these bounds are the second largest eigenvalue~$\lambda_2$ of the transfer matrix, as well as the sizes of its Jordan blocks. 

\subsubsection{More general and mixed transfer operators\label{sec:generalmixedtransferops}}
Consider a single-site operator $Z\in\cB(\mathbb{C}^\physical)$. The generalized transfer matrix $E_Z\in \cB(\mathbb{C}^D\otimes\mathbb{C}^D)$ is defined as 
\begin{align}
E_Z&=\sum_{n,m}  \bra{m}Z\ket{n}\overline{A_m}\otimes A_n\ .\mylabel{eq:ezdeffirst}
\end{align}
We further generalize this as follows:
if $Z_1,\ldots,Z_d\in \cB(\mathbb{C}^\physical)$, then $E_{Z_1\otimes\cdots \otimes Z_d}\in\cB(\mathbb{C}^D\otimes\mathbb{C}^D)$ is the operator 
\begin{align}
E_{Z_1\otimes\cdots \otimes Z_d}&=E_{Z_1}\cdots E_{Z_d}\ .
\end{align}
This definition extends by linearity
to any operator $F\in\cB((\mathbb{C}^\physical)^{\otimes d})$, and gives a corresponding operator $E_F\in\cB(\mathbb{C}^D\otimes\mathbb{C}^D)$. The tensor network  diagrams for these definitions are given in Figure~\ref{fig:varioustransferops}, and the composition of the corresponding maps is illustrated in Figure~\ref{fig:productoperationtransfer}.

\begin{figure}
\centering
\includegraphics[width=10cm]{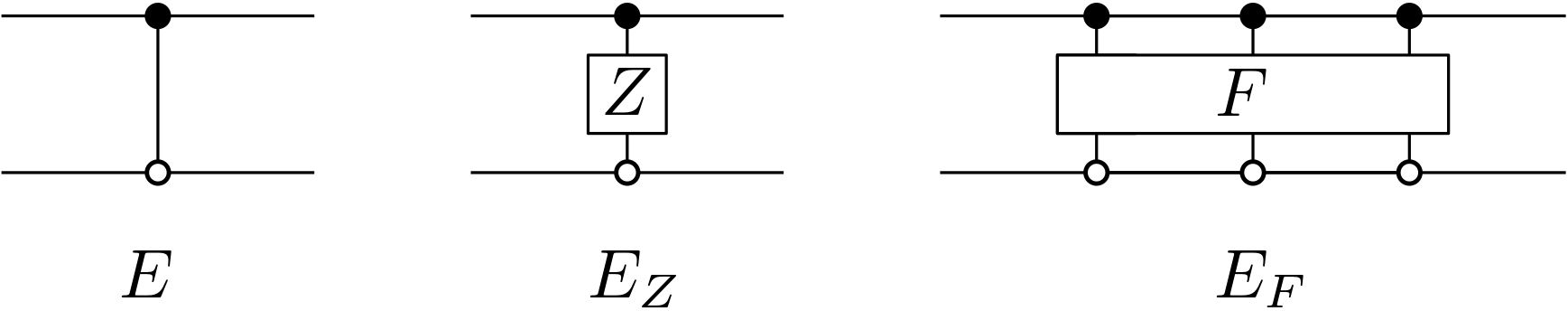}
\caption{The transfer operator $E$, as well as $E_Z$ for  $Z\in \cB(\mathbb{C}^\physical)$, and $E_F$ for  $F\in (\mathbb{C}^\physical)^{\otimes 3}$.}
\mylabel{fig:varioustransferops}
\end{figure}

\begin{figure}
\centering
\includegraphics[width=4cm]{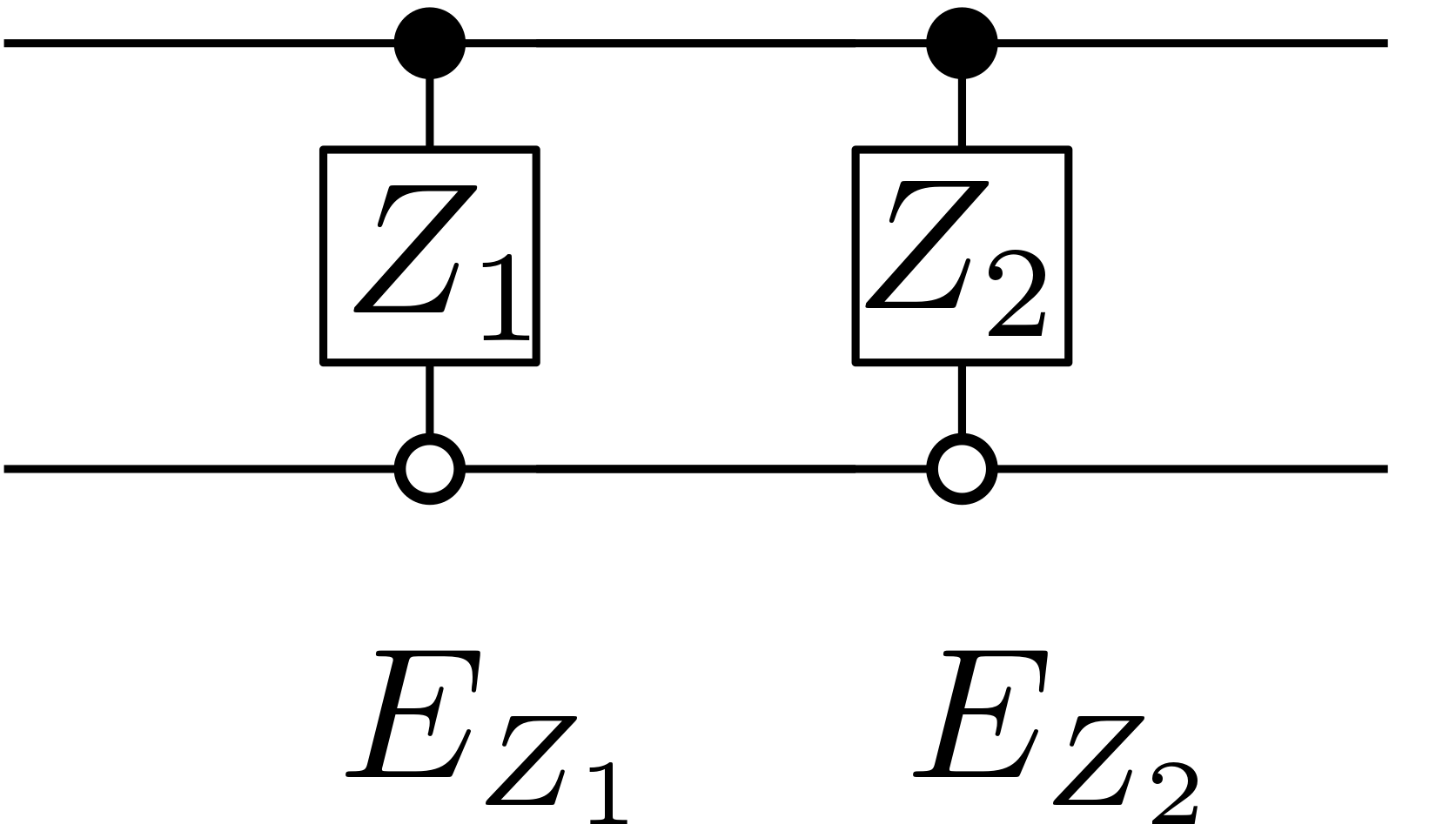}
\caption{The product~$E_{Z_1\otimes Z_2}=E_{Z_1}E_{Z_2}$ of two transfer operators. Left-multiplication by an operator corresponds to attaching the corresponding diagram on the left. 
\mylabel{fig:productoperationtransfer}}
\end{figure}

In the following, we are interested in 
inner products $\bra{\Psi(A,X,n)}\Psi(B,Y,n)\rangle$ of two MPS, defined by local tensors $A$ and $B$, with boundary matrices $X$ and $Y$, which may have different bond dimensions~$D_1$ and $D_2$, respectively. To analyze these, it is convenient to introduce an ``overlap'' transfer operator $E=E(A,B)$ which now depends on both MPS tensors $A$ and $B$.
First we define $E\in\cB(\mathbb{C}^{D_1}\otimes\mathbb{C}^{D_2})$ by
\begin{align}
E&=\sum_{m=1}^\physical \overline{A_m}\otimes B_m\ .
\end{align}
 The definition of $E_Z$ for $Z\in\cB(\mathbb{C}^\physical)$ is analogous to  equation~\eqref{eq:ezdeffirst}, but with appropriate substitutions. We set 
\begin{align}
E_Z&=\sum_{n,m}  \bra{m}Z\ket{n}\overline{A_m}\otimes B_n\ .
\end{align}
Starting from this definition, 
the expression  $E_F\in\cB(\mathbb{C}^{D_1}\otimes\mathbb{C}^{D_2})$ for $F\in\cB((\mathbb{C}^\physical)^{\otimes d})$ is then defined analogously as before.

\subsubsection{Norm bounds on generalized transfer operators\label{sec:normboundsgeneralizedtrsfop}}

A first key observation is that the (operator) norm of powers of any transfer operator scales (at most) as a polynomial in the number of physical spins, with the degree of the polynomial determined by the size of the largest Jordan block.  We need these bounds explicitly and start with the following simple bounds. 

Below, we often consider families of parameters depending on the system size~$n$, i.e., the total number of spins. We write $m\gg h$ as a shorthand for a parameter~$m$ ``being sufficiently large'' compared to another parameter~$h$.  More precisely, this signifies that we assume that~$|h/m|\rightarrow 0$ for $n\rightarrow\infty$, and that by a corresponding choice of a sufficiently large~$n$, the term $|h/m|$ can be made  sufficiently small for a given bound to hold. Oftentimes $h$ will in fact be constant, with $m\rightarrow\infty$ as $n\rightarrow\infty$. 
\begin{lemma} \mylabel{lem:jordanblocksr}
For $m>h$, the Frobenius norm of the $m$-th power $(\lambda I + N)^m$ of a Jordan block~$\lambda I+N\in \cB(\mathbb{C}^h)$  with eigenvalue $\lambda$, such that $|\lambda|\leq 1$, and size $h$ is bounded by
\begin{align}
\left\Vert \left(\lambda I+N\right)^{m}\right\Vert _{F}\le 3h^{3/2}m^{h-1}|\lambda|^{m-(h-1)}\ .\mylabel{eq:upperboundlambdanpowm}
\end{align}
Furthermore,
\begin{align}
\|\left(\lambda I+N\right)^{m}\|&\leq 4 m^{h-1}\qquad\textrm{ for  }m\gg h\ .\label{eq:simplejordanblockpower}
\end{align}
\end{lemma}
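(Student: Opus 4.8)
The plan is to reduce everything to a finite binomial sum. Since $\lambda I$ commutes with the nilpotent part $N$, and since $N^{k}$ is supported on the $k$-th superdiagonal with $N^{h}=0$, I would first expand
\begin{align}
(\lambda I+N)^{m}=\sum_{k=0}^{h-1}\binom{m}{k}\lambda^{m-k}N^{k}\ .
\end{align}
The whole lemma then comes down to estimating this sum, in which the only size-dependent quantities are the binomial coefficients $\binom{m}{k}$ for $0\le k\le h-1$.

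For the Frobenius bound I would use that the matrices $\{N^{k}\}_{k=0}^{h-1}$ occupy pairwise disjoint superdiagonals, so they are orthogonal in the Hilbert-Schmidt inner product and $\|N^{k}\|_{F}^{2}=h-k$. Hence
\begin{align}
\left\|(\lambda I+N)^{m}\right\|_{F}^{2}=\sum_{k=0}^{h-1}(h-k)\binom{m}{k}^{2}|\lambda|^{2(m-k)}\ .
\end{align}
I would then bound each factor crudely: since $|\lambda|\le 1$ and $(h-1)-k\ge 0$ we get $|\lambda|^{2(m-k)}\le |\lambda|^{2(m-(h-1))}$; the estimate $\binom{m}{k}\le m^{k}\le m^{h-1}$ holds for $k\le h-1$; and $h-k\le h$ across $h$ terms. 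This yields $\left\|(\lambda I+N)^{m}\right\|_{F}^{2}\le h^{2}m^{2(h-1)}|\lambda|^{2(m-(h-1))}$, and taking square roots gives $h\,m^{h-1}|\lambda|^{m-(h-1)}$, which is in fact stronger than the claimed $3h^{3/2}m^{h-1}|\lambda|^{m-(h-1)}$.

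For the operator-norm bound I would instead apply the triangle inequality to the same expansion, using that each $N^{k}$ with $0\le k\le h-1$ is a partial isometry, so $\|N^{k}\|=1$, together with $|\lambda|\le 1$:
\begin{align}
\|(\lambda I+N)^{m}\|\le \sum_{k=0}^{h-1}\binom{m}{k}|\lambda|^{m-k}\le \sum_{k=0}^{h-1}\binom{m}{k}\ .
\end{align}
This is where the hypothesis $m\gg h$ is used: once $m\ge 2h$, the ratio $\binom{m}{k+1}/\binom{m}{k}=(m-k)/(k+1)\ge 1$ for all $k\le h-1$, so the coefficients are nondecreasing and the sum is at most $h\binom{m}{h-1}\le h\,m^{h-1}/(h-1)!\le 2m^{h-1}\le 4m^{h-1}$. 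I expect the main (and essentially only) obstacle to be cosmetic, namely obtaining the clean constant $4$ that is \emph{independent of $h$}: this is exactly what $m\gg h$ buys, since it makes the binomial coefficients monotone so that the sum is controlled by its top term $\binom{m}{h-1}$, whereas the Frobenius estimate, which must hold for every $m>h$, necessarily retains a mild $h$-dependence in its prefactor.
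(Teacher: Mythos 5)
Your proposal is correct, and it shares the paper's skeleton --- expand $(\lambda I+N)^m=\sum_{k=0}^{h-1}\binom{m}{k}\lambda^{m-k}N^k$ and control the binomial coefficients, with $\binom{m}{h-1}$ as the pivotal quantity --- but both estimates are executed differently. For the Frobenius bound, the paper applies the triangle inequality with $\|N^r\|_F=(h-r)^{1/2}\le h^{1/2}$ and then bounds $\sum_{r=0}^{h-1}\binom{m}{r}\le 3h\,m^{h-1}$ via $\binom{m}{r}\le (em/r)^r\le 3m^r$, which is where the constant $3h^{3/2}$ comes from; you instead observe that the powers $N^k$ live on disjoint superdiagonals, so the expansion is an orthogonal sum in the Hilbert--Schmidt inner product, and the resulting exact Pythagorean identity $\|(\lambda I+N)^m\|_F^2=\sum_{k=0}^{h-1}(h-k)\binom{m}{k}^2|\lambda|^{2(m-k)}$ yields the sharper prefactor $h\le 3h^{3/2}$, so your bound implies the stated one. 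For the operator norm, the paper computes the explicit entries $((\lambda I+N)^m)_{p,q}=\binom{m}{q-p}\lambda^{m-(q-p)}$ and uses the crude estimate $\|M\|\le h\max_{p,q}|M_{p,q}|$ to arrive at $h\binom{m}{h-1}$, then finishes with $m!/(m-(h-1))!\le 2m^{h-1}$ for $m\gg h$; you reach the same quantity $h\binom{m}{h-1}$ by the triangle inequality (each $N^k$ a partial isometry, $\|N^k\|=1$) together with the monotonicity of $k\mapsto\binom{m}{k}$ for $k\le h-1$ once $m\ge 2h$, and both routes then invoke the same elementary fact $h/(h-1)!\le 2$. Your version has two minor advantages: a strictly better Frobenius constant, and an explicit threshold $m\ge 2h$ in place of the paper's asymptotic $m\gg h$ bookkeeping; the paper's entrywise computation, on the other hand, is what it reuses later (in Lemma~\ref{lem:normscalingtransferop}\eqref{it:thirdjordan}) to get the scaling of individual matrix elements, which your norm-level argument does not provide.
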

\begin{proof}
For $h=0$ the claim is trivial. Assume that $h>1$. Because $N^h=0$ and $N^r$ has exactly $h-r$ non-zero entries for $r<h$, we have
\begin{align}
\left\Vert \left(\lambda I_{h}+N\right)^{m}\right\Vert _{F}&\leq\sum_{r=0}^{h-1}\binom{m}{r}|\lambda|^{m-r}\left\|N^{r}\right\|_{F}\\
&\leq |\lambda|^m \cdot |\lambda|^{-(h-1)}\sum_{r=0}^{h-1}\binom{m}{r}(h-r)^{1/2}\\
&\leq h^{1/2} |\lambda|^m \cdot |\lambda|^{-(h-1)}\sum_{r=0}^{h-1}\binom{m}{r}\ .
\end{align}
Since the right hand side is maximal for $r=h-1$, and the binomial coefficient can be bounded from above by $\binom{m}{r}\leq \left(\frac{e m}{r}\right)^r\leq 3 \cdot m^{r}$
we obtain
\begin{align}
\sum_{r=0}^{h-1}\binom{m}{r}\leq 3h\cdot m^{h-1}\ ,
\end{align}
hence, the first claim follows.

For the second claim, recall that the entries of the $m$-th power of a Jordan block are 
\begin{align}
\left((\lambda I_h+N)^m\right)_{p,q}
&=\begin{cases}
\binom{m}{q-p}\lambda^{m+(p-q)}
 \qquad&\textrm{ if }q\geq p\\
 0 &\textrm{ otherwise}\ 
\end{cases}\label{eq:jordanblockpowermatrixelement}
\end{align} 
for $p,q\in [h]$. This means that if $|\lambda|=1$, the maximum matrix element $|\left((\lambda I_h+N)^m\right)_{p,q}|=\binom{m}{h-1}$ is attained for $(p,q)=(1,h)$. Using the Cauchy-Schwarz inequality, it is straightforward to check that 
\begin{align}
\|\left(\lambda I+N\right)^{m}\|&\leq h\max_{p,q}|\left(\lambda I+N\right)^{m}_{p,q}|=h\cdot \binom{m}{h-1}=\frac{h}{(h-1)!} \frac{m!}{(m-(h-1))!}\ .
\end{align}
Since $\frac{h}{(h-1)!}\leq 2$ for $h\in\mathbb{N}$ and
\begin{align}
\frac{m!}{(m-(h-1))!} &=m^{h-1}(1+O(h/m))\leq 2 m^{h-1}\quad\textrm{ for }m\gg h\ ,
\end{align}
the claim follows. 
\end{proof}

Now, we apply Lemma~\ref{lem:jordanblocksr} to (standard and mixed) transfer operators. It is convenient to state these bounds as follows. The first two statements are about the scaling of the norms of powers of~$E$; the last statement is about the magnitude of matrix elements in powers of~$E$. 
\begin{lemma}\mylabel{lem:normscalingtransferop}
Let $\rho(E)$ denote the spectral radius of a matrix~$E\in \cB(\mathbb{C}^{D_1}\otimes\mathbb{C}^{D_2})$. 
\begin{enumerate}[(i)]
\item\mylabel{it:firstjordan}  Suppose $\rho(E)\leq 1$. Let $h^*$ be the size of the largest Jordan block(s) of~$E$. Then
\begin{align}
\|E^m\|\leq 4 m^{h-1}\qquad\textrm{ for  }m\gg h\ .
\end{align}
\item\mylabel{it:secondjordan}
If $\rho(E)<1$, then
\begin{align}
\|E^m\|_F&\leq \rho(E)^{m/2}\qquad\textrm{ for }m\gg D_1D_2\ .
\end{align}
We will often use $\|E^m\|_F \le 1$ as a coarse bound.
\item\mylabel{it:thirdjordan}
Suppose that $\rho(E)=1$. Let $h^*$ denote the size of the largest Jordan block(s) in~$E$.  For $p,q\in [D_1D_2]$, let $(E^m)_{p,q}$ denote 
the matrix element of $E^m$ with respect to the standard computational basis $\{\ket{p}\}_{p=1}^{D_1D_2}$. Then the following holds: for all $p,q\in [D_1 D_2]$, there is a constant $c_{p,q}$ with $c_{p,q}=O(1)$ as $m\rightarrow\infty$ and some $\ell\in \{1,\ldots,h^*\}$ such that 
\begin{align}
|(E^m)_{p,q}|&=c m^{\ell-1}(1+O(m^{-1}))\ .
\end{align}
\end{enumerate}
\end{lemma}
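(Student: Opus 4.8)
The plan is to prove all three claims via the Jordan decomposition of $E$, reducing each to the single-Jordan-block bounds of Lemma~\ref{lem:jordanblocksr}. Write $E = S J S^{-1}$ where $J = \bigoplus_b J_b$ is the Jordan normal form, each $J_b = \lambda_b I + N_b$ being a Jordan block of some size $h_b$ with $|\lambda_b| \leq \rho(E)$. Then $E^m = S J^m S^{-1}$ and $J^m = \bigoplus_b J_b^m$, so every entry of $E^m$ is a fixed linear combination (with coefficients determined by $S$ and $S^{-1}$, independent of $m$) of the entries of the blocks $J_b^m$. This is the common engine for all three parts; the differences lie only in how we extract the dominant scaling.

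For part~\eqref{it:firstjordan}, since $\rho(E)\leq 1$ we have $|\lambda_b|\leq 1$ for every block, so Lemma~\ref{lem:jordanblocksr}, equation~\eqref{eq:simplejordanblockpower}, gives $\|J_b^m\| \leq 4 m^{h_b - 1} \leq 4 m^{h^*-1}$ for $m \gg h^*$. Hence $\|J^m\| \leq 4 m^{h^*-1}$ (the norm of a block-diagonal matrix is the max of the block norms), and $\|E^m\| \leq \|S\|\,\|J^m\|\,\|S^{-1}\|$; absorbing the constant $\|S\|\|S^{-1}\|$ gives the stated $O(m^{h^*-1})$ bound. For part~\eqref{it:secondjordan}, when $\rho(E)<1$ every $|\lambda_b| \leq \rho(E) < 1$, and equation~\eqref{eq:upperboundlambdanpowm} bounds $\|J_b^m\|_F$ by $3 h_b^{3/2} m^{h_b - 1} \rho(E)^{m - (h_b-1)}$. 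Since the polynomial factor $m^{h_b-1}$ and the geometric correction $\rho(E)^{-(h_b-1)}$ are eventually dominated by the remaining decay $\rho(E)^{m}$, for $m \gg D_1 D_2$ (which dominates all block sizes) the whole quantity is $\leq \rho(E)^{m/2}$; transporting through $S$ and using equivalence of $\|\cdot\|_F$ and $\|\cdot\|$ up to dimension-dependent constants (again absorbed once $m$ is large) yields the claim, with $\|E^m\|_F \leq 1$ as a trivial corollary.

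Part~\eqref{it:thirdjordan} is the most delicate and I expect it to be the main obstacle, because it asks for the exact leading-order scaling of an individual matrix entry, not just an upper bound. The key fact from equation~\eqref{eq:jordanblockpowermatrixelement} is that for a single block with $|\lambda|=1$ the entry $(J_b^m)_{p,q} = \binom{m}{q-p}\lambda^{m+(p-q)}$ behaves like $\tfrac{\lambda^m}{(q-p)!}\,m^{q-p}(1 + O(m^{-1}))$, a clean monomial in $m$ times a bounded oscillating phase. Writing a given entry $(E^m)_{p,q}$ as a finite sum of such contributions over all blocks, each term scales as (constant) $\times \lambda_b^m \times m^{(\text{some } j)}$ with $j \in \{0,\ldots,h_b-1\}$; since only blocks with $|\lambda_b| = 1$ contribute at the top polynomial order (blocks with $|\lambda_b|<1$ are exponentially suppressed), the sum is dominated by its highest surviving power $m^{\ell-1}$ for some $\ell \in \{1,\ldots,h^*\}$. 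The subtlety is that distinct eigenvalues on the unit circle carry distinct phases $\lambda_b^m$, so the leading coefficient $c_{p,q}$ is genuinely $m$-dependent through these phases — which is precisely why the statement only asserts $c_{p,q} = O(1)$ rather than convergence to a constant, and why no cancellation can push the stated power below the true $\ell-1$. I would argue that after grouping terms by power of $m$, the top group is nonzero (so $\ell$ is well-defined as that top power) and its coefficient is a bounded-in-$m$ combination of unit-modulus phases, giving exactly the form $c\,m^{\ell-1}(1+O(m^{-1}))$ once the lower-order groups are folded into the $O(m^{-1})$ relative error.
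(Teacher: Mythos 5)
Your proposal follows essentially the same route as the paper's own proof: Jordan decomposition of $E$ combined with the single-block estimates of Lemma~\ref{lem:jordanblocksr} (equations~\eqref{eq:upperboundlambdanpowm}, \eqref{eq:simplejordanblockpower}, and~\eqref{eq:jordanblockpowermatrixelement}), extracting the dominant scaling part by part, so it is correct and not a genuinely different argument. If anything, you are slightly more careful than the paper, which silently identifies $\|E^m\|$ with $\max_{\lambda}\|(\lambda I+N)^m\|$ — thereby dropping the conjugation factor $\|S\|\,\|S^{-1}\|$ that your version explicitly absorbs into the constant — and which, in part~(iii), asserts that the scaling form survives linear combinations without addressing the oscillating unit-modulus phases and potential cancellations that you correctly flag as the reason $c_{p,q}$ is only $O(1)$ rather than a true constant.
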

\begin{proof}
For $\lambda\in\mathsf{spec}(E)$, let us denote by $\lambda I_{h(\lambda)}+N_{h(\lambda)}$ the associated Jordan block, where $h(\lambda)$ is its size. Then
\begin{align}
\|E^m\|&=\max_{\lambda\in\mathsf{spec}(E)}\|(\lambda I_{h(\lambda)}+N_{h(\lambda)})^m\|\leq \max_{\lambda\in\mathsf{spec}(E)}\|(\lambda I_{h(\lambda)}+N_{h(\lambda)})^m\|_F
\end{align}
where we assumed that $m\gg h^*\ge h(\lambda)$, $|\lambda|\leq 1$ and~\eqref{eq:simplejordanblockpower}. This shows claim~\eqref{it:firstjordan}.

Claim~\eqref{it:secondjordan} immediately follows from~\eqref{eq:upperboundlambdanpowm} and the observation that $m^{D_1D_2-1}\rho(E)^m=O(\rho(E)^{m/2})$.

For the proof of statement~\eqref{it:thirdjordan}, observe that matrix elements~\eqref{eq:jordanblockpowermatrixelement} of a Jordan block (matrix) with eigenvalue~$\lambda$ (with $|\lambda|=1$) of size $h$ scale as
\begin{align}
\big|((\lambda I+N)^m)_{p,q}\big|&=\frac{1}{(q-p)!}\frac{m!}{(m-(q-p))!}=\frac{1}{(q-p)!}m^{q-p}\left(1+O(1/m)\right)
\end{align}
for $q>p$ and are constant otherwise. Because $q-p\in\{1,\ldots,h-1\}$ when $q>p$, this is of the form $m^{\ell}(1+O(1/m))$ for some $\ell\in [h-1]$.  Since~$E^m$ is similar (as a matrix) to a direct sum of such powers of Jordan blocks, and the form of this scaling does not change under linear combination of matrix coefficients, the claim follows. 
\end{proof}

Now let us consider the case where  $E=\kket{\ell}\bbra{r}\oplus\tilde{E}$ is the transfer operator of an injective MPS, normalized with maximum eigenvalue~$1$. Let $\lambda _2 < 1$ denote the second largest eigenvalue. Without loss of generality, we can take the MPS to be in canonical form, so that $E$ has a unique right fixed-point given by the identity matrix $I$ and a unique left fixed-point given by some positive-definite diagonal matrix $\Lambda$ with unit trace. We can then write the Jordan decomposition of the transfer matrix as
\begin{equation}
    E = \kket{I}\bbra{\Lambda} \oplus \tilde{E}\ , \label{eq:transferoperatorsumexpr}
\end{equation}
where $\kket{I}$ and $\kket{\Lambda}$ denotes the vectorization of $I$ and $\Lambda$ respectively, and where $\tilde{E}$ denotes the remaining Jordan blocks of $E$. Note that powers of $E$ can then be expressed as
\begin{align}
    E^m = \kket{I}\bbra{\Lambda} \oplus \tilde{E}^m\ . \label{eq:transfer_power}
\end{align}
We can bound the Frobenius norm of the transfer matrix as
\begin{equation}
\|E^m\|^2_F = \|\kket{I}\bbra{\Lambda} \oplus \tilde{E}^m\|_F^2 = \tr(I)\tr(\Lambda^2) + \|\tilde {E}^m\|_F^2 \le D + \|\tilde {E}^m\|_F^2,
\end{equation}
where $\tr(I) = D$ and $\tr(\Lambda^2) \le \tr(\Lambda)^2 = 1$. In particular, since $\rho(\tilde{E})=\lambda_2$, we obtain from Lemma~\ref{lem:normscalingtransferop}\eqref{it:secondjordan} that
\begin{align}
\|\tilde{E}^m\|_F\leq \lambda_2^{m/2}\quad\textrm{ for }m\gg D\  .\mylabel{eq:tildeEpowersnormbnd}
\end{align}
This implies  the following statement:
\begin{lemma}\mylabel{lem:normboundinjectivemps}
The transfer operator~$E$ of an injective MPS satisfies
\begin{align}
\|E^m\|_F\leq \sqrt{D+1}\qquad \text{for } m\gg D.\label{eq:bnddimembnd}
\end{align}
\end{lemma}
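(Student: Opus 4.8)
The plan is to assemble the lemma directly from the two bounds derived in the paragraph immediately preceding its statement; the only real input is the spectral gap guaranteed by injectivity. First I would recall that, in canonical form, the transfer operator of an injective MPS admits the decomposition $E = \kket{I}\bbra{\Lambda}\oplus\tilde E$ with $\tilde E$ collecting all Jordan blocks whose eigenvalues have modulus strictly below~$1$, so that $E^m = \kket{I}\bbra{\Lambda}\oplus\tilde E^m$ (the rank-one piece is idempotent because $\sspr{\Lambda}{I}=\tr(\Lambda)=1$ by~\eqref{eq:normalizationconditionlambdaI}). This yields the Frobenius-norm estimate $\|E^m\|_F^2 \le \tr(I)\tr(\Lambda^2) + \|\tilde E^m\|_F^2 \le D + \|\tilde E^m\|_F^2$, using $\tr(I)=D$ and $\tr(\Lambda^2)\le\tr(\Lambda)^2 = 1$.

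The second ingredient is the decay of the off-fixed-point part. Since injectivity (primitivity of $\cE$) forces the second-largest eigenvalue to satisfy $\lambda_2 = \rho(\tilde E) < 1$, Lemma~\ref{lem:normscalingtransferop}\eqref{it:secondjordan} applied to $\tilde E$ gives $\|\tilde E^m\|_F \le \lambda_2^{m/2}$ for $m \gg D$. As $\lambda_2 < 1$, the coarse bound $\|\tilde E^m\|_F \le 1$ holds in this regime, whence $\|\tilde E^m\|_F^2 \le 1$.

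Combining the two estimates gives $\|E^m\|_F^2 \le D + 1$ for $m \gg D$, and taking square roots proves the claim. There is no substantial obstacle here: all the genuine work sits in the preceding Jordan-block estimates (Lemma~\ref{lem:jordanblocksr} and Lemma~\ref{lem:normscalingtransferop}) and in the canonical-form normalization $\tr(\Lambda)=1$, so the present lemma is essentially bookkeeping. The one point worth stating carefully is that it is precisely the strict gap $\lambda_2 < 1$ --- equivalently, injectivity --- that lets us replace the potentially growing factor $\|\tilde E^m\|_F$ by the constant~$1$, so that the fixed-point block $\kket{I}\bbra{\Lambda}$ (of constant Frobenius norm at most~$\sqrt{D}$) dominates and the total remains bounded by $\sqrt{D+1}$ uniformly in~$m$.
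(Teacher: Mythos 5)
Your proof is correct and follows exactly the paper's own route: the Jordan decomposition $E^m=\kket{I}\bbra{\Lambda}\oplus\tilde{E}^m$, the bound $\|E^m\|_F^2\le \tr(I)\tr(\Lambda^2)+\|\tilde{E}^m\|_F^2\le D+\|\tilde{E}^m\|_F^2$, and Lemma~\ref{lem:normscalingtransferop}\eqref{it:secondjordan} giving $\|\tilde{E}^m\|_F\le\lambda_2^{m/2}\le 1$ for $m\gg D$. No gaps; this matches the argument in the paragraph preceding the lemma in the paper.
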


We also need a bound on the norm $\|E_F^{\dagger}(\psi_1\otimes\psi_2)\|$, where~$E$ is a mixed transfer operator, $F\in \cB((\mathbb{C}^\physical)^{\otimes d})$ is an operator acting on $d$~sites, and where $\psi_j\in\mathbb{C}^{D_j}$ for $j=1,2$. 

\begin{lemma} \mylabel{lem:efboundsimplified}
Let  $E_1=E(A)$ and $E_2=E(B)$ be the transfer operators associated with the tensors~$A$ and~$B$, respectively, with bond dimensions $D_1$ and $D_2$. Let $E=E(A,B)\in \cB(\mathbb{C}^{D_1}\otimes\mathbb{C}^{D_2})$ be the combined transfer operator. 
Let $\psi_1\in \mathbb{C}^{D_1}$ and $\psi_2\in\mathbb{C}^{D_2}$ be unit vectors. Then 
\begin{align}
 \|(E_F)^{ \dagger} (\psi_1\otimes\psi_2)\| &\leq \|F\| \sqrt{\|E_1^d\|\cdot\|E_2^d\|}\ , \label{eq:claimfirstefbound}\\
 \|E_F (\psi_1\otimes\psi_2)\| &\leq \|F\| \sqrt{\|E_1^d\|\cdot\|E_2^d\|}\ , \label{eq:claimtwoefbound}\\
 \|E_F\|_F &\leq D_1D_2 \|F\| \sqrt{\|E_1^d\|\cdot \|E_2^d\|}\ , \label{eq:frobeniusnormefbound}
  \end{align}
 for all $F\in\cB((\mathbb{C}^{\physical})^{\otimes d})$. 
  \end{lemma}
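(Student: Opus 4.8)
The plan is to reduce the estimate to a single clean identity followed by two elementary norm inequalities. First I would expand the transfer operator using $E_{Z_1\otimes\cdots\otimes Z_d}=E_{Z_1}\cdots E_{Z_d}$ and linearity to write
\[
E_F=\sum_{\vec{m},\vec{n}}\bra{\vec{m}}F\ket{\vec{n}}\,\overline{M_{\vec{m}}}\otimes N_{\vec{n}},\qquad M_{\vec{m}}=A_{m_1}\cdots A_{m_d},\quad N_{\vec{n}}=B_{n_1}\cdots B_{n_d}.
\]
Taking the adjoint and evaluating on $\psi_1\otimes\psi_2$ gives $E_F^{\dagger}(\psi_1\otimes\psi_2)=\sum_{\vec{m},\vec{n}}\overline{\bra{\vec{m}}F\ket{\vec{n}}}\,u_{\vec{m}}\otimes v_{\vec{n}}$ with $u_{\vec{m}}=M_{\vec{m}}^{T}\psi_1$ and $v_{\vec{n}}=N_{\vec{n}}^{\dagger}\psi_2$. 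Expanding the squared norm and grouping the overlaps $\langle u_{\vec{m}}|u_{\vec{m}'}\rangle$, $\langle v_{\vec{n}}|v_{\vec{n}'}\rangle$ into the positive semidefinite Gram matrices $X$ and $T$ of the families $\{u_{\vec{m}}\}$ and $\{v_{\vec{n}}\}$ (the latter up to a harmless conjugation), I would obtain the exact identity
\[
\|E_F^{\dagger}(\psi_1\otimes\psi_2)\|^2=\tr\!\big(F^{\dagger}X F T\big).
\]

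From here the bound follows in three steps. Writing the trace as $\tr\!\big((F^{\dagger}XF)\,T\big)$ of two positive (hence Hermitian) operators, Cauchy--Schwarz for the Hilbert--Schmidt inner product gives $\le\|F^{\dagger}XF\|_F\,\|T\|_F$, and submultiplicativity ($\|AB\|_F\le\|A\|\,\|B\|_F$ and $\|AB\|_F\le\|A\|_F\,\|B\|$) yields $\|F^{\dagger}XF\|_F\le\|F\|^2\|X\|_F$; thus $\|E_F^{\dagger}(\psi_1\otimes\psi_2)\|^2\le\|F\|^2\,\|X\|_F\,\|T\|_F$. The final ingredient is $\|X\|_F\le\|E_1^d\|$ and $\|T\|_F\le\|E_2^d\|$. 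For this I would use that the Gram matrix $X$ and the frame operator $\Xi_1:=\sum_{\vec{m}}\outp{u_{\vec{m}}}{u_{\vec{m}}}$ have the same Frobenius norm, and that $\Xi_1$ is nothing but the pure-$A$ completely positive map $\cE_A^{d}$ --- up to taking its adjoint and/or a complex conjugation, operations that change neither the Frobenius norm of the output nor the operator norm of the associated transfer matrix --- applied to the rank-one projector $\proj{\psi_1}$. The vectorization identity $\|\cE^{d}(\rho)\|_F=\|E^{d}\kket{\rho}\|$ from Section~\ref{sec:reviewmps} then gives $\|\Xi_1\|_F\le\|E_1^d\|\,\|\proj{\psi_1}\|_F=\|E_1^d\|$, since $\proj{\psi_1}$ has unit Hilbert--Schmidt norm; the same reasoning bounds $\|T\|_F$ by $\|E_2^d\|$. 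This proves~\eqref{eq:claimfirstefbound}; the bound~\eqref{eq:claimtwoefbound} on $E_F$ follows verbatim with $u_{\vec{m}}=\overline{M_{\vec{m}}}\psi_1$, $v_{\vec{n}}=N_{\vec{n}}\psi_2$, and~\eqref{eq:frobeniusnormefbound} follows by applying~\eqref{eq:claimtwoefbound} to each of the $D_1D_2$ computational-basis product vectors $\ket{\alpha}\otimes\ket{\beta}$ and summing.

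The step that requires genuine care --- and dictates the whole structure above --- is retaining the \emph{operator} norm $\|F\|$ rather than the Frobenius norm $\|F\|_F$. The naive route, writing $E_F^{\dagger}(\psi_1\otimes\psi_2)=(U\otimes V)\kket{\overline{F}}$ and bounding by $\|U\|\,\|V\|\,\|F\|_F$, is fatal: $\|F\|_F$ can exceed $\|F\|$ by a factor $\physical^{d/2}$, and here $d$ grows polynomially in $n$. Keeping $F$ wedged as an operator between the two positive Gram matrices, and applying Cauchy--Schwarz only at the Hilbert--Schmidt level, is precisely what makes $\|F\|$ (and not $\|F\|_F$) appear. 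The second subtlety is that the passage $\|X\|_F\le\|E_1^d\|$ must incur no bond-dimension penalty: this succeeds only because $\|X\|_F$ is measured against the transfer matrix acting on the vectorized rank-one projector $\proj{\psi_1}$, which has unit Hilbert--Schmidt norm. Had one instead been forced to control $\tr(X)=\langle\psi_1|\overline{\cE_A^{d}(I)}|\psi_1\rangle$, a spurious factor $\sqrt{D_1}$ would enter --- indeed the analogue of the lemma with $\tr$ in place of $\|\cdot\|_F$ is false --- so it is essential that the argument produces the Frobenius norm, rather than the trace, of the Gram matrix.
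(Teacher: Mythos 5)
Your proof is correct, and while it rests on the same two decisive ideas as the paper's proof --- apply Cauchy--Schwarz exactly once while keeping $F$ wedged as an \emph{operator} (so that only $\|F\|$ and never $\|F\|_F$ appears), and bound the leftover pure-$A$ and pure-$B$ blocks by $\|E_1^d\|$ and $\|E_2^d\|$ --- the intermediate machinery is genuinely different. The paper realizes $\|(E_F)^{\dagger}(\psi_1\otimes\psi_2)\|^2$ as an inner product $\bra{\chi}(F\otimes I^{\otimes d})(I^{\otimes d}\otimes\pi\overline{F}\pi^{\dagger})\ket{\varphi}$ of two vectors in the doubled \emph{physical} space $(\mathbb{C}^{\physical})^{\otimes 2d}$; this requires first establishing the adjoint identity $(E_F)^{\dagger}=E^{\dagger}_{\pi\overline{F}\pi^{\dagger}}$ with the order-reversing permutation $\pi$, after which $\|\chi\|^2=(\bra{\psi_1}\otimes\bra{\psi_1})E_1^d(E_1^{\dagger})^d(\ket{\psi_1}\otimes\ket{\psi_1})\le\|E_1^d\|^2$ is computed directly by submultiplicativity. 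You instead contract everything onto the virtual side, obtaining the exact trace identity $\tr(F^{\dagger}XFT)$ against positive Gram matrices, and convert $\|X\|_F\le\|E_1^d\|$ via Gram--frame duality together with the vectorization isometry --- all of which I have checked and which goes through as written. Your route buys you two things: it eliminates the $\pi$/adjoint bookkeeping entirely, and it yields a strictly better constant in the Frobenius bound, namely $\|E_F\|_F\le\sqrt{D_1D_2}\,\|F\|\sqrt{\|E_1^d\|\cdot\|E_2^d\|}$, since you sum $\|E_F(\ket{\alpha}\otimes\ket{\beta})\|^2$ over the $D_1D_2$ input basis vectors, whereas the paper's estimate discards the exact resolution of the norm over the output indices and pays $D_1D_2$; your bound implies~\eqref{eq:frobeniusnormefbound} a fortiori. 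The paper's presentation, in exchange, keeps the doubled tensor network visible diagrammatically, which is the form reused in later arguments (e.g., in the proof of Lemma~\ref{lem:overlaplowinjectivemps}). One pedantic point: your frame operator is $\Xi_1=\sum_{\vec{m}}M_{\vec{m}}^{T}\proj{\psi_1}\overline{M_{\vec{m}}}$, which is the entrywise conjugate of $(\cE_A^{d})^{\dagger}$ applied to $\proj{\overline{\psi_1}}$ --- the \emph{input} must be conjugated along with the map, not just the map as your hedge suggests; this is harmless, since $\proj{\overline{\psi_1}}$ still has unit Hilbert--Schmidt norm and conjugation/adjoints preserve the operator norm of the transfer matrix, but it should be said explicitly.
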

 \begin{proof}
   Writing matrix elements in the computational basis as 
  $$F_{j_1\cdots j_d,i_1\cdots i_d}:=\bra{j_1\cdots j_d}F\ket{i_1\cdots i_d},$$ 
  we have
  \begin{align}
  E_F &=\sum_{(i_1,\ldots,i_d), (j_1,\ldots,j_d)}
  F_{j_1\cdots j_d,i_1\cdots i_d} (\overline{A}_{j_1}\otimes B_{j_1}
  )(\overline{A}_{j_2}\otimes B_{j_2})\cdots (\overline{A}_{j_d}\otimes B_{j_d}) .
  \end{align}
 Therefore,
 \begin{align}
 (E_F)^{\dagger} &=\sum_{(i_1,\ldots,i_d), (j_1,\ldots,j_d)}
  \overline{F_{j_1\cdots j_d,i_1\cdots i_d}}(\overline{A^{\dagger}_{j_d}}\otimes B^{\dagger}_{j_d})\cdots (\overline{A^{\dagger}_{j_2}}\otimes B^{\dagger}_{j_2}) (\overline{A^{\dagger}}_{j_1}\otimes B^{\dagger}_{j_1}
  )\\
  &=\sum_{(i_1,\ldots,i_d), (j_1,\ldots,j_d)}
 (\pi \overline{F}\pi^{\dagger})_{j_d\cdots j_1,i_d\cdots i_1}(\overline{A^{\dagger}_{j_d}}\otimes B^{\dagger}_{j_d})\cdots (\overline{A_{j_2}^{\dagger}}\otimes B^{\dagger}_{j_2}) (\overline{A_{j_1}^{\dagger}}\otimes B^{\dagger}_{j_1})\ ,
 \end{align}
where $\pi$ is the permutation which maps the $j$-th factor in the tensor product $(\mathbb{C}^\physical)^{\otimes n}$ to the $(n-j+1)$-th factor,
and where $\overline{F}$ is obtained by complex conjugating the 
matrix elements in the computational basis. This means that 
\begin{align}
(E_F)^{\dagger}=E^{\dagger}_{\pi \overline{F}\pi^{\dagger}}\ ,\label{eq:relationefstarpi}
\end{align}
with $E^{\dagger}$ being the mixed transfer operator $E^{\dagger}=E(A^{\dagger},B^{\dagger})$ 
obtained by replacing each $A_j$ respectively $B_j$ with its adjoint. 

Now consider 
\begin{align}
\|(E_F)^{\dagger} (\psi_1\otimes\psi_2)\|^2 &= (\bra{\psi_1}\otimes\bra{\psi_2})
E_F (E_F)^{\dagger}(\ket{\psi_1}\otimes\ket{\psi_2})\\
&= (\bra{\psi_1}\otimes\bra{\psi_2})
E_F E^{\dagger}_{\pi \overline{F}\pi^{\dagger}}(\ket{\psi_1}\otimes\ket{\psi_2})\ .
\end{align}
This can be represented diagrammatically as 
\begin{align}
\|(E_F)^{\dagger} (\psi_1\otimes\psi_2)\|^2 &=\centering   
\raisebox{-.38\height} {\includegraphics[scale=0.08]{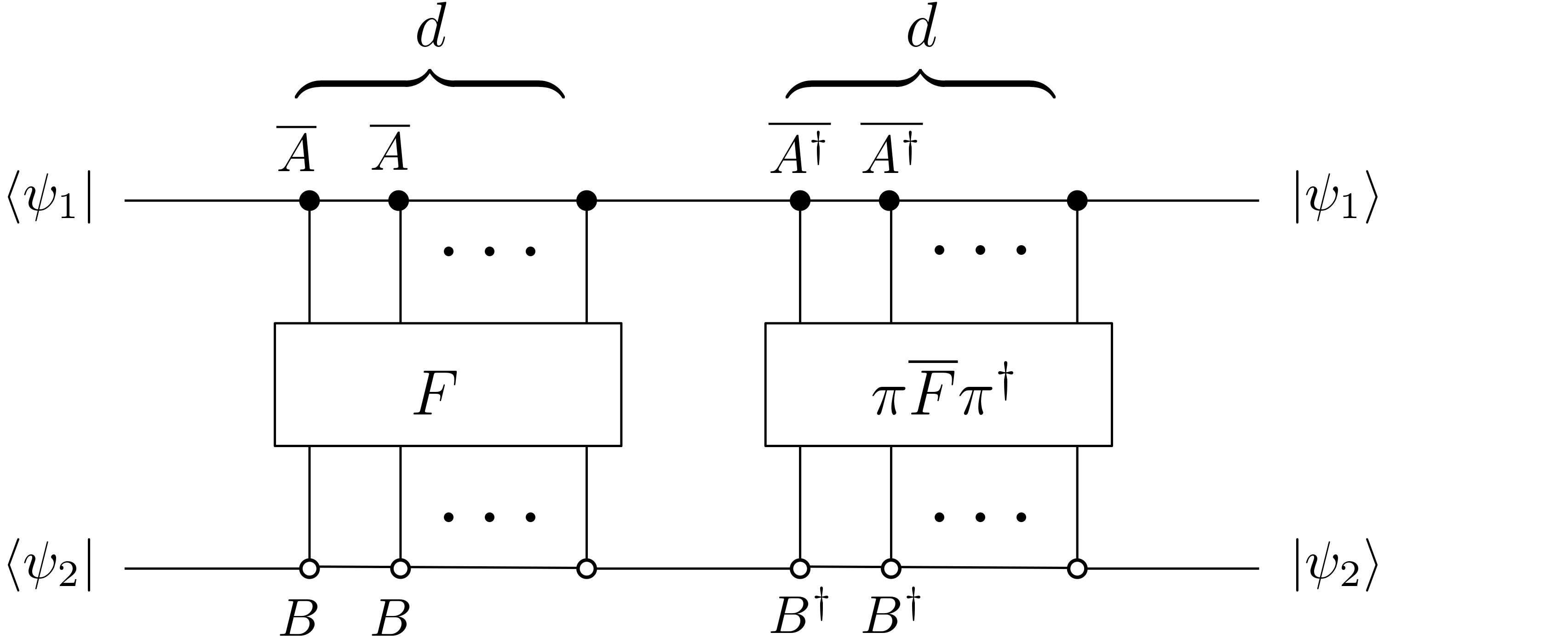}}
\end{align}

In particular, we have 
\begin{align}
\|(E_F)^{\dagger} (\psi_1\otimes\psi_2)\|^2&=\ \bra{\chi}(F\otimes I^{\otimes d})(I^{\otimes d}\otimes \pi \overline{F}\pi^{\dagger})\ket{\varphi}\ ,\mylabel{eq:efstarnormexpression}
\end{align}
where $\varphi,\chi\in (\mathbb{C}^{\physical})^{\otimes 2d}$ are defined as
\begin{align}
\ket{\phi}&=\ \raisebox{-0.42\height}{\includegraphics[scale=0.12]{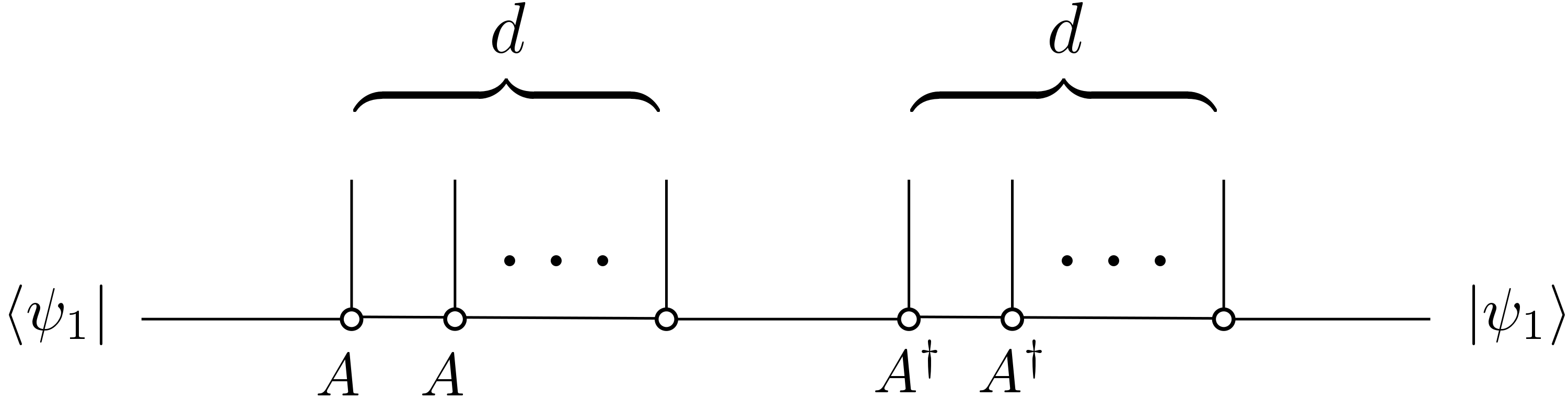}}\ ,\\
\ket{\chi} &=\ \raisebox{-0.42\height}{\includegraphics[scale=0.12]{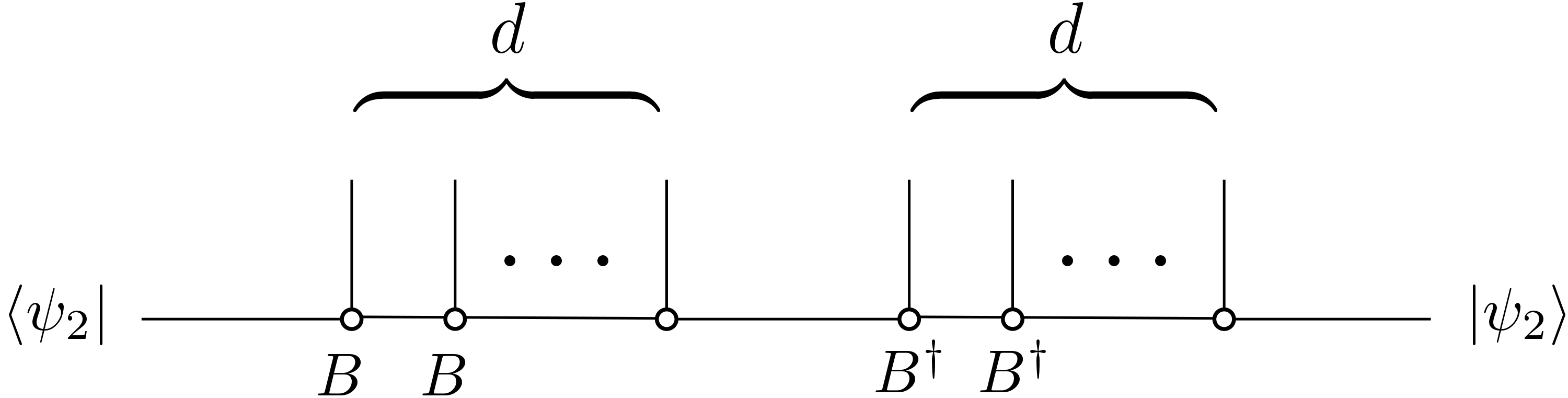}}\ . 
\end{align}
It is straightforward to check that
\begin{align}
\|\chi\|^2 &= (\bra{\psi_1}\otimes\bra{\psi_1})E_1^d (E_1^{\dagger})^d (\ket{\psi_1}\otimes\ket{\psi_1})\ ,\\
\|\varphi\|^2 &= (\bra{\psi_2}\otimes\bra{\psi_2})E_2^d (E_2^{\dagger})^d (\ket{\psi_2}\otimes\ket{\psi_2})\ .
\end{align} 
Since $\|(E_j^{\dagger})^d\|=\|E^d_j\|$ for $j=1,2$,  it follows with the submultiplicativity of the operator norm that  
\begin{align}
\|\chi\|^2&\leq \|E_1^d\|^{2}\ ,\\
\|\varphi\|^2 &\leq \|E_2^d\|^{2}\ .
\mylabel{eq:normboundvarphichix}
\end{align}
Applying the Cauchy-Schwarz inequality to~\eqref{eq:efstarnormexpression} yields
\begin{align}
\|(E_F)^{\dagger} (\psi_1\otimes\psi_2)\|^2&\leq \|(F^{\dagger}\otimes I^{\otimes d})\chi\|\cdot \|(I^{\otimes d}\otimes \pi \overline{F}\pi^{\dagger})\varphi\|\\
&\leq \|F\|^2\cdot \|\chi\|\cdot\|\varphi\|\ ,\mylabel{eq:cauchyschwarzappliedhmv}
\end{align}
where we used the fact that the operator norm satisfies~$\|F^{\dagger}\|=\|\overline{F}\|=\|F\|$ and $\|I\otimes A\|=\|A\|$. The claim~\eqref{eq:claimfirstefbound} follows from this and~\eqref{eq:normboundvarphichix}.
  
The claim~\eqref{eq:claimtwoefbound} follows analogously by using equation~\eqref{eq:relationefstarpi}. Finally, the claim~\eqref{eq:frobeniusnormefbound} follows from~\eqref{eq:claimtwoefbound} and 
\begin{align}
\|E_F\|^2 &=\sum_{\alpha_1,\alpha_2\in [D_1]}\sum_{\beta_1,\beta_2\in [D_2]}
|(\bra{\alpha_1}\otimes\bra{\beta_1})E_F(\ket{\alpha_2}\otimes\ket{\beta_2})|^2\\
&\leq \sum_{\alpha_1,\alpha_2\in [D_1]}
\sum_{\beta_1,\beta_2\in [D_2]} \|E_F(\ket{\alpha_2}\otimes\ket{\beta_2})\|^2\\
&\leq D_1^2D_2^2\max_{\alpha,\beta}\|E_F(\ket{\alpha_2}\otimes\ket{\beta_2})\|^2\ ,
\end{align}
where we employed the orthonormal basis $\{\ket{\alpha}\}_{\alpha\in [D_1]}$ and $\{\ket{\beta}\}_{\beta\in [D_2]}$ for $\mathbb{C}^{D_1}$ and $\mathbb{C}^{D_2}$, respectively, and applied the Cauchy-Schwarz inequality.
\end{proof}

The main result of this section is the following upper bound on the matrix elements of geometrically $d$-local operators with respect to two MPS.
\begin{theorem}\mylabel{thm:basicboundmatrixelements}
Let $\ket{\Psi_1}=\ket{\Psi(A_1,X_1,n)},\ket{\Psi_2}=\ket{\Psi(A_2,X_2,n)}\in (\mathbb{C}^\physical)^{\otimes n}$ be two MPS with bond dimensions $D_1$ and $D_2$, where
\begin{align}
X_j=\ket{\varphi_j}\bra{\psi_j},\qquad\textrm{ with }\qquad \|\varphi_j\|=\|\psi_j\|=1\qquad \textrm{ for }j=1,2\ , 
\end{align}
are rank-one operators. Let $E=E(A_1,A_2)\in \cB(\mathbb{C}^{D_1}\otimes\mathbb{C}^{D_2})$ denote the combined transfer operator defined by the MPS tensors $A_1$ and $A_2$, $h_j^*$ the size of the largest Jordan block of $E_j=E(A_j)$ for $j=1,2$, and $h^*$ the size of the largest Jordan block of $E=E(A_1,A_2)$. 
Assume that the spectral radii $\rho(E)$, $\rho(E_1)$, and $\rho(E_2)$ are contained in~$[0,1]$.
Then, for any  $F\in\cB((\mathbb{C}^\physical)^{\otimes d})$, we have 
\begin{align}
|\bra{\Psi_1}(F\otimes I_{(\mathbb{C}^\physical)^{\otimes n-d}})\ket{\Psi_2}|\leq  16 \cdot \|F\|\cdot d^{(h_1^*+h_2^*-2)/2}(n-d)^{h^*-1}\ 
\end{align}
for $d\gg D_1, D_2$ and $(n-d)\gg D_1D_2$. 
\end{theorem}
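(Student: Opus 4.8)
The plan is to reduce the matrix element to a product of transfer operators acting between boundary vectors, and then bound each piece separately using the norm estimates already established in Lemmas~\ref{lem:efboundsimplified} and~\ref{lem:normscalingtransferop}.

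First I would exploit the rank-one boundary conditions. Since $X_j=\ket{\varphi_j}\bra{\psi_j}$, the identity $\tr(M\ket{\varphi_j}\bra{\psi_j})=\bra{\psi_j}M\ket{\varphi_j}$ turns each $\ket{\Psi_j}$ into an open-boundary MPS with left vector $\bra{\psi_j}$ and right vector $\ket{\varphi_j}$. Writing $F$ on the first $d$ sites and summing over the bulk physical indices then factorizes the matrix element as
\begin{align}
\bra{\Psi_1}(F\otimes I)\ket{\Psi_2}=\bra{L}E_F\,E^{\,n-d}\ket{R}\ ,
\end{align}
where $E=E(A_1,A_2)$ and $E_F$ are the combined and mixed transfer operators of Section~\ref{sec:generalmixedtransferops}, and the boundary vectors are the product states $\bra{L}=\bra{\overline{\psi_1}}\otimes\bra{\psi_2}$ and $\ket{R}=\ket{\overline{\varphi_1}}\otimes\ket{\varphi_2}$ in $\mathbb{C}^{D_1}\otimes\mathbb{C}^{D_2}$. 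The complex conjugation on the $A_1$-factors inherited from $\bra{\Psi_1}$ also conjugates $\psi_1,\varphi_1$, but this leaves $\|L\|=\|R\|=1$.

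Next I would split the scalar by the Cauchy-Schwarz inequality, moving $E_F$ onto the bra:
\begin{align}
\bigl|\bra{L}E_F\,E^{\,n-d}\ket{R}\bigr|=\bigl|\langle (E_F)^{\dagger}L\,|\,E^{\,n-d}R\rangle\bigr|\le \bigl\|(E_F)^{\dagger}L\bigr\|\cdot\bigl\|E^{\,n-d}R\bigr\|\ .
\end{align}
For the first factor, since $\bra{L}$ is a product of unit vectors, \eqref{eq:claimfirstefbound} of Lemma~\ref{lem:efboundsimplified} gives $\|(E_F)^{\dagger}L\|\le\|F\|\sqrt{\|E_1^d\|\,\|E_2^d\|}$, and part~\eqref{it:firstjordan} of Lemma~\ref{lem:normscalingtransferop} bounds $\|E_j^d\|\le 4 d^{h_j^*-1}$. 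This is valid once $d\gg h_j^*$, which follows from $d\gg D_1,D_2$ since the bond dimensions, and hence all Jordan block sizes $h_j^*\le D_j^2$, are constant. For the second factor, $\|E^{\,n-d}R\|\le\|E^{\,n-d}\|\le 4(n-d)^{h^*-1}$ by the same part of Lemma~\ref{lem:normscalingtransferop}, valid for $(n-d)\gg h^*$, which is guaranteed by $(n-d)\gg D_1D_2\ge h^*$.

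Assembling the two bounds yields
\begin{align}
\bigl|\bra{\Psi_1}(F\otimes I)\ket{\Psi_2}\bigr|\le \|F\|\cdot\sqrt{4d^{h_1^*-1}\cdot 4 d^{h_2^*-1}}\cdot 4(n-d)^{h^*-1}=16\,\|F\|\,d^{(h_1^*+h_2^*-2)/2}(n-d)^{h^*-1}\ ,
\end{align}
which is exactly the claimed estimate. The only genuinely delicate step is the first one: verifying that the bulk contraction really collapses to $E^{\,n-d}$ and that the rank-one boundaries produce unit product vectors $\ket{L},\ket{R}$, with the correct placement of the complex conjugation. Everything after the factorization is a mechanical application of Cauchy-Schwarz together with the two prior lemmas, and the two regimes $d\gg D_1,D_2$ and $(n-d)\gg D_1D_2$ are precisely what those lemmas require.
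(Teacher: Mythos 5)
Your proposal is correct and follows essentially the same route as the paper's proof: the paper likewise writes the matrix element as $(\bra{\psi_1}\otimes\bra{\psi_2})E_F E^{n-d}(\ket{\varphi_1}\otimes\ket{\varphi_2})$, applies Cauchy--Schwarz to split off $\|E_F^\dagger(\psi_1\otimes\psi_2)\|$ and $\|E^{n-d}\|$, and then invokes Lemma~\ref{lem:efboundsimplified} and Lemma~\ref{lem:normscalingtransferop}\eqref{it:firstjordan} exactly as you do. Your explicit tracking of the complex conjugation on the $\Psi_1$ boundary vectors is a detail the paper suppresses, but it does not change the argument.
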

\begin{proof}
The matrix elements $\alpha=\bra{\Psi_1}(F\otimes I_{(\mathbb{C}^\physical)^{\otimes n-d}})\ket{\Psi_2}$ of interest can be written as 
\begin{align}
\alpha &= \left(\bra{\psi_1}\otimes\bra{\psi_2}\right)
E_F E^{n-d} \left(\ket{\varphi_1}\otimes\ket{\varphi_2}\right)\ .
\end{align}
By the Cauchy-Schwarz inequality, we have 
\begin{align}
|\alpha|&\leq \|E_F^{\dagger}(\ket{\psi_1}\otimes\ket{\psi_2})\|\cdot
\|E^{n-d} (\ket{\varphi_1}\otimes\ket{\varphi_2}\|\\
&\leq \|F\| \sqrt{\|E_1^d\|\cdot \|E_2^d\|}\cdot \|E^{n-d}\|\ ,
\end{align}
by  the definition of the operator norm and Lemma~\ref{lem:efboundsimplified}. Then, the claim follows from Lemma~\ref{lem:normscalingtransferop}~\eqref{it:firstjordan}, which provides the bounds
\begin{align}
\|E_j^d\|&\leq 4d^{h_j^*-1}\quad\textrm{ for }j=1,2\ ,\\
\|E^{n-d}\|&\leq  4(n-d)^{h^*-1}
\end{align}
by our assumptions$\colon$ $\rho(E_j)\in [0,1]$, $\rho(E)\in [0,1]$, and $d\gg D_j\geq h_j^*$ for $j=1,2$, as well as $n-d\gg D_1D_2\geq h^*$.  
\end{proof}

\section{No-Go Theorem: Degenerate ground spaces of gapped Hamiltonians are constant-distance AQEDC \label{sec:nogo}}

In this section we prove a no-go result regarding the error-detection performance of the ground spaces of local gapped Hamiltonians: their distance can be no more than constant.  We prove this result by employing the necessary condition for approximate error-detection from Lemma~\ref{lem:distancenecessary} for the code subspaces generated by varying the boundary conditions of an (open-boundary) injective MPS. Note that, given a translation invariant MPS with periodic boundary conditions and bond dimension~$D$, there exists a local gapped Hamiltonian, called the parent Hamiltonian, with a unique ground state being the MPS~\cite{perez2006matrix}.

We need the following bounds which follow from the orthogonality and normalization of states in such codes.
\begin{lemma}\mylabel{lem:orthogonailityxy}
Let $A$ be the MPS  tensor of an injective MPS with bond dimension~$D$, and let $X,Y\in \cB(\mathbb{C}^D)$ be such that the states $\ket{\Psi_X}=\ket{\Psi(A,X,n)}$ and $\ket{\Psi_Y}=\ket{\Psi(A,Y,n)}$  are normalized and  orthogonal. Let us write the transfer operator as $E=\kket{I}\bbra{\Lambda}\oplus\tilde{E}$ (cf.~equation~\eqref{eq:transferoperatorsumexpr}). Assume $n\gg D$. Then
\begin{enumerate}[(i)]
\item\label{it:firstmpsinj}
The Frobenius norm of $X$ (and similarly the norm of $Y$) is bounded by 
\begin{align}
\|X\|_F &=O(1)\ .
\end{align}
\item\mylabel{it:secondmpsinj}
We have 
\begin{align}
|\bbra{\Lambda}(\overline{X}\otimes Y)\kket{I}|&= O(\lambda_2^{n/2})\ ,\\
|\bbra{\Lambda}(\overline{Y}\otimes X)\kket{I}|&= O(\lambda_2^{n/2})\ .
\end{align}
\end{enumerate}
\end{lemma}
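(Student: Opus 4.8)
The plan is to express the overlap of the two MPS through the transfer operator and then isolate the eigenvalue-$1$ contribution using the Jordan decomposition~\eqref{eq:transferoperatorsumexpr}. Exactly as the normalization formula $\|\Psi\|^2=\tr(E^n(\overline{X}\otimes X))$ is derived from~\eqref{eq:mpssiteindependent}, the overlap of two MPS built from the same tensor~$A$ but different boundary matrices is
\begin{align}
\langle\Psi_X|\Psi_Y\rangle=\tr\!\left(E^n(\overline{X}\otimes Y)\right)\ .
\end{align}
Substituting $E^n=\kket{I}\bbra{\Lambda}\oplus\tilde{E}^n$ (cf.~\eqref{eq:transfer_power}) splits this into a leading term and a remainder,
\begin{align}
\langle\Psi_X|\Psi_Y\rangle=\bbra{\Lambda}(\overline{X}\otimes Y)\kket{I}+\tr\!\left(\tilde{E}^n(\overline{X}\otimes Y)\right)\ ,
\end{align}
and the remainder is controlled uniformly by the Hilbert--Schmidt Cauchy--Schwarz inequality together with~\eqref{eq:tildeEpowersnormbnd}:
\begin{align}
\left|\tr\!\left(\tilde{E}^n(\overline{X}\otimes Y)\right)\right|\leq\|\tilde{E}^n\|_F\,\|X\|_F\,\|Y\|_F\leq\lambda_2^{n/2}\,\|X\|_F\,\|Y\|_F\qquad\text{for }n\gg D\ ,
\end{align}
where I used $\|\overline{X}\otimes Y\|_F=\|X\|_F\|Y\|_F$.

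For part~\eqref{it:firstmpsinj} I would set $Y=X$ and invoke normalization $\|\Psi_X\|^2=1$. Unwinding the vectorization convention shows that the leading term is a positive quadratic form, $\bbra{\Lambda}(\overline{X}\otimes X)\kket{I}=\tr(\Lambda XX^\dagger)$, and since the canonical-form fixed point $\Lambda$ is positive definite this is bounded below by $\lambda_{\min}(\Lambda)\|X\|_F^2$ with $\lambda_{\min}(\Lambda)>0$. Combining the lower bound on the leading term with the remainder estimate gives
\begin{align}
1=\|\Psi_X\|^2\geq\lambda_{\min}(\Lambda)\|X\|_F^2-\lambda_2^{n/2}\|X\|_F^2=\|X\|_F^2\left(\lambda_{\min}(\Lambda)-\lambda_2^{n/2}\right)\ .
\end{align}
For $n$ large enough that $\lambda_2^{n/2}\leq\tfrac12\lambda_{\min}(\Lambda)$, this yields $\|X\|_F^2\leq 2/\lambda_{\min}(\Lambda)=O(1)$; the identical argument applied to $\Psi_Y$ bounds $\|Y\|_F$.

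For part~\eqref{it:secondmpsinj} I would instead use orthogonality $\langle\Psi_X|\Psi_Y\rangle=0$, which turns the split identity into
\begin{align}
\bbra{\Lambda}(\overline{X}\otimes Y)\kket{I}=-\tr\!\left(\tilde{E}^n(\overline{X}\otimes Y)\right)\ .
\end{align}
By the remainder estimate the right-hand side is at most $\lambda_2^{n/2}\|X\|_F\|Y\|_F$, and since $\|X\|_F,\|Y\|_F=O(1)$ by part~\eqref{it:firstmpsinj} this is $O(\lambda_2^{n/2})$. The bound on $|\bbra{\Lambda}(\overline{Y}\otimes X)\kket{I}|$ follows in exactly the same way from $\langle\Psi_Y|\Psi_X\rangle=0$.

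The one point requiring care is the self-referential structure of part~\eqref{it:firstmpsinj}: the remainder already carries the factor $\|X\|_F^2$ that one is trying to bound, so it cannot be treated as a fixed additive error. The argument closes only because the leading term is bounded below by a \emph{strictly positive} multiple of the same $\|X\|_F^2$ — which is precisely where positive definiteness of $\Lambda$ (guaranteed by injectivity and the canonical form) enters — allowing the remainder to be absorbed and $\|X\|_F$ to be solved for. Everything else reduces to the transfer-operator norm bounds already established.
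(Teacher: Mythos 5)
Your proposal is correct and follows essentially the same route as the paper's proof: the same Jordan-decomposition split $E^n=\kket{I}\bbra{\Lambda}\oplus\tilde{E}^n$, the same identification $\bbra{\Lambda}(\overline{X}\otimes X)\kket{I}=\tr(\Lambda XX^\dagger)\geq\lambda_{\min}(\Lambda)\|X\|_F^2$, and the same Cauchy--Schwarz bound $|\tr(\tilde{E}^n(\overline{X}\otimes Y))|\leq\lambda_2^{n/2}\|X\|_F\|Y\|_F$ absorbed via positive definiteness of $\Lambda$. Your closing remark about the self-referential factor $\|X\|_F^2$ in the remainder is exactly the mechanism the paper uses when it solves $\|X\|_F^2\leq(\lambda_{\min}(\Lambda)-\lambda_2^{n/2})^{-1}$.
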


In the following proofs, we repeatedly use
the inequality
\begin{align}
|\tr(M_1\ldots M_k)|&\leq \|M_1\|_F\cdot \|M_2\|_F\cdots \|M_k\|_F\ \mylabel{eq:matrixboundmultiple}
\end{align}
for $D\times D$-matrices $\{M_j\}_{j=1}^k$. 
Note that the inequality~\eqref{eq:matrixboundmultiple} is simply the Cauchy-Schwarz inequality for $k=2$.
For $k>2$, the inequality follows from the inequality for $k=2$ and the  submultiplicativity of the Frobenius-norm because
\begin{align}
|\tr(M_1\ldots M_k)|&\leq \|M_1\|_F\cdot \|M_2 \cdots M_k\|_F\leq  \|M_1\|_F\cdot \|M_2\|_F \cdots \|M_k\|_F\ .
\end{align}

\begin{proof}
The proof of~\eqref{it:firstmpsinj} follows from the fact that the state~$\Psi_X$ is normalized, i.e.,
\begin{align}
1&=\|\Psi_X\|^2\\
&=\tr\left( E^n (\overline{X}\otimes X)\right)\\
&=\tr\left(\kket{I}\bbra{\Lambda} (\overline{X}\otimes X)\right)+
\tr(\tilde{E}^n (\overline{X}\otimes X))\\
&=\tr(\Lambda X X^\dagger)+
\tr(\tilde{E}^n (\overline{X}\otimes X))\\
&\geq \lambda_{\min}(\Lambda)\cdot \|X\|_F^2+\tr(\tilde{E}^n (\overline{X}\otimes X))\ ,
\end{align}
where $\lambda_{\min}(\Lambda)$ denotes the smallest eigenvalue of~$\Lambda$, and we make use of the fact that $XX^{\dagger}$ is positive with trace $\tr(XX^{\dagger})=\|X\|_F^2$. Since  
\begin{align}
|\tr(\tilde{E}^n (\overline{X}\otimes X))|&\leq \|\tilde{E}^n\|_F\cdot \|\overline{X}\otimes X\|_F\leq \lambda_2^{n/2} \|X\|_F^2\quad\textrm{ for } n\gg D
\end{align}
by~\eqref{eq:matrixboundmultiple} and~\eqref{eq:tildeEpowersnormbnd}, we conclude 
\begin{align}
\|X\|_F^2 \leq \left(\lambda_{\min}(\Lambda)-\lambda_2^{n/2}\right)^{-1}=\lambda_{\min}(\Lambda)^{-1}(1+O(\lambda_2^{n/2}))\ .
\end{align}
Then the claim~\eqref{it:firstmpsinj} follows since $\lambda_{\min}(\Lambda)^{-1}$ is a constant. 

Now, consider the first inequality in~\eqref{it:secondmpsinj}
(the bound for~$|\bbra{\Lambda} (\overline{Y}\otimes X)\kket{I}|$ is shown analogously). Using the orthogonality of the states $\ket{\Psi_X}$ and $\ket{\Psi_Y}$, we obtain
\begin{align}
0&=\langle \Psi_X|\Psi_Y\rangle=\tr\left( E^n (\overline{X}\otimes Y)\right)\\
&=\tr\left((\kket{I}\bbra{\Lambda}+\tilde{E}^n)(\overline{X}\otimes Y)\right)\\
&=\bbra{\Lambda} (\overline{X}\otimes Y)\kket{I}+\tr(\tilde{E}^n (\overline{X}\otimes Y))
\end{align}
hence
\begin{align}
|\bbra{\Lambda} (\overline{X}\otimes Y)\kket{I}|=|\tr(\tilde{E}^n (\overline{X}\otimes Y))|&\le \|\tilde{E}\|_F\cdot\|\overline{X}\otimes Y\|_F\\
&\le \lambda_2^{n/2} \|X\|_F\cdot\|Y\|_F\ ,
\end{align}
using~\eqref{eq:tildeEpowersnormbnd}. The claim~\eqref{it:secondmpsinj} then follows from~\eqref{it:firstmpsinj}.
\end{proof}

With the following lemma, we prove an upper bound on the overlap of the reduced density matrices $\rho_X$ and $\rho_Y$, supported on $2\Delta$-sites surrounding the boundary, of the global states $\ket{\Psi_X}$ and $\ket{\Psi_Y}$, respectively. 

\begin{lemma}\mylabel{lem:overlaplowinjectivemps}
Let $A$ be an MPS tensor of an injective MPS with bond dimension~$D$, and let $X,Y\in \cB((\mathbb{C}^\physical)^{\otimes n})$ be such that the states $\ket{\Psi_X}=\ket{\Psi(A,X,n)}$ and $\ket{\Psi_Y}=\ket{\Psi(A,Y,n)}$ are normalized and  orthogonal. Let $\Delta\gg D$. Let $\cS=\{1,2,\ldots,\Delta\}\cup \{n-\Delta+1,n-\Delta+2,\ldots,n\}$ be the subset of~$2\Delta$ spins consisting of $\Delta$ systems at the left and and $\Delta$ systems at the right boundary. Let $\rho_X=\tr_{[n]\backslash\cS }\ket{\Psi_X} \bra{\Psi_X}$ and $\rho_Y=\tr_{[n]\backslash\cS }\ket{\Psi_Y} \bra{\Psi_Y}$ be the reduced density operators on these subsystems. Then 
\begin{align}
\tr(\rho_X\rho_Y)\leq c \lambda_2^{\textfrac{\Delta}{2}}
\end{align}
where~$\lambda_2$ is the second largest eigenvalue of the  transfer operator~$E=E(A)$ and where~$c$ is  a constant depending  only on the minimal eigenvalue of~$E$ and the bond dimension~$D$. 
\end{lemma}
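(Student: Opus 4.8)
The plan is to write $\tr(\rho_X\rho_Y)$ as a single closed tensor-network contraction over four virtual lines and then isolate the one contribution that does not decay, which will be controlled by orthogonality. Tracing out the bulk $[n]\setminus\cS$ contracts the ket and bra of $\proj{\Psi_X}$ along the $n-2\Delta$ bulk sites, producing a power $E^{\,n-2\Delta}$ of the transfer operator $E=E(A)$, and likewise for $\proj{\Psi_Y}$; the trace against $\rho_Y$ then glues the four MPS strands (ket/bra of $\rho_X$, ket/bra of $\rho_Y$) across the $2\Delta$ sites of $\cS$. The bookkeeping point I would stress is that the strands pair differently in the two zones: in the bulk they pair as (ket$_X$,bra$_X$) and (ket$_Y$,bra$_Y$), giving two independent factors $E^{\,n-2\Delta}$, whereas across $\cS$ they pair as (ket$_X$,bra$_Y$) and (bra$_X$,ket$_Y$), giving mixed transfer operators raised to the power $\Delta$ on each of the two halves of $\cS$; the boundary matrices enter as $X,\overline X,Y,\overline Y$ at the seam. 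These mixed operators are unitarily equivalent to $E$ (conjugation by a swap), so they share its spectral data and the decomposition $\kket{I}\bbra{\Lambda}\oplus\tilde E$. The outcome is a trace of the schematic form $\tr(\rho_X\rho_Y)=\tr\big[\,W\cdot B_L\cdot C_M\cdot B_R\,\big]$, where $C_M$ carries the two bulk powers, $B_L,B_R$ the boundary powers, and $W$ the seam insertions.

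Next I would substitute $E^{m}=\kket{I}\bbra{\Lambda}\oplus\tilde E^{m}$ (cf.~\eqref{eq:transferoperatorsumexpr}--\eqref{eq:transfer_power}) into each of the six transfer-operator blocks and multiply out, obtaining $2^6$ terms. Exactly one term replaces every block by its rank-one fixed-point projector, and because each such projector splits the contraction, this fully-projected term factorizes into a product of corner overlaps. The bulk corners are $O(1)$ constants depending only on $D$ and $\Lambda$, while the seam corner evaluates to $\bbra{\Lambda}(\overline X\otimes Y)\kket{I}\cdot\bbra{\Lambda}(\overline Y\otimes X)\kket{I}$. These are precisely the quantities bounded in Lemma~\ref{lem:orthogonailityxy}\eqref{it:secondmpsinj}, each of order $O(\lambda_2^{\,n/2})$ as a consequence of $\langle\Psi_X|\Psi_Y\rangle=0$; hence the fully-projected term is $O(\lambda_2^{\,n})$. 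This is the crux of the argument: without orthogonality this single term would be $\Theta(1)$ and no decay could hold.

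Every remaining term contains at least one factor $\tilde E^{\Delta}$ (from a boundary block) or $\tilde E^{\,n-2\Delta}$ (from the bulk), and I would bound them uniformly using the trace--Frobenius inequality~\eqref{eq:matrixboundmultiple} together with $\|\tilde E^{m}\|_F\le\lambda_2^{\,m/2}$ from~\eqref{eq:tildeEpowersnormbnd}, the elementary estimates $\|\kket{I}\|_F=\sqrt D$ and $\|\Lambda\|_F\le 1$, and the normalization bound $\|X\|_F,\|Y\|_F=O(1)$ from Lemma~\ref{lem:orthogonailityxy}\eqref{it:firstmpsinj}. Since all surviving fixed-point and seam factors are $O(1)$, each such term is at most a constant times $\lambda_2^{\Delta/2}$, the bulk contributions (scaling as $\lambda_2^{(n-2\Delta)/2}$) being even smaller in the relevant regime $\Delta\le n-2\Delta$. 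Collecting the $O(1)$-many terms gives $\tr(\rho_X\rho_Y)=O(\lambda_2^{\Delta/2})$, with the constant depending only on $\lambda_{\min}(\Lambda)$ and $D$, as claimed.

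The hard part will be the first step: setting up the four-line contraction and verifying the two distinct strand pairings in the bulk versus the boundary zone, so that the fully-projected seam corner is correctly identified with the orthogonality quantities of Lemma~\ref{lem:orthogonailityxy} rather than with normalization quantities (the latter is what one would get for the purity $\tr(\rho_X^2)$, which does not decay). Once that identification is secured, the separation of the single non-decaying term from the rest, and the bound on the rest via the Frobenius-norm estimates already established, are routine.
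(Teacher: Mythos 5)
Your proposal is correct and takes essentially the same route as the paper's proof: the paper sets up the identical four-strand contraction (implemented there by inserting flip operators and using $\F^2=\identityoperator$, so that the mixed pairings $\hat{E}^\Delta$ sit on the two boundary halves and $\hat{E}_{\F^{\otimes n-2\Delta}}$ in the bulk), Jordan-decomposes the boundary blocks as $\kket{I}\bbra{\Lambda}\oplus\tilde{E}^\Delta$, identifies the fully projected term with $\bbra{\Lambda}(\overline{X}\otimes Y)\kket{I}\cdot\bbra{\Lambda}(\overline{Y}\otimes X)\kket{I}=O(\lambda_2^{n})$ via Lemma~\ref{lem:orthogonailityxy}, and bounds the remaining terms by $O(\lambda_2^{\Delta/2})$ using the same Frobenius-norm estimates. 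Your only deviations are cosmetic --- you describe the strand pairings directly rather than via the flip-operator trick, and you also decompose the two bulk blocks ($2^6$ terms instead of the paper's $2^4$), which changes nothing since the terms whose only decay would come from bulk factors have all boundary blocks projected and hence still carry the seam orthogonality factor $O(\lambda_2^{n})$, so no assumption like $\Delta\le n-2\Delta$ is actually needed.
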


\begin{proof}
\begin{figure}
    \centering
    \includegraphics[scale=0.25]{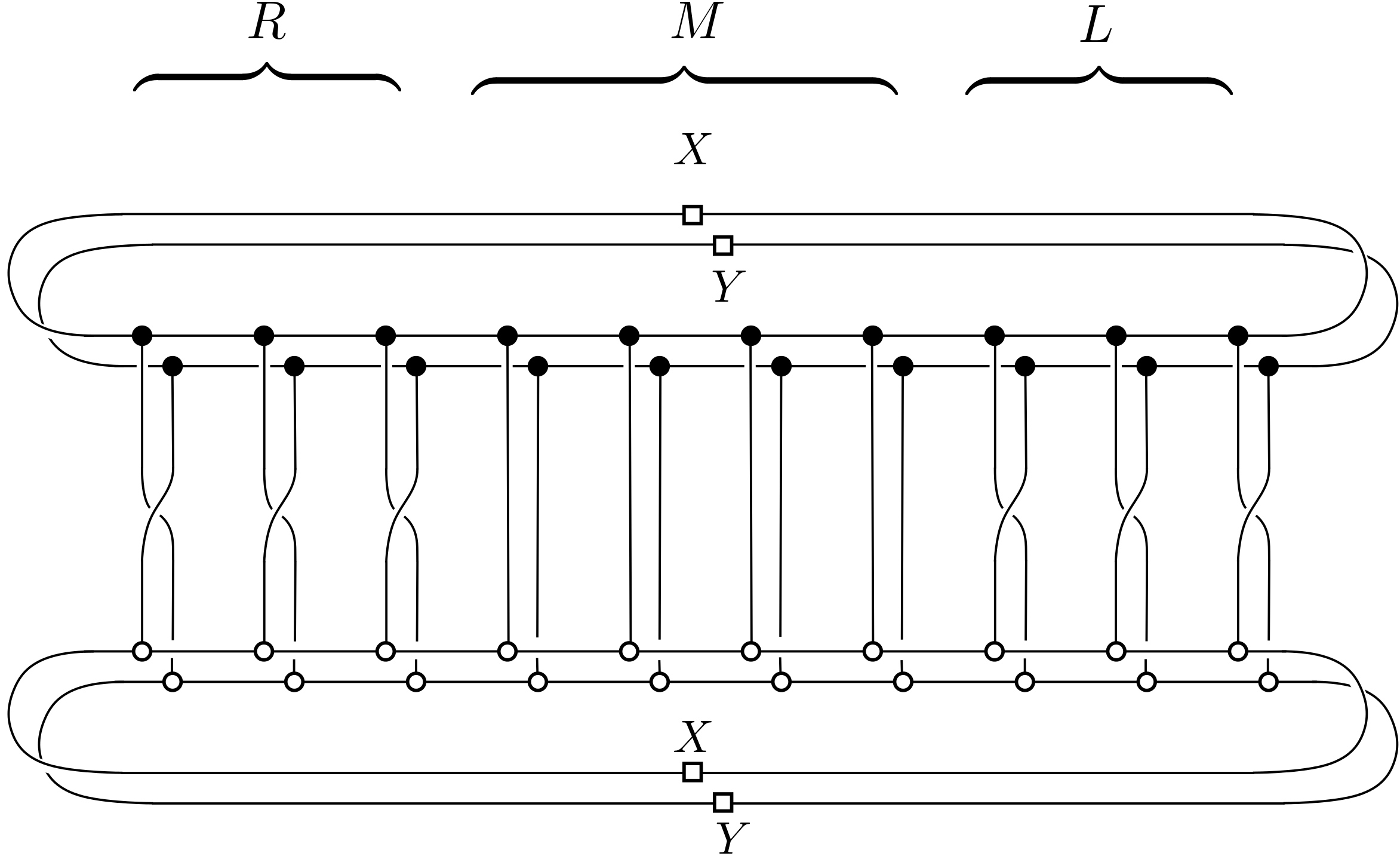}\qquad \includegraphics[scale=0.25]{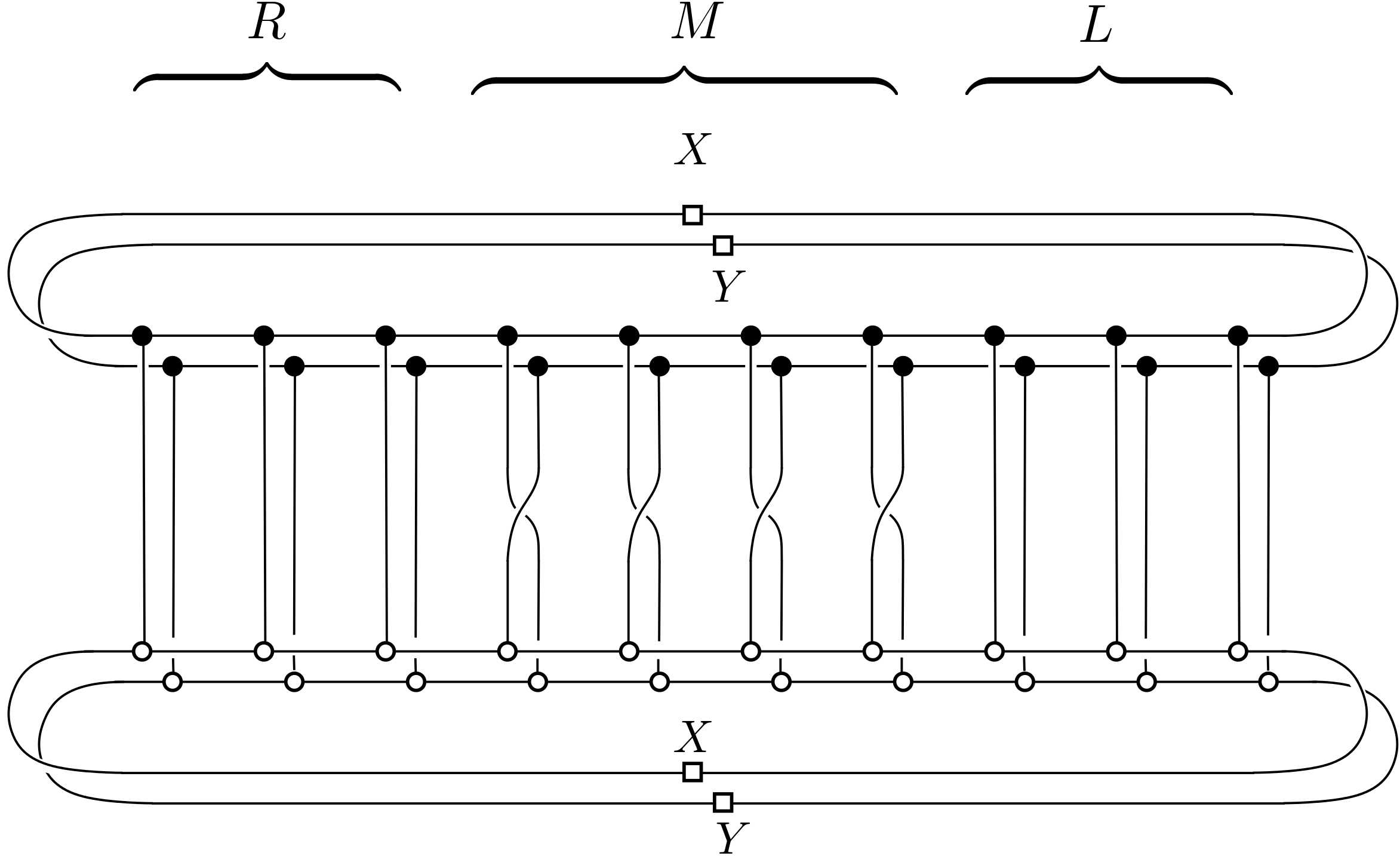}
    \caption{The two expressions in equation~\eqref{eq:swapfmmprime}, where 
    $L$, $M$ and $R$ are used to denote the sites defined in~\eqref{eq:siteslmr}.\mylabel{fig:matrixelementswapops}}
\end{figure}
For convenience, let us relabel the systems as
\begin{align}
\begin{matrix}
(L_1,\ldots,L_\Delta)&=&(1,2,\ldots,\Delta)\\
(M_1,\ldots,M_{n-2\Delta})&=&(\Delta+1,\Delta+2,\ldots,n-\Delta)\\
(R_1,\ldots,R_\Delta)&=&(n-\Delta+1,n-\Delta+2,\ldots,n)\ 
\end{matrix}\label{eq:siteslmr}
\end{align}
indicating their location on the left, in the middle, and on the right, respectively. 
For the tensor product~$\cH_A\otimes\cH_B$ of two isomorphic Hilbert spaces, we denote by $\F_{AB}\in \cB(\cH_A\otimes\cH_B)$ the flip-operator which swaps the two systems. The following expressions are visualized in Figure~\ref{fig:matrixelementswapops}. We have
\begin{align}
\tr(\rho_X\rho_Y)&=\tr((\rho^{L_1\cdots L_\Delta R_1\cdots R_\Delta}_X\otimes\rho^{L_1'\cdots L'_\Delta R'_1\cdots R'_\Delta}_Y)
(\F_{LL'}\otimes \F_{RR'}))\ ,\qquad\textrm{ where }\\
\F_{LL'}&=\F_{L_1L_1'}\otimes\F_{L_2L_2'}\otimes \cdots\otimes \F_{L_\Delta L'_\Delta}\ ,\\
\F_{RR'}&=\F_{R_1R_1'}\otimes\F_{R_2R_2'}\otimes \cdots \otimes \F_{R_\Delta R'_{\Delta}}\ .
\end{align}
Defining $\F_{MM'}$ analogously, $I_{MM'}=\identityoperator_{M_1M_1'}\otimes\cdots\otimes\identityoperator_{M_{n-2\Delta}M'_{n-2\Delta}}$, and similarly $\identityoperator_{LL'}$ and $\identityoperator_{RR'}$, this can be rewritten (by the definition of the partial trace) as 
\begin{align}
\tr(\rho_X\rho_Y)&=(\bra{\Psi_X^{LMR}}\otimes\bra{\Psi_Y^{L'M'R'}})
(\F_{LL'}\otimes \identityoperator_{MM'}\otimes \F_{RR'})
(\ket{\Psi_X^{LMR}}\otimes\ket{\Psi_Y^{L'M'R'}})\nonumber\\
&=(\bra{\Psi_X^{LMR}}\otimes\bra{\Psi_Y^{L'M'R'}})
(\identityoperator_{LL'}\otimes \F_{MM'}\otimes \identityoperator_{RR'})
(\ket{\Psi_Y^{LMR}}\otimes\ket{\Psi_X^{L'M'R'}})\ .\label{eq:swapfmmprime}
\end{align}
In the last identity, we have used that $\F^2=\identityoperator$ is the identity.

Reordering and regrouping the systems as 
\begin{align}
(L_1L_1')(L_2L_2')\cdots (L_\Delta L_\Delta')(M_1M_1')(M_2M_2')\cdots (M_{n-2\Delta}M'_{n-2\Delta})(R_1R_1')(R_2R_2')\cdots (R_\Delta R_\Delta')\ ,
\end{align}
we observe that 
$\ket{\Psi_X^{LMR}}\otimes \ket{\Psi_Y^{L'M'R'}}$
is an MPS with MPS tensor $A\otimes A$ and boundary tensor $X\otimes Y$ and 
$\ket{\Psi_Y^{LMR}}\otimes \ket{\Psi_X^{L'M'R'}}$
is an MPS with MPS tensor $A\otimes A$ and boundary tensor $Y\otimes X$.  Let us denote the virtual systems of the first MPS by $V_1V_2$, and those of the second MPS by $W_1W_2$, such that the boundary tensors are  $X^{V_1}\otimes Y^{V_2}$ and $Y^{W_1}\otimes X^{W_2}$ respectively. Let $\hat{E}=E^{V_1W_1}\otimes E^{V_2W_2}$ be the associated transfer operator. 
Then we have from~\eqref{eq:swapfmmprime}
\begin{align}
\tr(\rho_X\rho_Y)&= \tr\left(\hat{E}^\Delta \hat{E}_{\F^{\otimes n-2\Delta}}\hat{E}^\Delta \left[(\overline{X}^{V_1}\otimes \overline{Y}^{V_2})\otimes (Y^{W_1}\otimes X^{W_2})\right]\right)\ . \mylabel{eq:sumexpressionrhoxrhoy}
\end{align}
Recall that $E^\Delta =\kket{I}\bbra{\Lambda}\oplus\tilde{E}^\Delta$, where we have
\begin{align}
\|\tilde{E}^\Delta\|_F&\leq  \sqrt{D^2}\cdot\|\tilde{E}^\Delta\|\leq D\cdot \lambda_2^{\Delta/2}\quad\textrm{ for }\Delta\gg D\ ,\\
\|\kket{I}\bbra{\Lambda}\|_F&=\|\kket{I}\|^2\cdot\|\kket{\Lambda}\|^2\leq D^2\ .
\end{align}
In the second line, we use the fact that
$\|\kket{\Lambda}\|^2=\tr(\Lambda^{\dagger} \Lambda) =\sum_{i} {\lambda_i}^2 \le 1 $ and $\|\kket{I}\|^2 =D^2$.  Therefore, we have
\begin{align}
E^\Delta &=\sum_{b\in \{0,1\}} H_b\ ,\end{align}
where $H_0=\kket{I}\bbra{\Lambda}$ and $H_1=\tilde{E}^\Delta$ satisfy
\begin{align}
\|H_0\|_F\leq D^2\ ,\qquad\textrm{ and }\qquad \|H_1\|_F\leq D\cdot \lambda_2^{\Delta/2}\qquad\textrm{ for }\Delta\gg D\ . \label{eq:hzeroneupperbound}
\end{align}
Note that
\begin{align}
\hat{E}^\Delta&=E^\Delta\otimes E^\Delta=\sum_{b_1,b_2\in \{0,1\}} H_{b_1}\otimes H_{b_2}\ .
\end{align} 
Inserting this into~\eqref{eq:sumexpressionrhoxrhoy} gives a sum of 16~terms
\begin{align}
\tr(\rho_X\rho_Y)&\le\sum_{b_1,b_2,b_3,b_4\in \{0,1\}}| \alpha_{b_1,b_2,b_3,b_4}|\ ,
\end{align}
where
\begin{align}
    \alpha_{b_1,b_2,b_3,b_4}&=\tr\left( (H_{b_1}^{V_1W_1}\otimes H_{b_2}^{V_2W_2}) \hat{E}_{\F^{\otimes n-2\Delta}} (H^{V_1W_1}_{b_3}\otimes H^{V_2W_2}_{b_4})\left[(\overline{X}^{V_1}\otimes \overline{Y}^{V_2})\otimes (Y^{W_1}\otimes X^{W_2})\right]\right)\ .
\end{align}
Consider the term with $b_j=0$ for all $j\in \{1,\ldots,4\}$. This is given by
\begin{align}
\alpha_{0,0,0,0}&=\tr\bigg((\kket{I}\bbra{\Lambda}^{V_1W_1} \otimes \kket{I}\bbra{\Lambda}^{V_2W_2} )\hat{E}_{\F^{\otimes n-2\Delta}} (\kket{I}\bbra{\Lambda}^{V_1W_1}\otimes \kket{I}\bbra{\Lambda}^{V_2W_2})\\
&\qquad\qquad\cdot\left[(\overline{X}^{V_1}\otimes \overline{Y}^{V_2})\otimes (Y^{W_1}\otimes X^{W_2})\right]\bigg)\\
&=\bbra{\Lambda}(\overline{X}\otimes Y)\kket{I}\cdot \bbra{\Lambda} (\bar{Y}\otimes X)\kket{I}\cdot 
(\bbra{\Lambda}\otimes \bbra{\Lambda}) \hat{E}_{\F^{\otimes n-2\Delta}}
(\kket{I}\otimes \kket{I})\ .
\mylabel{eq:alpha0000}
\end{align}
By inserting this into~\eqref{eq:alpha0000} we get
with Lemma~\ref{lem:orthogonailityxy}~\eqref{it:secondmpsinj} and the 
 Cauchy-Schwarz inequality
\begin{align}
|\alpha_{0,0,0,0}|&= O(\lambda_2^{n})\cdot 
\Big|\big(\bbra{\Lambda}\otimes \bbra{\Lambda}) \hat{E}_{\F^{\otimes n-2\Delta}}
(\kket{I}\otimes \kket{I}\big)\Big|\\
&=O(\lambda_2^n)\cdot  \|\kket{\Lambda}\otimes\kket{\Lambda}\|\cdot
\|\hat{E}_{\F^{\otimes n-2\Delta}}
(\kket{I}\otimes \kket{I}\big)\|\ .
\end{align}
With Lemma~\ref{lem:efboundsimplified} this can further be bounded as 
\begin{align}
|\alpha_{0,0,0,0}|&=O(\lambda_2^n)\cdot \|\kket{\Lambda} \|^2 \cdot \|\kket{I}\|^2\cdot \|\F^{\otimes n-2\Delta}\|\cdot \|E^{n-2\Delta}\|\ .
\end{align} 
Since $\|\F\|=1$ and $ \|\kket{\Lambda} \|=O(1)$, $ \|\kket{I}\|=O(1)$
and $\|E^{n-2\Delta}\|=O(1)$ (cf.~\eqref{eq:bnddimembnd}), we conclude that 
\begin{align}
|\alpha_{0,0,0,0}|&=O(\lambda_2^n)\ .\label{eq:alphazzzzz}
\end{align}
 The remaining terms~$|\alpha_{b_1,b_2,b_3,b_4}|$  with $(b_1,b_2,b_3,b_4) \neq (0,0,0,0)$ can be bounded as follows using inequality~\eqref{eq:matrixboundmultiple}: We have 
\begin{align}
|\alpha_{b_1,b_2,b_3,b_4}|&=\big|\tr\left((H_{b_1}\otimes H_{b_2})\hat{E}_{\F^{\otimes N-2\Delta}}(H_{b_3} \otimes H_{b_4})\left[(\overline{X}\otimes\overline{Y})\otimes (Y\otimes X)\right]\right)\big|\\
&\le  \|H_{b_1}\otimes H_{b_2}\|_F \cdot \|E_{\F^{\otimes N-2n}} \|_F \cdot \|H_{b_3} \otimes H_{b_4}\|_F \cdot  \|\overline{X}\otimes\overline{Y}\otimes Y\otimes X\|_F\\ 
&= \|X\|_F^2\cdot \|Y\|_F^2\cdot \left(\prod_{j=1}^4\|H_{b_j}\|_F\right)\cdot \|E_{\F^{\otimes n-2\Delta}}\|_F\\
&= O(\lambda_2^{\Delta/2})\cdot \|X\|_F^2\cdot \|Y\|_F^2 \cdot \|E_{\F^{\otimes n-2\Delta}}\|_F\ ,
\end{align}
where we use~\eqref{eq:hzeroneupperbound}
and the assumption that $(b_1,b_2,b_3,b_4)\neq (0,0,0,0)$. 
We use Lemma~\ref{lem:efboundsimplified} and~\eqref{eq:bnddimembnd} to get the upper bound  $\|E_{\F^{\otimes n-2\Delta}}\|\leq D^2 \|F^{\otimes n-2\Delta}\|\cdot \|E^{n-D}\|=O(1)$.
Thus 
\begin{align}
|\alpha_{b_1,b_2,b_3,b_4}|&=O(\lambda_2^{\Delta/2})\qquad\textrm{ for }(b_1,b_2,b_3,b_4)\neq (0,0,0,0)\ .\label{eq:bseqneqzero}
\end{align}
Combining~\eqref{eq:bseqneqzero} with~\eqref{eq:alphazzzzz}, 
we conclude that 
\begin{align}
|\tr(\rho_X \rho_Y)|&\le \sum_{b_1,b_2,b_3,b_4 \in \{0,1\}} |\alpha_{b_1,b_2,b_3,b_4}| \le |\alpha_{0,0,0,0}|+15\max_{(b_1,b_2,b_3,b_4)\neq (0,0,0,0)} |\alpha_{b_1,b_2,b_3,b_4}|\\
&=O(\lambda_2^{\Delta/2})\ .
\end{align}
The claim follows from this. 
\end{proof} 
Recall that we call (a family of subspaces)  $\cC\subset(\mathbb{C}^\physical)^{\otimes n}$ an approximate error-detection code if it is an $(\epsilon,\delta)[[n,k,d]]$-code with $\epsilon\rightarrow 0$ and $\delta\rightarrow 0$ for $n\rightarrow\infty$. Our main result is the following:

\begin{theorem}\label{thm:nogo}
Let $\mathcal{C}\subset (\mathbb{C}^{\physical})^{\otimes n}$ be an approximate error-detecting code generated from a translation-invariant  injective MPS of constant bond dimension $D$ by varying boundary conditions. Then the distance of $\mathcal{C}$ is constant.
\end{theorem}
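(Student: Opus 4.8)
The plan is to combine the necessary condition for error-detection from Lemma~\ref{lem:distancenecessary} with the overlap estimate of Lemma~\ref{lem:overlaplowinjectivemps}. The guiding intuition is that, for an injective MPS, any two orthogonal code states look almost identical when restricted to a region near the boundary: their reduced density operators there have exponentially small overlap, and this region only needs \emph{constant} width to render the overlap negligible. Concretely, I would fix a constant $\Delta$ (to be chosen) and take $\cS$ to be the boundary region of $2\Delta$ sites appearing in Lemma~\ref{lem:overlaplowinjectivemps}.

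First I would note that, because the MPS map $X\mapsto\ket{\Psi(A,X,n)}$ is linear in the boundary matrix $X$, every state of the code generated by varying boundary conditions is itself an MPS with the same bulk tensor $A$. Hence any two orthonormal code states $\ket{\Psi_X},\ket{\Psi_Y}$ satisfy the hypotheses of Lemma~\ref{lem:overlaplowinjectivemps}, so their reduced density operators $\rho_X,\rho_Y$ on $\cS$ obey $\tr(\rho_X\rho_Y)\le c\,\lambda_2^{\Delta/2}$. The second ingredient is a rank bound: since $\cS$ consists of the $\Delta$ leftmost and $\Delta$ rightmost sites, the complement $\cS^c$ is a single contiguous middle block bounded by just two virtual bonds, so the Schmidt rank of any MPS across the cut $\cS\,|\,\cS^c$ is at most $D^2$, giving $\rank\rho_X,\rank\rho_Y\le D^2$. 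Combining these, the quantity controlling Lemma~\ref{lem:distancenecessary} is
\begin{align}
\zeta(\rho_X,\rho_Y)=\max\{\rank\rho_X,\rank\rho_Y\}^2\cdot\tr(\rho_X\rho_Y)\le D^4\, c\,\lambda_2^{\Delta/2}\ .
\end{align}
Since $D$ and $c$ are constants and $\lambda_2<1$, I can choose $\Delta$ to be a constant large enough that $\zeta(\rho_X,\rho_Y)$ lies below any desired threshold, e.g.\ below $1/10$.

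With this $\Delta$ fixed, Lemma~\ref{lem:distancenecessary} applied to the region $\cS$ of size $2\Delta$ shows that $\cC$ fails to be an $(\epsilon,\delta)[[n,k,2\Delta]]$-code whenever $\epsilon<1-10\zeta(\rho_X,\rho_Y)$ and $\delta<(1-\zeta(\rho_X,\rho_Y))^2$; both thresholds are bounded below by positive constants independent of $n$. For a genuine code family the approximation parameters satisfy $\epsilon_n\to0$ and $\delta_n\to0$, so for all sufficiently large $n$ they drop below these thresholds and $\cC$ cannot detect $2\Delta$-local errors. Finally, since every channel supported on the $2\Delta$ sites of $\cS$ is in particular a convex combination of $d$-local errors for any $d\ge 2\Delta$, being an $(\epsilon,\delta)[[n,k,d]]$-code with $d\ge 2\Delta$ would force $\cC$ to be an $(\epsilon,\delta)[[n,k,2\Delta]]$-code; hence the distance of $\cC$ is strictly less than $2\Delta$, i.e.\ constant.

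The main obstacle is conceptual rather than computational: the heavy lifting, namely the decay $\tr(\rho_X\rho_Y)=O(\lambda_2^{\Delta/2})$ across the \emph{disconnected} boundary region, is already carried out in Lemma~\ref{lem:overlaplowinjectivemps}, whose delicate point is controlling all sixteen cross terms produced by the two-fold splitting $E^\Delta=\kket{I}\bbra{\Lambda}\oplus\tilde E^\Delta$ on each of the two virtual pairs. For the theorem itself, the step requiring care is the bookkeeping of constants: one must ensure $\Delta$ can be taken as a \emph{fixed} constant (so the region width never grows with $n$) while still keeping $n-2\Delta\gg D$ so that Lemma~\ref{lem:overlaplowinjectivemps} applies, which is automatic once $n\to\infty$ with $\Delta$ held fixed.
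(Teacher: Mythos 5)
Your proposal is correct and follows essentially the same route as the paper's proof: both combine the overlap bound of Lemma~\ref{lem:overlaplowinjectivemps} on a constant-width boundary region with the necessary condition of Lemma~\ref{lem:distancenecessary}, then use $\epsilon_n,\delta_n\to 0$ to conclude the code fails at constant distance. The only differences are cosmetic — you argue directly with a fixed $\Delta$ while the paper frames it as a contradiction against $d_n\to\infty$, and you make explicit the Schmidt-rank bound $\rank\rho_X\le D^2$ and the monotonicity in $d$ that the paper uses implicitly.
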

\begin{proof}
Let $\mathcal{C}=\mathcal{C}_n\subset (\mathbb{C}^{\physical})^{\otimes n}$ be a (family of) subspace(s) of dimension~$\physical^k$
defined by an MPS tensor $A$ by choosing different boundary conditions, i.e., 
\begin{align}
\cC_n=\{\ket{\Psi(A,X,n)}\ |\ X\in\cX\}\subset (\mathbb{C}^\physical)^{\otimes n}
\end{align}
for some (fixed) subspace~$\cX\subset\cB(\mathbb{C}^D)$. 
For the sake of contradiction, assume that 
$\mathcal{C}_n$ is an $(\epsilon_n,\delta_n)[[n,k,d_n]]$-code with
\begin{align}
\epsilon_n,\delta_n\rightarrow 0\qquad\textrm{ and }\qquad \textrm{code distance }d_n\rightarrow\infty\qquad\textrm{ for } n\rightarrow\infty\ .\label{eq:epsdeltadasymptotics}
\end{align}
Let $\ket{\Psi_X}=\ket{\Psi(A,X,n)},\ket{\Psi_Y}=\ket{\Psi(A,Y,n)}\in\cC$ be two orthonormal states defined by choosing different boundary conditions~$X,Y\in\cX$. From Lemma~\ref{lem:overlaplowinjectivemps}, we may choose $\Delta$ sufficiently large such that the reduced density operators~$\rho_X,\rho_Y$ on $d$~sites surrounding the boundary satisfies
\begin{align}
\tr(\rho_X\rho_Y)&\leq c \lambda_2^{d/4}\qquad\textrm{ for all}\quad d\ge 2\Delta\ .\label{eq:rhoxyzer} 
\end{align}
We note that $\Delta$ only depends on the transfer operator and is independent of~$n$. Fix any constant $\epsilon,\delta\in (0,1)$ and choose some $d\ge 2\Delta$ sufficiently large such that
\begin{align}
\zeta(\rho_X,\rho_Y):=c D^2\lambda_2^{d/4}\ .
\end{align}
satisfies
\begin{align}
\begin{matrix}
    \epsilon<1-10\zeta \quad \text{and} \quad \delta <(1-\zeta)^2\end{matrix}\ .\label{eq:epszx}
\end{align}
Since by assumption $d_n \rightarrow \infty$, there exists some~$N_0\in\mathbb{N}$ such that
\begin{align}
    d_n>d \qquad\textrm{ for all }n\geq N_0\ .\label{eq:dnlowerb}
\end{align}
Combining~\eqref{eq:rhoxyzer},~\eqref{eq:epszx}, and~\eqref{eq:dnlowerb} with  Lemma~\ref{lem:distancenecessary}, we conclude that~$\cC_n$ is not an $(\epsilon,\delta)[[n,k,d_n]]$-code for any~$n\geq N_0$.

By assumption~\eqref{eq:epsdeltadasymptotics}, there exists some $N_1\in\mathbb{N}$ such that 
\begin{align}
    \epsilon_n<\epsilon\qquad\textrm{ and }\qquad \delta_n<\delta\qquad\textrm{ for all }n\geq N_1\ .
\end{align}
Let us set $N=\max\{N_0,N_1\}$. Then we obtain that
$\cC_n$ is not an $(\epsilon_n,\delta_n)[[n,k,d_n]]$-code for any $n\geq N$, a contradiction.
\end{proof}
In terms of the TQO-1 condition (cf.~\cite{bravyihastingsmichalakis}), Theorem~\ref{thm:nogo} shows the absence of topological order in 1D~gapped systems.
The theorem also tells us that we should not restrict our attention to the ground space of a local Hamiltonian when looking for quantum error-detecting codes.\footnote{Note that this conclusion is only valid for local gapped Hamiltonians in one dimension. When the spatial dimension $d\geq 2$, there are ground spaces that have topological order, e.g. Toric code, and even for higher dimensions good quantum LDPC codes are shown to exist in the ground space of frustration free Hamiltonians~\cite{bohdanowicz2018good}.} In the following sections, we bypass this no-go result by extending our search for codes to low-energy states. In particular, we show that single quasi-particle momentum eigenstates of local gapped Hamiltonians and multi-particle excitations of the gapless  Heisenberg model constitute error-detecting codes. See Sections~\ref{sec:QEDClow} and~\ref{sec:qedcintegrable}, respectively.

\section{AQEDC at low energies: The excitation ansatz\label{sec:QEDClow}}

In this section, we employ tangent space methods for the matrix product state formalism, i.e., the excitation ansatz~\cite{haegemanetal13, haegeman2013elementary, haegeman2014geometry}, in order to show that quasi-particle momentum eigenstates of local gapped Hamiltonians yield an error-detecting code with distance $\Omega(n^{1-\nu})$ for any $\nu \in (0,1)$ and $\Omega(\log n)$ encoded qubits.

In order to render the formalism accessible to an unfamiliar reader, we review the definition of the excitation ansatz in Section~\ref{sec:excitationansatzreview}.  We then develop the necessary calculational ingredients in order to prove the error-detection properties. In Section~\ref{sec:normasymptoticsexcitationansatz}, we compute the norm of the excitation ansatz states to lowest order. In Section~\ref{sec:normboundstransferopexcitation}, we establish (norm) bounds on the transfer operators associated with the excitation ansatz. Then, in Section~\ref{sec:matrixelementlocalexcitation}, we provide estimates on matrix elements of local operators with respect to states appearing in the definition of the excitation ansatz states. Finally, in Section~\ref{sec:excitationansatzcode}, we combine these results to obtain the parameters of quantum error-detecting codes based on the excitation ansatz.\footnote{A simple yet illustrative example of the excitation ansatz states is the following. Consider the $n$-fold product state~$\ket{0}^{\otimes n}$, the  $n$-body $W$-state $$\frac{|10\cdots0\rangle + \cdots + |00\cdots1\rangle}{\sqrt{n}},$$ as well as other $W$-like states with position dependent phase, such as $$\frac{|10\cdots0\rangle + e^{ip}|01\cdots0\rangle+ \cdots +  e^{ip(n-1)}|00\cdots1\rangle}{\sqrt{n}}.$$ Here $p$ can be interpreted as the momentum of a single particle excitation. These states are the ground state and first excited states with different momenta of the non-interacting Hamiltonian~$H= -\sum_i Z_i$. One can represent them by a bond-dimension $D=2$ non-injective MPS which is obtained by expressing the excitation ansatz as a single MPS instead of a sum of injective MPS. One can also consider higher (multi-particle) excitations, which can again be treated by using non-injective MPS. 

We note that error-detecting properties of various subspaces of the low-energy space of this particular simple non-interacting Hamiltonian can be studied either with or without the formalism of MPS. The tangent space methods serve as a powerful tool that allow us to perform our error-detection analysis, not only for the non-interacting cases, but also for the most general interacting Hamiltonians.}

\subsection{MPS tangent space methods: The excitation ansatz\label{sec:excitationansatzreview}}

In~\cite{haegemanetal13}, the MPS ansatz was generalized to a variational class of states  which have non-zero momentum. The resulting states are called the {\em excitation ansatz}. 
An excitation ansatz state~$\ket{\Phi_p(B;A)}\in (\mathbb{C}^\physical)^{\otimes n}$
is specified by two MPS tensors $\{A_i\}_{i=1}^\physical$ and $\{B_i\}_{i=1}^\physical$ of the same bond and physical dimensions, together with a parameter $p\in \{2\pi k/n\ |\ k=0,\ldots,n\}$ indicating the momentum. It is defined as 
\begin{align}
\ket{\Phi_p(B;A)} &= e^{-ip}\sum_{j=1}^n e^{ipj} \sum_{i_1,\ldots,i_n\in [\physical]} \tr(A_{i_1}\cdots A_{i_{j-1}}B_{i_j}A_{i_{j+1}}\cdots A_{i_n}) \ket{i_1\ldots i_n}\ .\mylabel{eq:excitationansatz}
\end{align}
The definition of these states is  illustrated in Figure~\ref{fig:excitationansatz}.
\begin{figure}
\centering
\includegraphics[scale=0.18]{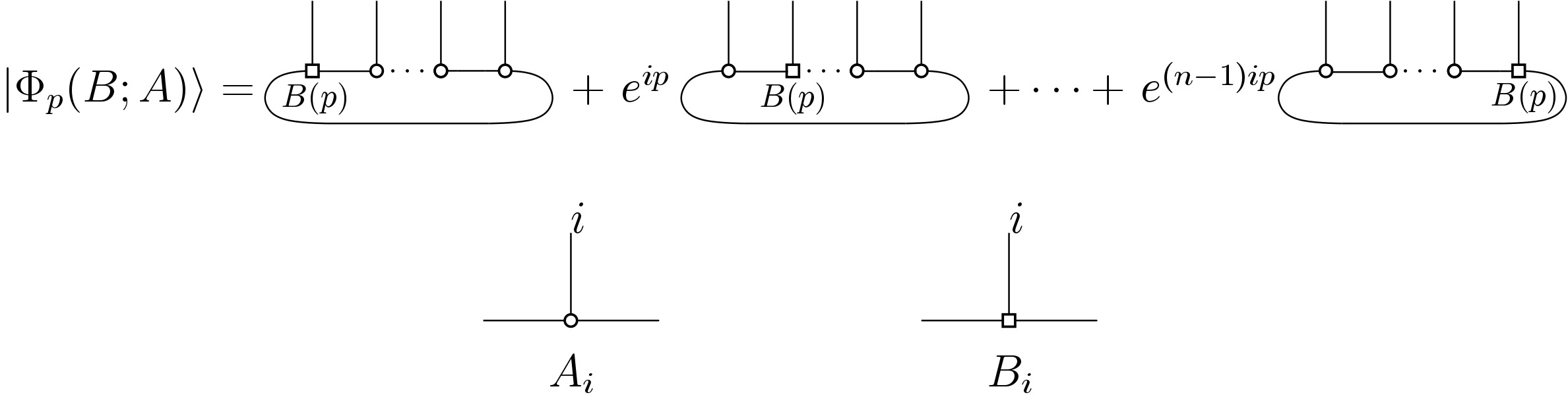}
\caption{This figure illustrates the excitation ansatz~$\ket{\Phi_p(B;A)}$ for $n$ physical spins. \mylabel{fig:excitationansatz}}
\end{figure}
Note that we allow the $B$ tensors themselves to depend on the momentum $p$, so we will sometimes write $B(p)$ when we feel the need to be explicit, and the notation $\ket{\Phi_p(B;A)}$ should really be read as a short-hand for $\ket{\Phi_p(B(p);A)}$. 

It is also useful to define the constituent ``position space" states
\begin{align}
\ket{\Phi_{j,p}(B;A)} &= \sum_{i_1,\ldots,i_n\in [\physical]}
\tr(A_{i_1}\cdots A_{i_{j-1}}B_{i_j}A_{i_{j+1}}\cdots A_{i_n}) \ket{i_1\ldots i_n},\\
&=\raisebox{-.52\height}{\includegraphics[scale=0.16]{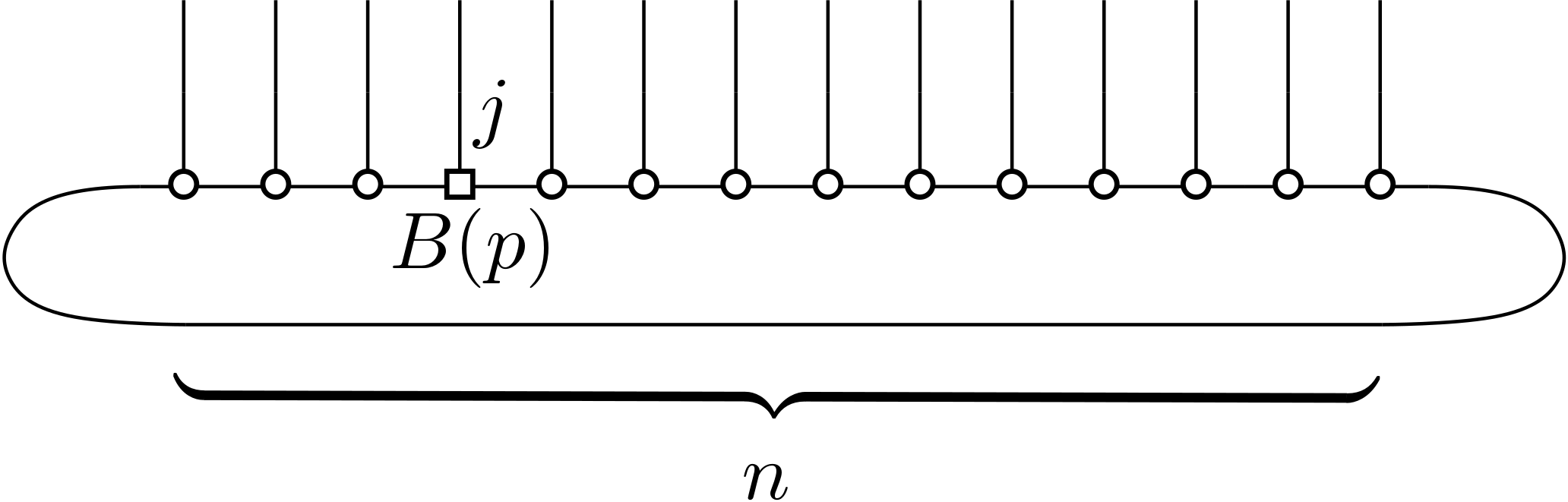}}\mylabel{eq:Phinpdefnitionmain}\ ,
\end{align}
which is the state with a ``single $B(p)$ excitation" at site $j$. Note that we retain the $p$ dependence in the definition of these ``position space" states since the $B$ tensors themselves are generally $p$ dependent.

We call an excitation ansatz state~$\ket{\Phi_p(B;A)}$ \it injective \rm if the transfer operator $\cE(A)$ associated with $\{A_j\}_{j=1}^\physical$ is primitive, which is the only case we consider in this work. Denoting the transfer matrix associated with $\cE(A)$ simply as $E$, it will also be useful to define several other mixed transfer matrices as follows:
$$
\begin{matrix}
E_{B(p)}&=&\sum_{j=1}^D \overline{A}_j\otimes B_j(p)&=&\raisebox{-.47\height}{\includegraphics[scale=0.05]{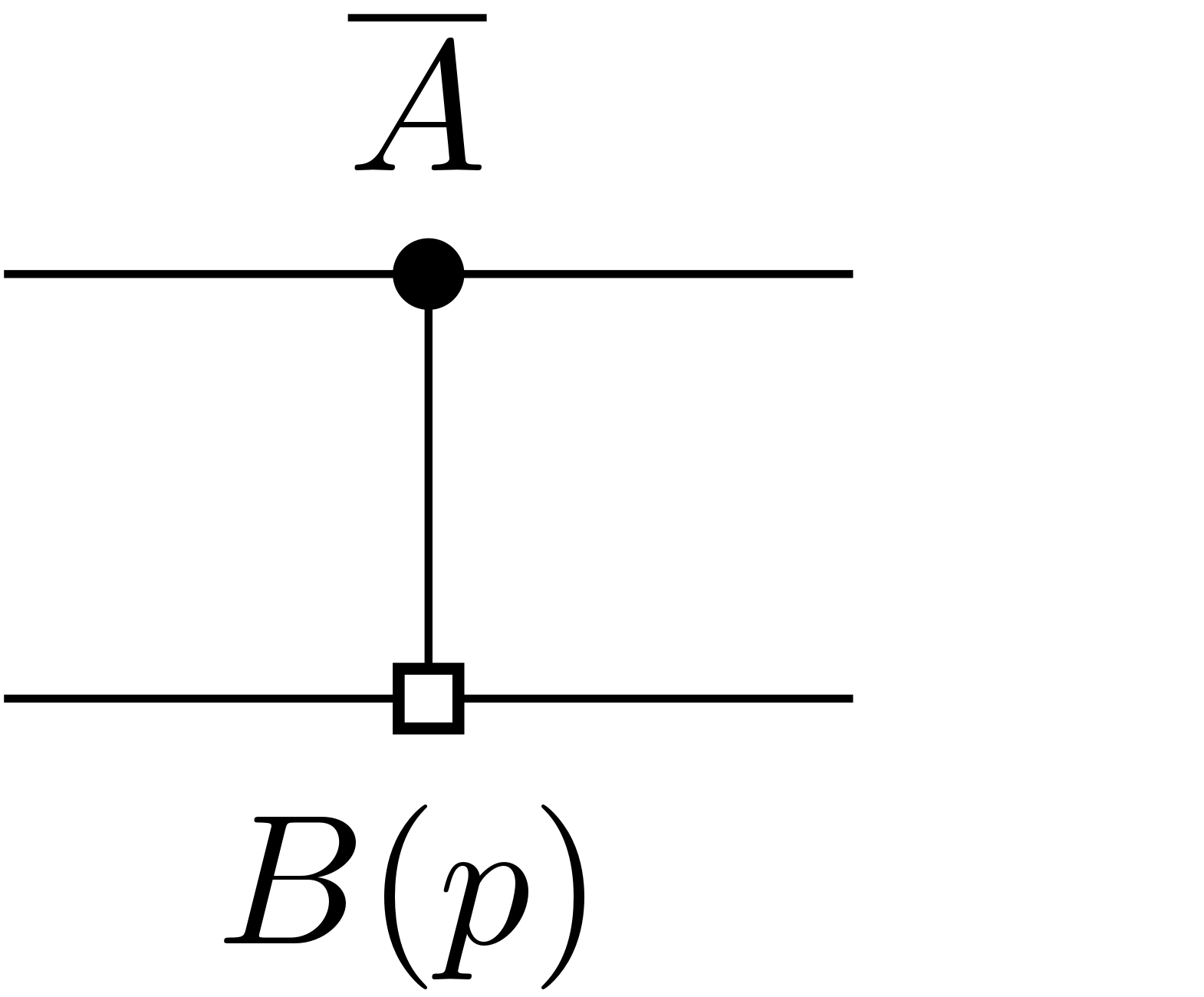}},\\
\\
E_{\overline{B(p)}}&=&\sum_{j=1}^D \overline{B_j(p)}\otimes A_j&=&\raisebox{-.47\height}{\includegraphics[scale=0.05]{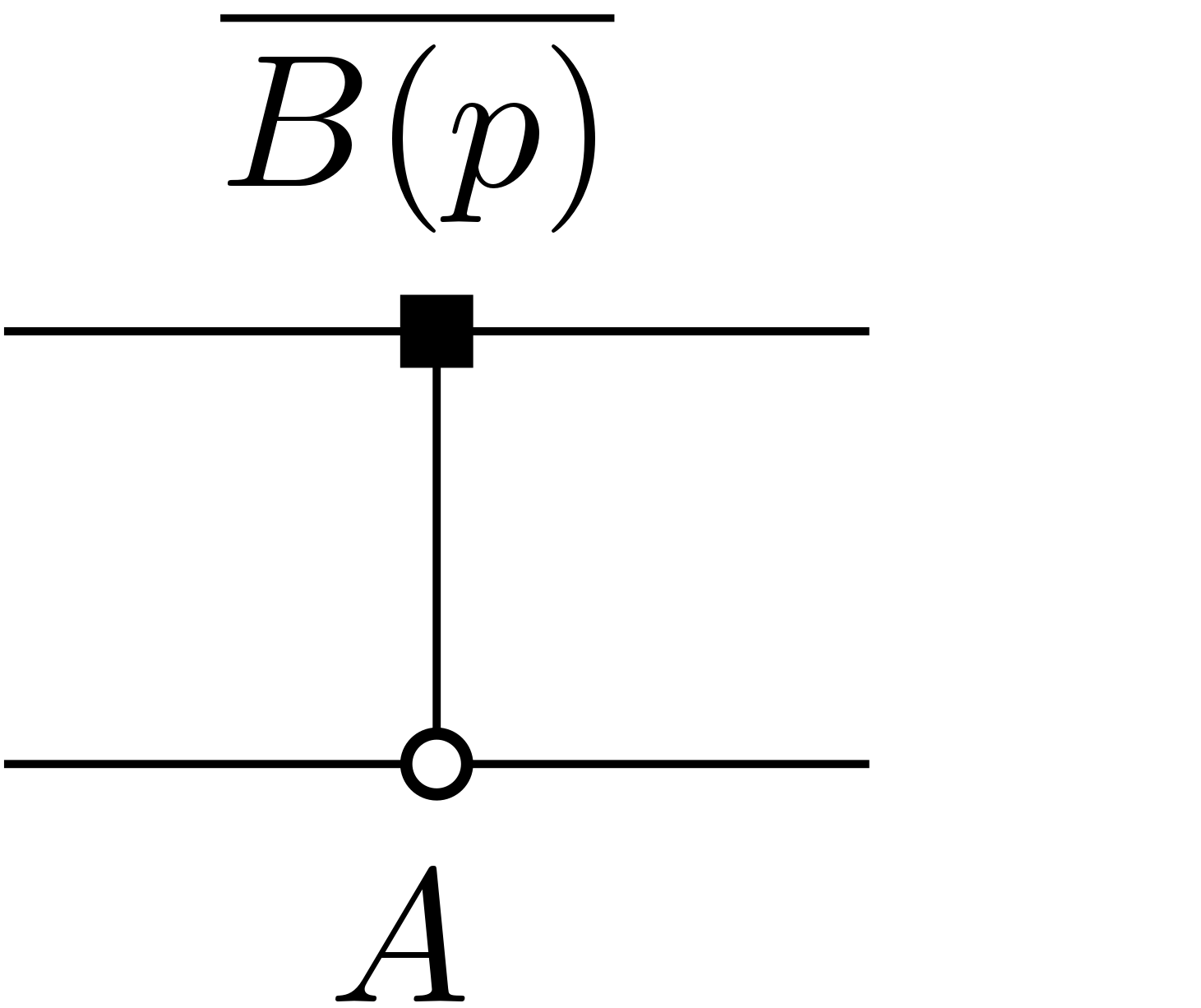}} \mylabel{eq:mixedtransferdef},\\
\\
E_{\overline{B(p')}B(p)}&=&\sum_{j=1}^D \overline{B_j(p')}\otimes B_j(p)&=&\raisebox{-.47\height}{\includegraphics[scale=0.05]{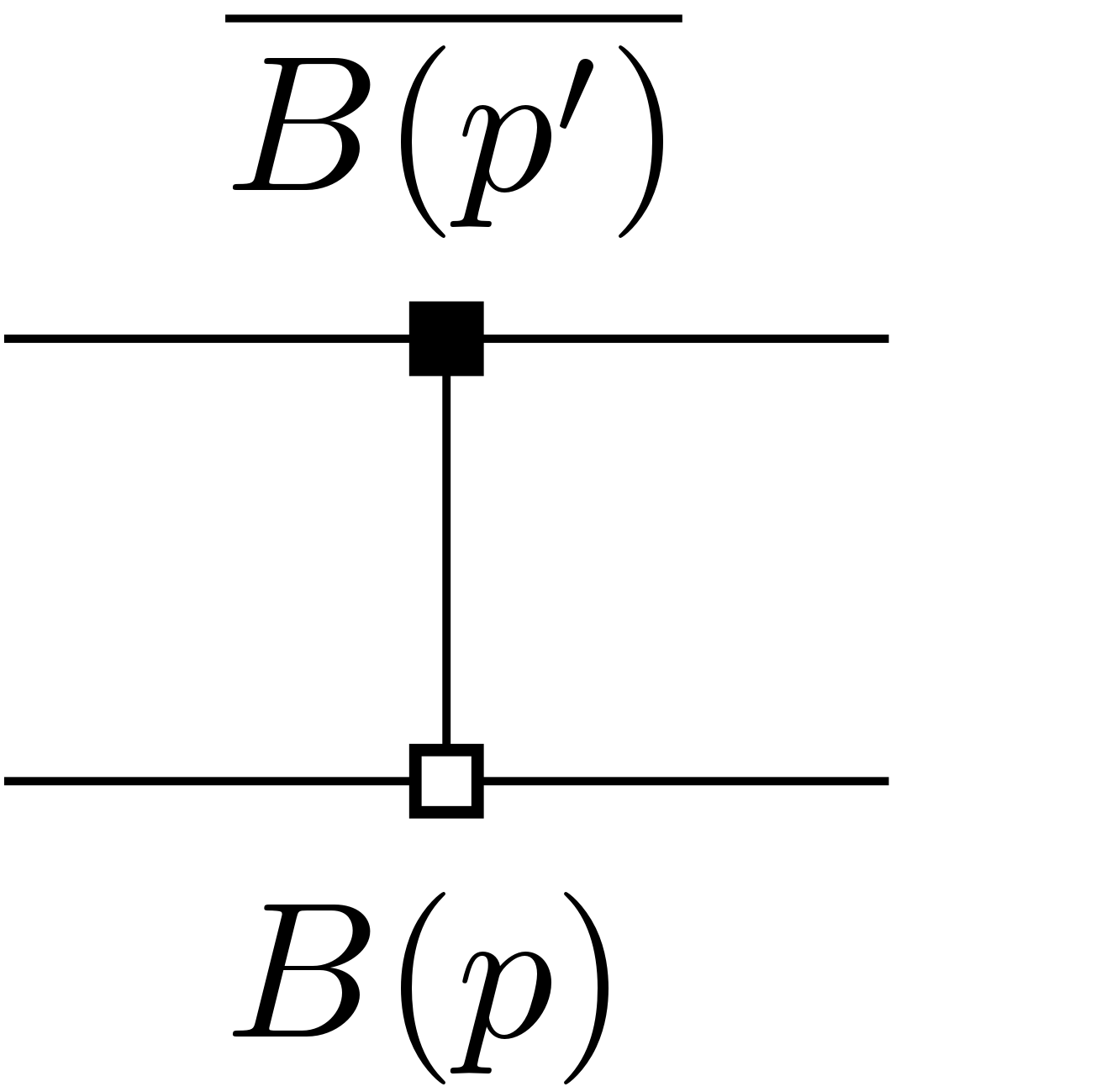}}.\\
\end{matrix}
$$
For brevity, we often suppress the dependence on $A$ and $B$ and simply write $\ket{\Phi_p} \equiv \ket{\Phi_p(B;A)}$ when no confusion is possible.

In addition to the multiplicative gauge freedom \eqref{eq:gaugefreedommps}, the excitation ansatz admits an additional additive gauge freedom. Exploiting this additive gauge freedom, the following statement can be shown (see~\cite[Equation~(154)]{haegeman2014geometry}):
\begin{lemma}\label{lem:canonicalformexcitationansatz}
Let $\ket{\Phi_p(B;A)}$ be an injective excitation ansatz state and assume that $A$ is normalized such that the transfer operator has spectral radius~$1$. Let $\ell$ and $r$ be the corresponding left- and right- eigenvectors corresponding to eigenvalue $1$. 
Assume $p\neq 0$.\footnote{We have made the $p\neq 0$ assumption here for simplicity. The gauge condition also holds for $p=0$ in the form $\bbra{\ell}E_{\tilde{B}(p)} = \bbra{\ell}E_{\overline{\tilde{B}}(p)}=O(\lambda_2^n)$. All of the results presented below for $p\neq 0$ also hold for $p=0$ up to an exponentially small error.} Then there exists a tensor $\tilde{B}$
such that $\ket{\Phi_p(B;A)}=\ket{\Phi_p(\tilde{B};A)}$, and such that
\begin{align}
\bbra{\ell} E_{\tilde{B}(p)}&=0\qquad\textrm{ and }\qquad \bbra{\ell} E_{\overline{\tilde{B}(p)}}=0\ .\mylabel{eq:gaugeconditionrewrittenexcit}
\end{align}
\end{lemma}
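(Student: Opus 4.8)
The plan is to exploit the \emph{additive gauge freedom} of the excitation ansatz, which allows one to modify the tensor $B$ without changing the physical state $\ket{\Phi_p(B;A)}$, and then to choose the modification so as to enforce both equations in~\eqref{eq:gaugeconditionrewrittenexcit} simultaneously.

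First I would establish the gauge freedom. For an arbitrary matrix $X\in\cB(\mathbb{C}^D)$, consider replacing $B_i$ by $\tilde B_i = B_i + XA_i - e^{ip}A_iX$. Inserting $XA_i$ in place of $B_i$ at site $j$ produces the MPS with $X$ placed on the virtual bond between sites $j-1$ and $j$, while inserting $A_iX$ places $X$ on the bond between sites $j$ and $j+1$. Carrying out the momentum sum $e^{-ip}\sum_j e^{ipj}(\cdots)$ in~\eqref{eq:excitationansatz} and relabelling the bond index (using the cyclicity of the trace, i.e.\ periodic boundary conditions) yields $\ket{\Phi_p(XA;A)} = e^{ip}\ket{\Phi_p(AX;A)}$. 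Hence $\ket{\Phi_p(XA - e^{ip}AX;A)} = 0$, so $\ket{\Phi_p(\tilde B;A)} = \ket{\Phi_p(B;A)}$ for every $X$, as desired.

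Next I would translate the target conditions into matrix equations. Using the vectorization dictionary of Section~\ref{sec:reviewmps}, one has $\bbra{\ell}E_{\tilde B(p)}\kket{Z} = \tr\!\big((\sum_i A_i^\dagger \ell \tilde B_i)Z\big)$, so $\bbra{\ell}E_{\tilde B(p)} = 0$ is equivalent to $\sum_i A_i^\dagger \ell \tilde B_i = 0$; likewise $\bbra{\ell}E_{\overline{\tilde B(p)}} = 0$ is equivalent to $\sum_i \tilde B_i^\dagger \ell A_i = 0$. Since $\ell$ is Hermitian, these two matrices are conjugate transposes of one another, so it is enough to enforce the single condition $\sum_i A_i^\dagger \ell \tilde B_i = 0$. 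Substituting $\tilde B_i = B_i + XA_i - e^{ip}A_iX$ and using $\cE^\dagger(\ell) = \ell$, this reduces to the linear equation $(\cE^\dagger - e^{ip}\,\mathrm{id})(W) = -\sum_i A_i^\dagger \ell B_i$ in the unknown $W := \ell X$, with $X = \ell^{-1}W$ recovered from the positive definiteness of $\ell$.

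Finally I would solve this equation, which is where the hypotheses genuinely enter. The key point is the invertibility of $\cE^\dagger - e^{ip}\,\mathrm{id}$: by injectivity, $\cE$ (hence $\cE^\dagger$) is primitive, so the only eigenvalue of modulus one is the simple eigenvalue $1$, while $p\neq 0$ forces $e^{ip}$ to lie on the unit circle yet differ from $1$; thus $e^{ip}\notin\mathsf{spec}(\cE^\dagger)$ and the operator is invertible. This produces a unique $W$, and hence $X$, giving the required $\tilde B$. I expect this solvability step to be the main obstacle: it is precisely the place that uses the spectral gap coming from injectivity together with $p\neq 0$, and it is exactly the step that fails at $p=0$, where $\ell$ lies in the kernel of $\cE^\dagger - \mathrm{id}$ and one can only achieve the conditions up to the exponentially small $O(\lambda_2^n)$ corrections noted in the footnote.
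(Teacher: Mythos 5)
Your proposal is correct and follows essentially the same route as the paper's proof in Appendix~\ref{app:canonicalexcitation}: establish the additive gauge freedom, reduce both conditions to the single matrix equation $\sum_i A_i^\dagger \ell \tilde B_i=0$ via self-adjointness of $\ell$, use $\cE^\dagger(\ell)=\ell$, and solve the resulting linear equation by invertibility of the shifted transfer map, which holds by primitivity together with $p\neq 0$. The only difference is cosmetic: you parametrize the gauge as $\tilde B_i=B_i+XA_i-e^{ip}A_iX$ while the paper uses $C_j=A_jX-e^{-ip}XA_j$, which is the same family up to replacing $X$ by $-e^{ip}X$, and your diagnosis of the $p=0$ failure ($\ell\in\ker(\cE^\dagger-\mathrm{id})$) matches the footnote.
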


For completeness, we give a proof of this statement in Appendix~\ref{app:canonicalexcitation}. Below, we assume that all excitation ansatz states satisfy the gauge condition \eqref{eq:gaugeconditionrewrittenexcit}.

\subsection{The norm of an excitation ansatz state\label{sec:normasymptoticsexcitationansatz}}
For a family of excitation ansatz states~$\{\ket{\Phi_p(B;A)}\}_{p}$ we define the constants
\begin{align}
c_{pp'}&=\bbra{\ell} E_{\overline{B(p')}B(p)}\kket{r}=\raisebox{-.45\height}{\includegraphics[scale=0.055]{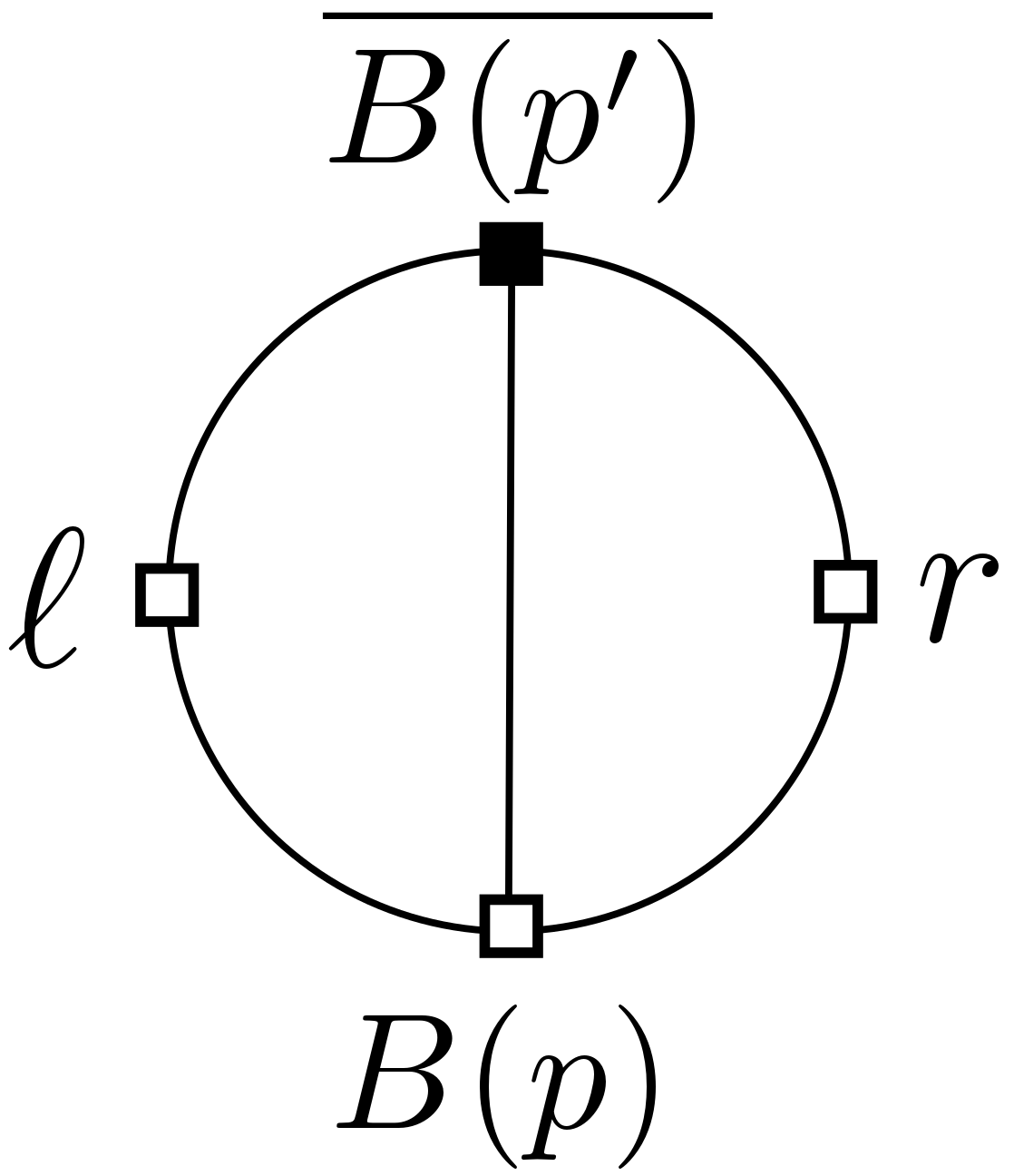}}.
\end{align}
We also write $c_p := c_{pp}$. These appear in the norm of the excitation ansatz states as follows:
\begin{lemma}\mylabel{lem:normofexcitationansatzstates}
The norm of an excitation ansatz state~$\ket{\Phi_{p}(B;A)}\in (\mathbb{C}^\physical)^{\otimes n}$ satisfies 
\begin{align}
\|\Phi_{p}(B;A)\| &= \sqrt{n c_p} + O(n^{3/2}\lambda_2^{n/6}), \mylabel{eq:normofexcitationansatzstates}
\end{align}
where $\lambda_2$ is the second largest eigenvalue of the transfer matrix $E$.
\end{lemma}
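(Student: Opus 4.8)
The plan is to compute $\|\Phi_p\|^2=\braket{\Phi_p}{\Phi_p}$ directly by expanding the defining sum \eqref{eq:excitationansatz} over the position-space constituents \eqref{eq:Phinpdefnitionmain} and recognizing each overlap as a trace of transfer operators around the ring. Writing $\ket{\Phi_p}=e^{-ip}\sum_{j}e^{ipj}\ket{\Phi_{j,p}}$, we get
\begin{align}
\|\Phi_p\|^2=\sum_{j,j'=1}^n e^{ip(j-j')}\braket{\Phi_{j',p}}{\Phi_{j,p}}\ .
\end{align}
At each site where both bra and ket carry the background tensor $A$ the contraction produces $E$, at the ket's excitation site it produces $E_{B(p)}$, at the bra's excitation site $E_{\overline{B(p)}}$, and when the two excitations coincide it produces $E_{\overline{B(p)}B(p)}$. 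By cyclicity of the trace the diagonal terms are $\braket{\Phi_{j,p}}{\Phi_{j,p}}=\tr(E^{n-1}E_{\overline{B(p)}B(p)})$, while for $j\neq j'$ with ring separation $s$ the off-diagonal terms are $\braket{\Phi_{j',p}}{\Phi_{j,p}}=\tr(E_{B(p)}E^{s-1}E_{\overline{B(p)}}E^{n-s-1})$.

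Next I would insert the Jordan decomposition $E^m=\kket{r}\bbra{\ell}\oplus\tilde E^m$ from \eqref{eq:transferoperatorsumexpr}--\eqref{eq:transfer_power}, where $\rho(\tilde E)=\lambda_2<1$ and $\|\tilde E^m\|_F\le\lambda_2^{m/2}$ for $m\gg D$ by \eqref{eq:tildeEpowersnormbnd}. For the diagonal terms, the projector part gives exactly $\bbra{\ell}E_{\overline{B(p)}B(p)}\kket{r}=c_p$, and the remainder $\tr(\tilde E^{n-1}E_{\overline{B(p)}B(p)})$ is bounded using \eqref{eq:matrixboundmultiple} by $\|\tilde E^{n-1}\|_F\,\|E_{\overline{B(p)}B(p)}\|_F=O(\lambda_2^{n/2})$, since $E_{\overline{B(p)}B(p)}$ is a fixed $D^2\times D^2$ matrix independent of $n$. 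Summing the $n$ diagonal contributions yields $nc_p+O(n\lambda_2^{n/2})$.

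The crux is the off-diagonal sum, and here the gauge condition \eqref{eq:gaugeconditionrewrittenexcit} of Lemma~\ref{lem:canonicalformexcitationansatz}, which I assume throughout in the form $\bbra{\ell}E_{B(p)}=\bbra{\ell}E_{\overline{B(p)}}=0$, does the essential work. Expanding both $E^{s-1}=\kket{r}\bbra{\ell}+\tilde E^{s-1}$ and $E^{n-s-1}=\kket{r}\bbra{\ell}+\tilde E^{n-s-1}$ produces four terms; every term that retains at least one projector $\kket{r}\bbra{\ell}$ acquires, after cyclic rearrangement, a factor $\bbra{\ell}E_{B(p)}$ or $\bbra{\ell}E_{\overline{B(p)}}$ and therefore vanishes. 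Only the all-$\tilde E$ term $\tr(E_{B(p)}\tilde E^{s-1}E_{\overline{B(p)}}\tilde E^{n-s-1})$ survives, which by \eqref{eq:matrixboundmultiple} is bounded by $C\,\|\tilde E^{s-1}\|_F\,\|\tilde E^{n-s-1}\|_F$ with $C$ absorbing $\|E_{B(p)}\|,\|E_{\overline{B(p)}}\|=O(1)$. Since $\tilde E$ is a fixed matrix with $\rho(\tilde E)<1$, the sequence $\{\|\tilde E^m\|_F\}_m$ is uniformly bounded by a constant $C_0=O(1)$, and for every pair at least one of $s-1,\,n-s-1$ is $\ge(n-2)/2\gg D$; bounding that factor by $\lambda_2^{(n-2)/4}$ via Lemma~\ref{lem:normscalingtransferop}(ii) and the other by $C_0$ makes each off-diagonal term $O(\lambda_2^{n/4})$. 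Summing over the $n(n-1)$ pairs (the phases only help) gives $O(n^2\lambda_2^{n/4})$.

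Combining the two estimates yields $\|\Phi_p\|^2=nc_p+O(n^2\lambda_2^{n/4})$, so that (assuming the generic case $c_p>0$, which is forced since $\|\Phi_p\|^2\ge 0$) taking the square root gives $\|\Phi_p\|=\sqrt{nc_p}\,(1+O(n\lambda_2^{n/4}))=\sqrt{nc_p}+O(n^{3/2}\lambda_2^{n/4})$, which in particular implies the stated bound $O(n^{3/2}\lambda_2^{n/6})$; the precise exponent in the exponentially small correction is immaterial and merely reflects how coarsely one bounds the off-diagonal sum. The main obstacle is exactly this off-diagonal analysis: without the gauge condition the surviving $\kket{r}\bbra{\ell}$ pieces would contribute $O(1)$ per pair and hence a spurious $O(n^2)$ term, so the gauge fixing is precisely what removes the leading contributions and leaves only the exponentially suppressed remainder.
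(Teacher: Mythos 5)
Your proof is correct and follows essentially the same route as the paper's: expand into position-space overlaps, use cyclicity and the Jordan decomposition $E^m=\kket{r}\bbra{\ell}\oplus\tilde{E}^m$, kill the projector contributions via the gauge condition~\eqref{eq:gaugeconditionrewrittenexcit}, and bound the remainder with \eqref{eq:matrixboundmultiple} and Lemma~\ref{lem:normscalingtransferop}(ii). The only cosmetic difference is that you expand both transfer-operator powers (with three of four terms vanishing) whereas the paper expands only the longer one, which merely yields your slightly sharper but equivalent exponent $\lambda_2^{n/4}$ in place of $\lambda_2^{n/6}$.
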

\begin{proof}
Using the mixed transfer operators defined in \eqref{eq:mixedtransferdef}, we can write the norm of the state $\ket{\Phi_{p}(B;A)}$ as a sum over pairs $(j,j')\in [n]^2$ satisfying $j<j'$, $j=j'$, and $j>j'$ respectively, as follows:
\begin{align}
\|\Phi_{p}(B;A)\|^2&=\sum_{j<j'} e^{ip(j-j')} \tr\left(E^{j-1}E_{B(p)}E^{j'-j-1}E_{\overline{B(p)}}E^{n-j'}\right)\nonumber\\
&\qquad +\sum_{j>j'} e^{ip (j-j')} \tr\left(E^{j-1}E_{\overline{B(p)}}E^{j'-j-1}E_{B(p)}E^{n-j'}\right)\nonumber\\
&\qquad +\sum_{j=1}^n \tr\left(E^{j-1}E_{\overline{B(p)}B(p)}E^{n-j}\right).\mylabel{eq:phinpsquarednormexpr}
\end{align}
Consider an individual term~$\tr\left(E^{j-1}E_{B(p)}E^{j'-j-1}E_{\overline{B(p)}}E^{n-j'}\right)$ in the first sum. By the cyclicity of the trace, it can be expressed as
\begin{align}
\tr\left(E^{j-1}E_{B(p)}E^{j'-j-1}E_{\overline{B(p)}}E^{n-j'}\right)
&=\tr\left(E_{B(p)}E^{\Delta-1}E_{\overline{B(p)}} E^{n-\Delta-1}\right)\ ,
\end{align}
where $\Delta=j'-j$. Clearly, one of the terms $\Delta-1$ or $n-\Delta-1$ must be lower bounded by $n/3$. Assume that it is the first (the argument for the other case is analogous), i.e., that 
\begin{align}
\Delta-1>n/3\ .\mylabel{eq:deltanthirdlowerbound}
\end{align}
Then we may substitute the Jordan decomposition of $E$ in the form
\begin{align}
E^{\Delta-1}&=\kket{r}\bbra{\ell}\oplus\tilde{E}^{\Delta-1}\ ,
\end{align}
which allows us to write
\begin{align}
\tr\left(E^{j-1}E_{B(p)}E^{j'-j-1}E_{\overline{B(p)}}E^{n-j'}\right)
&=\tr\left(E_{B(p)}\kket{r}\bbra{\ell}E_{\overline{B(p)}}E^{n-\Delta-1}\right)\\
&\quad+\tr\left(E_{B(p)}\tilde{E}^{\Delta-1}E_{\overline{B(p)}}E^{n-\Delta-1}\right).
\end{align}
By the gauge condition~\eqref{eq:gaugeconditionrewrittenexcit}, the first term vanishes. The magnitude of the second term can be bounded by inequality~\eqref{eq:matrixboundmultiple}, giving 
\begin{align}
\tr\left(E_{B(p)}\tilde{E}^{\Delta-1}E_{\overline{B(p)}}E^{n-\Delta-1}\right)&\leq O(1) \cdot \|\tilde{E}^{\Delta-1}\|_F\cdot \|E^{n-\Delta-1}\|_F\ .
\end{align}
Here we used the fact that $\|E_{\overline{B(p)}}\|_F=O(1)$ and $\|E_{B(p)}\|_F=O(1)$. With~\eqref{eq:deltanthirdlowerbound} and Lemma~\ref{lem:normscalingtransferop}(ii), we have $\|\tilde{E}^{\Delta-1}\|_F\leq \lambda_2^{n/6}$ and $\|E^{n-\Delta-1}\|_F=O(1)$. We conclude that 
\begin{align}
\left|\tr\left(E^{j-1}E_{B(p)}E^{j'-j-1}E_{\overline{B(p)}}E^{n-j'}\right)\right|= O(\lambda_2^{n/6})\ 
\end{align}
for all pairs $(j,j')$ with $j<j'$.

Identical reasoning gives us a bound of the form
\begin{align}
\tr\left(E^{j-1}E_{\overline{B(p)}}E^{j'-j-1}E_{B(p)}E^{n-j'}\right)&= O(\lambda_2^{n/6})
\end{align} 
for all pairs $(j,j')$ with $j>j'$. Inserting this into the sum~\eqref{eq:phinpsquarednormexpr}, we obtain
\begin{align}
\|\Phi_{p}(B;A)\|^2&=\sum_{j=1}^n \tr\left(E^{j-1}E_{\overline{B(p)}B(p)}E^{n-j}\right)+O(n^2\cdot\lambda_2^{n/6})\ .\mylabel{eq:phnpmv}
\end{align}
By the cyclicity of the trace and the Jordan decomposition of $E$, we have 
\begin{align}
\tr\left(E^{j-1}E_{\overline{B(p)}B(p)}E^{n-j}\right)&=
\tr(E_{\overline{B(p)}B(p)}E^{n-1})\\
&=\bbra{\ell} E_{\overline{B(p)}B(p)}\kket{r}+\tr(E_{\overline{B(p)}B(p)} \tilde{E}^{n-1})\\
&=c_p + \tr(E_{\overline{B(p)}B(p)} \tilde{E}^{n-1}).
\end{align}
Again using inequality~\eqref{eq:matrixboundmultiple} and Lemma~\ref{lem:normscalingtransferop}(ii), we get
\begin{align}
\left|\tr(E_{\overline{B(p)}B(p)} \tilde{E}^{n-1})\right|
&\leq \|E_{\overline{B(p)}B(p)}\|_F\cdot \|\tilde{E}^{n-1}\|_F = O\left(\lambda_2^{(n-1)/2}\right)\ .
\end{align}
Inserting this into~\eqref{eq:phnpmv} and noting that $\lambda_2^{(n-1)/2}\leq n\cdot\lambda_2^{n/6}$ gives us
\begin{align}
\|\Phi_{p}(B;A)\|^2 &= nc_p +O(n^2\cdot \lambda^{n/6})=nc_p(1+O(n\cdot \lambda^{n/6}))\ .
\end{align}
Taking the square root yields the desired claim. 
\end{proof}

\subsection{Bounds on transfer operators associated with the excitation ansatz\label{sec:normboundstransferopexcitation}}
For an operator $F\in (\mathbb{C}^\physical)^{\otimes L}$, sites $j,j'\in [L]$ and momenta $p,p'$, let us define operators
on~$\mathbb{C}^D\otimes\mathbb{C}^D$ by the diagrams
\begin{align}
E_F(j,p,j',p')=\raisebox{-.54\height}{\includegraphics[scale=0.1]{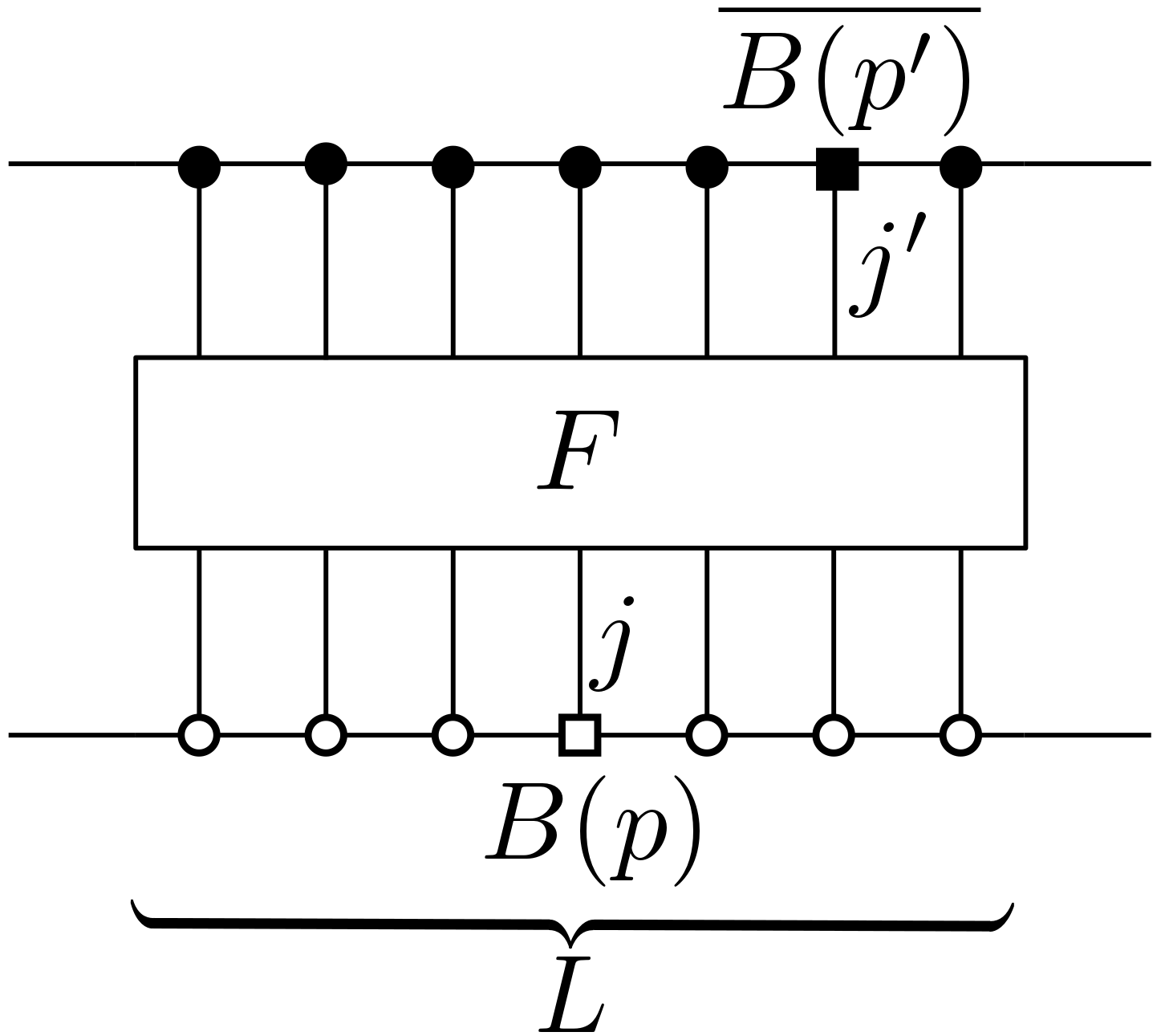}},\qquad \hbox{and} \qquad 
E_F(j,p)=\raisebox{-.60\height}{\includegraphics[scale=0.09]{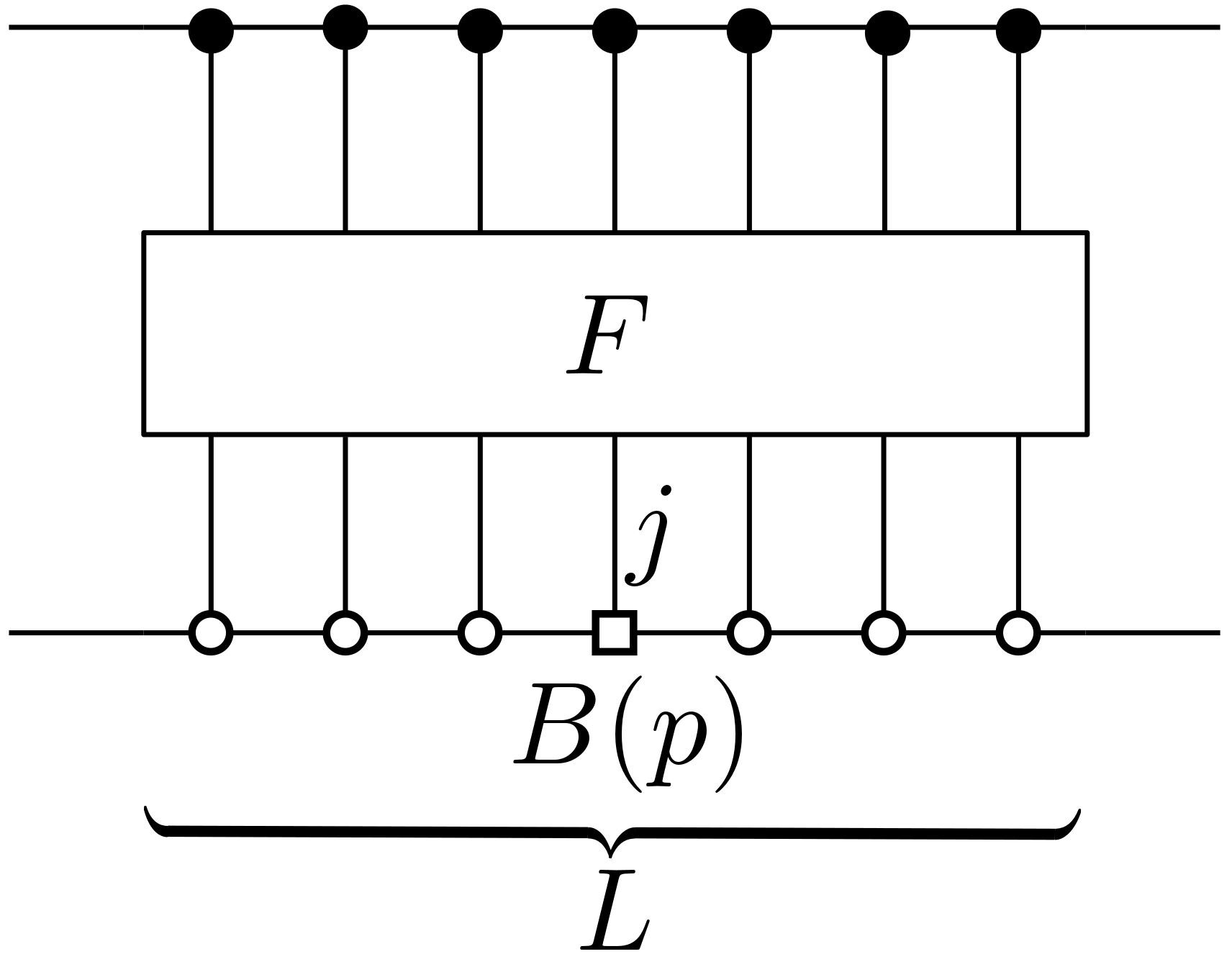}}\ .
\end{align}
We also denote by $E(j,p,j',p')$ the  expression $E_{I}(j,p,j',p')$.

We keep the dependence of $E_F$ on $L$ implicit, since none of our computations will explicitly depend on~$L$. Similar to the bounds discussed in Section~\ref{sec:normboundsgeneralizedtrsfop}, we require bounds on the norm (respectively matrix elements) of these transfer operators. These are given by the following:
\begin{lemma}\mylabel{lem:normboundtransopxnpnprime}
Let $F\in (\mathbb{C}^\physical)^{\otimes L}$, $j,j'\in [L]$, and momenta $p,p'$ be arbitrary. Then we have
\begin{align}
\bbra{\ell }E(j,p,j',p')\kket{r}&=\delta_{j,j'}c_{pp'}\ ,\mylabel{eq:ematrixelementlr}
\end{align}
and
\begin{align}
|\bbra{\ell}E_F(j,p,j',p')\kket{r}|&\leq \|F\| \sqrt{c_pc_{p'}}\mylabel{eq:ellrsandwitch}\ , \\
\|E_F (j,p,j',p')\|_F&\leq D^2 \|F\| \sqrt{\|E_{\overline{B(p)}B(p)}\|_F\, \|E_{\overline{B(p')}B(p')}\|_F}\ , \mylabel{eq:efdeltafrobeniusnorm}\\
\|E_F(j,p)\|_F &\le D^2\|F\|\sqrt{\|E_{\overline{B(p)}B(p)}\|_F}\ ,\mylabel{eq:efdeltafrobeniusnorm2}\\
\|E_F\|_F &\le D^2\|F\|\ .\mylabel{eq:efdeltafrobeniusnorm3}
\end{align}
\end{lemma}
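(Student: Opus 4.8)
The plan is to treat the four bounds in two groups. Equation~\eqref{eq:ematrixelementlr} is the exact computation that underlies everything else, and I would establish it first. Writing $E(j,p,j',p')=E_I(j,p,j',p')$ as a product of single-site transfer operators, one has $E(j,p,j,p')=E^{j-1}E_{\overline{B(p')}B(p)}E^{L-j}$ when $j=j'$, the product $E^{j-1}E_{B(p)}E^{j'-j-1}E_{\overline{B(p')}}E^{L-j'}$ when $j<j'$, and $E^{j'-1}E_{\overline{B(p')}}E^{j-j'-1}E_{B(p)}E^{L-j}$ when $j>j'$. Sandwiching with the exact left/right fixed points and using $\bbra{\ell}E=\bbra{\ell}$ and $E\kket{r}=\kket{r}$ collapses every pure power of $E$. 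In the diagonal case this leaves $\bbra{\ell}E_{\overline{B(p')}B(p)}\kket{r}=c_{pp'}$ by definition; in the off-diagonal cases the leftmost nontrivial factor is $\bbra{\ell}E_{B(p)}$ (for $j<j'$) or $\bbra{\ell}E_{\overline{B(p')}}$ (for $j>j'$), both of which vanish by the gauge condition~\eqref{eq:gaugeconditionrewrittenexcit}. This produces the Kronecker-delta structure.

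For the sandwiched bound~\eqref{eq:ellrsandwitch} the plan is a Cauchy-Schwarz argument, with the subtlety that the two fixed points entangle the bra- and ket-layers, so the expression is not literally an inner product of two physical vectors. I would resolve this using the canonical form, in which $r=I$ and $\ell=\Lambda$ is positive and diagonal, and splitting $\Lambda=\sqrt{\Lambda}\sqrt{\Lambda}$. Concretely, define half-states $\ket{\Omega_{\mathrm{bot}}},\ket{\Omega_{\mathrm{top}}}\in(\mathbb{C}^\physical)^{\otimes L}\otimes\mathbb{C}^D\otimes\mathbb{C}^D$ built from the ket-layer (with $B(p)$ at site $j$) and the bra-layer (with $B(p')$ at site $j'$), each carrying a factor $\sqrt{\Lambda}$ on its left virtual leg and keeping both virtual legs open. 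Because the right fixed point is the identity and the left one is $\Lambda$, one checks that $\bbra{\ell}E_F(j,p,j',p')\kket{r}=\langle\Omega_{\mathrm{top}}|(F\otimes I\otimes I)|\Omega_{\mathrm{bot}}\rangle$. Cauchy-Schwarz together with $\|F\otimes I\otimes I\|=\|F\|$ then bounds this by $\|F\|\,\|\Omega_{\mathrm{bot}}\|\,\|\Omega_{\mathrm{top}}\|$, and the two norms are evaluated by the identity just proved: $\|\Omega_{\mathrm{bot}}\|^2=\bbra{\ell}E(j,p,j,p)\kket{r}=c_p$ and likewise $\|\Omega_{\mathrm{top}}\|^2=c_{p'}$, which is precisely~\eqref{eq:ematrixelementlr} specialized to the diagonal. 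This yields $\sqrt{c_pc_{p'}}$.

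The three Frobenius bounds~\eqref{eq:efdeltafrobeniusnorm}--\eqref{eq:efdeltafrobeniusnorm3} I would prove by the same folding-plus-Cauchy-Schwarz scheme used in Lemma~\ref{lem:efboundsimplified}, since none of them rely on the special fixed points. Following the derivation of~\eqref{eq:frobeniusnormefbound}, I would bound $\|E_F(j,p,j',p')\|_F^2$ by $D^4$ times the maximum over the $D^2\times D^2$ virtual basis of $\|E_F(j,p,j',p')(\ket{\alpha}\otimes\ket{\beta})\|^2$, and then estimate each such vector by the excitation analogue of~\eqref{eq:claimtwoefbound}: folding the $F$-sandwich separates it into a bottom and a top contribution whose self-overlaps over the $L$ sites are $E^{j-1}E_{\overline{B(p)}B(p)}E^{L-j}$ and $E^{j'-1}E_{\overline{B(p')}B(p')}E^{L-j'}$. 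Injectivity (canonical form, via Lemma~\ref{lem:normboundinjectivemps}) makes the pure-$E$ segments $O(1)$, leaving $\sqrt{\|E_{\overline{B(p)}B(p)}\|_F\,\|E_{\overline{B(p')}B(p')}\|_F}$ and collecting the virtual-dimension factors into the stated $D^2$. Equations~\eqref{eq:efdeltafrobeniusnorm2} and~\eqref{eq:efdeltafrobeniusnorm3} are the identical argument with one or zero $B$-excitations, where the corresponding self-overlap factor becomes $\sqrt{\|E_{\overline{B(p)}B(p)}\|_F}$ or disappears entirely.

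I expect the main obstacle to lie in the second group: realizing the fixed-point sandwich as a genuine inner product $\langle\Omega_{\mathrm{top}}|(F\otimes I\otimes I)|\Omega_{\mathrm{bot}}\rangle$ despite the entangling caps, and verifying that the half-state norms reproduce exactly $c_p$ and $c_{p'}$ rather than some perturbed quantity. This is where the canonical form ($r=I$, diagonal positive $\ell=\Lambda$) is essential. Everything else is bookkeeping of transfer-operator products and a reuse of the operator-norm estimates already assembled in Lemma~\ref{lem:efboundsimplified}.
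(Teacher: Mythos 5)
Your proposal is correct and takes essentially the same approach as the paper: your half-states $\ket{\Omega_{\mathrm{bot}}},\ket{\Omega_{\mathrm{top}}}$ are precisely the paper's auxiliary states $\ket{\Phi^L_{j,p}}$ of equation~\eqref{eq:phinpaux}, whose inner products $\langle\Phi^L_{j',p'}|\Phi^L_{j,p}\rangle=\delta_{j,j'}c_{pp'}$ are established by the same fixed-point-plus-gauge-condition computation, after which \eqref{eq:ematrixelementlr} is the case $F=I$ and \eqref{eq:ellrsandwitch} follows by the identical Cauchy--Schwarz step. The Frobenius bounds are likewise proved in the paper by expanding $\|E_F(j,p,j',p')\|_F^2$ over the virtual basis and bounding the resulting vectors $\ket{\Psi_{j,p}(\alpha,\beta)}$ through the self-overlap $\tr\left(E^{j-1}E_{\overline{B(p)}B(p)}E^{L-j}(\ket{\beta}\bra{\alpha}\otimes\ket{\beta}\bra{\alpha})\right)$, which matches your folding scheme, including the $D^4$ count of basis terms yielding the factor $D^2$.
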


For the proof of Lemma~\ref{lem:normboundtransopxnpnprime} (and other arguments below), we make repeated use of the following states. Let $L\in [n]$. Define
\begin{align}
\ket{\Phi^L_{j,p}}&=\raisebox{-.58\height}{\includegraphics[scale=0.07]{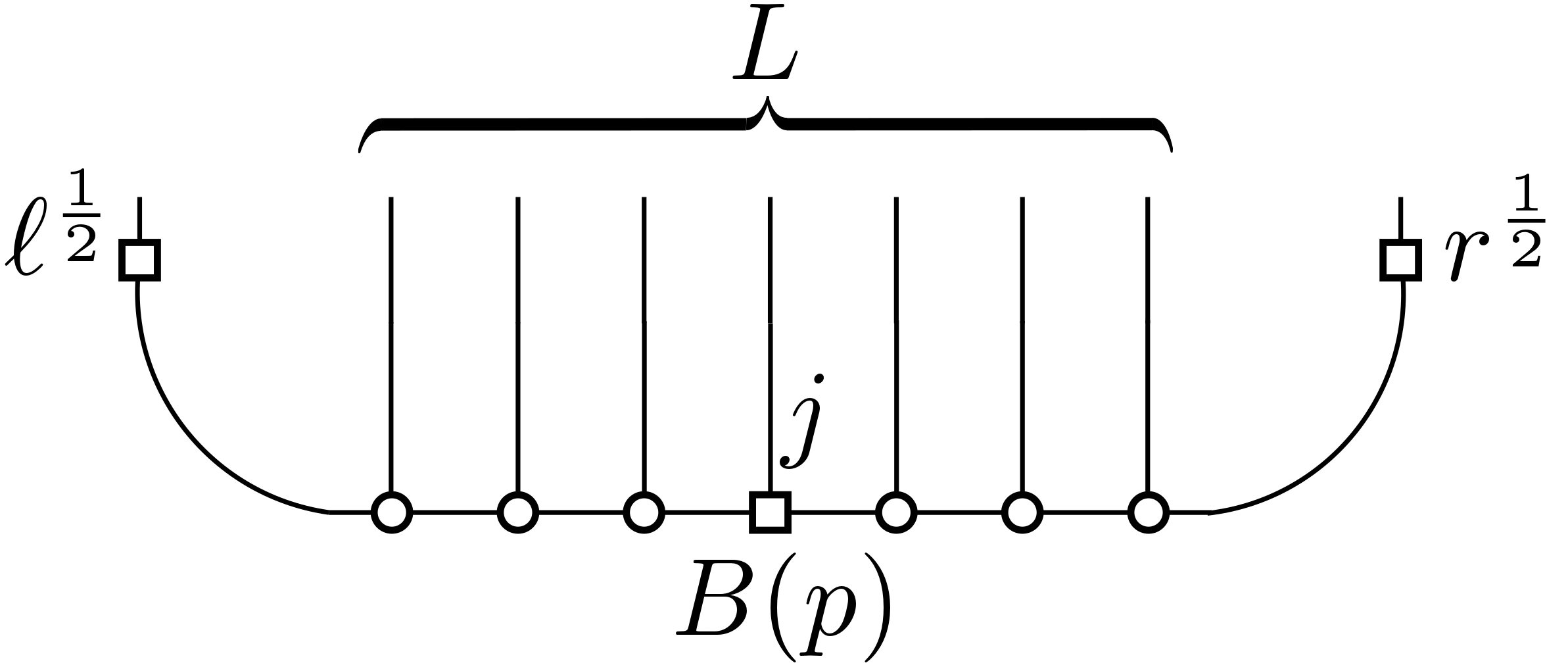}}\mylabel{eq:phinpaux} 
\end{align}
on $\mathbb{C}^D\otimes (\mathbb{C}^\physical)^{\otimes L}\otimes \mathbb{C}^D$. Despite the similar notation, these states are not to be confused with the ``position space" states~$\ket{\Phi_{j,p}}$ introduced in equation~\eqref{eq:Phinpdefnitionmain}. The key property of the states $\ket{\Phi^L_{j,p}}$ is the following:
\begin{lemma}
The states~\eqref{eq:phinpaux}  have inner product
\begin{align}
\langle \Phi^L_{j',p'}|\Phi^L_{j,p}\rangle &=\delta_{j,j'}c_{pp'}\ ,\label{eq:innerproductnpnpprimestatement}
\end{align}
independently of the value of $L$. 
\end{lemma}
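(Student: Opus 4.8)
The plan is to reduce the overlap to a single chain of transfer matrices capped by the left and right fixed-point vectors, and then to exploit the eigenvalue equations \eqref{eq:leftrighteveqs} together with the gauge condition \eqref{eq:gaugeconditionrewrittenexcit}. Reading the diagram \eqref{eq:phinpaux} layer by layer, the inner product $\langle \Phi^L_{j',p'}|\Phi^L_{j,p}\rangle$ is a product of $L$ transfer matrices sandwiched between the left cap $\bbra{\ell}$ and the right cap $\kket{r}$: at every site $k\notin\{j,j'\}$ one obtains the ordinary transfer matrix $E$; at the site carrying the ket excitation one obtains $E_{B(p)}$; at the site carrying the bra excitation one obtains $E_{\overline{B(p')}}$; and when $j=j'$ the two insertions merge into the single operator $E_{\overline{B(p')}B(p)}$.

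First I would treat the diagonal case $j=j'$, where the chain reads $\bbra{\ell}E^{j-1}E_{\overline{B(p')}B(p)}E^{L-j}\kket{r}$. Using $\bbra{\ell}E=\bbra{\ell}$ and $E\kket{r}=\kket{r}$ from \eqref{eq:leftrighteveqs}, the powers $E^{j-1}$ and $E^{L-j}$ collapse entirely, leaving exactly $\bbra{\ell}E_{\overline{B(p')}B(p)}\kket{r}=c_{pp'}$ by the definition of $c_{pp'}$. Because $\bbra{\ell}$ and $\kket{r}$ are genuine eigenvectors of eigenvalue $1$ rather than the dominant part of a Jordan decomposition, this identity is exact and entirely independent of $L$; this is precisely what distinguishes the present computation from the norm estimate of Lemma~\ref{lem:normofexcitationansatzstates}, which incurred $O(\lambda_2^{n/6})$ corrections.

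Next I would treat the off-diagonal case, say $j<j'$ (the case $j>j'$ being symmetric). The chain is $\bbra{\ell}E^{j-1}E_{B(p)}E^{j'-j-1}E_{\overline{B(p')}}E^{L-j'}\kket{r}$; collapsing the outermost powers as before leaves $\bbra{\ell}E_{B(p)}E^{j'-j-1}E_{\overline{B(p')}}\kket{r}$. The gauge condition \eqref{eq:gaugeconditionrewrittenexcit}, assumed throughout, gives $\bbra{\ell}E_{B(p)}=0$, so the whole expression vanishes. For $j>j'$ the surviving factor after collapsing is $\bbra{\ell}E_{\overline{B(p')}}\cdots\kket{r}$, which vanishes by the companion identity $\bbra{\ell}E_{\overline{B(p')}}=0$. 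Together these yield $\langle\Phi^L_{j',p'}|\Phi^L_{j,p}\rangle=\delta_{j,j'}c_{pp'}$.

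The computation is short and its only real content is bookkeeping, so I expect no serious obstacle. The one point to get right is the boundary convention in \eqref{eq:phinpaux}, namely that contracting the dangling virtual legs reproduces exactly the caps $\bbra{\ell}$ and $\kket{r}$ rather than unweighted identities, so that the eigenvalue equations apply; once this is fixed there is nothing to estimate. The $L$-independence and exactness are then automatic: they follow from $\bbra{\ell}$ and $\kket{r}$ being eigenvectors and from the gauge condition producing exact (not merely approximate) orthogonality of excitations at distinct sites. The edge positions $j,j'\in\{1,L\}$ need only a brief remark, since $E^0=I$ leaves the caps unchanged and the argument goes through verbatim.
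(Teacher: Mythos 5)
Your proof is correct and takes essentially the same route as the paper's: collapse the outer powers of $E$ against the caps via the fixed-point equations \eqref{eq:leftrighteveqs}, identify the diagonal case $j=j'$ with $\bbra{\ell}E_{\overline{B(p')}B(p)}\kket{r}=c_{pp'}$, and annihilate the off-diagonal cases with the gauge condition \eqref{eq:gaugeconditionrewrittenexcit}, with the boundary convention you flag being exactly the right point to check. (Incidentally, your intermediate exponent $E^{j'-j-1}$ is the correct site count; the paper's $E^{j'-j}$ is an immaterial off-by-one slip, since the term vanishes by $\bbra{\ell}E_{B(p)}=0$ regardless.)
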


\begin{proof}
First, consider the case where $j'=j$. Then we have \begin{align}
\langle \Phi^L_{j,p'}\ket{\Phi^L_{j,p}}
&=\bbra{\ell} E^{j-1}E_{\overline{B(p')}B(p)}E^{L-j} \kket{r}=\bbra{\ell} E_{\overline{B(p')}B(p)}\kket{r}=c_{pp'}\ ,
\end{align}
where we have used the fixed-point equations~\eqref{eq:leftrighteveqs}. That is, we have 
\begin{align}
\langle \Phi^L_{j,p'}\ket{\Phi^L_{j,p}}&=\raisebox{-.46\height}{\includegraphics[scale=0.11]{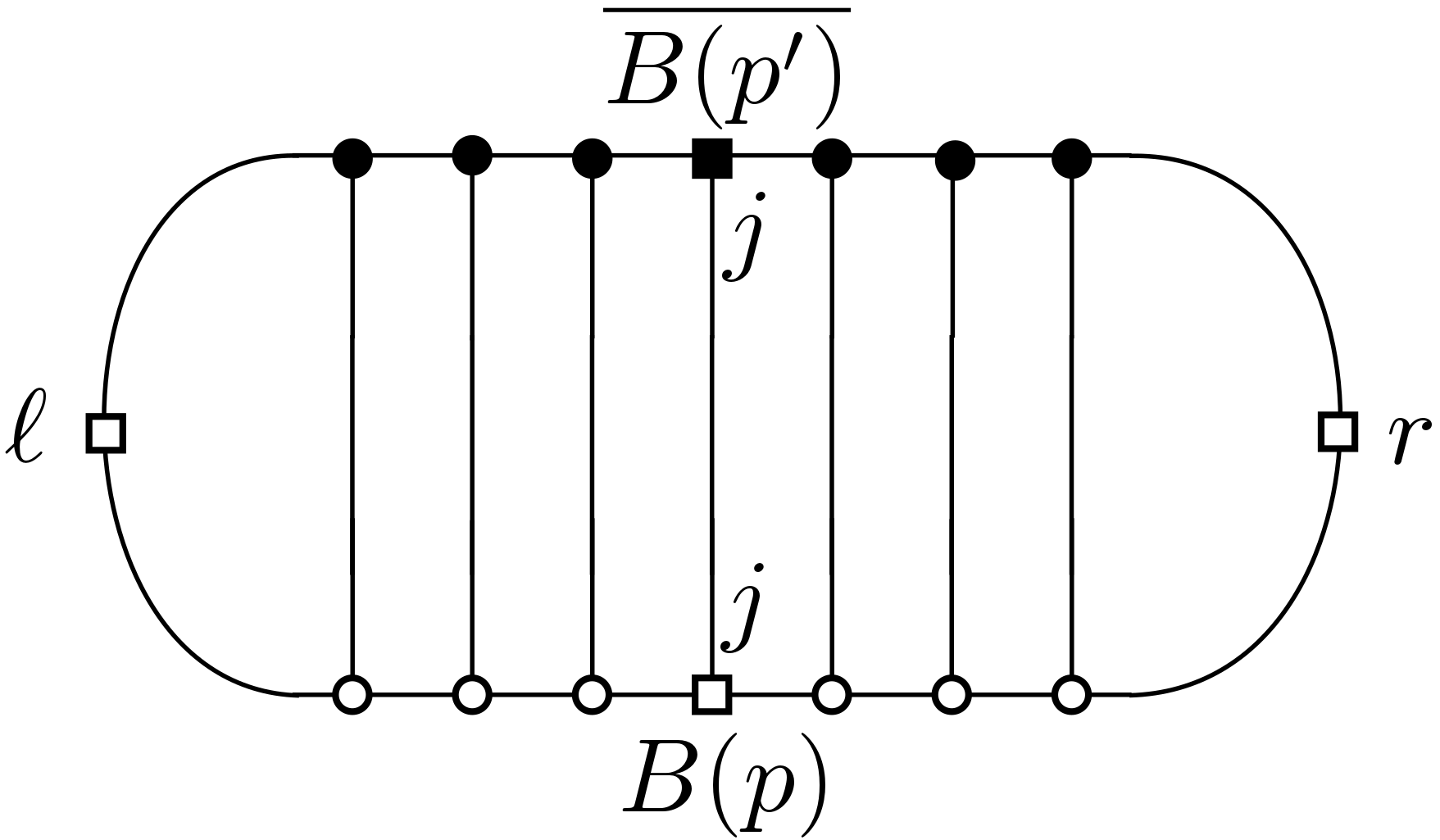}}=\raisebox{-.45\height}{\includegraphics[scale=0.055]{Xcpprimedef.png}}\ .
\end{align}
In a similar fashion, we can compute, for $j < j'$,
\begin{align}
\bra{\Phi^L_{j',p'}}\Phi^L_{j,p}\rangle &=\bbra{\ell}E^{j-1}E_{B(p)}E^{j'-j}E_{\overline{B(p')}}E^{L-j'}\kket{r}=\bbra{\ell}E_{B(p)}E^{j'-j}E_{\overline{B(p')}}\kket{r}=0,
\end{align}
where we have used the fixed-point equations~\eqref{eq:leftrighteveqs} and the gauge condition~\eqref{eq:gaugeconditionrewrittenexcit}. 
The proof for $j>j'$ is analogous. 
\end{proof}

\begin{proof}[Proof of Lemma~\ref{lem:normboundtransopxnpnprime}]
We first prove~\eqref{eq:ellrsandwitch}. The expression of interest can be written diagrammatically as
\begin{align}
\bbra{\ell}E_F (j,p,j',p')\kket{r}&=\raisebox{-.54\height}{\includegraphics[scale=0.1]{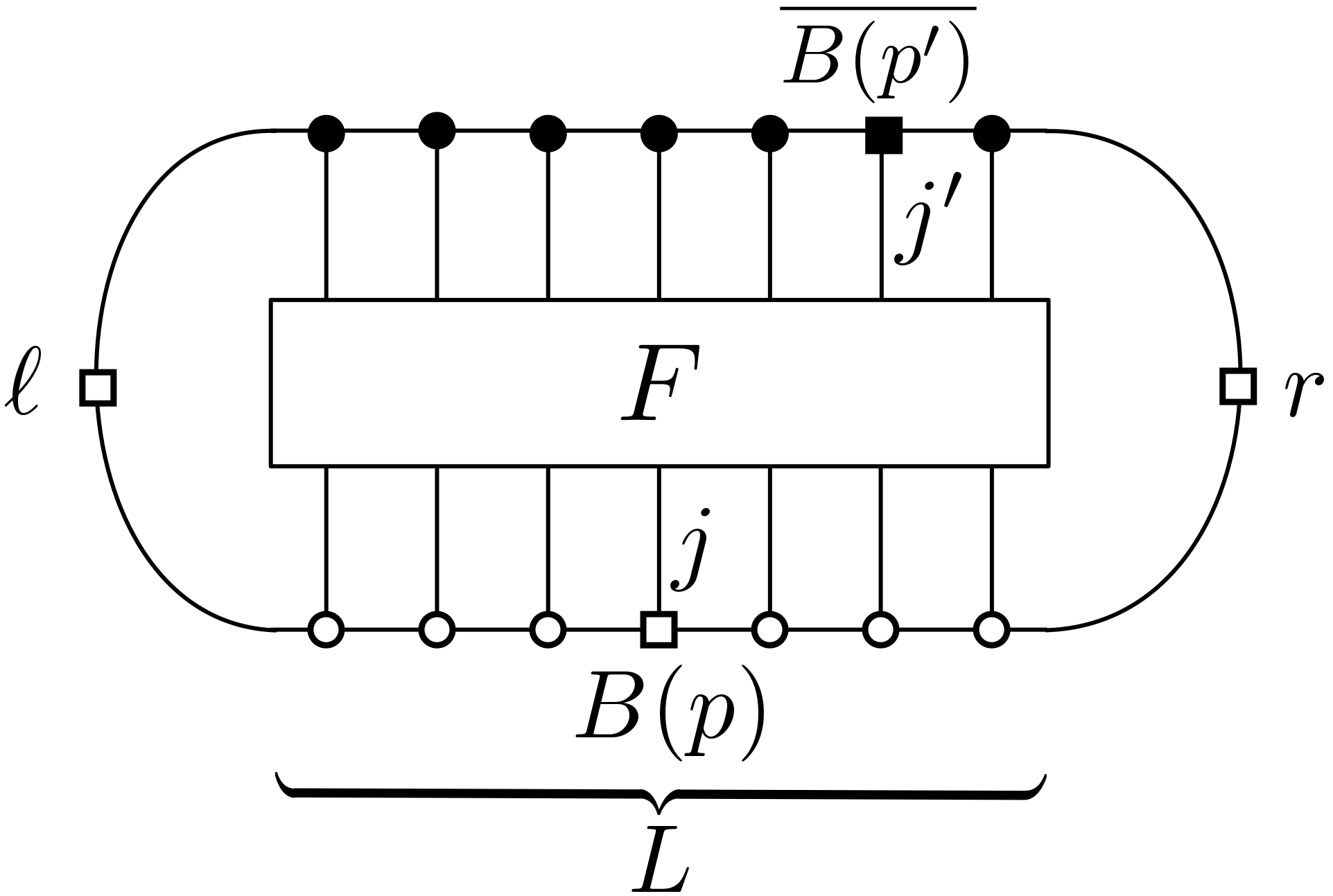}}=\bra{\Phi^L_{j',p'}}(I\otimes F\otimes I)\ket{\Phi^L_{j,p}}\  .
\end{align}
Equation~\eqref{eq:ematrixelementlr} follows by setting~$F$ to be equal to the identity on~$(\mathbb{C}^\physical)^{\otimes L}$ and using the orthogonality relation~\eqref{eq:innerproductnpnpprimestatement}. 
Furthermore, we have 
\begin{align}
\left|\bbra{\ell}E_F (j,p,j',p')\kket{r}\right|&=\left|\bra{\Phi^L_{j',p'}}(I\otimes F\otimes I)\ket{\Phi^L_{j,p}}\right| \leq \|F\|\cdot \|\Phi^L_{j,p}\|\cdot \|\Phi^L_{j',p'}\| .
\end{align}
The claim~\eqref{eq:ellrsandwitch} then follows from~\eqref{eq:innerproductnpnpprimestatement}.

Let us next prove~\eqref{eq:efdeltafrobeniusnorm}. By the definition of the Frobenius norm ~$\|\cdot\|_F$, we have 
\begin{align}
\|E_F (j,p,j',p')\|^2_F&=\sum_{\alpha_1,\alpha_2,\beta_1,\beta_2=1}^D \big|(\bra{\alpha_1}\otimes\bra{\alpha_2}) E_F(j,p,j',p')(\ket{\beta_1}\otimes\ket{\beta_2})\big|^2\ 
\end{align}
where $\{\ket{\alpha}\}_{\alpha=1}^D$ is an orthonormal basis of~$\mathbb{C}^D$.  The terms in the sum can be written diagrammatically as 
\begin{align}
(\bra{\alpha_1}\otimes\bra{\alpha_2}) E_F(j,p,j',p')(\ket{\beta_1}\otimes\ket{\beta_2})&=\raisebox{-.47\height}{\includegraphics[scale=0.095]{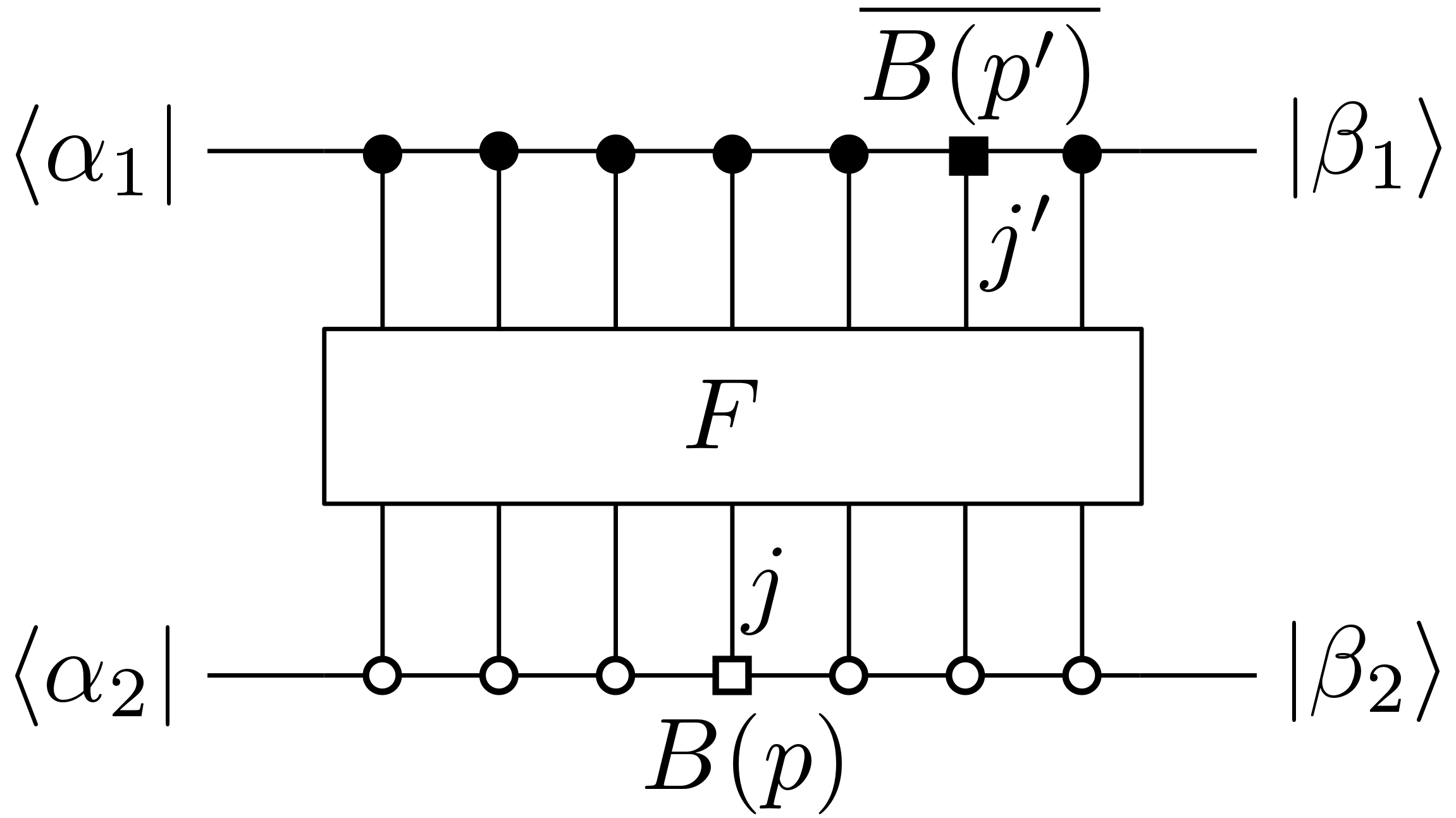}}.
\end{align}
Defining vectors
\begin{align}
\ket{\Psi_{j,p}(\alpha,\beta)}&=\raisebox{-.58\height}{\includegraphics[scale=0.07]{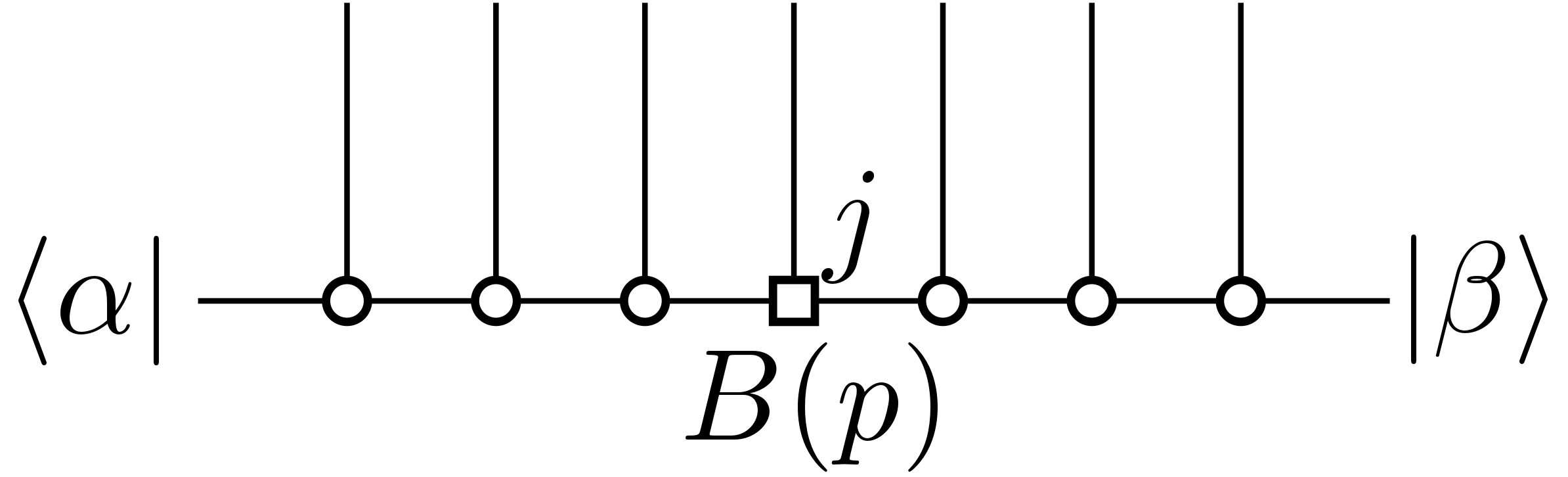}}\mylabel{eq:psinpjkvecs}
\end{align}
on $(\mathbb{C}^{\physical})^{\otimes L}$, we have
\begin{align}
|(\bra{\alpha_1}\otimes\bra{\alpha_2}) E_F(j,p,j',p')(\ket{\beta_1}\otimes\ket{\beta_2})|^2&=|\bra{\Psi_{j',p'}(\alpha_1,\beta_1)}F \ket{\Psi_{j,p}(\alpha_2,\beta_2)}|^2\\
&\leq  \|F\|^2\cdot  \|\Psi_{j,p}(\alpha_2,\beta_2)\|^2\cdot \|\Psi_{j',p'}(\alpha_1,\beta_1)\|^2\ .
\end{align}
The norm of the vector~\eqref{eq:psinpjkvecs} can be bounded as
\begin{align}
\|\Psi_{j,p}(\alpha,\beta)\|^2 &=\raisebox{-.45\height}{\includegraphics[scale=0.085]{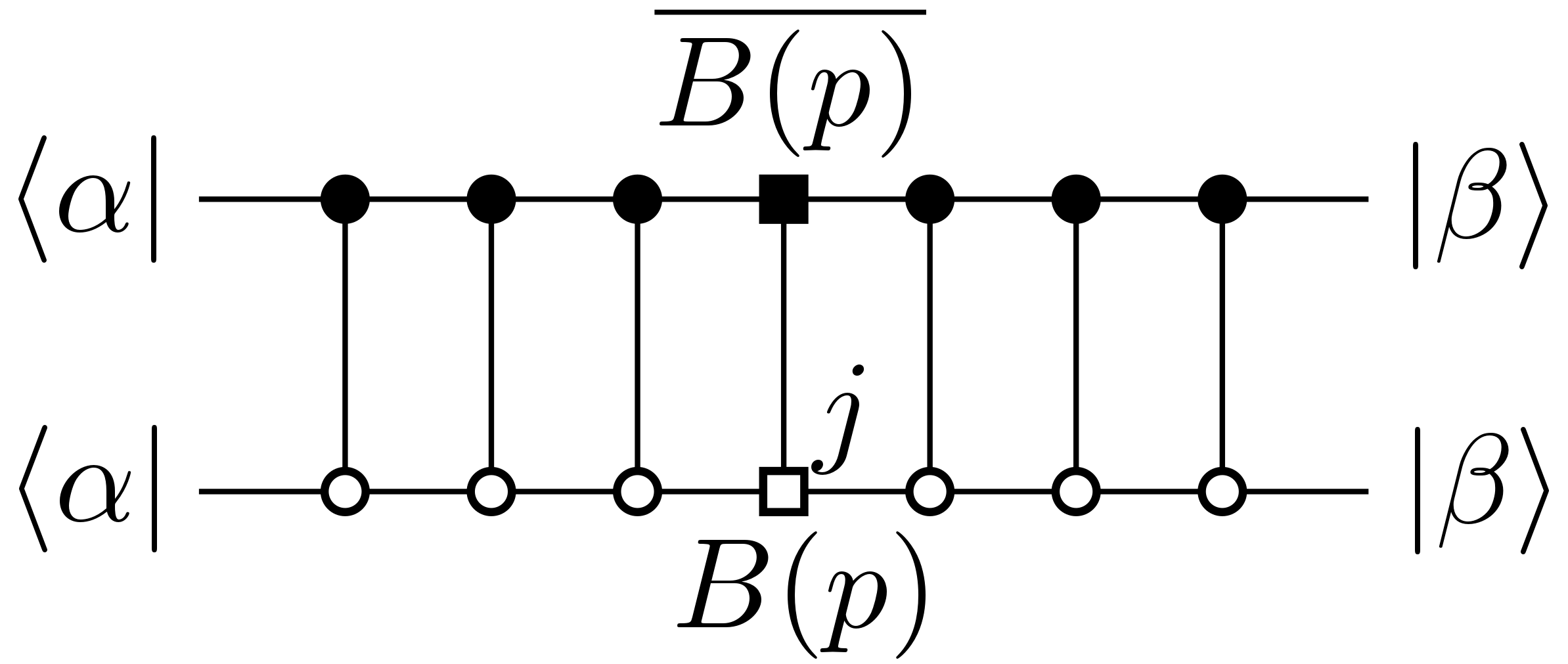}}\\
&= \big|\tr\left(E^{j-1}E_{\overline{B(p)}B(p)}E^{L-j}(\ket{\beta}\bra{\alpha}\otimes\ket{\beta}\bra{\alpha})\right)\big|\\
&\leq  \|E_{\overline{B(p)}B(p)}\|_F\cdot \|E^{j-1}\|_F\cdot \|E^{L-j}\|_F\\
&\leq \|E_{\overline{B(p)}B(p)}\|_F\ .
\end{align}
In the first inequality, we have used~\eqref{eq:matrixboundmultiple}, together with the fact that \begin{align}
    \|\ket{\beta}\bra{\alpha}\otimes\ket{\beta}\bra{\alpha}\|_F = 1\ .
\end{align}
In the second inequality, we have used Lemma~\ref{lem:normscalingtransferop}, along with the fact $\rho(E)=1$. The claim~\eqref{eq:efdeltafrobeniusnorm} follows from this. 

With a completely analogous proof, we also have
\begin{align}
    \|E_F(j,p)\|_F \le D^2\|F\|\sqrt{\|E_{\overline{B(p)}B(p)}\|_F},\qquad \text{and}\qquad \|E_F\|_F \le D^2\|F\|,
\end{align}
which are claims~\eqref{eq:efdeltafrobeniusnorm2} and~\eqref{eq:efdeltafrobeniusnorm3}.
\end{proof}

\subsection{Matrix elements of local operators in the excitation ansatz\label{sec:matrixelementlocalexcitation}}

\subsubsection{Overview of the proof}
Let us give a high-level overview of the argument used to establish our main technical result, Lemma~\ref{lem:excitationansatzmainlem}. The latter gives estimates on matrix elements $\bra{\phi_{p'}}F\ket{\phi_p}$ of a $d$-local operator~$F$ with respect to normalized excitation ansatz states~$\ket{\phi_p}$ and $\ket{\phi_{p'}}$, with possibly different momenta~$p$ and~$p'$.  More precisely, to apply the approximate Knill-Laflamme conditions for approximate error-detection, we need  to establish two kinds of bounds:
\begin{enumerate}
\item
For $p\neq p'$ (i.e., the non-diagonal elements), our aim is to argue that $|\bra{\phi_{p'}}F\ket{\phi_p}|$ vanishes as an inverse polynomial in~$n$. This is ultimately a consequence of the fact that in the Jordan decomposition $E=\kket{r}\bbra{\ell} \oplus \tilde{E}$ of the transfer matrix, the sub-dominant term $\tilde{E}$ has norm decaying exponentially with a rate determined by the second largest eigenvalue~$\lambda_2$.
\item\label{it:seconddiagonalelementsargument}
For the diagonal elements, our aim is to argue that $\bra{\phi_{p}}F\ket{\phi_p}$ is almost independent of $p$, that is, we want to show $|\bra{\phi_p}F\ket{\phi_p}-\bra{\phi_{p'}}F\ket{\phi_{p'}}|$ is small for different momenta~$p\neq p'$. For this purpose, we need to identify the leading order term in the expression~$\bra{\phi_p}F\ket{\phi_p}$. Higher order terms are again small by the properties of the transfer operator.
\end{enumerate} 
To establish these bounds, first observe that an unnormalized excitation ansatz state $\ket{\Phi_p(B;A)}$ is a superposition of the ``position space" states $\{\ket{\Phi_{j,p}}\}_{j=1}^n$, where each state~$\ket{\Phi_{j,p}}$ is given by a simple tensor network with an ``insertion'' of an operator at site~$j'$. Correspondingly, we first study matrix elements of the form $\bra{\Phi_{j',p'}}F\ket{\Phi_{j,p}}$. Bounds on these matrix elements are given in Lemma~\ref{lem:maincauchyschwarzestimates}. The idea of the proof of this statement is simple: in the tensor network diagram for the matrix element, subdiagrams associated with powers~$E^\Delta$ with sufficiently large~$\Delta$ may be replaced by the diagram associated with the map~$\kket{r}\bbra{\ell}$, with an error scaling term scaling as~$O(\lambda_2^{\Delta/2})$. This is due to the Jordan decomposition of the transfer operator. Thanks to the gauge condition~\eqref{eq:gaugeconditionrewrittenexcit}, the resulting diagrams then simplify, allowing us to identify the leading order term.

To realize this approach, a key step is to identify suitable 
subdiagrams corresponding to powers~$E^\Delta$ in the 
diagram associated with~$\bra{\Phi_{j',p'}}F\ket{\Phi_{j,p}}$. These are associated with connected regions of size~$\Delta$ where the operator~$F$ acts trivially, and there is no insertion of $B(p)$ (respectively $\overline{B(p')}$), meaning that $j$ and $j'$ do not belong to the region.  Lemma~\ref{lem:maincauchyschwarzestimates}  provides a careful case-by-case analysis depending on, at the coarsest level of detail, whether or not $j$ and $j'$ belong to a $\Delta$-neighborhood of the support of~$F$. 

Some subleties that arise are the following: to obtain estimates on the  leading-order terms for the diagonal matrix elements (see~\eqref{it:seconddiagonalelementsargument} above) as well as related expressions,  a bound on the magnitude of the matrix element~$\bra{\Phi_{j,p'}}F\ket{\Phi_{j,p}}$ only is not sufficient. The lowest-order approximating expression to $\bra{\Phi_{j,p'}}F\ket{\Phi_{j,p}}$ obtained by making the above substitutions of the transfer operators a priori seems to depend on the exact site location~$j$. This is awkward because the term $\bra{\Phi_{j,p'}}F\ket{\Phi_{j,p}}$ appears as a summand (with sum taken over~$j$) when computing matrix elements of excitation ansatz states. We argue that in fact, the leading order term of 
$\bra{\Phi_{j,p'}}F\ket{\Phi_{j,p}}$ is identical for all values of~$j$ not belonging to the support of~$F$. This statement is formalized in Lemma~\ref{lem:mainmatrixelementexcit} and allows us to subsequently estimate sums of interest without worry about the explicit dependence on~$j$.

Finally, we require a strengthening of the estimates obtained in Lemma~\ref{lem:maincauchyschwarzestimates} because we are ultimately interested in excitation ansatz states: these are superpositions of the states~$\ket{\Phi_{j,p}}$, with phases of the form~$e^{ipj}$. Estimating only the magnitude of matrix elements of the form~$\bra{\Phi_{j',p'}}F\ket{\Phi_{j,p}}$ is not sufficient to establish our results. Instead, we need to treat the phases ``coherently'', which leads to certain cancellations. The corresponding statement is given in Lemma~\ref{lem:strengthenedboundsum}.

\subsubsection{The proof}
We will envision the sites ~$\{1,\ldots,n\}$ as points on a ring, i.e., using periodic boundary conditions, and measure the distance between sites $j,j'$ by
\[\mathsf{dist}(j,j'):=\min_{k\in \mathbb{Z}}|j-j'+k\cdot n|.\]
For $\Delta\in \{0,\ldots,n\}$ and a subset $\cF\subset \{1,\ldots,n\}$, let 
\begin{align}
\cB^\Delta(\cF)&=\{j\in \{1,\ldots,n\} \mid \exists\, j'\in \cF\textrm{ such that }\mathsf{dist}(j,j')\leq \Delta \}\ 
\end{align}
be the {\em $\Delta$-thickening} of~$\cF$.

We say that $j'\in \{1,\ldots,n\}$ is a {\em left neighbor of} (or is {\em left-adjacent to})~$j\in\{1,\ldots,n\}$ if $j'= j-1$ for $j>1$, or $j'=n$ for $j=1$. A connected region~$\cR\subset \{1,\ldots,n\}$ is said to lie on the left of (or be left-adjacent to) $j\in \{1,\ldots,n\}$  if it is of the form  $\cR=\{j_1,\ldots,j_r\}$, with $j_{\alpha+1}$ left-adjacent to $j_\alpha$ for $\alpha\in \{0,\ldots,r-1\}$ with the convention that $j_0=j_r$. Analogous definitions hold for right-adjacency.

For an operator~$F$ acting on $(\mathbb{C}^{\physical})^{\otimes n}$, let
$\mathsf{supp}(F)\subset \{1,\ldots,n\}$ denote its support, i.e., the sites of the system that the operator acts on non-trivially. We say that $F$ is $d$-local if $|\supp(F)| = d$. Let us assume that 
$\mathsf{supp}(F)$ decomposes into $\kappa$ disjoint connected components
\begin{align}
\mathsf{supp}(F)&=\bigcup_{\alpha=0}^{\kappa-1} \cF_\alpha\ .\mylabel{eq:connectedcomponentstildeF}
\end{align}
We may, without loss of generality, assume that 
this gives a partition of~$\{1,\ldots,n\}$ into disjoint connected sets
\begin{align}
\{1,\ldots,n\}&=\cA_0\cup\cF_0\cup\cA_1\cup\cF_1\cup\cdots\cup\cA_{\kappa-1}\cup\cF_{\kappa-1}
\end{align}
where $\cA_\alpha$ is left-adjacent to $\cF_\alpha$ for $\alpha\in \{0,\ldots,\kappa-1\}$, $\cA_{\alpha+1}$ is right-adjacent to $\cF_\alpha$ for $\alpha\in \{0,\ldots,\kappa-2\}$, and $\cA_0$ is right-adjacent to~$\cF_{\kappa-1}$. We may then decompose the operator $F$ as
\begin{align}
    F = \sum_{i}\bigotimes_{\alpha=0}^{\kappa-1} (I_{\cA_\alpha}\otimes F_{i,\alpha}),
\end{align}
where we write $F$ as a sum of decomposable tensor operators (indexed by $i$), with each $F_{i,\alpha}$ being an operator acting on the component $\cF_\alpha$.

Let us define a function~$\tau:\{1,\ldots,n\}\backslash \mathsf{supp}(F)\rightarrow\{0,\ldots,\kappa-1\}$ which associates to every site $j\not\in\supp(\cF)$ the unique index~$\tau(j)$ for the component~$\cA_{\tau(j)}$ of the complement of $\mathsf{supp}(F)$ such that $j\in\cA_{\tau(j)}$.
 
It is also convenient to introduce the following operators $\{F(\tau)\}_{\tau=0}^{\kappa-1}$. The operator $F(\tau)$ is obtained by 
removing the identity factor on the sites~$\cA_\tau$ of $F$, and cyclically permuting the remaining components in such a way that $\cF_\tau$ ends up on the sites $\{1,\ldots,|\cF_\tau|\}$. More precisely, we define
$F(\tau)\in\cB((\mathbb{C}^{\physical})^{\otimes (n-|\cA_\tau|)})$ by 
\begin{align}
F(\tau)&=\sum_{i}F_{i,\tau}\otimes \left(\bigotimes_{\alpha=\tau+1}^{\tau+\kappa-1} I^{\otimes |\cA_{\alpha\Mod\kappa }|}_{\mathbb{C}^{\physical}}\otimes F_{i,\alpha\Mod\kappa} \right), \mylabel{eq:tildeftaudefass}
\end{align} 
for $\tau\in \{0,\ldots,\kappa-1\}$. We note that $j\mapsto F(\tau(j))$ associates a permuted operator to each site $j$ not belonging to the support of~$F$. Let us also define $\iota(j)$ to be the index of the site which gets cyclically shifted to the first site when defining~$F_{\tau(j)}$. An example is shown diagrammatically in Figure~\ref{fig:ftaudefexample}.

\begin{figure}
\begin{align}
F&=\raisebox{-.52\height}{\includegraphics[scale=0.082]{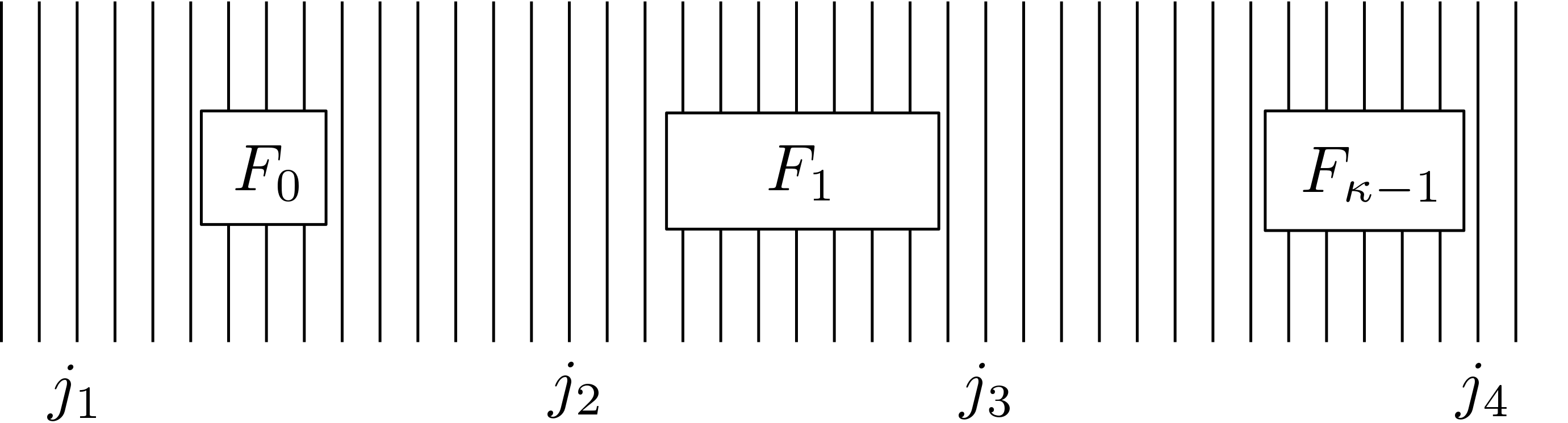}}\\
\\
F(\tau(j_1))=F(\tau(j_4))=F(0)&=\,\raisebox{-.44\height}{\includegraphics[scale=0.08]{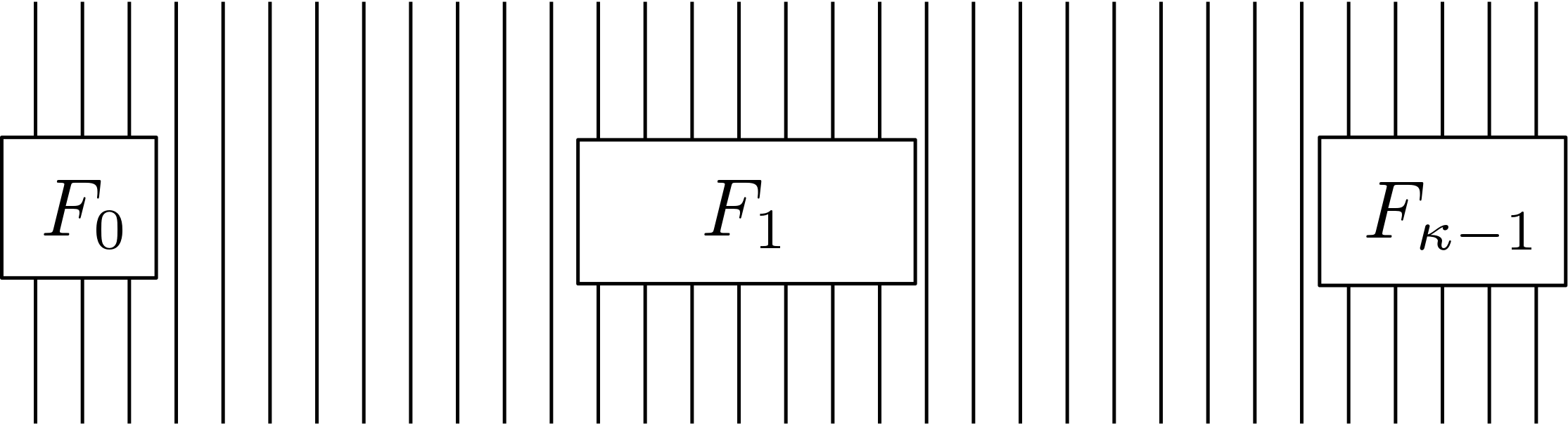}}\ ,\\
F(\tau(j_2))=F(1)&=\,\raisebox{-.44\height}{\includegraphics[scale=0.078]{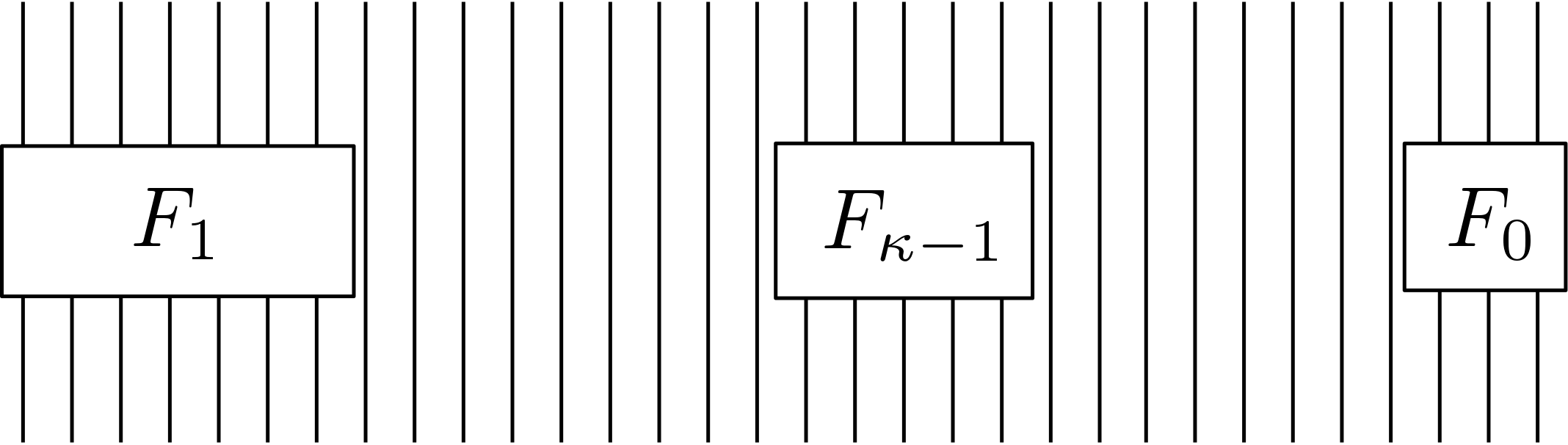}}\ ,\\
F(\tau(j_3))=F(\kappa-1)&=\,\raisebox{-.44\height}{\includegraphics[scale=0.08]{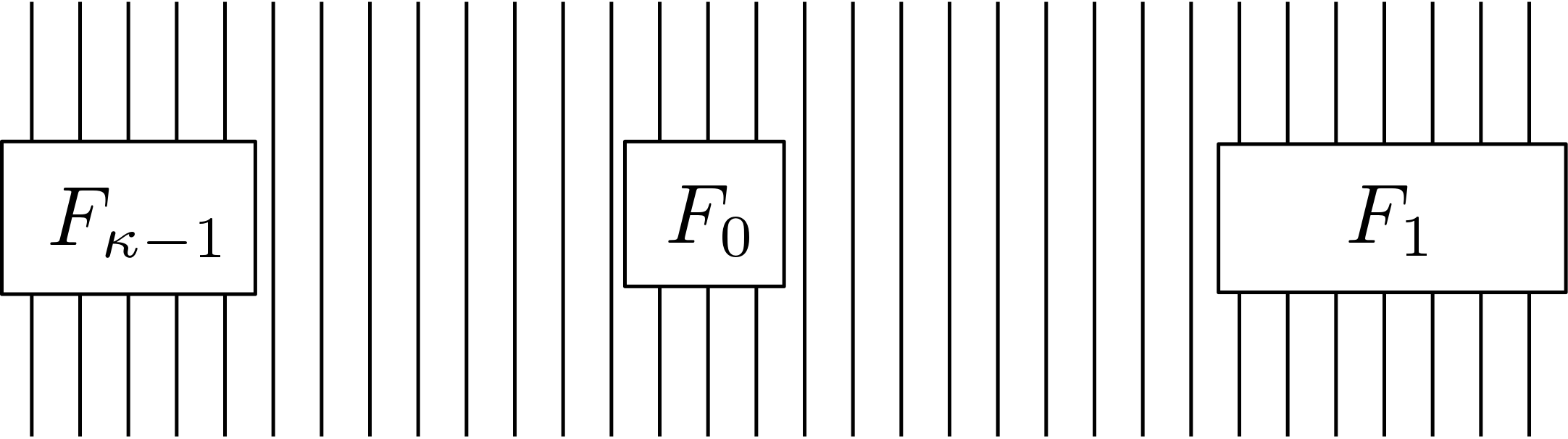}}\ .
\end{align}
\caption{Example for $F$ and sites $j_1,j_2,j_3,j_4 \in \{1,\ldots,n\}$ with $\iota(j_1)=\iota(j_4) = 7$, $\iota(j_2) = 19$, and $\iota(j_3) = 35$.}
\mylabel{fig:ftaudefexample}
\end{figure}

For two excitation ansatz states~$\ket{\Phi_p}$ and $\ket{\Phi_{p'}}$ ,
and an operator $F$ on $(\mathbb{C}^{\physical})^{\otimes n}$, we may write the corresponding matrix element as 
\begin{align}
\bra{\Phi_{p'}}F\ket{\Phi_{p}}&=
\sum_{j,j'=1}^n e^{i(pj-p'j')}
\bra{\Phi_{j',p'}}F\ket{\Phi_{j,p}}\ ,\mylabel{eq:sumphipprimephip}
\end{align}
where $\ket{\Phi_{j,p}}$ are the ``position space" states introduced in equation \eqref{eq:Phinpdefnitionmain}. We are interested in bounding the magnitude of this quantity.

We begin by bounding the individual terms in the sum~\eqref{eq:sumphipprimephip}.
\begin{lemma}\mylabel{lem:maincauchyschwarzestimates}
Let $j,j'\in\{1,\ldots,n\}$ and let $p,p'$ be arbitrary non-zero momenta. 
Consider the states~$\ket{\Phi_{j,p}}$ and $\ket{\Phi_{j',p'}}$ defined by~\eqref{eq:Phinpdefnitionmain}.  
Let $\Delta=\Delta(n)$ and $d=d(n)$ be monotonically increasing functions of $n$.  
Suppose further that we have
\begin{align}
10\Delta d < n\ .
\end{align}
Assume $F$ is a $d$-local operator of unit norm on 
$(\mathbb{C}^{\physical})^{\otimes n}$
whose support has~$\kappa$ connected components as in~\eqref{eq:connectedcomponentstildeF}. Then we have the following.\\

\begin{enumerate}[(i)]
\item\mylabel{it:a1bound} 
There is some fixed $q\in [n]$ such that for all $j,j'\in\cB^\Delta(\mathsf{supp}(F))$, we have
\begin{align}
\bra{\Phi_{j',p'}}F\ket{\Phi_{j,p}}&=
\bbra{\ell}
E_{I^{\otimes \Delta}\otimes F_{\tau(q)}\otimes I^{\otimes \Delta}}(\hat{j},p,\hat{j}',p')\kket{r}+ O(\lambda_2^{\Delta})\ , \mylabel{eq:overlapformulajfa1bound}
\end{align}
 where $\hat{j}=j-\iota(q)+\Delta+1 \Mod n$ and $\hat{j'}=j'-\iota(q)+\Delta+1\Mod n$. 
 
Furthermore,
\begin{align}
\langle \Phi_{j',p'}\ket{\Phi_{j,p}}=\Delta_{j,j'}c_{pp'}+O(\lambda_2^{\Delta})\ .
\mylabel{eq:overlapformulajprimepprimejp}
\end{align}
\item\mylabel{it:a2bound}
If $j,j'\not\in\cB^\Delta(\mathsf{supp}(F))$, then
\begin{enumerate}[(a)]
\item\mylabel{it:partaclaima2bound}
$|\bra{\Phi_{j',p'}}F\ket{\Phi_{j,p}}| = O(\lambda_2^{\Delta/2})$ if $j\neq j'$. 
\item\mylabel{it:partbclaima2bound}
$\bra{\Phi_{j',p'}}F\ket{\Phi_{j,p}}
=\bbra{\ell} E_{F(\tau(j))}\kket{r}
\cdot c_{pp'}+ O(\lambda_2^{\Delta/2})$. 

Here the operator~$F(\tau(j))$ is defined by equation~\eqref{eq:tildeftaudefass}. 
\end{enumerate}
\item\mylabel{it:a3bound}
If $j\in\cB^{\Delta}(\mathsf{supp}(F))$ and $j'\not\in\cB^\Delta(\mathsf{supp}(F))$, then
\begin{enumerate}[(a)]
\item\mylabel{it:a3bounda}
$|\bra{\Phi_{j',p'}}F\ket{\Phi_{j,p}}|= O( \lambda_2^{\Delta/2})$ if $j'\not\in\cB^{2\Delta}(\mathsf{supp}(F))$.
\item\mylabel{it:a3boundb}
There exists some fixed $q\in [n]$ such that, for all $j\in \cB^\Delta(\supp(F))$  \\
and $j' \in \cB^{2\Delta}(\supp(F))\backslash \cB^\Delta(\supp(F))$, we have
\begin{align}\bra{\Phi_{j',p'}}F\ket{\Phi_{j,p}}=
\bbra{\ell}E_{F(\tau(q))}(\hat{j},p,\hat{j'},p',2\Delta)\kket{r}+ O( \lambda_2^{2\Delta}),\end{align}
where $\hat{j}=j-\iota(q) + 2\Delta + 1 \Mod n$ and $\hat{j}'=j'-\iota(q) + 2\Delta + 1 \Mod n$. 
\end{enumerate}

\item\mylabel{it:a4bound}
If $j'\in\cB^{\Delta}(\mathsf{supp}(F))$ and $j\not\in\cB^\Delta(\mathsf{supp}(F))$, then
\begin{enumerate}[(a)]
\item\mylabel{it:a4bounda}
$|\bra{\Phi_{j',p'}}F\ket{\Phi_{j,p}}|= O(\lambda_2^{\Delta/2})$ if $j\not\in\cB^{2\Delta}(\mathsf{supp}(F))$.
\item\mylabel{it:a4boundb}
There exists some fixed $q\in [n]$ such that, for all $j'\in \cB^\Delta(\supp(F)))$ \\
and $j \in \cB^{2\Delta}(\supp(F))\backslash \cB^\Delta(\supp(F))$, we have
\begin{align}\bra{\Phi_{j',p'}}F\ket{\Phi_{j,p}}=
\bbra{\ell}E_{F(\tau(q))}(\hat{j},p,\hat{j'},p',2\Delta)\kket{r}+ O( \lambda_2^{2\Delta}),\end{align}
where $\hat{j}=j-\iota(q) + 2\Delta + 1 \Mod n$ and $\hat{j}'=j'-\iota(q) + 2\Delta + 1 \Mod n$. 
\end{enumerate}

\end{enumerate}
\end{lemma}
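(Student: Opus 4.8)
The plan is to expand each matrix element $\bra{\Phi_{j',p'}}F\ket{\Phi_{j,p}}$ as a single trace of transfer operators around the ring $\{1,\dots,n\}$: plain copies of $E$ on sites where $F$ acts trivially and which carry no excitation, a copy of $E_{B(p)}$ (resp.\ $E_{\overline{B(p')}}$) at site $j$ (resp.\ $j'$) when that site lies outside $\supp(F)$, and the local insertion of $F$ on $\supp(F)$. Two facts drive the whole argument. First, the Jordan decomposition $E^{g}=\kket{r}\bbra{\ell}\oplus\tilde E^{g}$ together with $\|\tilde E^{g}\|_F\le\lambda_2^{g/2}$ (Lemma~\ref{lem:normscalingtransferop}\eqref{it:secondjordan}) lets me replace any sufficiently long run $E^{g}$ of plain transfer operators by the rank-one projector $\kket{r}\bbra{\ell}$, at the cost of an error of order $\lambda_2^{g/2}$; since every surrounding factor has $O(1)$ Frobenius norm by Lemmas~\ref{lem:normboundinjectivemps},~\ref{lem:efboundsimplified}, and~\ref{lem:normboundtransopxnpnprime}, inequality~\eqref{eq:matrixboundmultiple} bounds the error uniformly. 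Second, such a replacement ``opens'' the ring into a line capped by $\bbra{\ell}$ and $\kket{r}$, and the gauge condition~\eqref{eq:gaugeconditionrewrittenexcit}, i.e.\ $\bbra{\ell}E_{B(p)}=\bbra{\ell}E_{\overline{B(p')}}=0$, annihilates any line segment containing a single isolated excitation.

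First I would record that, because $\kappa\le d$ and $10\Delta d<n$, the complement of $\cB^{2\Delta}(\supp F)$ has total length $\gtrsim n$ spread over at most $\kappa$ gaps, so some gap has length $\ge 4\Delta$. Fixing a point $q$ inside such a gap gives a canonical place to open the ring; opening there and absorbing the relevant $\Delta$- or $2\Delta$-buffer into a padded window produces exactly the operators $E_{I^{\otimes\Delta}\otimes F(\tau(q))\otimes I^{\otimes\Delta}}(\hat j,p,\hat j',p')$ of the statement, where $F(\tau(q))$ is the cyclic reordering defined in~\eqref{eq:tildeftaudefass} and $\hat j,\hat j'$ are the relocated positions of $j,j'$ after the shift by $\iota(q)$. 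Taking $F=I$ and invoking the orthogonality relation~\eqref{eq:innerproductnpnpprimestatement} simultaneously yields the overlap formula~\eqref{eq:overlapformulajprimepprimejp}.

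The case distinction then merely records, for the configuration of $j,j'$ relative to $\cB^\Delta(\supp F)$ (and $\cB^{2\Delta}$), which excitations can be separated from everything else by a plain-$E$ run of length $\ge\Delta$: any separable excitation produces a segment of the form $\bbra{\ell}E_{B(p)}\cdots$ or $\cdots E_{\overline{B(p')}}\kket{r}$ killed by the gauge condition. Concretely, in~\eqref{it:a2bound} with $j\ne j'$ each excitation is separable, so the term is $O(\lambda_2^{\Delta/2})$, giving~\eqref{it:partaclaima2bound}; for $j=j'$ the two tensors share one site as $E_{\overline{B(p')}B(p)}$, cannot be separated, and contribute the factor $c_{pp'}$, yielding $\bbra{\ell}E_{F(\tau(j))}\kket{r}\cdot c_{pp'}$ as in~\eqref{it:partbclaima2bound}. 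In the symmetric cases~\eqref{it:a3bound} and~\eqref{it:a4bound} the excitation lying beyond $\cB^{2\Delta}(\supp F)$ is separable and killed (parts~(a)), while one lying in the annulus $\cB^{2\Delta}(\supp F)\setminus\cB^\Delta(\supp F)$ must be retained inside a $2\Delta$-padded window, producing the error $O(\lambda_2^{2\Delta})$ (parts~(b)). Finally, in~\eqref{it:a1bound} neither excitation is separable, so the full $\Delta$-padded window survives with error $O(\lambda_2^{\Delta})$ from the single $\ge 2\Delta$ cut.

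I expect the main obstacle to be the combinatorial bookkeeping rather than any analytic estimate. One must check that a single fixed $q$ (hence one reordering $F(\tau(q))$ and one pair of relocated indices $\hat j,\hat j'$) works uniformly over the whole admissible range of $j,j'$ in each case, and that the padded-window operators coincide literally with the $E_{F(\tau(q))}(\cdot)$ of Section~\ref{sec:normboundstransferopexcitation} once the periodic identification and the $\kappa$ connected components of $\supp F$ are accounted for; in particular one must treat the sub-configuration where an excitation sits deep inside a wide internal gap between two components of $\supp F$ (there it is isolated directly rather than through the complementary arc). The purely analytic content---that each cut costs only $O(\lambda_2^{g/2})$---is uniform across all cases and follows mechanically from~\eqref{eq:matrixboundmultiple} and the $O(1)$ Frobenius-norm bounds, so the delicate part is confirming that exactly the advertised segments are isolated, and hence annihilated by the gauge condition, in each geometric configuration.
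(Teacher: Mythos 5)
Your proposal is correct and follows essentially the same route as the paper's proof: the same expansion of $\bra{\Phi_{j',p'}}F\ket{\Phi_{j,p}}$ as a cyclic trace of transfer operators, the same replacement of long plain-$E$ runs by $\kket{r}\bbra{\ell}$ via the Jordan decomposition with $O(\lambda_2^{g/2})$ error controlled through the Frobenius-norm bounds of Lemmas~\ref{lem:normscalingtransferop}, \ref{lem:efboundsimplified}, and~\ref{lem:normboundtransopxnpnprime}, the same annihilation of isolated $E_{B(p)}$/$E_{\overline{B(p')}}$ insertions by the gauge condition~\eqref{eq:gaugeconditionrewrittenexcit}, the same choice of $q$ outside $\cB^{2\Delta}(\supp F)$ (respectively $\cB^{4\Delta}$ for the $2\Delta$-padded subcases, which your counting $10\Delta d<n$ covers identically), and the same case-by-case error rates. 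The only cosmetic deviation is that you treat case~\eqref{it:a4bound} by direct symmetry where the paper conjugates case~\eqref{it:a3bound} via $\bra{\Phi_{j',p'}}F\ket{\Phi_{j,p}}=\overline{\bra{\Phi_{j,p}}F^\dagger\ket{\Phi_{j',p'}}}$, which changes nothing of substance.
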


\begin{proof}
For the proof of~\eqref{it:a1bound}, suppose that $j,j'\in\cB^\Delta(\mathsf{supp}(F))$.  Pick any site $q\not\in\cB^{2\Delta}(\mathsf{supp}(F))$. We note that such a site always exists since 
\[|\cB^{2\Delta}(\mathsf{supp}(F))|\leq 5\Delta|\mathsf{supp}(F)|=5\Delta d < 10\Delta d < n\] 
by assumption. Let us define the shifted indices
\begin{align}
    \hat{j} = j - \iota(q) + \Delta + 1 \Mod n,\quad \text{and}\quad \hat{j}' = j' - \iota(q) + \Delta + 1 \Mod n.
\end{align}
Then we may write
\begin{align}
\bra{\Phi_{j',p'}}F\ket{\Phi_{j,p}}&=
\tr\left(E_F(j,p,j',p')\right)\\
&=\tr\left(E^s E_{I^{\otimes \Delta}\otimes F_{\tau(q)}\otimes I^{\otimes \Delta}}(\hat{j},p,\hat{j}',p')\right)\label{eq:phjpfjprm}
\end{align}
where $s\geq 2\Delta$. This is because by the choice of~$q$, there are at least~$2\Delta$ sites not belonging to~$\mathsf{supp}(F)$ both on the left and the right of~$q$.
Each of these~$4\Delta$ sites contributes a factor $E=E_I$ (i.e., a single transfer operator) to the expression
within the trace. 
The term~$E_{I^{\otimes \Delta}\otimes F_{\tau(q)}\otimes I^{\otimes \Delta}}(\hat{j},p,\hat{j}',p')$ incorporates~$\Delta$ of the  associated
transfer operators $E=E_I$
on the left- and right of $q$, respectively, such that at least $2\Delta$ factors of~$E$ remain. By the cyclicity of the trace, these can be consolidated into a single term~$E^s$ with~$s\geq 2\Delta$. The operator~$I^{\otimes \Delta}\otimes F_{\tau(q)}\otimes I^{\otimes \Delta}$ (i.e., the additional $I^{\otimes \Delta}$ factors) in the term $E_{I^{\otimes \Delta}\otimes F_{\tau(q)}\otimes I^{\otimes \Delta}}(\hat{j},p,\hat{j}',p')$ is used to ensure that~$j$ and $j'$ are correctly ``retained'' when going from the first to the second line in~\eqref{eq:phjpfjprm}. Inserting the Jordan decomposition~$E=\kket{r}\bbra{\ell}\oplus\tilde{E}$, we obtain
\begin{align}
\bra{\Phi_{j',p'}}F\ket{\Phi_{j,p}}
&=\bbra{\ell}
E_{I^{\otimes \Delta}\otimes F_{\tau(q)}\otimes I^{\otimes \Delta}}(\hat{j},p,\hat{j}',p')\kket{r}+\tr\left(\tilde{E}^{s} E_{I^{\otimes \Delta}\otimes F_{\tau(q)}\otimes I^{\otimes \Delta}}(\hat{j},p,\hat{j}',p')\right). \mylabel{eq:innerproductxybound}
\end{align}
By Lemma~\ref{lem:normscalingtransferop}(ii) and Lemma~\ref{lem:normboundtransopxnpnprime}, we have the bound
\begin{align}
\left|\tr\left(\tilde{E}^{s} 
E_{I^{\otimes \Delta}\otimes F_{\tau(q)}\otimes I^{\otimes \Delta}}(\hat{j},p,\hat{j}',p')\right)\right|&\leq  \|\tilde{E}^{s}\|_F\cdot \|
E_{I^{\otimes \Delta}\otimes F_{\tau(q)}\otimes I^{\otimes \Delta}}(\hat{j},p,\hat{j}',p')\|_F\\
&\leq \lambda_2^{{s/2}}\cdot D^2\|F\|\cdot\sqrt{\|E_{\overline{B(p')}B(p')}\|_F\|E_{\overline{B(p')}B(p)}\|_F}\\
& = O(\lambda_2^{\Delta}),
\end{align}
where we have used the fact that $\lambda_2^{s/2} \le \lambda_2^\Delta$ in the last line. We have also absorbed the dependence on the constants $D$, $\|F\|$, and $\sqrt{\|E_{\overline{B(p')}B(p')}\|_F\|E_{\overline{B(p')}B(p)}\|_F}$ into the big-O notation. Inserting this into~\eqref{eq:innerproductxybound} gives the first claim of~\eqref{it:a1bound}.

Now consider the inner product
$\langle \Phi_{j',p'}\ket{\Phi_{j,p}}=\tr(E(j,p,j',p'))$, which
corresponds to the case where $F$ is the identity. 
By the cyclicity of the trace, this can be written as $\langle \Phi_{j',p'}\ket{\Phi_{j,p}}=\tr(E^s E(\hat{j},p,\hat{j}',p'))$ for some $s\geq 2\Delta$ and suitably defined $\hat{j},\hat{j}'$. Repeating the same argument as above and using the fact that
\begin{align}
    \bbra{\ell}E(\hat{j},p,\hat{j}',p')\kket{r}&=\Delta_{\hat{j},\hat{j}'}c_{pp'}=\Delta_{j,j'}c_{pp'}
\end{align}
by definition of $E(\hat{j},p,\hat{j}',p')$, equation~\eqref{eq:leftrighteveqs} (i.e., the fact that $\kket{\ell}$ and $\kket{r}$ are left- respectively right eigenvectors of~$E$), and the gauge identities ~\eqref{eq:gaugeconditionrewrittenexcit} of $E_{B(p)}$ and $E_{\overline{B(p)}}$, we obtain the claim~\eqref{eq:overlapformulajprimepprimejp}.

Now consider claim~\eqref{it:a2bound}. Suppose  that $j,j'\not\in\cB^{\Delta}(\mathsf{supp}(F))$. We consider the following two cases:
\begin{enumerate}[(a)]
\item[\eqref{it:partaclaima2bound}]
If $j\neq j'$, then 
there is a connected region of at least~$\Delta$ sites not belonging to~$\mathsf{supp}(F)$ to either the left of~$j'$ and not containing $j$, or the left of~$j$ and not containing $j'$. Without loss of generality, we assume the former is the case. By the cyclicity of the trace, we may also assume without loss of generality that~$j'=\Delta+1$, $j>j'$, and that $F$ is supported on the sites~$\{2\Delta+2,\ldots,n\}$. Let $\hat{F}$ denote the restriction of~$F$ to the sites~$\{\Delta+2,\ldots,n\}$, and let $\hat{j}:=j-(\Delta+1)$. Then we may write
\begin{align}
\bra{\Phi_{j',p'}}F\ket{\Phi_{j,p}}&=\tr\left(E^{\Delta}E_{\overline{B(p')}}
E_{\hat{F}}(\hat{j},p)\right)\ .
\end{align}
Substituting the Jordan decomposition $E^\Delta=\kket{r}\bbra{\ell}\oplus\tilde{E}^\Delta$, we have
\begin{align}
\bra{\Phi_{j',p'}}F\ket{\Phi_{j,p}}&=\bbra{\ell}E_{\overline{B(p')}}E_{\hat{F}}(\hat{j},p)\kket{r}+
\tr\left(\tilde{E}^\Delta E_{\overline{B(p')}}E_{\hat{F}}(\hat{j},p)\right).
\end{align}
Since we assume that $p\neq 0$,  the gauge condition~\eqref{eq:gaugeconditionrewrittenexcit} states that~$\bbra{\ell}E_{\overline{B(p)}}=0$, hence the first term vanishes and it follows that
\begin{align}
 |\bra{\Phi_{j',p'}}F\ket{\Phi_{j,p}}|
 &= \left|\tr\left(\tilde{E}^\Delta E_{\overline{B(p')}}E_{\hat{F}}(\hat{j},p)\right)\right|\\
 &\le \|\tilde{E}^\Delta\|_F\cdot \|E_{\overline{B(p')}}\|_F\cdot \|E_{\hat{F}}(\hat{j},p)\|_F\\
 &\le \lambda_2^{\Delta/2} \|E_{\overline{B(p')}}\|_F\cdot D^2\|\hat{F}\|\sqrt{\|E_{\overline{B}(p)B(p)}\|_F}\\
 &= O(\lambda_2^{\Delta/2})\ ,
\end{align}
as claimed in~\eqref{it:partaclaima2bound}. In the last line, we have again absorbed the constants into the big-$O$-expression. This proves part~\eqref{it:partaclaima2bound} of Claim~\eqref{it:a2bound}.
 
\item[\eqref{it:partbclaima2bound}]
If $j=j'$, then there are at least $\Delta$ sites to the left and right of $j$ which do not belong to~$\mathsf{supp}(F)$. Therefore we may write 
\begin{align}
\bra{\Phi_{j',p'}}F\ket{\Phi_{j,p}}&=
\tr \left(E^s E_{\overline{B(p')}B(p)}E^t E_{F(\tau(j))}\right)\ ,
\end{align}
where $s$ and $t$ are integers greater than $\Delta$, representing the sites surrounding $j$ which are not in the support of $F$.

Applying the Jordan decomposition $E^\Delta=\kket{r}\bbra{\ell}\oplus\tilde{E}^\Delta$
twice (for $E^s$ and $E^t$) then gives four terms 
\begin{align}
\bra{\Phi_{j',p'}}F\ket{\Phi_{j,p}}&=
\bbra{\ell}E_{\overline{B(p')}B(p)}
\kket{r}\bbra{\ell}E_{F(\tau(j))}\kket{r}\\
&+\tr\left(\kket{r}\bbra{\ell}E_{\overline{B(p')}B(p)}\tilde{E}^{s} E_{F(\tau(j))}\right)\\
&+\tr\left(\tilde{E}^{t} E_{\overline{B(p')}B(p)}\kket{r}\bbra{\ell}E_{F(\tau(j))}\right)\\
&+\tr\left(
\tilde{E}^{t} E_{\overline{B(p')}B(p)}\tilde{E}^{s} E_{F(\tau(j))}\right)\ .
\end{align}
Since $s$ and $t$ are both larger than $\Delta$, by the same arguments from before, it is clear that the last three terms can each be bounded by $O(\lambda_2^{\Delta/2})$.
The claim follows since $\bbra{\ell}E_{\overline{B(p')}B(p)}\kket{r}=c_{pp'}$. 
\end{enumerate}

Next, we give the proof of claim~\eqref{it:a3bound}. Let us consider the situation where $j\in\cB^\Delta(\mathsf{supp}(F))$ and $j'\not\in\cB^\Delta(\mathsf{supp}(F))$. The proof of the other setting is analogous.  We consider two cases:
\begin{enumerate}[(a)]
\item[\eqref{it:a3bounda}]
Suppose $j'\not\in\cB^{2\Delta}(\mathsf{supp}(F))$. Let us define the shifted index $\hat{j} = j - \iota(j') + \Delta + 1 \Mod n$. Then we may write
\begin{align}
 \bra{\Phi_{j',p'}}F\ket{\Phi_{j,p}} &= \tr\left(E^s E_{\overline{B(p')}}E^t E_{I^{\otimes \Delta}\otimes F(\tau(j'))\otimes I^{\otimes \Delta}}(\hat{j},p)\right),
\end{align}
where $s$ and $t$ are integers larger than $\Delta$, representing the number of sites adjacent to $j'$ on the left and right which are not in $\cB^\Delta(\supp(F))$. We use the Jordan decomposition $E =\kket{r}\bbra{\ell}\oplus\tilde{E}$ on $E^s$ to get
\begin{align}
    \tr\left(E^s E_{\overline{B(p')}}E^t E_{I^{\otimes \Delta}\otimes F(\tau(j'))\otimes I^{\otimes \Delta}}(\hat{j},p)\right) &= \bbra{\ell}E_{\overline{B(p')}}E^t E_{I^{\otimes \Delta}\otimes F(\tau(j'))\otimes I^{\otimes \Delta}}(\hat{j},p)\kket{r}\\ 
    & \quad + \tr\left(\tilde{E}^s E_{\overline{B(p')}}E^t E_{I^{\otimes \Delta}\otimes F(\tau(j'))\otimes I^{\otimes \Delta}}(\hat{j},p)\right)\\
    &=\tr\left(\tilde{E}^s E_{\overline{B(p')}}E^t E_{I^{\otimes \Delta}\otimes F(\tau(j'))\otimes I^{\otimes \Delta}}(\hat{j},p)\right),
\end{align}
where the first term vanishes due to the gauge condition \eqref{eq:gaugeconditionrewrittenexcit}. From Lemma~\ref{lem:normscalingtransferop}(ii) we have $\|E^t\|_F \le 1$, and repeating the same arguments as before, we get the bound
\begin{align}
    |\bra{\Phi_{j',p'}}F\ket{\Phi_{j,p}}| &= \left|\tr\left(\tilde{E}^s E_{\overline{B(p')}}E^t E_{I^{\otimes \Delta}\otimes F(\tau(j'))\otimes I^{\otimes \Delta}}(\hat{j},p)\right)\right|\\
    &\le \|\tilde{E}^s\|_F \cdot \|E_{\overline{B(p')}}\|_F\cdot \|E^t\|_F \cdot D^2\|F\|\cdot \|E_{\overline{B(p)}B(p)}\|_F\\
    &\le \lambda_2^{s/2} \|E_{\overline{B(p')}}\|_F \cdot D^2\|F\|\sqrt{\|E_{\overline{B(p)}B(p)}\|_F}\ .
\end{align}
Since $s \ge \Delta$, we conclude that
\begin{align}
|\bra{\Phi_{j',p'}}F\ket{\Phi_{j,p}}|&=O\left(\lambda_2^{\Delta/2}\right)\ .
\end{align}

\item[\eqref{it:a3boundb}]
Suppose now that $j'\in\cB^{2\Delta}(\mathsf{supp}(F))$. Then by repeating the argument for case~\eqref{it:a1bound}, with $\Delta$ replaced by~$2\Delta$, we obtain
\begin{align}
\bra{\Phi_{j',p'}}F\ket{\Phi_{j,p}}=
\bbra{\ell}E_{I^{\otimes 2\Delta}\otimes F(\tau(q))\otimes I^{\otimes 2\Delta}}(\hat{j},p,\hat{j'},p')\kket{r}+O(\lambda_2^{2\Delta}),\ 
\end{align}
where  we now have $q\not\in\cB^{4\Delta}(\cF)$. Again, the existence of such a $q$ is guaranteed by the condition $10 \Delta d < n$.

We note that~\eqref{it:a4bound} follows immediately from~\eqref{it:a3bound} by interchanging the roles of $(j,p)$ and $(j',p')$. Note that we can write 
\begin{align}
\bra{\Phi_{j',p'}}F\ket{\Phi_{j,p}} &=
\overline{\overline{\bra{\Phi_{j',p'}}F\ket{\Phi_{j,p}}}}=\overline{\bra{\Phi_{j,p}}F^\dagger\ket{\Phi_{j',p'}}}\ .
\end{align}
The last expression within the parentheses is precisely what we had calculated in \eqref{it:a3bound}, so this implies the following:
\item[\eqref{it:a4bounda}]
If $j\notin \cB^{2\Delta}(\supp(F))$ then
\begin{align}
    \left|\bra{\Phi_{j',p'}}F\ket{\Phi_{j,p}}\right| &= \left|\bra{\Phi_{j,p}}F^\dagger\ket{\Phi_{j',p'}}\right|= O(\lambda_2^{\Delta/2}),
\end{align}
where we note that the exact same bound holds for $F$ and $F^\dagger$ since $\|F\|=\|F^\dagger\|$.

\item[\eqref{it:a4boundb}]
If $j \in \cB^{2\Delta}(\supp(F))$ then
\begin{align}
\bra{\Phi_{j',p'}}F\ket{\Phi_{j,p}}&=\overline{\bra{\Phi_{j,p}}F^\dagger\ket{\Phi_{j',p'}}}\\
&=\overline{\bbra{\ell}E_{I^{\otimes 2\Delta}\otimes F^\dagger(\tau(q))\otimes I^{\otimes 2\Delta}}(\hat{j}',p',\hat{j},p)\kket{r}} + O(\lambda_2^{2\Delta})\\
&=\overline{\bbra{\ell}E_{I^{\otimes 2\Delta}\otimes F^\dagger(\tau(q))\otimes I^{\otimes 2\Delta}}(\hat{j}',p',\hat{j},p)\kket{r}} + O(\lambda_2^{2\Delta})\\
&=\bbra{\ell}E_{I^{\otimes 2\Delta}\otimes F(\tau(q))\otimes I^{\otimes 2\Delta}}(\hat{j},p,\hat{j}',p')\kket{r} + O(\lambda_2^{2\Delta}).
\end{align}
This proves the claim.\footnote{To clarify how the term $\bbra{\ell}E_{I^{\otimes \Delta}\otimes F^\dagger(\tau(q)) \otimes I^{\otimes \Delta}}(\hat{j}',p',\hat{j},p)\kket{r}$ is complex conjugated, first write
\begin{align}
    \bbra{\ell}E_{I^{\otimes \Delta}\otimes F^\dagger(\tau(q)) \otimes I^{\otimes \Delta}}(\hat{j}',p',\hat{j},p)\kket{r} &= \bra{\Phi^L_{\hat{j}',p'}}I\otimes I^{\otimes 2\Delta}\otimes F_{\tau(q)}^\dagger \otimes I^{\otimes 2\Delta} \otimes I\ket{\Phi^L_{\hat{j},p}}, 
\end{align}
where $\ket{\Phi^L_{\hat{j},p}}$ are the states defined by \eqref{eq:phinpaux}, for some appropriate length $L$. Then we can proceed to conjugate the matrix element, giving us
\begin{align}
    \overline{\bra{\Phi^L_{\hat{j}',p'}}I\otimes I^{\otimes 2\Delta}\otimes F_{\tau(q)}^\dagger \otimes I^{\otimes 2\Delta} \otimes I\ket{\Phi^L_{\hat{j},p}}} &= \bra{\Phi^L_{\hat{j},p}}I\otimes I^{2\Delta}\otimes F_{\tau(q)} \otimes I^{2\Delta} \otimes I\ket{\Phi^L_{\hat{j}',p'}}\\
    &= \bbra{\ell}E_{I^{\otimes 2\Delta} \otimes F(\tau(q)) \otimes I^{\otimes 2\Delta}}(\hat{j},p,\hat{j}',p')\kket{r}.
\end{align}}
\end{enumerate}
\end{proof}

Note that in the statement~\eqref{it:partbclaima2bound},
the dependence on $j$ in the expression~$\bbra{\ell} E_{F(\tau(j))}\kket{r}$ can be eliminated as follows:
\begin{lemma}\mylabel{lem:mainmatrixelementexcit}
Suppose $j_1,j_2\not\in\cB^\Delta(\mathsf{supp}(F))$. Then 
\begin{align}
|\bbra{\ell} E_{F(\tau(j_1))}\kket{r}-
\bbra{\ell} E_{F(\tau(j_2))}\kket{r}|&=O(\lambda_2^{\Delta})\ .\mylabel{eq:firstmodifiedxlemma}
\end{align}
In particular, for any fixed $j_0 \not\in\cB^\Delta(\mathsf{supp}(F))$ we have 
\begin{align}
\bra{\Phi_{j,p'}}F\ket{\Phi_{j,p}}
=\bbra{\ell} E_{F(\tau(j_0))}\kket{r}
\cdot c_{pp'}+O(\lambda_2^{\Delta/2})\ ,\quad\textrm{for all }j\not\in\cB^\Delta(\mathsf{supp}(F))\ .\mylabel{eq:secondclaimmodifiedxlemma}
\end{align}
\end{lemma}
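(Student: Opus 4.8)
The plan is to reduce to the essential case and then exploit the cyclic (ring) structure together with the Jordan decomposition $E=\kket{r}\bbra{\ell}\oplus\tilde E$ of the transfer operator. Since $F(\tau(j))$ depends on $j$ only through the index $\tau(j)$ of the gap $\cA_{\tau(j)}$ containing $j$, the scalar $\bbra{\ell}E_{F(\tau(j))}\kket{r}$ is literally unchanged as $j$ ranges over one connected component of the complement of $\mathsf{supp}(F)$. Hence~\eqref{eq:firstmodifiedxlemma} is trivial unless $\tau(j_1)\neq\tau(j_2)$, so I may assume $j_1,j_2$ lie in two distinct gaps $\cA_{\tau_1},\cA_{\tau_2}$. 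The hypothesis $j_1,j_2\not\in\cB^\Delta(\mathsf{supp}(F))$ says the balls $\cB^\Delta(\{j_1\})$ and $\cB^\Delta(\{j_2\})$ are disjoint from $\mathsf{supp}(F)$; being connected and containing $j_1$ (resp. $j_2$) they sit inside $\cA_{\tau_1}$ (resp. $\cA_{\tau_2}$), so $|\cA_{\tau_1}|,|\cA_{\tau_2}|\geq 2\Delta+1$. These are the two ``large gaps'' that drive the argument.

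I would then break the ring at each large gap in turn. Writing the components cyclically as $\cA_0\cF_0\cA_1\cdots\cA_{\kappa-1}\cF_{\kappa-1}$, the operator $F(\tau_1)$ is the linear rearrangement starting at $\cF_{\tau_1}$, in which the whole gap $\cA_{\tau_2}$ appears as a block of $\geq 2\Delta$ identity sites; correspondingly $E_{F(\tau_1)}=\sum_i E_{\mathsf{Arc}_A^{(i)}}E^{|\cA_{\tau_2}|}E_{\mathsf{Arc}_B^{(i)}}$, where $\mathsf{Arc}_A$ collects the components from $\cF_{\tau_1}$ to $\cF_{\tau_2-1}$, $\mathsf{Arc}_B$ those from $\cF_{\tau_2}$ to $\cF_{\tau_1-1}$, and the sum over $i$ is inherited from the fixed decomposition of $F$ into product operators. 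Substituting $E^{|\cA_{\tau_2}|}=\kket{r}\bbra{\ell}\oplus\tilde E^{|\cA_{\tau_2}|}$ and using $|\cA_{\tau_2}|\geq 2\Delta$ with $\|\tilde E^{m}\|_F\leq\lambda_2^{m/2}$ (Lemma~\ref{lem:normscalingtransferop}(ii)) gives
\begin{align}
\bbra{\ell}E_{F(\tau_1)}\kket{r}=\sum_i\big(\bbra{\ell}E_{\mathsf{Arc}_A^{(i)}}\kket{r}\big)\big(\bbra{\ell}E_{\mathsf{Arc}_B^{(i)}}\kket{r}\big)+O(\lambda_2^\Delta)\ .
\end{align}
The error term is controlled by rewriting it as $\tr(\tilde E^{|\cA_{\tau_2}|}W)$ with $W=\sum_i E_{\mathsf{Arc}_B^{(i)}}\kket{r}\bbra{\ell}E_{\mathsf{Arc}_A^{(i)}}$ and bounding $|\tr(\tilde E^{|\cA_{\tau_2}|}W)|\leq\|\tilde E^{|\cA_{\tau_2}|}\|_F\,\|W\|_F$; a routine application of the Frobenius-norm bounds of Lemmas~\ref{lem:normboundtransopxnpnprime} and~\ref{lem:efboundsimplified} to the operator obtained from $F$ by deleting the two identity gaps (whose operator norm is still $\|F\|=1$) shows $\|W\|_F=O(1)$ uniformly in $n$.

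Performing the identical reduction on $\bbra{\ell}E_{F(\tau_2)}\kket{r}$, now breaking at the large gap $\cA_{\tau_1}$, produces the same ``doubly broken'' main term $\sum_i(\bbra{\ell}E_{\mathsf{Arc}_B^{(i)}}\kket{r})(\bbra{\ell}E_{\mathsf{Arc}_A^{(i)}}\kket{r})$, the only difference being the order of the two scalar factors, which commute. Subtracting the two expansions, the main terms cancel exactly and~\eqref{eq:firstmodifiedxlemma} follows. Finally,~\eqref{eq:secondclaimmodifiedxlemma} comes from combining~\eqref{eq:firstmodifiedxlemma} with the case $j=j'$ of Lemma~\ref{lem:maincauchyschwarzestimates}\eqref{it:partbclaima2bound}, which gives $\bra{\Phi_{j,p'}}F\ket{\Phi_{j,p}}=\bbra{\ell}E_{F(\tau(j))}\kket{r}\cdot c_{pp'}+O(\lambda_2^{\Delta/2})$ for $j\not\in\cB^\Delta(\mathsf{supp}(F))$; replacing $\tau(j)$ by the fixed $\tau(j_0)$ costs only $c_{pp'}\cdot O(\lambda_2^\Delta)=O(\lambda_2^{\Delta/2})$ since $c_{pp'}=O(1)$. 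The main obstacle is purely organizational: correctly tracking the cyclic relabelling so that the two arcs obtained by breaking at $\cA_{\tau_1}$ and at $\cA_{\tau_2}$ are recognized as the same pair, and verifying that the sum over the product-operator decomposition of $F$ does not spoil the $n$-independence of the constant in $O(\lambda_2^\Delta)$.
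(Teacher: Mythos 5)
Your proposal follows the paper's argument essentially step for step: reduce to $\tau(j_1)\neq\tau(j_2)$, use $j_1,j_2\not\in\cB^\Delta(\mathsf{supp}(F))$ to conclude the two gaps have size $\geq 2\Delta$, split $E_{F(\tau(j_1))}$ (resp.\ $E_{F(\tau(j_2))}$) at the other large gap, insert the Jordan decomposition $E^{a}=\kket{r}\bbra{\ell}\oplus\tilde{E}^{a}$ so that the leading scalar products cancel in the difference, bound the residual trace by $\|\tilde{E}^{a}\|_F\cdot\|W\|_F=O(\lambda_2^\Delta)$ with $W=\sum_i E_{\hat{F}_i}\kket{r}\bbra{\ell}E_{\hat{G}_i}$, and obtain~\eqref{eq:secondclaimmodifiedxlemma} from Lemma~\ref{lem:maincauchyschwarzestimates}\eqref{it:partbclaima2bound} together with $|c_{pp'}|=O(1)$. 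The one imprecision is that the bound $\|W\|_F=O(1)$ is not literally an instance of Lemmas~\ref{lem:normboundtransopxnpnprime} or~\ref{lem:efboundsimplified} (the rank-one insertion $\kket{r}\bbra{\ell}$ means $W$ is not of the form $E_G$ for the gap-deleted operator $G$); the paper proves it by the same vectorization/Cauchy--Schwarz computation as those lemmas, using $\|\Psi(\alpha,\beta)\|^2=\bra{\alpha}r\ket{\alpha}\bra{\beta}\ell\ket{\beta}$ and $\|\sum_i\hat{F}_i\otimes\hat{G}_i\|=\|F\|$ to get $\|W\|_F\leq D\|F\|$ uniformly in $n$ and in the product decomposition of $F$ --- precisely the uniformity issue you correctly flagged as the point needing care.
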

\begin{proof}
The claim~\eqref{eq:secondclaimmodifiedxlemma} follows immediately from~\eqref{eq:firstmodifiedxlemma} and claim~\eqref{it:partbclaima2bound} of Lemma~\ref{lem:maincauchyschwarzestimates} since $|c_{pp'}|=O(1)$.

If $\tau(j_1)=\tau(j_2)$, there is nothing to prove. Suppose $\tau(j_1)\neq \tau(j_2)$. Without loss of generality, assume that $\tau(j_1)=0$ and $\tau(j_2)=\xi$. Then we may write
\begin{align}
F(\tau(j_1))&=\sum_{i}F_{i,0}\otimes I^{\otimes a_1}\otimes
F_{i,1}\otimes I^{\otimes a_2}\cdots \otimes I^{\otimes a_{\kappa - 1}}\otimes F_{i,\kappa - 1},\quad\text{and}\\
F(\tau(j_2))&=\sum_{i}F_{i,\xi}\otimes I^{\otimes a_{\xi+1}}\otimes
F_{i,\xi+1}\otimes I^{\otimes a_{\xi+2}}\cdots \otimes I^{\otimes a_{\kappa}}\otimes F_{i,\kappa - 1}\otimes I^{\otimes a_0}\\
&\qquad\otimes F_{i,0}\otimes I^{\otimes a_1}\otimes F_{i,1}\otimes I^{\otimes a_2}\otimes\cdots \otimes F_{i,\xi-1},
\end{align}
where $a_\alpha=|\cA_\alpha|$ for $\alpha\in \{0,\ldots,\kappa\}$. Defining the operators
\begin{align}
\hat{F}_i&=F_{i,\xi}\otimes I^{\otimes a_{\xi+1}}\otimes
F_{i,\xi+1}\otimes I^{\otimes a_{\xi+2}}\cdots \otimes I^{\otimes a_{\kappa - 1}}\otimes F_{i,\kappa - 1}\ ,\\
\hat{G}_i&=F_{i,0}\otimes I^{\otimes a_1}\otimes F_{i,1}\otimes I^{\otimes a_2}\otimes\cdots \otimes F_{i,\xi-1}\ ,
\end{align}
we have 
\begin{align}
F(\tau(j_1))=\sum_i\hat{G}_i\otimes I^{\otimes a_\xi}\otimes\hat{F}_i,\quad\text{and}\quad F(\tau(j_2))=\sum_i\hat{F}_i\otimes I^{\otimes a_0}\otimes \hat{G}_i\ .
\end{align}

\begin{figure}
\begin{align}
F&=\raisebox{-.38\height}{\includegraphics[scale=0.6]{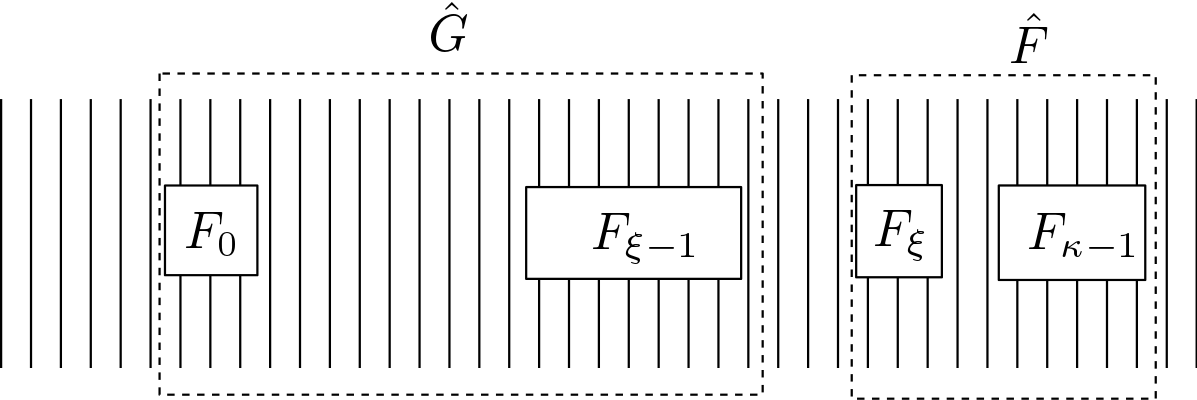}}\ ,\\
F(\tau(j_1))&=\raisebox{-.38\height}{\includegraphics[scale=0.6]{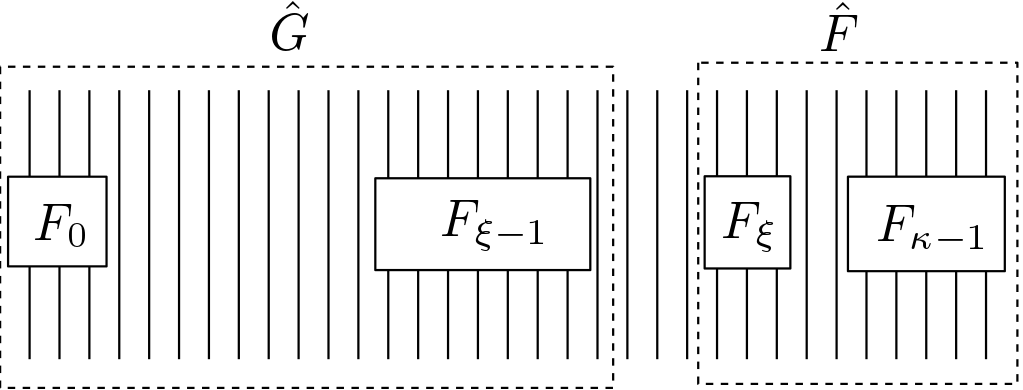}}\ ,\\
F(\tau(j_2))&=\raisebox{-.38\height}{\includegraphics[scale=0.6]{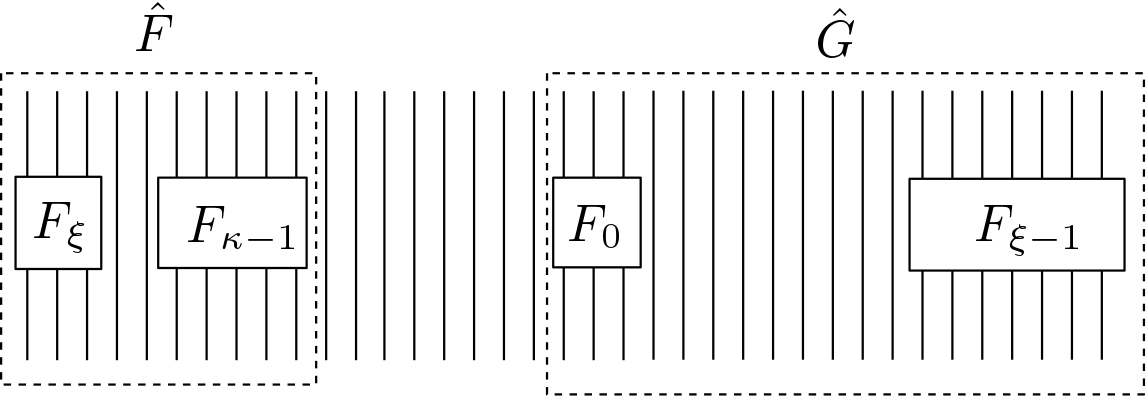}}\ .
 \end{align}
 \caption{Example for the operator $F$ and the corresponding $F(\tau(j_1))$ and $F(\tau(j_2))$.}
 \mylabel{fig:Fcombined}
\end{figure}
\noindent
(We give an example for the operator $F$, $F(\tau(j_1))$ and $F(\tau(j_2))$ in Figure~\ref{fig:Fcombined}.) Therefore we can write 
\begin{align}
\bbra{\ell} E_{F(\tau(j_1))}\kket{r}&=\sum_i\bbra{\ell} E_{\hat{G}_i}E^{a_\xi}E_{\hat{F}_i}\kket{r}\ ,\\
\bbra{\ell} E_{F(\tau(j_2))}\kket{r}&=\sum_i\bbra{\ell} E_{\hat{F}_i}E^{a_0}E_{\hat{G}_i}\kket{r}\ .
\end{align}
Inserting the Jordan decomposition~$E=\kket{r}\bbra{\ell}\oplus\tilde{E}$ gives 
\begin{align}
\bbra{\ell} E_{F(\tau(j_1))}\kket{r}&=\sum_i\left(\bbra{\ell} E_{\hat{G}_i}\kket{r}\bbra{\ell}E_{\hat{F}_i}\kket{r}
+\bbra{\ell} E_{\hat{G}_i}\tilde{E}^{a_\xi}E_{\hat{F}_i}\kket{r}\right)\ ,\\
\bbra{\ell} E_{F(\tau(j_2))}\kket{r}&=\sum_i\left(\bbra{\ell} E_{\hat{F}_i}\kket{r}\bbra{\ell}E_{\hat{G}_i}\kket{r}
+\bbra{\ell} E_{\hat{F}_i}\tilde{E}^{a_0}E_{\hat{G}_i}\kket{r}\right)\ .
\end{align}
Taking the difference, the first terms of the sums cancel, and we are left with
\begin{align}
\big|\bbra{\ell} E_{F(\tau(j_1))}\kket{r}-\bbra{\ell} E_{F(\tau(j_2))}\kket{r}\big|&= \left|\sum_i\bbra{\ell} E_{\hat{G}_i}\tilde{E}^{a_\xi}E_{\hat{F}_i}\kket{r}-\sum_i\bbra{\ell} E_{\hat{F}_i}\tilde{E}^{a_0}E_{\hat{G}_i}\kket{r}\right|\nonumber\\
&\le \left|\sum_i\bbra{\ell}  E_{\hat{G}_i}\tilde{E}^{a_\xi}E_{\hat{F}_i}\kket{r}\right|+\left|\sum_i\bbra{\ell} E_{\hat{F}_i}\tilde{E}^{a_0}E_{\hat{G}_i}\kket{r}\right|\ .\label{eq:secondexpressiontempor}
\end{align}
We can bound the first term $\left|\sum_i\bbra{\ell}  E_{\hat{G}_i}\tilde{E}^{a_\xi}E_{\hat{F}_i}\kket{r}\right|$ as follows. First, we write
\begin{align}
    \left|\sum_i\bbra{\ell}  E_{\hat{G}_i}\tilde{E}^{a_\xi}E_{\hat{F}_i}\kket{r}\right| &= \tr\left(\tilde{E}^{a_\xi}\sum_i E_{\hat{F}_i}\kket{r}\bbra{\ell}E_{\hat{G}_i}\right)\\
    &\le \|\tilde{E}^{a_\xi}\|_F \left\|\sum_i E_{\hat{F}_i}\kket{r}\bbra{\ell}E_{\hat{G}_i}\right\|_F\\
    &\le \lambda_2^\Delta \left\|\sum_i E_{\hat{F}_i}\kket{r}\bbra{\ell}E_{\hat{G}_i}\right\|_F\ ,
\end{align}
where the last inequality comes from the fact that $j_2\not\in\cB^\Delta(\mathsf{supp}(F))$ and $j_2\in\cA_\xi$ implies that $a_\xi\geq 2\Delta$, so Lemma~\ref{lem:normscalingtransferop}(ii) gives $\|\tilde{E}^{a_\xi}\|_F \le \lambda_2^\Delta$. Proceeding as we did in the proof of Lemma~\ref{lem:normboundtransopxnpnprime}, we can write the latter Frobenius norm as
\begin{align}
    \left\|\sum_i E_{\hat{F}_i}\kket{r}\bbra{\ell}E_{\hat{G}_i}\right\|_F^2 = \sum_{\alpha_1,\alpha_2,\beta_1,\beta_2=1}^D\left|\bra{\alpha_1}\bra{\alpha_2}\left(\sum_i E_{\hat{F}_i}\kket{r}\bbra{\ell}E_{\hat{G}_i}\right)\ket{\beta_1}\ket{\beta_2}\right|^2.
\end{align}
The individual terms in the sum can be depicted diagrammatically as
\begin{align}
    \bra{\alpha_1}\bra{\alpha_2}\left(\sum_i E_{\hat{F}_i}\kket{r}\bbra{\ell}E_{\hat{G}_i}\right)\ket{\beta_1}\ket{\beta_2} = \raisebox{-.45\height}{\includegraphics[width=9cm]{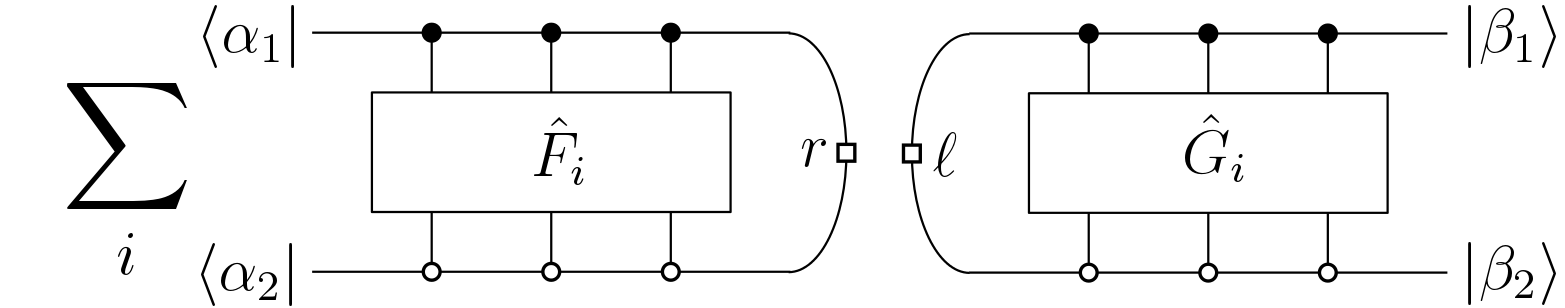}}\ .
\end{align}
Defining the vectors
\begin{align}
    \ket{\Psi(\alpha,\beta)} =\enskip \raisebox{-.45\height}{\includegraphics[scale=0.42]{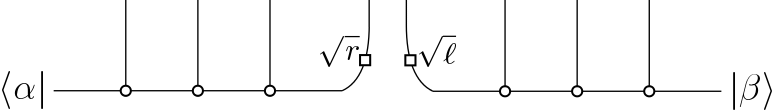}}\ ,
\end{align}
we can then write
\begin{align}
    \bra{\alpha_1}\bra{\alpha_2}\left(\sum_i E_{\hat{F}_i}\kket{r}\bbra{\ell}E_{\hat{G}_i}\right)\ket{\beta_1}\ket{\beta_2} = \bra{\Psi(\alpha_1,\beta_1)}\left(\sum_i \hat{F}_i\otimes I_D\otimes I_D \otimes \hat{G}_i\right) \ket{\Psi(\alpha_2,\beta_2)}.
\end{align}
Applying the Cauchy-Schwarz inequality, we get
\begin{align}
    \left|\bra{\alpha_1}\bra{\alpha_2}\left(\sum_i E_{\hat{F}_i}\kket{r}\bbra{\ell}E_{\hat{G}_i}\right)\ket{\beta_1}\ket{\beta_2}\right|^2 &\le \|\Psi(\alpha_1,\beta_1)\|^2\cdot \|\Psi(\alpha_2,\beta_2)\|^2\cdot \left\|\sum_i \hat{F}_i \otimes I_D\otimes I_D \otimes \hat{G}_i\right\|^2.
\end{align}
The norm of the vector $\ket{\Psi(\alpha,\beta)}$ is given by
\begin{align}
    \|\Psi(\alpha,\beta)\|^2 = \raisebox{-.45\height}{\includegraphics[scale=0.38]{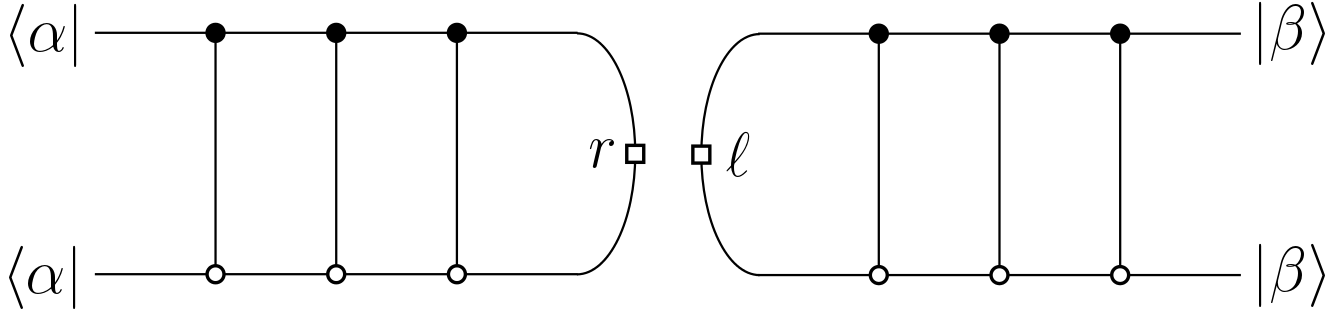}}
    = \raisebox{-.45\height}{\includegraphics[scale=0.40]{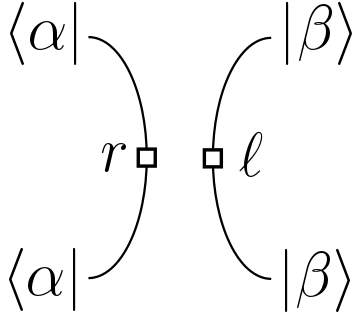}}
    = \bra{\alpha}r\ket{\alpha}\bra{\beta}\ell\ket{\beta}\ ,
\end{align}
where in the second equality we have used the fixed-point equations \eqref{eq:leftrighteveqs}. Therefore we have
\begin{align}
    \left\|\sum_i E_{\hat{F}_i}\kket{r}\bbra{\ell}E_{\hat{G}_i}\right\|_F^2 &\le  \left\|\sum_i \hat{F}_i \otimes I_D\otimes I_D \otimes \hat{G}_i\right\|^2\sum_{\alpha_1,\alpha_2,\beta_1,\beta_2=1}^D \bra{\alpha_1}r\ket{\alpha_1}\bra{\alpha_2}r\ket{\alpha_2}\bra{\beta_1}\ell\ket{\beta_1}\bra{\beta_2}\ell\ket{\beta_2}\\
    &=\left\|\sum_i \hat{F}_i \otimes I_D\otimes I_D \otimes \hat{G}_i\right\|^2\cdot |\tr(r)\tr(\ell)|^2
    = D^2\left\|\sum_i \hat{F}_i \otimes I_D\otimes I_D \otimes \hat{G}_i\right\|^2,
\end{align}
where the last equality follows from the fact that we gauge-fix the left and right fixed-points such that $r = I_{\mathbb{C}^D}$ and $\tr(\ell)=1$. Finally, we note that since the operator norm is multiplicative over tensor products, i.e., $\|A\otimes B\|=\|A\|\cdot \|B\|$, we have
\begin{align}
    \left\|\sum_i \hat{F}_i \otimes I_D\otimes I_D \otimes \hat{G}_i\right\| = \left\|\sum_i \hat{F}_i \otimes \hat{G}_i\right\| = \|F\|.
\end{align}
Therefore, we have
\begin{align}
   \left|\sum_i\bbra{\ell}  E_{\hat{G}_i}\tilde{E}^{a_\xi}E_{\hat{F}_i}\kket{r}\right| \le D\|F\|\lambda_2^\Delta.
\end{align}
The term involving $a_0$ in~\eqref{eq:secondexpressiontempor} can be bounded identically, and so
\begin{align}
    \big|\bbra{\ell} E_{F(\tau(j_1))}\kket{r}-\bbra{\ell} E_{F(\tau(j_2))}\kket{r}\big|\le 2D\|F\|\lambda_2^\Delta\ ,
\end{align}
which proves~\eqref{eq:firstmodifiedxlemma}.
\end{proof}

We also need a different version of statement~\eqref{it:a1bound}, as well as statements~\eqref{it:a3boundb} and~\eqref{it:a4boundb}  derived from it.

\begin{lemma}\mylabel{lem:strengthenedboundsum}
For $\Omega\subset [n]^2$, let us define
\begin{align}
\sigma_{pp'}(\Omega)&=\sum_{(j,j')\in \Omega}e^{i(pj-p'j')}\bra{\Phi_{j',p'}}F\ket{\Phi_{j,p}}\ .\mylabel{eqsigmaomegadef}
\end{align}
Let us write~$\cF:=\mathsf{supp}(F)$ and $\cA^c=[n]\backslash \cA$ for the complement of a subset~$\cA\subset [n]$.  Then: 
\begin{align}
\left|\sigma_{pp'}(\cB^\Delta(\cF)\times \cB^\Delta(\cF))\right|&\leq |\cB^\Delta(\cF)|\cdot \|F\| \sqrt{c_pc_{p'}}+O\left(\sqrt{n}\,\lambda_2^{\Delta/2}\right)\, ,\label{it:firststrengthened}\\
\left|\sigma_{pp'}(\cB^\Delta(\cF)\times \cB^{\Delta}(\cF)^c)\right|
&\leq |\cB^{2\Delta}(\cF)|\cdot \|F\| \sqrt{c_pc_{p'}}+ O\left(n^2\lambda_2^{\Delta/2}\right)\, ,\label{it:secondstrengthened}\\
\left|\sigma_{pp'}(\cB^\Delta(\cF)^c\times \cB^{\Delta}(\cF))\right|
&\leq |\cB^{2\Delta}(\cF)|\cdot \|F\| \sqrt{c_pc_{p'}}+ O\left(n^2\lambda_2^{\Delta/2}\right)\, .\label{it:secondstrengthenedb}
\end{align}
Finally, we have the following: There exists some fixed $j_0\in [n]$ such that for $p=p'$, we have
\begin{align}
\sigma_{pp}(\cB^\Delta(\cF)^c\times\cB^\Delta(\cF)^c)&= |\cB^\Delta(\cF)^c|\cdot \bbra{\ell}E_{F_{\tau(j_0)}}\kket{r}c_p+O\left(n^2\lambda_2^{\Delta/2}\right).\label{eq:sigmappscal}
\end{align}
For $p\neq p'$, we have
\begin{align}
\left|\sigma_{pp'}(\cB^\Delta(\cF)^c\times\cB^\Delta(\cF)^c)\right|
&\leq |\cB^\Delta(\cF)|\cdot \|F\|\sqrt{c_{p}c_{p'}} +O\left(n^2\lambda_2^{\Delta/2}\right)\ . \label{it:thirdstrengthened}
\end{align}
\end{lemma}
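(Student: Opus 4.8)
The plan is to exploit the elementary observation that $\sigma_{pp'}$ restricted to a product region is a genuine matrix element of $F$ between \emph{truncated} excitation ansatz states. For $S,T\subset[n]$ put $\ket{\Phi^S_p}:=\sum_{j\in S}e^{ipj}\ket{\Phi_{j,p}}$ with the position-space states of~\eqref{eq:Phinpdefnitionmain}; then directly from~\eqref{eqsigmaomegadef} one has $\sigma_{pp'}(S\times T)=\bra{\Phi^T_{p'}}F\ket{\Phi^S_p}$. The four regions in the statement are exactly the blocks of the partition of $[n]^2$ by whether $j,j'$ lie in $\cB^\Delta(\cF)$ or its complement, and each will be handled by pairing this identity with the per-term estimates of Lemma~\ref{lem:maincauchyschwarzestimates} and the $j$-independence of Lemma~\ref{lem:mainmatrixelementexcit}.

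For~\eqref{it:firststrengthened} I would apply Cauchy--Schwarz to $\sigma_{pp'}(\cB^\Delta(\cF)\times\cB^\Delta(\cF))=\bra{\Phi^{\cB^\Delta(\cF)}_{p'}}F\ket{\Phi^{\cB^\Delta(\cF)}_p}$, reducing matters to $\|\Phi^{\cB^\Delta(\cF)}_p\|$. Expanding the squared norm as a double sum and invoking~\eqref{eq:overlapformulajprimepprimejp} (the off-diagonal overlaps are $O(\lambda_2^\Delta)$ because the gauge condition~\eqref{eq:gaugeconditionrewrittenexcit} annihilates the would-be leading $\bbra{\ell}E_{B(p)}$ term) gives $\|\Phi^{\cB^\Delta(\cF)}_p\|^2=|\cB^\Delta(\cF)|\,c_p+O(\mathrm{poly}(n)\,\lambda_2^{\Delta})$, so that the product of the two norms with $\|F\|$ yields the leading term $|\cB^\Delta(\cF)|\,\|F\|\sqrt{c_pc_{p'}}$ up to the stated exponentially small error.

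For the mixed regions~\eqref{it:secondstrengthened} and~\eqref{it:secondstrengthenedb} I would split the non-thickened index over the shell $\cB^{2\Delta}(\cF)\setminus\cB^\Delta(\cF)$ and the far complement $\cB^{2\Delta}(\cF)^c$. On the far complement, part~(a) of Lemma~\ref{lem:maincauchyschwarzestimates}\eqref{it:a3bound} (resp.~\eqref{it:a4bound}) bounds each of at most $n^2$ terms by $O(\lambda_2^{\Delta/2})$, giving the error $O(n^2\lambda_2^{\Delta/2})$. On the shell, part~(b) furnishes a \emph{single} fixed $q$ for which every term equals $\bbra{\ell}E_{F(\tau(q))}(\hat{j},p,\hat{j}',p',2\Delta)\kket{r}$ up to $O(\lambda_2^{2\Delta})$; pulling $F(\tau(q))$ out, the shell sub-sum becomes one inner product $\bra{\tilde\Phi^{\mathrm{shell}}_{p'}}(I\otimes F(\tau(q))\otimes I)\ket{\tilde\Phi^{\cB^\Delta(\cF)}_p}$ of the auxiliary states~\eqref{eq:phinpaux}, whose norms are computed \emph{exactly} from the orthogonality relation~\eqref{eq:innerproductnpnpprimestatement}. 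Cauchy--Schwarz then produces $\|F\|\sqrt{c_pc_{p'}}\sqrt{|\cB^\Delta(\cF)|\cdot|\mathrm{shell}|}$, and the elementary bound $\sqrt{ab}\le\tfrac12(a+b)$ collapses $\sqrt{|\cB^\Delta(\cF)|\cdot|\mathrm{shell}|}$ to $\tfrac12|\cB^{2\Delta}(\cF)|$, as required.

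For the doubly-non-thickened region I would separate $j=j'$ from $j\neq j'$. The off-diagonal terms are each $O(\lambda_2^{\Delta/2})$ by Lemma~\ref{lem:maincauchyschwarzestimates}\eqref{it:partaclaima2bound}, summing to $O(n^2\lambda_2^{\Delta/2})$. For the diagonal, part~\eqref{it:partbclaima2bound} combined with Lemma~\ref{lem:mainmatrixelementexcit} gives $\bra{\Phi_{j,p'}}F\ket{\Phi_{j,p}}=c_{pp'}\bbra{\ell}E_{F(\tau(j_0))}\kket{r}+O(\lambda_2^{\Delta/2})$ uniformly in $j$, so the diagonal sum is $c_{pp'}\bbra{\ell}E_{F(\tau(j_0))}\kket{r}\sum_{j\in\cB^\Delta(\cF)^c}e^{i(p-p')j}+O(n\lambda_2^{\Delta/2})$. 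When $p=p'$ the phase sum equals $|\cB^\Delta(\cF)^c|$, yielding the identity~\eqref{eq:sigmappscal}; when $p\neq p'$ the decisive step is the coherent cancellation $\sum_{j=1}^n e^{i(p-p')j}=0$, whence $\sum_{j\in\cB^\Delta(\cF)^c}e^{i(p-p')j}=-\sum_{j\in\cB^\Delta(\cF)}e^{i(p-p')j}$ has modulus at most $|\cB^\Delta(\cF)|$; bounding the $j$-independent constant $|c_{pp'}\bbra{\ell}E_{F(\tau(j_0))}\kket{r}|\le\|F\|\sqrt{c_pc_{p'}}$ by relating it back to $\bra{\Phi_{j_0,p'}}F\ket{\Phi_{j_0,p}}$ and using $\|\Phi_{j,p}\|^2=c_p+o(1)$ with Cauchy--Schwarz delivers~\eqref{it:thirdstrengthened}.

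I expect the main obstacle to be the two mixed regions: one must organize the shell sub-sum so that the single fixed $q$ from Lemma~\ref{lem:maincauchyschwarzestimates}\eqref{it:a3boundb} genuinely lets $F(\tau(q))$ be factored out into one auxiliary-state inner product, and then check that the arithmetic--geometric step really returns $|\cB^{2\Delta}(\cF)|$ rather than something larger. A secondary but conceptually essential point, confined to the $p\neq p'$ case, is the phase cancellation $\sum_j e^{i(p-p')j}=0$, which is precisely what prevents the diagonal contribution from scaling like $|\cB^\Delta(\cF)^c|\sim n$ and instead confines it to $|\cB^\Delta(\cF)|$; treating the phases term-by-term in absolute value would lose a factor of order $n/(\Delta d)$.
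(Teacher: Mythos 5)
Your proposal is correct and follows essentially the same route as the paper's proof: the same identity $\sigma_{pp'}(S\times T)=\bra{\Phi^T_{p'}}F\ket{\Phi^S_p}$ with Cauchy--Schwarz and the overlap relation~\eqref{eq:overlapformulajprimepprimejp} for the doubly-thickened block, the same shell/far-complement split with the fixed $q$ from Lemma~\ref{lem:maincauchyschwarzestimates} and the auxiliary states~\eqref{eq:phinpaux} for the mixed blocks, and the same diagonal/off-diagonal split with Lemma~\ref{lem:mainmatrixelementexcit} and the phase cancellation $\sum_{j\in[n]}e^{ij(p-p')}=0$ for the last block. The only (harmless) deviations are cosmetic: you close the shell estimate with $\sqrt{ab}\le\tfrac12(a+b)$ where the paper uses $a,b\le|\cB^{2\Delta}(\cF)|$, and you bound the $j$-independent constant via the matrix element $\bra{\Phi_{j_0,p'}}F\ket{\Phi_{j_0,p}}$ rather than via $c_{pp'}\le\sqrt{c_pc_{p'}}$ directly.
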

We observe that the first expression on the right-hand side of the above bound scales linearly with the support size of~$\cF$ instead of the support size of~$\cF^c$, as may be naively expected.  For~\eqref{it:thirdstrengthened}, this is due to a cancellation of phases, see~\eqref{eq:cancellationphases} below. 
\begin{proof}
For the proof of~\eqref{it:firststrengthened}, let us first define the vectors
\begin{align}
\ket{\Psi(p)}&=\sum_{j\in\cB^{\Delta}(\cF)} e^{ipj}\ket{\Phi_{j,p}}\ .
\end{align}
Then we can write
\begin{align}
|\sigma_{pp'}(\cB^\Delta(\cF)\times \cB^\Delta(\cF))|
&=|\bra{\Psi(p')} F\ket{\Psi(p)}|\leq  \|F\|\cdot \|\Psi(p)\|\cdot \|\Psi(p')\|\ ,\label{eq:upperboundsigmadeltadelta}
\end{align}
where the last inequality follows by Cauchy-Schwarz along with the definition of the operator norm~$\|F\|$.  The vector norm is given by
\begin{align}
\|\Psi(p)\|^2=\sum_{j,j'\in \cB^\Delta(\cF) }e^{ip(j-j')}\langle \Phi_{j',p}|\Phi_{j,p}\rangle,
\end{align}
and together with equation~\eqref{eq:overlapformulajprimepprimejp}, we get
\begin{align}
\|\Psi(p)\|^2&=|\cB^{\Delta}(\cF)|\cdot c_p +O(\lambda_2^{\Delta/2})\ .
\end{align}
Taking the square root and inserting into equation~\eqref{eq:upperboundsigmadeltadelta}, we get
\begin{align}
    |\sigma_{pp'}(\cB^\Delta(\cF)\times \cB^\Delta(\cF))|&=\|F\|\left(\sqrt{|\cB^{\Delta}(\cF)|\cdot c_p} +O(\lambda_2^{\Delta/2})\right)\left(\sqrt{|\cB^{\Delta}(\cF)|\cdot c_{p'}} +O(\lambda_2^{\Delta/2})\right)\\
    &=|\cB^{\Delta}(\cF)|\cdot\|F\| \sqrt{c_pc_{p'}} + O\left(\sqrt{|\cB^\Delta(\cF)|}\cdot \lambda_2^{\Delta/2}\right).
\end{align}
Using the bound $\left|\cB^\Delta(\cF)\right| \le 5d \Delta < n$ gives~\eqref{it:firststrengthened}. 

Next, let us look at~\eqref{it:secondstrengthened}. We have
\begin{align}
\sigma_{pp'}(\cB^\Delta(\cF)\times \cB^{\Delta}(\cF)^c)&=
\sum_{j\in\cB^\Delta(\cF)}\sum_{j'\in\cB^\Delta(\cF)^c}e^{i(pj-p'j')}\bra{\Phi_{j',p'}}F\ket{\Phi_{j,p}}=\Sigma_1+\Sigma_2,
\end{align}
where we define
\begin{align}
    \Sigma_1:=\sum_{j\in\cB^\Delta(\cF)}\sum_{j'\in \cB^{2\Delta}(\cF)\backslash \cB^\Delta(\cF)} e^{i(pj-p'j')}\bra{\Phi_{j',p'}}F\ket{\Phi_{j,p}}\ ,
\end{align}
and
\begin{align}
    \Sigma_2&:=\sum_{j\in\cB^\Delta(\cF)}\sum_{j'\in \cB^{2\Delta}(\cF)^c}e^{i(pj-p'j')}\bra{\Phi_{j',p'}}F\ket{\Phi_{j,p}}\ .
\end{align}
The norm of the second sum can be bounded using Lemma~\ref{lem:maincauchyschwarzestimates}\eqref{it:a3bounda}, giving us
\begin{align}
|\Sigma_2| &\leq \sum_{j\in\cB^\Delta(\cF)}\sum_{j'\in \cB^{2\Delta}(\cF)^c} |\bra{\Phi_{j',p'}}F\ket{\Phi_{j,p}}|\\
&\leq |\cB^\Delta(\cF)|\cdot |\cB^{2\Delta}(\cF)^c|\cdot O(\lambda_2^{\Delta/2})\\
&=O(n^2\lambda_2^{\Delta/2})\ ,\label{eq:normsecondsigma2}
\end{align}
where we again use the trivial bound $\left|\cB^\Delta(\cF)\right|,\left|\cB^{2\Delta}(\cF)^c\right| \le n$ in the last line. Using Lemma~\ref{lem:maincauchyschwarzestimates}~\eqref{it:a3boundb}, we can express the first sum, with some fixed $q\in [n]$, as 
\begin{align}
\Sigma_1&=\sum_{j\in\cB^{\Delta}(\cF)}\sum_{j'\in\cB^{2\Delta}(\cF)\backslash \cB^\Delta(\cF)}
e^{i(pj-p'j')} \bbra{\ell} E_{F(\tau(q))}(\hat{j},p,\hat{j'},p')\kket{r}\\
&\qquad +|\cB^{\Delta}(\cF)|\cdot |\cB^{2\Delta}(\cF)\backslash \cB^{\Delta}(\cF)|\cdot O(\lambda_2^{\Delta/2})\\
&= \sum_{j\in\cB^{\Delta}(\cF)}\sum_{j'\in\cB^{2\Delta}(\cF)\backslash \cB^\Delta(\cF)}
e^{i(pj-p'j')} \bbra{\ell} E_{F(\tau(q))}(\hat{j},p,\hat{j'},p')\kket{r} + O\left(n^2\lambda_2^{\Delta/2}\right),
\end{align}
where the indices $\hat{j}$ and $\hat{j}'$ are defined as in Lemma~\ref{lem:maincauchyschwarzestimates}. To bound the remaining sum, let us introduce the states
\begin{align}
\ket{\Psi_1(p)}&:=\sum_{j\in\cB^{\Delta}(\cF)} e^{ipj}\ket{\Phi^L_{\hat{j},p}}\ ,\qquad\text{and}\\
\ket{\Psi_2(p')}&:=\sum_{j'\in\cB^{2\Delta}(\cF)\backslash \cB^\Delta(\cF)} e^{ip'j'}\ket{\Phi^L_{\hat{j}',p'}}\ ,
\end{align}
where we set $L=|\mathsf{supp}(F(\tau(q)))|$. Here, $\ket{\Phi_{j,p}^L}$ are as defined in~\eqref{eq:phinpaux}. Then we can write
\begin{align}
\Sigma_1 &= \bra{\Psi_2(p')}F(\tau(q))\ket{\Psi_1(p)}+O\left(n^2\lambda_2^{\Delta/2}\right).
\end{align}
By the Cauchy-Schwarz inequality and the orthogonality relations~\eqref{eq:innerproductnpnpprimestatement}, we have 
\begin{align}
|\bra{\Psi_2(p')}F(\tau(q))\ket{\Psi_1(p)}|&\leq \|F\|\cdot \|\Psi_1(p)\|\cdot\|\Psi_2(p')\|\\
&=\|F\|\ \sqrt{c_pc_{p'} |\cB^\Delta(\cF)|\cdot|\cB^{2\Delta}(\cF)\backslash \cB^\Delta(\cF)|},
\end{align}
where we bound the states $\ket{\Psi_{1,2}(p)}$ in exactly the same way as we did in the proof of \eqref{it:firststrengthened}. Using the fact that $|\cB^\Delta(\cF)|,|\cB^{2\Delta}(\cF)\backslash \cB^\Delta(\cF)| \le |\cB^{2\Delta}(\cF)|$, we conclude that
\begin{align}
|\Sigma_1| &\leq |\cB^{2\Delta}(\cF)|\cdot \|F\|\sqrt{c_pc_{p'}}+O\left(n^2\lambda_2^{\Delta/2}\right)\ .
\end{align}
Combining this with~\eqref{eq:normsecondsigma2} gives the claim~\eqref{it:secondstrengthened}. The proof of~\eqref{it:secondstrengthenedb} is analogous, using Lemma~\ref{lem:maincauchyschwarzestimates}\eqref{it:a4bound}.

Finally, consider~\eqref{eq:sigmappscal} and~\eqref{it:thirdstrengthened}. We have 
\begin{align}
\sigma_{pp'}(\cB^\Delta(\cF)^c\times\cB^\Delta(\cF)^c)&=\underbrace{\sum_{\substack{j\in\cB^{\Delta}(\cF)^c\\
\phantom{j\neq j'}\\
}}e^{ij(p-p')}\bra{\Phi_{j,p'}}F\ket{\Phi_{j,p}}}_{=:\Theta_1}+\underbrace{\sum_{\substack{j,j'\in\cB^{\Delta}(\cF)^c\\j\neq j'}}e^{i(pj-p'j')}\bra{\Phi_{j',p'}}F\ket{\Phi_{j,p}}}_{=:\Theta_2}.
\end{align}
Using Lemma~\ref{lem:maincauchyschwarzestimates}\eqref{it:partaclaima2bound}, we have
\begin{align}
|\Theta_2|&\leq \|F\|\cdot O(n^2\lambda_2^{\Delta/2})\ .\label{eq:secondsigma2upperboundxz}
\end{align}
On the other hand,
by Lemma~\ref{lem:maincauchyschwarzestimates}\eqref{it:partbclaima2bound}, or more precisely its refinement in the form of equation~\eqref{eq:secondclaimmodifiedxlemma} from Lemma~\ref{lem:mainmatrixelementexcit}, we have
\begin{align}
\Theta_1&=\left(\sum_{j\in\cB^{\Delta}(\cF)^c}e^{ij(p-p')}\right)\bbra{\ell} E_{F(\tau(j_0))}\kket{r} c_{pp'}+O(n\lambda_2^{\Delta/2})\ .
\end{align}
for some fixed $j_0\in\cB^\Delta(\cF)^c$. For $p'=p$, the sum above is given trivially by $\sum_{j\in\cB^{\Delta}(\cF)^c}1 = \left|\cB^\Delta(\cF)^c\right|$. 
For $p\neq p'$, we have $\sum_{j\in [n]}e^{ij(p-p')}=0$, and hence
\begin{align}
    \left|\sum_{j\in\cB^{\Delta}(\cF)^c}e^{ij(p-p')}\right|=\left|\sum_{j\in\cB^{\Delta}(\cF)}e^{ij(p-p')}\right| \le |\cB^\Delta(\cF)|.\label{eq:cancellationphases}
\end{align}
Therefore, for $p=p'$ we have
\begin{align}
    \Theta_1 = \left|\cB^\Delta(\cF)^c\right|\bbra{\ell} E_{F(\tau(j_0))}\kket{r}c_{p} + O(n\lambda_2^{\Delta/2})
\end{align}
and for $p\neq p'$, we have
\begin{align}
    |\Theta_1|&\le \left|\cB^\Delta(\cF)\right|\bbra{\ell} E_{F(\tau(j_0))}\kket{r}c_{pp'} + O(n\lambda_2^{\Delta/2})\\
    &\le \left|\cB^\Delta(\cF)\right|\cdot \|F\|c_{pp'} + O(n\lambda_2^{\Delta/2}).
\end{align}
Note that we also have $c_{pp'} \le \sqrt{c_pc_{p'}}$ by the Cauchy-Schwarz inequality. Combining these results with~\eqref{eq:secondsigma2upperboundxz} proves claims~\eqref{eq:sigmappscal} and~\eqref{it:thirdstrengthened}.
\end{proof}

\subsection{The parameters of codes based on the excitation ansatz \label{sec:excitationansatzcode}}
Recall that the normalization of the excitation ansatz states $\ket{\Phi_p}\equiv \ket{\Phi_p(B;A)}$ are given by Lemma~\ref{lem:normofexcitationansatzstates} as
\begin{align}
    \|\Phi_p\| = \sqrt{nc_p} + O(n^{3/2}\lambda_2^{n/6}).
\end{align}
In the following, we let $\ket{\phi_p}$ denote the normalized versions of $\ket{\Phi_p}$. In terms of matrix elements, we have
\begin{align}
\bra{\phi_p}F\ket{\phi_{p'}} = \frac{\bra{\Phi_p}F\ket{\Phi_{p'}}}{n\sqrt{c_pc_{p'}}(1 + O(n^2\lambda_2^{\Delta/6}))}
= \frac{\bra{\Phi_p}F\ket{\Phi_{p'}}}{n\sqrt{c_pc_{p'}}}+O(n\lambda_2^{\Delta/6}).\mylabel{eq:matrixelementnorm}
\end{align}
Our main technical result  for the excitation ansatz consists of the following estimates:
\begin{lemma}\label{lem:excitationansatzmainlem}
Let $\nu \in (0,1)$ and $d=n^{1-\nu}$. 
Let $F\in \cB((\mathbb{C}^{\physical})^{\otimes n})$ be a $d$-local operator with unit norm. 
Consider the normalized versions~$\ket{\phi_p}$ and $\ket{\phi_{p'}}$ 
of the excitation ansatz state~\eqref{eq:excitationansatz}.
Then we have
\begin{align}
|\bra{\phi_{p'}}F\ket{\phi_{p}}|&=O(n^{-\nu/2})\qquad\textrm{ for }p\neq p'\ ,\mylabel{eq:phipsmall}
\end{align}
and
\begin{align}
|\bra{\phi_{p}}F\ket{\phi_{p}}
-\bra{\phi_{p'}}F\ket{\phi_{p'}}|&=O(n^{-\nu/2})\qquad\textrm{ for all }p,p'\ .\mylabel{eq:differencephipphipabs}
\end{align}
\end{lemma}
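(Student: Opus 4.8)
The plan is to reduce both claims to estimates on the \emph{unnormalized} matrix elements $\bra{\Phi_{p'}}F\ket{\Phi_p}$ via the normalization identity~\eqref{eq:matrixelementnorm}, and then to evaluate these through the decomposition~\eqref{eq:sumphipprimephip} of the matrix element into a double sum over site pairs $(j,j')$. Writing $\cF=\supp(F)$ and splitting $[n]^2$ into the four regions $\cB^\Delta(\cF)\times\cB^\Delta(\cF)$, $\cB^\Delta(\cF)\times\cB^\Delta(\cF)^c$, $\cB^\Delta(\cF)^c\times\cB^\Delta(\cF)$, and $\cB^\Delta(\cF)^c\times\cB^\Delta(\cF)^c$, we have $\bra{\Phi_{p'}}F\ket{\Phi_p}=\sum \sigma_{pp'}(\cdot)$, where each summand is controlled by Lemma~\ref{lem:strengthenedboundsum}. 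Throughout I would treat $\|F\|=1$, $c_p$, $c_{p'}$, and $\bbra{\ell}E_{F(\tau(j_0))}\kket{r}$ as $\Theta(1)$ constants (the $c_p$ are bounded away from zero since $\|\Phi_p\|^2\sim nc_p$), and fix the thickening parameter $\Delta=\lceil n^{\nu/2}\rceil$. With $d=n^{1-\nu}$ this choice satisfies the hypothesis $10\Delta d<n$ of Lemma~\ref{lem:maincauchyschwarzestimates} for large $n$, makes every error of the form $n^2\lambda_2^{\Delta/2}$ or $n\lambda_2^{\Delta/6}$ super-polynomially negligible, and makes $\Delta d/n=n^{-\nu/2}$ the dominant contribution.

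For the off-diagonal claim~\eqref{eq:phipsmall} I would simply sum the four bounds~\eqref{it:firststrengthened}, \eqref{it:secondstrengthened}, \eqref{it:secondstrengthenedb}, and~\eqref{it:thirdstrengthened}. Each contributes a leading term of size $O(|\cB^{2\Delta}(\cF)|)=O(\Delta d)$ (using $\sqrt{c_pc_{p'}}=O(1)$) together with an error $O(n^2\lambda_2^{\Delta/2})$, so that $|\bra{\Phi_{p'}}F\ket{\Phi_p}|=O(\Delta d)+O(n^2\lambda_2^{\Delta/2})$. Dividing by the normalization $n\sqrt{c_pc_{p'}}=\Theta(n)$ and adding the $O(n\lambda_2^{\Delta/6})$ normalization error from~\eqref{eq:matrixelementnorm} yields $|\bra{\phi_{p'}}F\ket{\phi_p}|=O(\Delta d/n)+O(n\lambda_2^{\Delta/2})+O(n\lambda_2^{\Delta/6})=O(n^{-\nu/2})$ for the chosen $\Delta$.

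The diagonal claim~\eqref{eq:differencephipphipabs} needs the extra structural input that the leading $\Theta(n)$ part of $\bra{\Phi_p}F\ket{\Phi_p}$ is independent of $p$. The only region contributing at order $n$ is $\cB^\Delta(\cF)^c\times\cB^\Delta(\cF)^c$, for which~\eqref{eq:sigmappscal} gives $\sigma_{pp}(\cB^\Delta(\cF)^c\times\cB^\Delta(\cF)^c)=|\cB^\Delta(\cF)^c|\,\bbra{\ell}E_{F(\tau(j_0))}\kket{r}\,c_p+O(n^2\lambda_2^{\Delta/2})$, with a single $j_0$ depending only on $F$ (and $\Delta$) but not on $p$; the remaining three regions are again $O(\Delta d)+O(n^2\lambda_2^{\Delta/2})$. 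Writing $\mu:=\bbra{\ell}E_{F(\tau(j_0))}\kket{r}$ and using $|\cB^\Delta(\cF)^c|=n-O(\Delta d)$, this gives $\bra{\Phi_p}F\ket{\Phi_p}=n\mu c_p+O(\Delta d)+O(n^2\lambda_2^{\Delta/2})$. Dividing by the diagonal normalization $nc_p$ via~\eqref{eq:matrixelementnorm} produces $\bra{\phi_p}F\ket{\phi_p}=\mu+O(\Delta d/n)+O(n\lambda_2^{\Delta/2})+O(n\lambda_2^{\Delta/6})$, where crucially $\mu$ carries \emph{no} $p$-dependence. Subtracting the analogous expression for $p'$ cancels $\mu$ and leaves only error terms, giving $|\bra{\phi_p}F\ket{\phi_p}-\bra{\phi_{p'}}F\ket{\phi_{p'}}|=O(n^{-\nu/2})$.

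The main obstacle is not any single estimate but the bookkeeping: one must verify that the $p$-independent constant $\mu$ really is the same in the expansions for $p$ and for $p'$, which rests on Lemma~\ref{lem:mainmatrixelementexcit} guaranteeing that $\bbra{\ell}E_{F(\tau(j))}\kket{r}$ is independent of $j$ (for $j\notin\cB^\Delta(\cF)$) up to $O(\lambda_2^\Delta)$, and that the factor separating off as $c_p$ is exactly the one appearing in the normalization. The second delicate point is the calibration of $\Delta$: it must be large enough that all $\lambda_2^{\Delta}$-type errors are super-polynomially small, yet small enough (via $10\Delta d<n$ and $\Delta d/n=n^{-\nu/2}$) that the boundary term $O(\Delta d/n)$ realizes, rather than exceeds, the target rate; the choice $\Delta=\lceil n^{\nu/2}\rceil$ threads this needle.
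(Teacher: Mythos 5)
Your proposal is correct and takes essentially the same route as the paper's proof: the same four-region decomposition of $[n]^2$ controlled by Lemma~\ref{lem:strengthenedboundsum}, the same normalization step via~\eqref{eq:matrixelementnorm}, the same exact cancellation of the $p$-independent leading term $|\cB^\Delta(\cF)^c|\,\bbra{\ell}E_{F(\tau(j_0))}\kket{r}\,c_p$ against the diagonal normalization $nc_p$ (resting, as you note, on the single fixed $j_0$ guaranteed by Lemma~\ref{lem:mainmatrixelementexcit}), and the same calibration $\Delta=\Theta(n^{\nu/2})$ (the paper takes $\Delta=6n^{\nu/2}$, an immaterial constant-factor difference). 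No gaps.
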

\begin{proof}
By definition of the excitation ansatz states, we have
\begin{align}
\bra{\Phi_{p'}}F\ket{\Phi_{p}}&=
\sum_{j,j'=1}^n e^{i(pj-p'j')}
\bra{\Phi_{j',p'}}F\ket{\Phi_{j,p}}=\sum_{\alpha=1}^4 \sigma_{pp'}(\Omega_\alpha)\ ,\mylabel{eq:matrixelementdecomp}
\end{align}
where 
\begin{align}
\Omega_1 &=\cB^\Delta(\mathsf{supp}(F))\times \cB^\Delta(\mathsf{supp}(F))\ ,\\
\Omega_2&=\cB^\Delta(\mathsf{supp}(F))\times \cB^\Delta(\mathsf{supp}(F))^c\ ,\\
\Omega_3&=\cB^\Delta(\mathsf{supp}(F))^c\times \cB^\Delta(\mathsf{supp}(F))\ ,\\
\Omega_4&=\cB^\Delta(\mathsf{supp}(F))^c\times \cB^\Delta(\mathsf{supp}(F))^c\ ,\ 
\end{align}
is the partition of $[n]^2$ considered in Lemma~\ref{lem:strengthenedboundsum}. Thus, for $p\neq p'$, we obtain
\begin{align}
|\bra{\Phi_{p'}}F\ket{\Phi_{p}}|&\leq 4|\cB^{2\Delta}(\cF)|\sqrt{c_pc_{p'} }+O(n^2\lambda_2^{\Delta/2})\ .
\end{align}
Inserting the expression~\eqref{eq:matrixelementnorm} for the normalized matrix element, we get
\begin{align}
|\bra{\phi_{p'}}F\ket{\phi_{p}}|&\le \frac{\left|\bra{\Phi_p}F\ket{\Phi_{p'}}\right|}{n\sqrt{c_pc_{p'}}}+O(n\lambda_2^{\Delta/6})\\
&\leq \frac{4|\cB^{2\Delta}(\cF)|}{n} + O(n\lambda_2^{\Delta/2})+O(n\lambda^{\Delta/6})\\
&=\frac{4|\cB^{2\Delta}(\cF)|}{n} + O(n\lambda_2^{\Delta/6})
\end{align}
Assume that $\mathsf{supp}(F)$ consists of $\kappa$ disjoint connected components. By definition, we have 
\begin{align}
|\cB^\Delta(\mathsf{supp}(F))|&=\big|\mathsf{supp}(F)\cup \left(\cB^\Delta(\mathsf{supp}(F))\backslash \mathsf{supp}(F)\right)\big|\leq d+2\kappa\Delta \le d(1+2\Delta),
\end{align}
where we use the fact that $\kappa \le d$ in the last inequality. Hence, we have
\begin{align}
|\bra{\phi_{p'}}F\ket{\phi_{p}}|&\leq \frac{4d(1+4\Delta) }{n}+O(n\lambda_2^{\Delta/6})\ .
\end{align}
Let $1>\nu > 0$ be arbitrary. Choosing~$d=n^{1-\nu}$ and $\Delta=6n^{\nu/2}$ gives\footnote{Note that this choice of $d$ and $\Delta$ satisfies the requirement in Lemma~\ref{lem:maincauchyschwarzestimates} for sufficiently large $n$.}
\begin{align}
    \frac{|\cB^{2\Delta}(\supp(F))|}{n} \le \frac{d(1+4\Delta)}{n} = O(n^{-\nu/2}),\mylabel{eq:bigOsupp}
\end{align}
and therefore
\begin{align}
|\bra{\phi_{p'}}F\ket{\phi_{p}}|&=O(n^{-\nu/2})+O(n\lambda_2^{n^{\nu/2}})=O(n^{-\nu/2})\ .
\end{align}
Note that the last equality follows since, for all $\lambda_2 < 1$ and $a,b>0$, we have
$\lim_{n\rightarrow \infty}n^a\lambda_2^{n^b} = 0$. 
This proves claim~\eqref{eq:phipsmall}.  

Next, we prove~\eqref{eq:differencephipphipabs}. Making use of equation~\eqref{eq:matrixelementnorm} and the decomposition~\eqref{eq:matrixelementdecomp}, we have
\begin{align}
|\bra{\phi_{p}}F\ket{\phi_{p}}
-\bra{\phi_{p'}}F\ket{\phi_{p'}}|&\le\left|\frac{\bra{\Phi_p}F\ket{\Phi_{p}}}{nc_p}-\frac{\bra{\Phi_{p'}}F\ket{\Phi_{p'}}}{nc_{p'}}\right|+O(n\lambda_2^{\Delta/6})\\
&\leq \frac{1}{n}\sum_{\alpha=1}^4 \Big|\left(c_p^{-1}\sigma_{pp}(\Omega_\alpha)-c_{p'}^{-1}\sigma_{p'p'}(\Omega_\alpha)\right)\Big| + O(n\lambda_2^{\Delta/6}).
\end{align}
By Lemma~\ref{lem:strengthenedboundsum} we have
\begin{align}
|\sigma_{pp}(\Omega_\alpha)|&\le |\cB^{2\Delta}(\mathsf{supp}(F))|c_p + O(n^2\lambda_2^{\Delta/2})\quad\textrm{ for }\alpha\in \{1,2,3\}\ ,
\end{align}
and so we can write
\begin{align}
\frac{1}{n}\sum_{\alpha=1}^3\left| c_p^{-1}\sigma_{pp}(\Omega_\alpha)-c_{p'}^{-1}\sigma_{p'p'}(\Omega_\alpha)\right|&\le \frac{1}{n}\sum_{\alpha=1}^3\left| c_p^{-1}\sigma_{pp}(\Omega_\alpha)\right|+\frac{1}{n}\sum_{\alpha=1}^3\left|c_{p'}^{-1}\sigma_{p'p'}(\Omega_\alpha)\right|\\
&\le \frac{6|\cB^{2\Delta}(\supp(F))|}{n} + O(n\lambda^{\Delta/2}).
\end{align}
It remains to consider the terms involving $\Omega_4$, whereby using equation~\eqref{eq:sigmappscal} we get
\begin{align}
\Big|c_p^{-1}\sigma_{pp}(\Omega_4)-c_{p'}^{-1}\sigma_{p'p'}(\Omega_4)\Big|
&=O(n^2\lambda_2^{\Delta/2}).
\end{align}
Putting everything together, we have
\begin{align}
|\bra{\phi_{p}}F\ket{\phi_{p}}
-\bra{\phi_{p'}}F\ket{\phi_{p'}}| &\le \frac{6|\cB^{2\Delta}(\supp(F))|}{n} + O(n\lambda_2^{\Delta/6})\\
&= O(n^{-\nu/2}),
\end{align}
where we again use the bound~\eqref{eq:bigOsupp} in the last line. This proves claim~\eqref{eq:differencephipphipabs}.
\end{proof}
With Lemma~\ref{lem:excitationansatzmainlem}, it is straightforward to check
the condition for approximate quantum error-detection from Section~\ref{sec:sufficientconderrordetect}. 
This leads to the following:
\begin{theorem}\label{thm:excitationansatzparams}
Let $\nu \in (0,1)$ and let $\kappa,\Delta>0$ be such that 
\begin{align}
5\kappa+\lambda <\nu\ .
\end{align}
Let $A,B$ be tensors associated with an injective excitation ansatz state~$\ket{\Phi_p(B;A)}$, where $p$ is the momentum of the state.  Then there is a subspace~$\cC\subset (\mathbb{C}^\physical)^{\otimes n}$ spanned by excitation ansatz states~$\{\ket{\Phi_p(B;A)}\}_p$ with different momenta~$p$ such that $\cC$ is an $(\epsilon,\delta)[[n,k,d]]$-AQEDC with parameters
\begin{align}
k&=\kappa \log_\physical n\ ,\\
d&=n^{1-\nu}\ ,\\
\epsilon &=\Theta(n^{-(\nu-(5\kappa+\lambda))})\ ,\\
\delta&=n^{-\lambda}\ .
\end{align}
\end{theorem}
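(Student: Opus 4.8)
The plan is to realize $\cC$ as the span of a carefully chosen family of normalized excitation ansatz states and then to invoke the sufficient condition for approximate error-detection of Corollary~\ref{cor:errordectionapproximate}, fed with the matrix-element estimates of Lemma~\ref{lem:excitationansatzmainlem}. Concretely, I would fix $K=\physical^{k}=n^{\kappa}$ distinct \emph{non-zero} momenta $p_1,\dots,p_K$ from the allowed set $\{2\pi m/n\mid m=1,\dots,n\}$. This is possible because $5\kappa<\nu<1$ forces $\kappa<1$, so $n^{\kappa}<n$ and there are enough momenta to choose from; restricting to $p\neq 0$ lets me apply the gauge condition~\eqref{eq:gaugeconditionrewrittenexcit} and Lemma~\ref{lem:maincauchyschwarzestimates} without the exponentially small corrections flagged in the $p=0$ footnote. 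I then set $\cC=\myspan\{\ket{\phi_{p_i}}\}_{i\in[K]}$, where $\ket{\phi_p}$ denotes the normalized version of $\ket{\Phi_p(B;A)}$.

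First I would verify that $\{\ket{\phi_{p_i}}\}$ is an orthonormal basis, so that $\dim\cC=\physical^{k}$ as required. Normalization holds by construction, and orthogonality is in fact \emph{exact} by translation invariance: the cyclic shift $T$ obeys $T\ket{\Phi_{j,p}}=\ket{\Phi_{j+1,p}}$ (indices mod $n$), so from the definition~\eqref{eq:excitationansatz} each $\ket{\Phi_p}$ is an eigenvector of the unitary $T$ with eigenvalue $e^{-ip}$. Eigenvectors of a unitary with distinct eigenvalues are orthogonal, hence $\langle\phi_{p_i}|\phi_{p_j}\rangle=\delta_{ij}$ for the distinct $p_i$.

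Next I would translate Lemma~\ref{lem:excitationansatzmainlem} into the hypothesis of Corollary~\ref{cor:errordectionapproximate}. Labelling the basis $\psi_\alpha=\phi_{p_\alpha}$ and applying the lemma to $F/\|F\|$ before rescaling, the off-diagonal estimate~\eqref{eq:phipsmall} yields $|\bra{\psi_\alpha}F\ket{\psi_\beta}|\le \gamma\|F\|$ for $\alpha\neq\beta$, while the diagonal estimate~\eqref{eq:differencephipphipabs}, applied with $p'=p_1$, yields $|\bra{\psi_\alpha}F\ket{\psi_\alpha}-\bra{\psi_1}F\ket{\psi_1}|\le\gamma\|F\|$. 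Both bounds hold with the single choice $\gamma=\Theta(n^{-\nu/2})$, uniformly over all $d$-local $F$ with $d=n^{1-\nu}$ (the admissibility constraint $10\Delta d<n$ needed by Lemma~\ref{lem:maincauchyschwarzestimates} is satisfied for large $n$ with $\Delta=6n^{\nu/2}$, since $60\,n^{1-\nu/2}<n$). Thus $\cC$ satisfies the approximate Knill--Laflamme condition~\eqref{eq:upperbounduniversal} with this $\gamma$.

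Finally I would match exponents. With $K=n^{\kappa}$ and $\gamma^{2}=\Theta(n^{-\nu})$ one has $K^{5}\gamma^{2}=\Theta(n^{5\kappa-\nu})$. Choosing $\delta=n^{-\lambda}$, the admissibility requirement $\delta>K^{5}\gamma^{2}$ of Corollary~\ref{cor:errordectionapproximate} reads $n^{-\lambda}>\Theta(n^{5\kappa-\nu})$, which holds for all sufficiently large $n$ \emph{precisely} because of the hypothesis $5\kappa+\lambda<\nu$. The corollary then certifies that $\cC$ is an $(\epsilon,\delta)[[n,k,d]]$-AQEDC with $\epsilon=K^{5}\gamma^{2}\delta^{-1}=\Theta(n^{5\kappa+\lambda-\nu})=\Theta(n^{-(\nu-(5\kappa+\lambda))})$, as claimed. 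Since all the hard analytic work is already contained in Lemma~\ref{lem:excitationansatzmainlem}, the only genuine obstacles here are bookkeeping ones: confirming that a single constant $\gamma$ simultaneously controls the off-diagonal \emph{and} diagonal deviations, and observing that the $\Theta$ in $\epsilon$ is inherited from the tightness $\gamma=\Theta(n^{-\nu/2})$ of the matrix-element estimate rather than being a mere upper bound.
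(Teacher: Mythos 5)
Your proposal is correct and follows essentially the same route as the paper's proof: choose $n^{\kappa}$ distinct non-zero momenta, use orthogonality of momentum eigenstates, invoke Lemma~\ref{lem:excitationansatzmainlem} to obtain $\gamma=\Theta(n^{-\nu/2})$, and feed this into Corollary~\ref{cor:errordectionapproximate} with the exponent matching $\epsilon=K^{5}\gamma^{2}\delta^{-1}=\Theta(n^{-(\nu-(5\kappa+\lambda))})$. Your added details (the cyclic-shift argument for exact orthogonality and the explicit check of $10\Delta d<n$) are correct elaborations of steps the paper states without proof.
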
 
\begin{proof}
Let us choose an arbitrary set~$\{p_1,\ldots,p_{\physical^k}\}$ of $\physical^k=n^\kappa$ distinct, non-zero momenta and define the space~$\cC$ by
\begin{align}
\cC=\myspan\{\ket{\Phi_{p_j}(B;A)}\}_{j=1}^{\physical^k}\ .
\end{align}
Since momentum eigenstates to different momenta are orthogonal, the states $\{\ket{\phi_{p_j}}\}_{j=1}^{\physical^k}$ form an orthonormal basis of~$\cC$. By Lemma~\ref{lem:excitationansatzmainlem}, we have 
\begin{align}
  |\bra{\phi_{p_r}}F\ket{\phi_{p_s}}-\delta_{r,s} \bra{\phi_{p_1}}F\ket{\phi_{p_1}}|&=O(n^{-\nu/2})\ .
\end{align}
for any $d$-local unit norm operator $F\in\cB((\mathbb{C}^\physical)^{\otimes n})$ and all $r,s\in [\physical^k]$. The sufficient conditions of Corollary~\ref{cor:errordectionapproximate} for approximate error-detection applied with $\gamma=\Theta(n^{-\nu/2})$ show that~$\cC$ is a $(\Theta( \physical^{5k}n^{-\nu}/\delta),\delta)[[n,k,d]]$-AQEDC for any $\delta$ satisfying $\delta>\physical^{5k}n^{-\nu}$. This implies the claim for the given choice of parameters.
\end{proof}

From~\cite{haegeman2013elementary}, we know that isolated energy bands in gapped systems are well approximated, under mild physical conditions, by the Fourier transforms of local operators. In particular, this means that, possibly after blocking, isolated momentum eigenstates of gapped systems are well approximated by some excitation ansatz state, as one would expect.\footnote{In fact, we expect excitation ansatz states to be even better approximations of momentum eigenstates than the constructions considered in~\cite{haegeman2013elementary}. In~\cite{haegeman2013elementary}, the local operators $O$ act on the physical level, whereas the defining tensors $B$ of excitation ansatz states act on the virtual level, and are hence more general.} One consequence of this is that the excitation ansatz codes considered in this section are \it generic \rm among physical systems: essentially any selection of momentum eigenstates from an isolated energy band of a gapped system can be expected to form an error-detecting code with the above parameters.
\newpage
\section{AQEDC at low energies: An integrable model\label{sec:qedcintegrable}}
In this section, we consider the Heisenberg-XXX spin chain. In Section~\ref{sec:xxxmodelmagnoncode}, we introduce the model. The approximate error-detection codes we consider are spanned by eigenstates that we call magnon-states. The latter are particular instances of the algebraic Bethe ansatz, for which a general framework of MPS/MPO descriptions has been introduced in   prior work~\cite{verstraetekorepin}. We review the necessary notation for matrix product operators (MPOs) in Section~\ref{sec:mpointro}. In Section~\ref{sec:mpsdescriptionmagnon}, we give an MPS/MPO description of magnon-states. In Section~\ref{sec:compressedmpsdescriptionmagnon}, we provide a second MPS/MPO description with smaller bond dimension. In Section~\ref{sec:magnonpermutationinvariance}, we consider matrix elements of operators with respect to the magnon-state basis. We show how to relate matrix elements of operators with arbitrary support to matrix elements of operators with connected support. In Section~\ref{sec:jordanstructuremagnon} we analyze the Jordan structure of the transfer operators.
In Section~\ref{sec:matrixelementsmagnon}, we bound matrix elements of local operators in magnon states. 
Finally, in Section~\ref{sec:parametersmagnon}, we determine the parameters of the magnon code.

\subsection{The XXX-model and the magnon code\label{sec:xxxmodelmagnoncode}}
Consider the periodic Heisenberg-XXX spin chain, with Hamiltonian
\begin{align}
H&=-\frac{1}{4}\sum_{m=1}^n \left(\sigma^x_m\sigma^x_{m+1}+\sigma^y_m\sigma^y_{m+1}+\sigma^z_m\sigma^z_{m+1}\right)\label{eq:xxxmodel}
\end{align}
on~$(\mathbb{C}^2)^{\otimes n}$, where we apply periodic boundary conditions, and where $\sigma^x_m,\sigma^y_m,\sigma^z_m$ are the Pauli matrices acting on the $m$-th qubit. The model~\eqref{eq:xxxmodel} is gapless and can be solved exactly using the algebraic Bethe ansatz. Our goal here is to argue that~\eqref{eq:xxxmodel} contains error-detecting codes in its low-energy subspace. More precisely, we consider subspaces spanned by non-zero momentum eigenstates.

The Hamiltonian~\eqref{eq:xxxmodel} may alternatively be expressed as 
\begin{align}
H&=\frac{n}{4}\identityoperator-\frac{1}{2}\sum_{m=1}^n \flip_{m,m+1}\ ,\label{eq:eqxxxmodelasswaps}
\end{align}
where $\flip_{m,m+1}$ is the flip-operator acting on the $m$-th and $(m+1)$-th qubit. Equation~\eqref{eq:eqxxxmodelasswaps} shows that~$H$ commutes with the tensor product representation of the special unitary group~$SU(2)$ on $(\mathbb{C}^2)^{\otimes n}$, hence we may 
restrict to irreducible subspaces (with fixed angular momentum) to diagonalize~$H$. More precisely, let us define, for each qubit~$m$, the
operators 
\begin{align}
\hspin_m^-=\ket{0}\bra{1},\qquad \hspin_m^+=(\hspin_m^-)^\dagger,\qquad \text{and} \qquad\hspin_m^3=\frac{1}{2}(-\proj{0}+\proj{1}) .    
\end{align} 
These satisfy the canonical $\mathfrak{su}(2)$~commutation relations, with $\hspin^+$ and $\hspin^-$ being the raising and lowering operators of the spin-$1/2$ representation, and the basis states $\ket{0}$ and $\ket{1}$ corresponding to $\ket{j,m}=\ket{1/2,-1/2}$ and $\ket{1/2,1/2}$, respectively. The total $z$-angular momentum and raising/lowering-operators for the tensor product representation on~$(\mathbb{C}^2)^{\otimes n}$ are given by
\begin{align}
S_3=\sum_{m=1}^n \hspin_m^3\qquad \text{and}\qquad S_{\pm}=\sum_{m=1}^n \hspin^{\pm}_m\ .
\end{align}
These operators commute with~$H$, and therefore the total Hilbert space splits into a direct sum of spin representations: 
\begin{align}
(\mathbb{C}^2)^{\otimes n}\cong\bigoplus_{j}\cH_j\otimes\mathbb{C}^{m_j}\ 
,
\end{align}
where the direct sum is taken over all irreducible spin represenations (with multiplicity~$m_j$) present in the decomposition of the tensor representation. Each $\cH_j$ defines an irreducible $2j+1$-dimensional angular momentum-$j$ representation, and $H|_{\cH_j}=E_j\identityoperator_{\cH_j}$ is proportional to the identity
on each of these spaces. For instance, the subspace~$\cH_{n/2}$ with maximal angular momentum has highest weight vector~$\ket{1}^{\otimes n}$ and is spanned by ``descendants'' obtained by applying the lowering operator, that is,
\begin{align}
\cH_{n/2}&=\mathsf{span}\left\{ S_-^r\ket{1}^{\otimes n}\ |\ r=0,\ldots,n\ \right\}\ .
\end{align}
It is associated with energy $E_{n/2}=-n/4$, which is the ground state energy of~$H$. 
Clearly, this is the symmetric subspace, containing only permutation-invariant (i.e., zero-momentum) states. Error-correction within this subspace has been considered in~\cite{brandao2017quantum}. Indeed, all the examples  constructed there consist of subspaces of~$\cH_{n/2}$. 

Here we go beyond permutation-invariance. Specifically, we consider the vector
\begin{align}
\ket{\Psi}&=\overline{\omega}\sum_{r=1}^n \omega^r \hspin_r^-\ket{1}^{\otimes n}\qquad\textrm{ where }\qquad \omega=e^{2\pi i/n}\ .\label{eq:onemagnonstatedef}
\end{align}
The factor~$\overline{\omega}$ in front is introduced for convenience. A straightforward calculation shows that $S_+ \ket{\Psi}=0$ and $S_3\ket{\Psi}=
\left(n/2-1\right)\ket{\Psi}$, 
hence this is a highest weight vector for angular momentum~$j=n/2-1$ and
\begin{align}
\cH_{n/2-1}&=\mathsf{span}\left\{ S_-^r\ket{\Psi}\ |\ r=0,\ldots,n-2\right\}\ .\label{eq:subspacenhalfminusone}
\end{align}
The energy of states in this subspace can be computed to be $E_{n/2-1}=-n/4+1-\cos(2\pi/n)=E_{n/2}+O(1/n^2)$. This shows that these states are associated with low-lying excitations, and the system is gapless. Observe also that~\eqref{eq:onemagnonstatedef} is an eigenvector of the cyclic shift with eigenvalue~$\omega$, that is, it has fixed momentum $p=2\pi/n$. As $S_-^r$ commutes with the cyclic shift, the same is true for all states in~$\cH_{n/2-1}$: this is a subspace of fixed momentum and energy. We will argue that $\cH_{n/2-1}$ contains error-detecting codes. Specifically, we consider subspaces spanned by states of the form $\{S^r_-\ket{\Psi}\}_r$ for appropriate choices of magnetization~$r$. The state~\eqref{eq:onemagnonstatedef} is sometimes referred to as a one-magnon state. Correspondingly, we call the corresponding code(s) the {\em magnon-code}. We also refer to the vectors~$\{S^r_-\ket{\Psi}\}_{r}$ (respectively, their normalized versions) as {\em magnon-states}. 
For brevity, let us  denote the $r$-th descendant by
\begin{align}
\ket{\Psi_r}:=S^r_-\ket{\Psi}\qquad\textrm{ for }\qquad r=0,\ldots,n-2\ .\label{eq:descendantdef}
\end{align}
It is clear that the states $\ket{\Psi_r}$ and $\ket{\Psi_s}$ are orthogonal for $r\neq s$ as they have different magnetization, hence they form a basis of the magnon code. It is also convenient to introduce their normalized versions which are given by
\begin{align}
\ket{\psi_r}=\left(\frac{(n-2-r)!}{n(n-2)!r!}\right)^{1/2}S_{-}^r \ket{\Psi}\qquad \textrm{for} \qquad r=0,\ldots,n-2\ ,\mylabel{eq:psipdefinitionbasic}
\end{align}
as follows from the fact that for a normalized highest-weight vector~$\ket{j,j}$ of a spin-$j$-representation, the vectors 
\begin{align}
\ket{j,j-k}=\left(\frac{(2j-k)!}{(2j)!k!}\right)^{1/2}S_-^k \ket{j,j},\end{align} 
with $k=0,\ldots,2j$ form an orthonormal basis.

\subsection{Matrix product operators\label{sec:mpointro}}
Here we briefly review the formalism of matrix product operators (MPO) and introduce the corresponding notation. We only require site-independent MPO. Such an MPO $\mpo\in\cB((\mathbb{C}^\physical)^{\otimes n})$  with bond dimension~$D$ is given by 
\begin{align}
\mpo &=\sum_{\substack{i_1,\ldots,i_n\in [\physical]\\
j_1,\ldots,j_n\in [\physical]}} \tr(O_{i_1,j_1}\cdots
O_{i_n,j_n}X) \ket{i_1}\bra{j_1}\otimes\cdots\otimes\ket{i_n}\bra{j_n}\ 
\end{align}
for a family of local tensors $\{O_{i,j}\}_{i,j\in [\physical]}\subset\cB(\mathbb{C}^D)$, and a boundary operator $X\in \cB(\mathbb{C}^D)$. Alternatively, the MPO~$\mpo$ can also be parametrized by 
the operator $X\in\cB(\mathbb{C}^D)$ together with
family~$\{O^{\alpha,\beta}\}_{\alpha,\beta\in [D]}$ of $\physical\times \physical$-matrices. In this parametrization (illustrated in Figure~\ref{fig:mpoalternative}), the MPO is written as 
\begin{align}
\mpo&=\sum_{\alpha_0,\ldots,\alpha_n\in [D]}
X_{\alpha_n,\alpha_0} O^{\alpha_0,\alpha_1}\otimes O^{\alpha_1,\alpha_2}\otimes\cdots\otimes O^{\alpha_{n-1},\alpha_n}\ . \mylabel{eq:mpofirstparam}
\end{align}
\begin{figure}
\centering
\includegraphics[width=9cm]{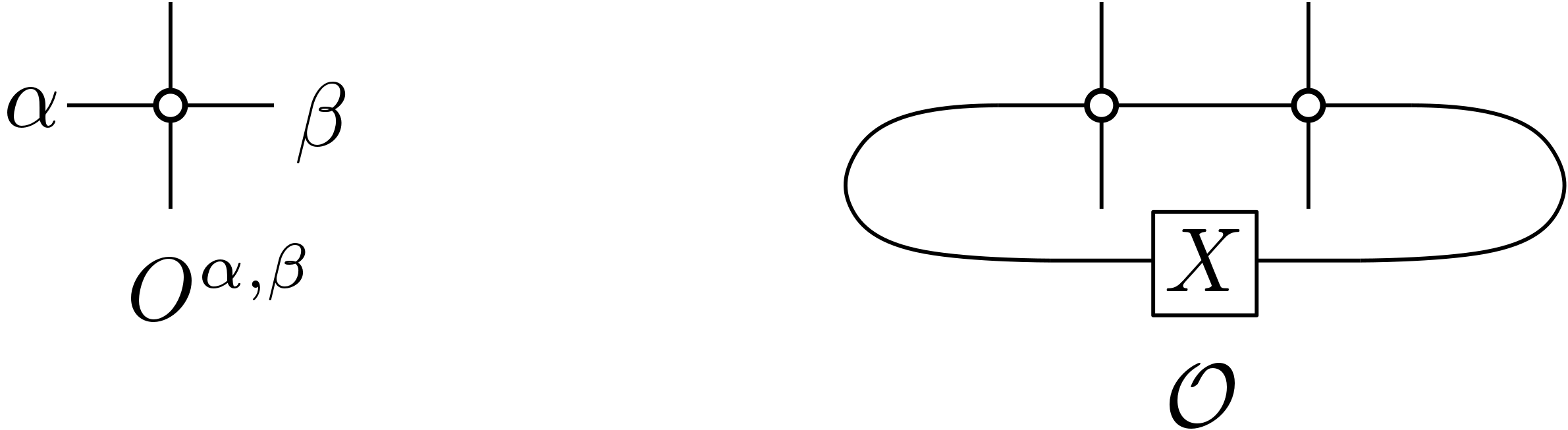}
\caption{Alternative parametrization of an MPO $\mpo$.
\mylabel{fig:mpoalternative}}
\end{figure}
Equation~\eqref{eq:mpofirstparam} shows that the MPO~$\mpo=\mpo(O,X,n)\in \cB((\mathbb{C}^\physical)^{\otimes n})$
is fully specified by three objects: 
\begin{enumerate}[(i)]
\item
a four-index tensor $O$, defined in terms of the collection $\{O_{i,j}\}_{i,j\in [\physical]}$ of matrices acting on the so-called  virtual space~$\mathbb{C}^D$ (alternatively, the collection of matrices~$\{O^{\alpha,\beta}\}_{\alpha,\beta\in [D]}$ acting on the physical space~$\mathbb{C}^\physical$),
\item
a matrix $X\in \cB(\mathbb{C}^D)$ acting on the virtual space, and
\item
an integer $n\in\mathbb{N}$ specifying the number of physical spins.
\end{enumerate}
We refer to the tensor $O$ as a local MPO tensor, and to $X$ as a boundary operator.

It is convenient to introduce the following product on MPO tensors. Suppose $O_1$ and $O_2$ are MPO tensors associated with MPOs having physical dimension~$\physical$, and bond dimensions $D_1$ and $D_2$, respectively. Then $O_1\diamond O_2$ is the MPO tensor of an MPO with physical dimension~$\physical$ and bond dimension~$D_1\cdot D_2$. Its tensor network description is given in Figure~\ref{fig:tensorproductmpoops}. More precisely, if $O_\alpha$ is defined by $\{O^{(x)}_{i,j}\}_{i,j\in [\physical]}$ for $x=1,2$, then $O_1\diamond O_2$ is defined in terms of the matrices
\begin{align}
O_{i,j}&=\sum_{k=1}^{r} (O^{(1)})_{i,k}\otimes (O^{(2)})_{k,j}\in\cB(\mathbb{C}^{D_1}\otimes\mathbb{C}^{D_2})\qquad\textrm{ for }i,j\in [\physical]\ .
\end{align}
This is clearly associative, and allows us to define
$O^{\diamond k}:=O\diamond O^{\diamond (k-1)}$ recursively.

\begin{figure}
\centering
\includegraphics[width=10cm]{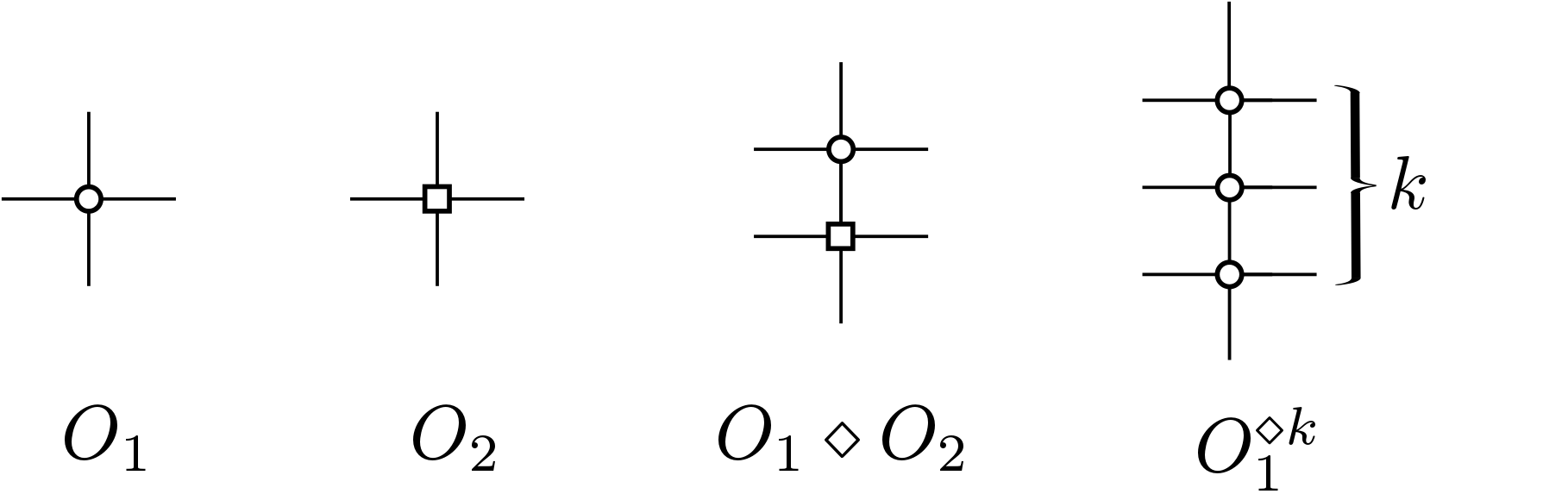}
\caption{The product of two MPO tensors $O_1$ and $O_2$, as well as the power $O_1^{\diamond k}$. 
\mylabel{fig:tensorproductmpoops}}
\end{figure}

Suppose now that an MPO $\mpo=\mpo(O,X,n)$  is given. Observe that for $k\in \mathbb{N}$, the operator~$\mpo^{ k}$ is an MPO 
whose virtual bond space is $(\mathbb{C}^D)^{\otimes k}$
and whose local tensors take the form
\begin{align}
\bra{\alpha_1\cdots \alpha_k}(O^{\diamond k})_{i,j}\ket{\beta_1\cdots \beta_k}&=\hspace{-3ex}
\sum_{s_1,\ldots,s_{k-1}\in [\physical]}\hspace{-3ex} \begin{split}\bra{\alpha_1}O_{i,s_1}\ket{\beta_1}\cdot
\bra{\alpha_2}O_{s_1,s_2}\ket{\beta_2}\qquad\qquad\qquad\qquad\qquad\\
\cdots
\bra{\alpha_{k-1}}O_{s_{k-2},s_{k-1}}\ket{\beta_{k-1}}\cdot 
\bra{\alpha_{k}}O_{s_{k-1},j}\ket{\beta_k}\ ,
\end{split}\mylabel{eq:opowermatrices}
\end{align}
with boundary tensor $X^{\otimes k}$.
\begin{figure}
\centering
\includegraphics[width=8cm]{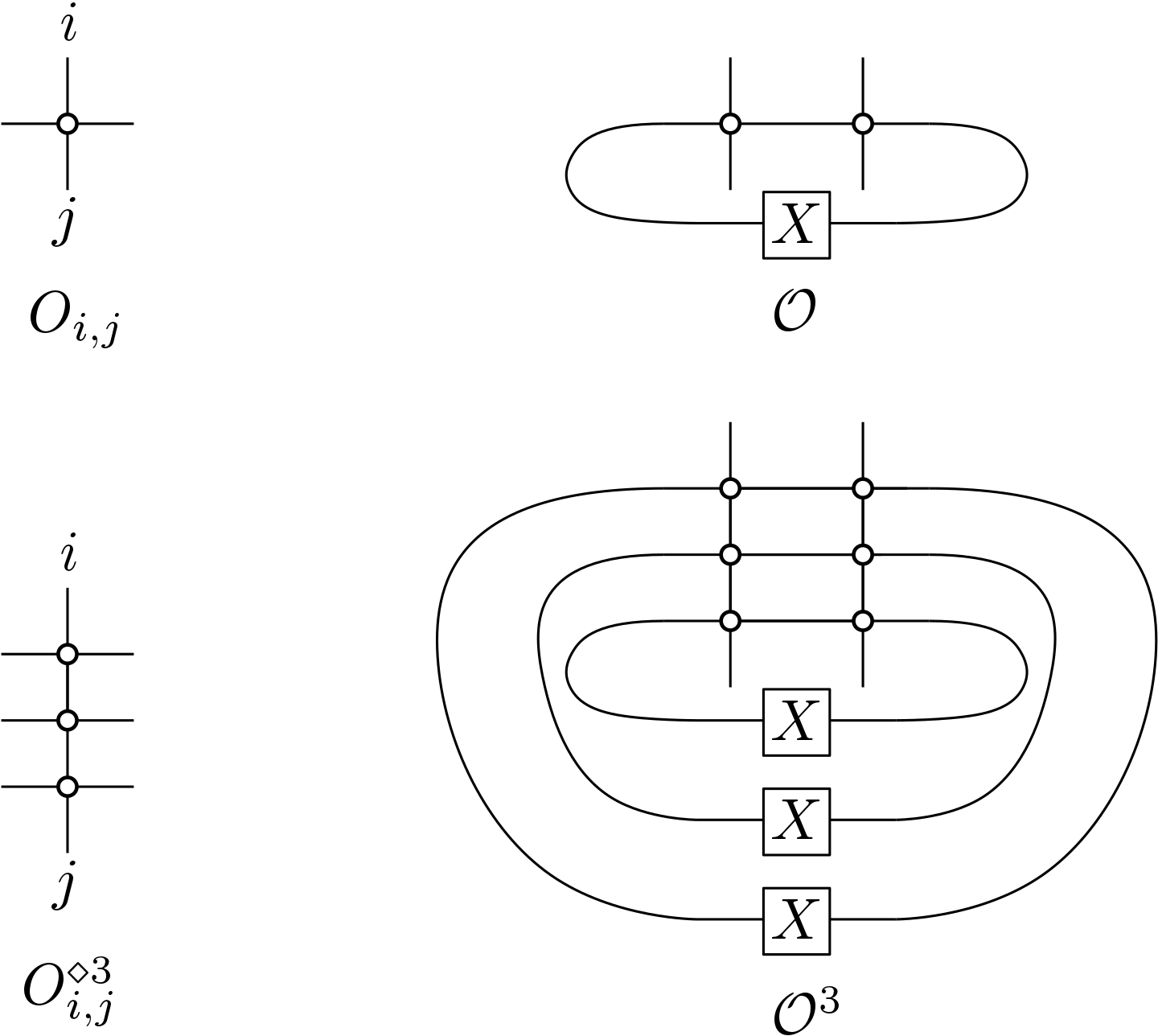}
\caption{
This figure shows an MPO $\mpo=\mpo(O,X,2)$
defined in terms of matrices $\{O_{i,j}\}_{i,j}$  and the matrices $\{O^{\diamond 3}_{i,j}\}_{i,j}$ defining the MPO $\mpo^3$. Left-multiplication by an operator corresponds to stacking a diagram on top.
\mylabel{fig:transferoppower}}
\end{figure}
Thus the MPO $\mpo^k=\mpo(O^{\diamond k}, X^{\otimes k},n)$ is defined by the MPO tensor $O^{\diamond k}$ and the boundary operator~$X^{\otimes k}$. These are visualized in Figure~\ref{fig:transferoppower}, for $k=3$.

Consider an MPS~$\ket{\Psi}=\ket{\Psi(A,X,n)}\in (\mathbb{C}^{\physical})^{\otimes n}$ of bond dimension~$D_1$ and an MPO~$\mpo=\mpo(O,Y,n) \in\cB((\mathbb{C}^\physical)^{\otimes n})$ of bond dimension~$D_2$. Then clearly $\mpo\ket{\Psi}$ is an MPS with bond dimension~$D_1D_2$. We write
\begin{align}
\mpo\ket{\Psi}=\ket{\Psi(O\diamond A,Y\otimes X,n)}\ ,\mylabel{eq:mpoappliedtompsdef}
\end{align}
see Figure~\ref{fig:diamondotdef} for the definition of the MPS tensor $O\diamond T$. 
\begin{figure}
\centering
\includegraphics[width=1.5cm]{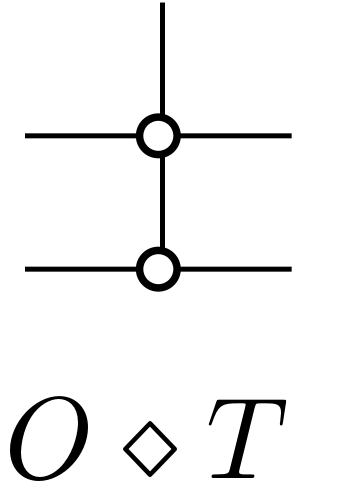}
\caption{Definition of the MPS tensor $O\diamond T$. 
\mylabel{fig:diamondotdef}}
\end{figure}
 
 \begin{figure}
\centering
\includegraphics[width=4cm]{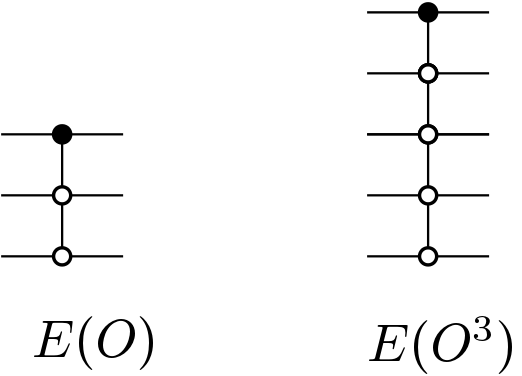}
\caption{
This figure shows the tensor network representations
of $E_O$ and $E_{O^{\diamond 3}}$.\mylabel{fig:eooperators.png}}
\end{figure}

In the following, we  are interested in matrix elements of the form~$\bra{\Psi}\mpo \ket{\Psi}$.  A central object of study is the generalized transfer operator $E_{O}$. If $O$ is specified by matrices $\{O_{i,j}\}_{i,j\in [\physical]}\subset \cB(\mathbb{C}^{D_2})$, this is given by
\begin{align}
E_O& = 
\sum_{s,t\in [D_1]}
\sum_{j,k\in [D_2]}
\bra{s} O_{j,k}\ket{t}
\overline{A_s}\otimes \ket{j}\bra{k}\otimes A_t\in 
\cB(\mathbb{C}^\physical\otimes\mathbb{C}^{D_2}\otimes\mathbb{C}^\physical)\ .
\end{align}
This operator, as well as $E_{O^{\diamond k}}\in\cB(\mathbb{C}^\physical\otimes (\mathbb{C}^{D_2})^{\otimes k}\otimes\mathbb{C}^\physical)$ for  $k=3$ are illustrated in Figure~\ref{fig:eooperators.png}.

\begin{figure}
\centering
\includegraphics[width=6cm]{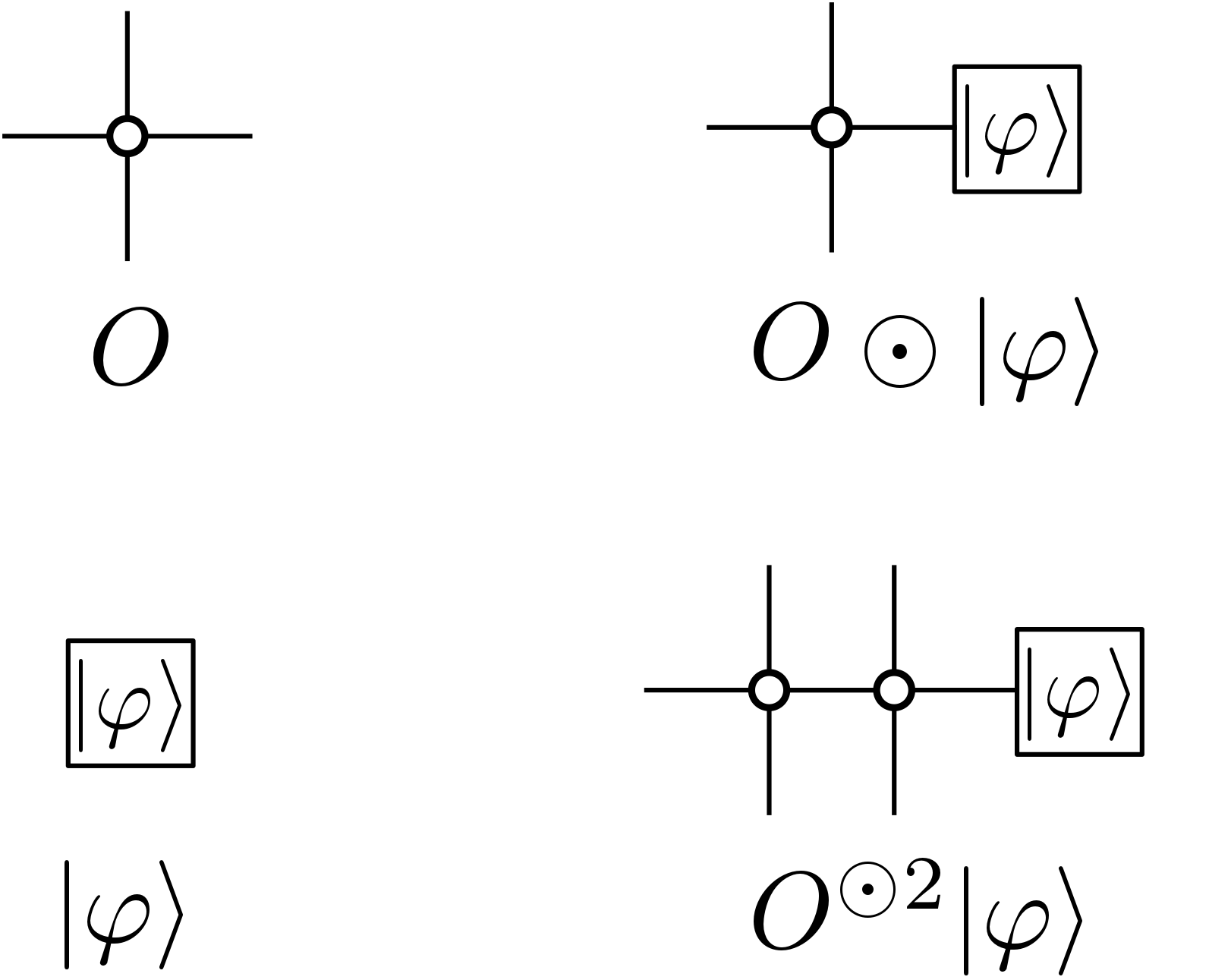}
\caption{This figure illustrates the definition of the product $O\odot \ket{\varphi}$.  \mylabel{fig:specialproductdef}}
\end{figure}
 
 Consider an MPO tensor $O$ with physical space $\cH_p$ and virtual space $\cH_v$.
 Given a vector $\ket{\varphi}\in\cH_v$,
 we can define an element $O\odot\ket{\varphi}\in \cH_v\otimes\cB(\cH_p)$
 by attaching $O$ from the left, see Figure~\ref{fig:specialproductdef}.
 The map $(O,\varphi)\mapsto O\odot\ket{\varphi}$ is bilinear. Hence we can define 
\begin{align}
O\odot (O\odot \ket{\varphi}):=(O\otimes\identityoperator_{\cB(\cH_p)})(O\odot\ket{\varphi})\in \cH_v\otimes\cB(\cH_p)\otimes\cB(\cH_p)\ .
\end{align} 
This is clearly associative. Correspondingly, we also define 
$O^{\odot n} \ket{\varphi}\in\cH_v\otimes\cB(\cH_p)^{\otimes n}$ as the result applying this map $n$~times. Note that an MPO
defined by $(O,X=\ket{\varphi}\bra{\chi})$ can be written as
$\left(\bra{\chi}\otimes\identityoperator^{\otimes n}_{\cB(\cH_p)}\right)O^{\odot n}\ket{\varphi}$. 

Conversely, observe that  a bilinear map $\Gamma:\cH_v\rightarrow\cH_v\otimes\cB(\cH_p)$, together with two states~$\ket{\varphi}, \ket{\chi}\in\cH_v$, defines a site-independent MPO in this fashion.

\subsection{MPS/MPO representation of the magnon states \label{sec:mpsdescriptionmagnon}}
Here we give an MPS/MPO representation of the magnon states that we use throughout our analysis below. We note that more generally,~\cite{verstraetekorepin} discusses such representations for the Bethe ansatz states.

Consider the one-magnon state~$\ket{\Psi}\in(\mathbb{C}^2)^{\otimes n}$ defined by~\eqref{eq:onemagnonstatedef}. It is straightforward to check that an MPS representation of $\ket{\Psi}=\ket{\Psi(\{A_0,A_1,X\})}$  with bond dimension $D=2$ is given by
\begin{align}
    A_0 &=\ket{1}\bra{0}\ ,\\
    A_1 &=\proj{0}+\omega \proj{1}\ ,\mylabel{eq:amatrixmps}\\
    X&=\ket{0}\bra{1}\ ,
\end{align}
where $\omega=e^{2\pi i/n}$, see Figure~\ref{fig:magnon}. 
\begin{figure}
\centering
\includegraphics[width=8.5cm]{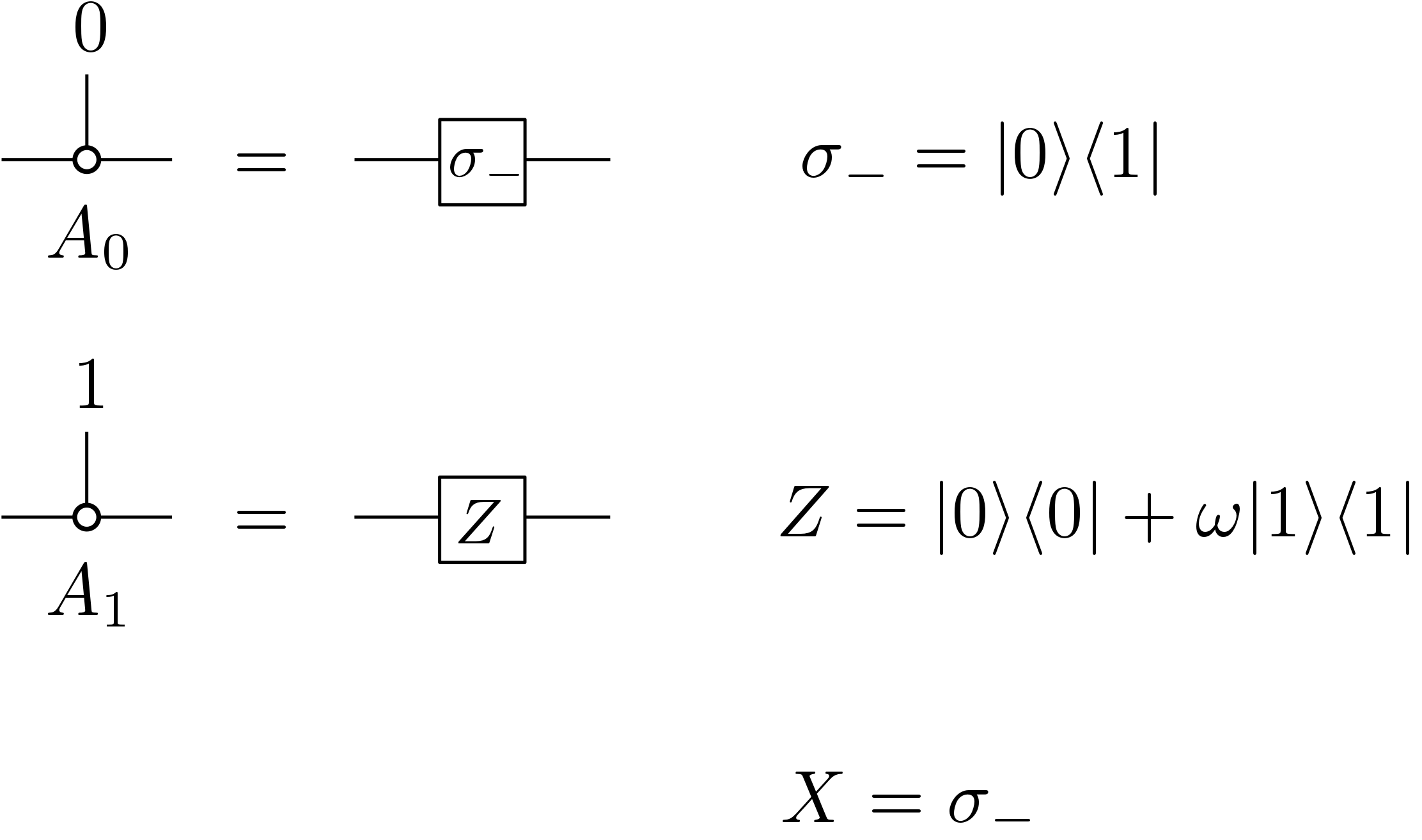}
\caption{An MPS description of the one-magnon state $\ket{\Psi}$  (cf.~\eqref{eq:onemagnonstatedef})\mylabel{fig:magnon}.}
\end{figure}
Next, we consider the descendants~\eqref{eq:descendantdef}. The operator $S_-=\sum_{m=1}^n \hspin^-_m$  can be expressed as a bond dimension $D=2$ MPO, given by 
\begin{align}
S_-=\cO(O^{0,0},O^{0,1},0^{1,0},O^{1,1},X)\in\cB((\mathbb{C}^2)^{\otimes n})\ ,
\end{align} 
where the boundary tensor is $X=\sigma_-:=\ket{0}\bra{1}$, and where the local tensors are defined as
\begin{align}
O^{0,0}=O^{1,1}&=I_{\mathbb{C}^2}\ ,\\
O^{1,0}&=\ket{0}\bra{1}\ ,\mylabel{eq:bmatrixsminus}\\
O^{0,1}&=0\ .
\end{align}
This definition is illustrated in Figure~\ref{fig:magnonmpo}. The adjoint operator $S_+$ has an MPO representation described as in Figure~\ref{fig:magnonmpoadjoint}.

It follows that the  ``descendants'' $\ket{\Psi_s}= S_-^s\ket{\Psi}$ can be represented as in Figure~\ref{fig:descendants}, i.e., they are MPS of the form
\begin{align}
\ket{\Psi_s}&=\ket{\Psi(O^{\diamond s}\diamond A, X^{\otimes (s+1)},n)}\qquad\textrm{ for }s=0,\ldots,n-2\ ,
\end{align}
where  the MPS tensor $O\diamond T$ is defined as in equation~\eqref{eq:mpoappliedtompsdef}. 
\begin{figure}
\centering
\includegraphics[width=8cm]{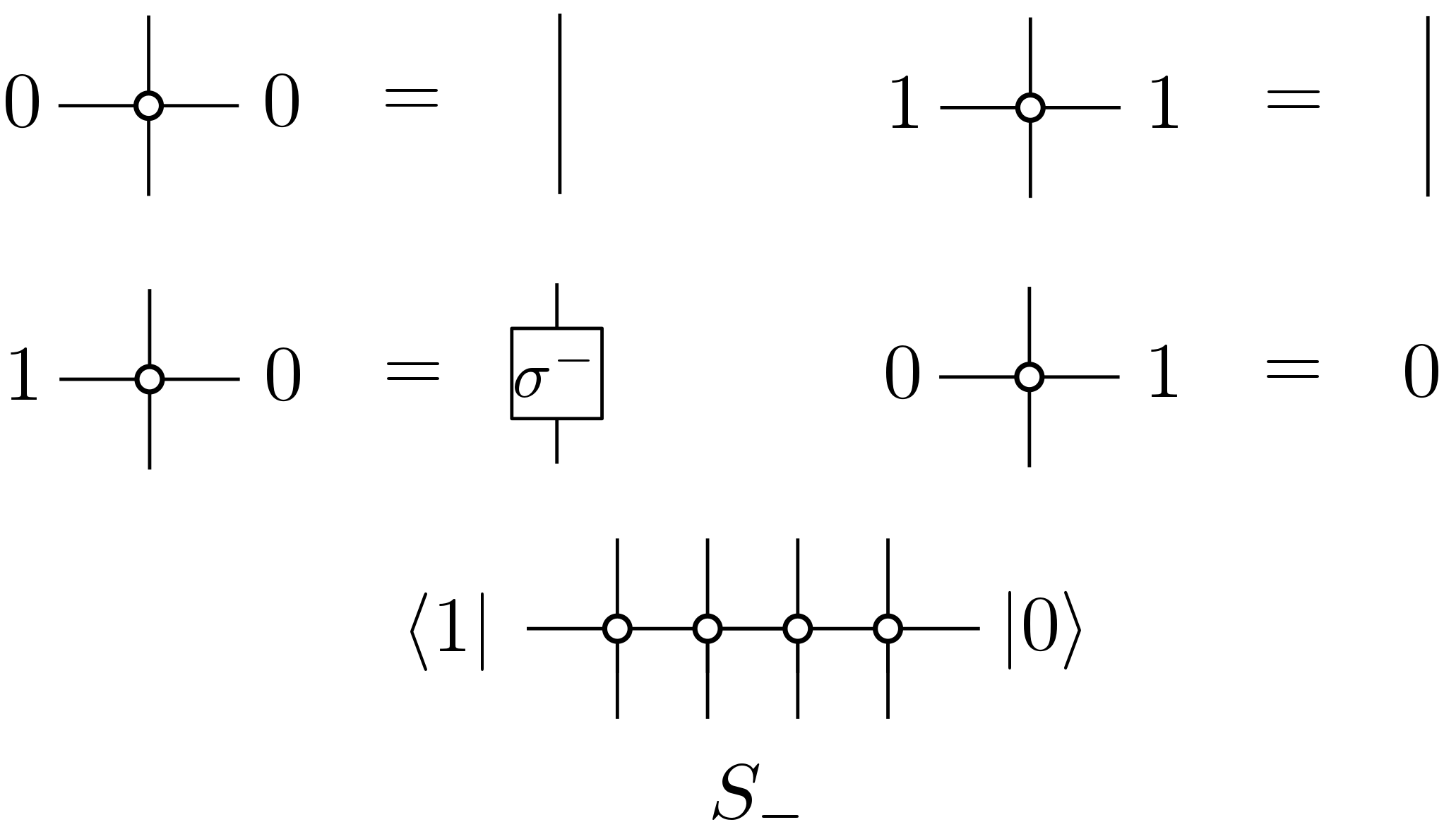}
\caption{An MPO description of the 
lowering operator. The MPO~$S_-=\cO(O,X,n)$ is defined with $O$ as given in the figure and with $X=\sigma_{-}$.
\mylabel{fig:magnonmpo}}
\end{figure}

\begin{figure}
\centering
\includegraphics[width=8cm]{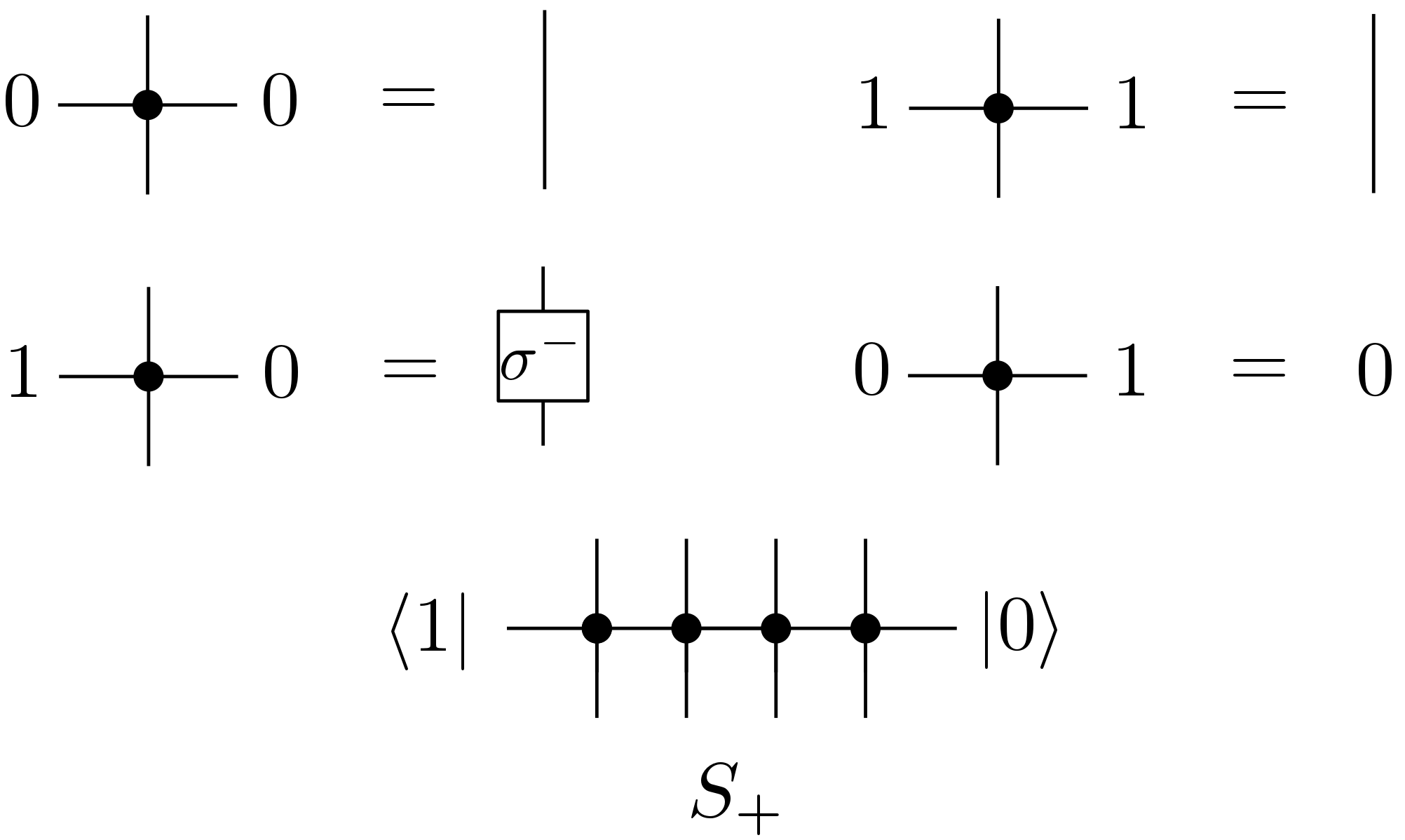}
\caption{An MPO description of the adjoint MPO~$S_+$.  \mylabel{fig:magnonmpoadjoint}}
\end{figure}

\begin{figure}
\centering
\includegraphics[width=8.5cm]{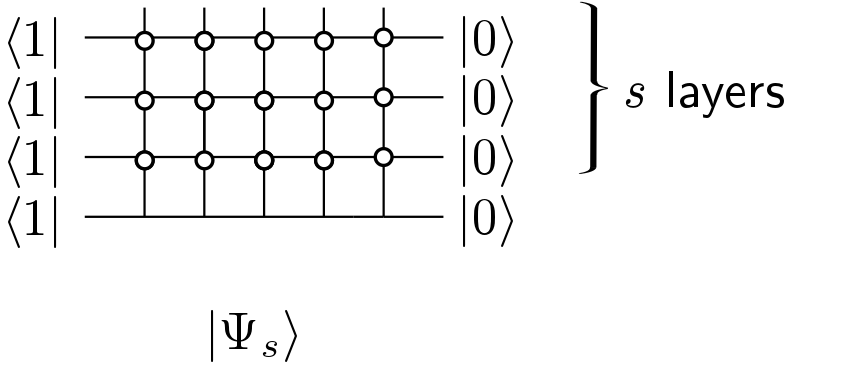}
\caption{An MPS/MPO representation of the vector~$\ket{\Psi_s}=S_-^s\ket{\Psi}$. Seen as an MPS,
this has rank-$1$-boundary tensor $X=(\ket{0}\bra{1})^{\otimes s+1}$.
\mylabel{fig:descendants}}
\end{figure}

\subsection{A compressed MPS/MPO representation of the magnon states \label{sec:compressedmpsdescriptionmagnon}}
Consider the MPO representation~\eqref{eq:bmatrixsminus} of $S_-$. For $s\in [n]$, it implies the MPO representation
\begin{align}
S_{-}^s &=\mpo(O^{\diamond s}, \sigma_-^{\otimes s},n)\label{eq:sminusnoncompressed}
\end{align}
for the $s$-th power of $S_-$, which has bond dimension~$D=2^s$. Below we argue that the MPO~\eqref{eq:sminusnoncompressed} can also be expressed as an MPO with bond dimension~$s+1$. We call this the compressed representation:

\begin{lemma}\label{lem:compressedrep}
Let $s\in [n]$ and consider the operator $S_-^s$, where $S_-=\sum_{m=1}^n \hspin_m^-$. This has the 
bond dimension $D=s+1$-MPO  representation
\begin{align}
S_{-}^s&=\cO(\tilde{O}_s, \tilde{X}_s,n)\ .
\end{align}
Here the virtual space~$\mathbb{C}^{s+1}$ is 
that of a spin-$s/2$ with orthonormal angular momentum eigenstate basis $\{\ket{\textfrac{s}{2},m}\ |\  m=-\frac{s}{2},-\textfrac{s}{2}+1,\ldots,\textfrac{s}{2}\}$. The boundary tensor is 
\begin{align}
\tilde{X}_s=\ket{\textfrac{s}{2},-\textfrac{s}{2}}\bra{\textfrac{s}{2},\textfrac{s}{2}}
\end{align}
and the MPO tensor $\tilde{O}_s$ is defined by the matrices
\begin{align}
(\tilde{O}_s)_{0,0}=(\tilde{O}_s)_{1,1}&=I\ ,\\
(\tilde{O}_s)_{1,0}&=0\ , \label{eq:btildeijdef}\\
(\tilde{O}_s)_{0,1}&=J_+\ ,
\end{align}
where $J_+$ is the usual spin-raising operator.\footnote{With respect to a distinguished orthonormal basis $\{\ket{j,m}\}_{m=-j,-j+1,\cdots,j}$, we have $$J_+\ket{j,m}=\sqrt{j(j+1)-m(m+1)}\ket{j,m+1}$$ 
for all $m=-j,\cdots,j-1$ and $J_+\ket{j,j}=0$.} In particular, the states~$\ket{\Psi_s}=S_-^s\ket{\Psi}$ have an MPS representation of the form 
\begin{align}
\ket{\Psi_s}&=\ket{\Psi(\tilde{O}_s\diamond A,\tilde{X}_s\otimes X,n)}\ ,
\end{align}
with bond dimension~$2(s+1)$. 
\end{lemma}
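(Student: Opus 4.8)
The plan is to verify directly that the compressed MPO $\cO(\tilde{O}_s,\tilde{X}_s,n)$ equals $S_-^s$ as an operator, and then to obtain the MPS form of $\ket{\Psi_s}$ by applying the MPO-on-MPS rule~\eqref{eq:mpoappliedtompsdef}. The useful preliminary reduction is that, since the single-site operators $\hspin^-_m$ act on distinct tensor factors (hence commute) and satisfy $(\hspin^-_m)^2=0$, every monomial in the expansion of $S_-^s=(\sum_m\hspin^-_m)^s$ with a repeated site vanishes, and the remaining monomials group by the $s$-element set of lowered sites. Thus
\begin{align}
S_-^s=s!\sum_{\substack{T\subset[n]\\ |T|=s}}\ \prod_{m\in T}\hspin^-_m\ ,
\end{align}
and the task is to show the candidate MPO reproduces this operator, combinatorial factor $s!$ included.

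First I would expand $\cO(\tilde{O}_s,\tilde{X}_s,n)=\sum \tr(\tilde{O}_{i_1,j_1}\cdots\tilde{O}_{i_n,j_n}\tilde{X}_s)\bigotimes_m\ket{i_m}\bra{j_m}$ and read off the contributing index patterns from~\eqref{eq:btildeijdef}. The only nonzero local tensor components are the two diagonal ones carrying the virtual identity (whose physical parts sum to the physical identity on a non-lowered site) and the single off-diagonal component realizing the physical lowering $\hspin^-$, which carries the virtual raising operator $J_+$; the vanishing component $(\tilde{O}_s)_{1,0}=0$ forbids every other pattern. Consequently each surviving term has a virtual string $M_1\cdots M_n$ with each $M_m\in\{I,J_+\}$, the sites with $M_m=J_+$ being exactly those carrying $\hspin^-$, and its coefficient is $\tr(M_1\cdots M_n\tilde{X}_s)=\bra{\textfrac{s}{2},\textfrac{s}{2}}M_1\cdots M_n\ket{\textfrac{s}{2},-\textfrac{s}{2}}$.

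The key computation is to evaluate this coefficient. Because the boundary $\tilde{X}_s$ pairs the lowest weight $\ket{\textfrac{s}{2},-\textfrac{s}{2}}$ with the highest weight $\ket{\textfrac{s}{2},\textfrac{s}{2}}$ of the spin-$s/2$ virtual space, a $\{I,J_+\}$-string has nonzero matrix element \emph{iff} it contains exactly $s$ factors of $J_+$: fewer raisings never reach the highest weight, and more are annihilated by the truncation $J_+\ket{\textfrac{s}{2},\textfrac{s}{2}}=0$. These strings are therefore in bijection with the $s$-subsets $T$, matching the reduced form of $S_-^s$. Since $I$ and $J_+$ commute, the interspersed identities and the ordering are irrelevant, and the coefficient is the site-independent constant $\bra{\textfrac{s}{2},\textfrac{s}{2}}J_+^{s}\ket{\textfrac{s}{2},-\textfrac{s}{2}}$. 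Using the raising coefficients from the footnote,
\begin{align}
J_+^{s}\ket{\textfrac{s}{2},-\textfrac{s}{2}}=\Big(\prod_{k=0}^{s-1}\sqrt{(k+1)(s-k)}\Big)\ket{\textfrac{s}{2},\textfrac{s}{2}}=s!\,\ket{\textfrac{s}{2},\textfrac{s}{2}}\ ,
\end{align}
so the coefficient is exactly $s!$ for every $T$. Matching term by term with the reduced expansion gives $\cO(\tilde{O}_s,\tilde{X}_s,n)=S_-^s$. The MPS form of $\ket{\Psi_s}=S_-^s\ket{\Psi}$ (cf.~\eqref{eq:descendantdef},~\eqref{eq:onemagnonstatedef}) then follows at once: applying the rule~\eqref{eq:mpoappliedtompsdef} to the bond-dimension-$2$ MPS $\ket{\Psi}=\ket{\Psi(A,X,n)}$ yields $\ket{\Psi_s}=\ket{\Psi(\tilde{O}_s\diamond A,\tilde{X}_s\otimes X,n)}$, of bond dimension $(s+1)\cdot 2=2(s+1)$.

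Conceptually, this compression is just the restriction of the naive bond-dimension-$2^s$ representation $\mpo(O^{\diamond s},\sigma_-^{\otimes s},n)$ to the symmetric (spin-$s/2$) subspace of the virtual space $(\mathbb{C}^2)^{\otimes s}$: the boundary $\sigma_-^{\otimes s}$ already lies in that subspace, and the collective virtual operators generated by $O^{\diamond s}$ preserve it and act there as the spin-$s/2$ generators, with $J_+$ appearing in place of the total virtual raising operator. I expect the main obstacle to be purely bookkeeping rather than conceptual: one must track the index/weight conventions carefully — in particular that the physical lowering component $\hspin^-$ is the one paired with the virtual raising $J_+$, and that the orientation of $\tilde{X}_s$ forces precisely $s$ raising steps — so that the coefficient emerges as exactly the $s!$ needed to reproduce $S_-^s$, neither over- nor under-counting the orderings of the lowered sites.
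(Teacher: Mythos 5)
Your proof is correct, but it takes a genuinely different route from the paper's. The paper never expands $S_-^s$ combinatorially: it starts from the uncompressed representation $S_-^s=\mpo(O^{\diamond s},\sigma_-^{\otimes s},n)$ of bond dimension~$2^s$, computes the local matrices of $O^{\diamond s}$ explicitly using $\sigma_-^2=0$ --- finding $(O^{\diamond s})_{0,0}=(O^{\diamond s})_{1,1}=I$, $(O^{\diamond s})_{1,0}=0$, and $(O^{\diamond s})_{0,1}=\halfj_+=\sum_{k=1}^s\halfj_k^+$, the collective raising operator on the virtual space $(\mathbb{C}^2)^{\otimes s}$ --- and then restricts everything to the invariant subspace $\mathsf{span}\{(\halfj_+)^r\ket{0}^{\otimes s}\,|\,r=0,\ldots,s\}$, i.e.\ the symmetric spin-$\textfrac{s}{2}$ subspace, on which $\halfj_+$ acts as $J_+$ and the boundary $\sigma_-^{\otimes s}=\ket{0}^{\otimes s}\bra{1}^{\otimes s}$ becomes $\tilde{X}_s$; this is exactly the ``conceptual'' remark in your closing paragraph, promoted to the actual proof. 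You instead verify the compressed MPO directly against the expansion $S_-^s=s!\sum_{|T|=s}\prod_{m\in T}\hspin_m^-$ (valid since the $\hspin_m^-$ commute and square to zero), reducing the whole matter to the single weight computation $\bra{\textfrac{s}{2},\textfrac{s}{2}}J_+^s\ket{\textfrac{s}{2},-\textfrac{s}{2}}=s!$, which you carry out correctly: at step $k$ the raising coefficient is $\sqrt{j(j+1)-m(m+1)}=\sqrt{(k+1)(s-k)}$, and the product telescopes to $\sqrt{(s!)^2}$; the exactly-$s$-raisings selection rule ($J_+^t\ket{\textfrac{s}{2},-\textfrac{s}{2}}$ misses the highest weight for $t<s$ and vanishes for $t>s$) is also sound. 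What each approach buys: yours is self-contained and makes the normalization bookkeeping fully explicit --- in particular the matching of the combinatorial factor $s!$, a point the paper handles only implicitly through the identification of $\halfj_+$ restricted to the orthonormal Dicke basis with $J_+$ --- while the paper's restriction argument explains where the compressed tensors come from, requires no combinatorial expansion of the operator, and generalizes more readily to compressing powers of other MPOs. The final step, obtaining $\ket{\Psi_s}=\ket{\Psi(\tilde{O}_s\diamond A,\tilde{X}_s\otimes X,n)}$ with bond dimension $2(s+1)$ via~\eqref{eq:mpoappliedtompsdef}, is identical in both treatments.
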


\begin{proof}
Consider the MPO  tensor $O^{\diamond s}$ associated with the MPO representation~\eqref{eq:sminusnoncompressed} of~$S_-^s$. We express it in terms of matrices $\{O_{i,j}\}_{i,j\in \{0,1\}}\subset\cB((\mathbb{C}^2)^{\otimes s})$ acting on the virtual space of dimension $D=2^s$. The latter has orthonormal basis $\{\ket{\alpha}=\ket{\alpha_1}\otimes\cdots\otimes\ket{\alpha_s}\}_{\alpha=(\alpha_1,\ldots,\alpha_s)\in\{0,1\}^s}$. By definition~\eqref{eq:bmatrixsminus} of   $O$  and the fact that $(O^{1,0})^2=\sigma_-^2=0$, it is easy to see that
\begin{align}
\begin{matrix}
\bra{\alpha}O_{0,0}\ket{\beta}&=&\delta_{\alpha,\beta}\\
\bra{\alpha}O_{1,1}\ket{\beta}&=&\delta_{\alpha,\beta}\\
\bra{\alpha}O_{1,0}\ket{\beta}&=&0
\end{matrix}\ ,\qquad\textrm{ and }\qquad 
\bra{\alpha}O_{0,1}\ket{\beta}&=\begin{cases}
1\qquad &\textrm{ if }\beta \preceq \alpha\\
0\qquad &\textrm{ otherwise }
\end{cases}\ ,
\end{align}
where we write $\beta\preceq\alpha$ for $\alpha, \beta\in \{0,1\}^s$ if and only if there is exactly one $k\in [s]$ such that $\beta_k=0$ and $\alpha_k=1$, and $\alpha_\ell=\beta_\ell$ for all $\ell\neq k$. 
Let us define $\halfj_k^-$ as the operator $\ket{0}\bra{1}$ acting on the $k$-factor in $(\mathbb{C}^2)^{\otimes s}$, and $\halfj_k^+=(\halfj_k^-)^\dagger$ for $k\in [s]$. Then it is easy to check that $\halfj_+:=\sum_{k=1}^s \halfj_k^+$ has the same matrix elements~$\bra{\alpha}O_{0,1}\ket{\beta}$ as $O_{0,1}$.
It follows that
\begin{align}
\begin{matrix}
O_{0,0}&=&I_{(\mathbb{C}^2)^{\otimes s}}\\
O_{1,1}&=&I_{(\mathbb{C}^2)^{\otimes s}}
\end{matrix}\ ,\qquad\textrm{ and }\qquad
\begin{matrix}
O_{1,0}&=&0\\
O_{0,1}&=&\halfj_+
\end{matrix}\ .\label{eq:Bexplicitcompressedxz}
\end{align}
According to the MPO representation~\eqref{eq:sminusnoncompressed} of~$S_-^s$, the matrix elements of this operator can be expressed as 
\begin{align}
\bra{i_1\cdots i_n}S_-^s\ket{j_1\cdots j_n}&=\bra{1}^{\otimes s}O_{i_1 j_1}\cdots O_{i_nj_n}\ket{0}^{\otimes s}
\qquad\textrm{ for all }(i_1,\ldots,i_n), (j_1,\ldots,j_n)\in \{0,1\}^n\ .
\end{align}
Combining this expression with~\eqref{eq:Bexplicitcompressedxz}, it follows that
\begin{align}
\bra{i_1\cdots i_n}S_-^s\ket{j_1\cdots j_n}&=\bra{1}^{\otimes s}(\tilde{O}_s)_{i_1 j_1}\cdots (\tilde{O}_s)_{i_nj_n}\ket{0}^{\otimes s}
\quad\textrm{ for all }(i_1,\ldots,i_n), (j_1,\ldots,j_n)\in \{0,1\}^n\ .
\end{align}
where $(\tilde{O}_s)_{i,j}$ is the restriction of $(O^{\diamond s})_{i,j}$ to the subspace $\mathsf{span}\{(\halfj^+)^r\ket{0}^{\otimes s}\ |\ r=0,\ldots,s\}$. This implies the claim. 
\end{proof}

\subsection{Action of the symmetric group on the magnon states\label{sec:magnonpermutationinvariance}}
The symmetric group $S_n$ acts on $(\mathbb{C}^2)^{\otimes n}$ by permuting the factors, i.e., we have for an orthonormal basis $\{\ket{e_1},\ket{e_2}\}\in\mathbb{C}^2$ that 
\begin{align}
\pi (\ket{e_{i_1}}\otimes\cdots\otimes \ket{e_{i_n}})&=\ket{e_{i_{\pi^{-1}(1)}}}\otimes \cdots \otimes \ket{e_{i_{\pi^{-1}(n)}}}\qquad\textrm{ for all }\pi\in S_n\ ,
\end{align}
and this is linearly extended to all of $(\mathbb{C}^2)^{\otimes n}$. 
Since 
\begin{align}
[\pi,S_-] &=0\qquad\textrm{ for all }\pi\in S_n\ ,\mylabel{eq:permutationinvariancesplus}
\end{align}
the space $\mathsf{span}\{S_-^k\ket{\Psi}\ |\ k\in\mathbb{N}_0 \}$ is invariant under permutations. In the following, we will show that the restriction of the group action to this space has a particularly simple form: every permutation acts as a tensor product of diagonal unitaries. Our main claim (Theorem~\ref{lem:permutations} below) follows from~\eqref{eq:permutationinvariancesplus} and the following statement.

\begin{lemma}\mylabel{lem:permutationaction}
Let $A_0,A_1\in\cB(\mathbb{C}^2)$ be the matrices  defining the MPS $\ket{\Psi}$, cf.~equation~\eqref{eq:amatrixmps}. Then
\begin{align}
A_cA_b &=\omega^c \overline{\omega}^b A_bA_c \qquad\textrm{ for all }\qquad b,c\in \{0,1\}\ .\mylabel{eq:neighboringtransposition}
\end{align}
Consider the MPO tensor $O$ defined by equation~\eqref{eq:bmatrixsminus}
and set 
$\cO_{a,b}=\cO(O,\ket{b}\bra{a},2)\in\cB((\mathbb{C}^2)^{\otimes 2})$ for $a,b\in \{0,1\}$.
Then 
\begin{align}
\cO_{a,b}(Z^{\dagger}\otimes Z)=(Z^{\dagger}\otimes Z)\cO_{a,b}\qquad\textrm{ for all }a,b\in\{0,1\}\ ,\mylabel{eq:mpozcompat}
\end{align}
where $Z=\mathsf{diag}(1,\omega)$. 
\end{lemma}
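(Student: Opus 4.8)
The plan is to verify both identities by direct computation, working with the explicit $2\times2$ matrices for~\eqref{eq:neighboringtransposition} and in the $4$-dimensional two-qubit space for~\eqref{eq:mpozcompat}. The natural starting point is the explicit form~\eqref{eq:amatrixmps} of the MPS matrices: $A_0=\ket{1}\bra{0}$ and $A_1=\proj{0}+\omega\proj{1}=\mathsf{diag}(1,\omega)=Z$. Thus $A_1$ is exactly the diagonal unitary $Z$ appearing in the statement, while $A_0$ is a single off-diagonal matrix unit. I would first record the elementary identities $Z\,A_0=\omega\,A_0$ and $A_0\,Z=A_0$ (immediate from $Z\ket{1}=\omega\ket{1}$ and $\bra{0}Z=\bra{0}$), together with $A_0^2=0$; these encode all the non-commutativity that is in play.

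For~\eqref{eq:neighboringtransposition} I would split into the four cases $(b,c)\in\{0,1\}^2$. When $b=c$ the prefactor $\omega^{c}\overline{\omega}^{b}=1$ and the relation is a tautology. The two mixed cases follow from the identities above: for $(c,b)=(1,0)$ one has $A_1A_0=ZA_0=\omega A_0$ while $A_0A_1=A_0 Z=A_0$, so that $A_1A_0=\omega\,A_0A_1=\omega^{1}\overline{\omega}^{0}A_0A_1$, which is precisely the claimed relation; the case $(c,b)=(0,1)$ is obtained by multiplying through by $\overline{\omega}$ and is equivalent. This part is entirely routine.

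For~\eqref{eq:mpozcompat} I would first compute the four operators $\cO_{a,b}=\cO(O,\ket{b}\bra{a},2)$ from the local tensors~\eqref{eq:bmatrixsminus}. Using the parametrization $\cO_{a,b}=\sum_{\alpha}O^{a,\alpha}\otimes O^{\alpha,b}$ with $O^{0,0}=O^{1,1}=I$, $O^{1,0}=\hspin^-=\ket{0}\bra{1}$, and $O^{0,1}=0$, three of the four collapse to trivial operators, $\cO_{0,0}=\cO_{1,1}=I\otimes I$ and $\cO_{0,1}=0$, each of which commutes with $Z^{\dagger}\otimes Z$ for obvious reasons. The entire content of the lemma therefore concentrates in the single remaining operator $\cO_{1,0}=\hspin^-\otimes I+I\otimes\hspin^-$, the two-site restriction of the lowering operator $S_-$. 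The decisive step — and the one I expect to be the main obstacle — is establishing the commutation $[\hspin^-\otimes I+I\otimes\hspin^-,\,Z^{\dagger}\otimes Z]=0$: I would reduce it, factor by factor, to the single-site conjugation relations describing how $Z$ and $Z^{\dagger}$ act on $\hspin^-$, and then check that the phases produced on the two tensor factors are consistent. This is exactly the point at which the choice $Z^{\dagger}\otimes Z$ (rather than, say, $Z\otimes Z$) and the phase convention $A_1=Z$ must be reconciled, so it warrants careful bookkeeping of the relative phases. Once~\eqref{eq:mpozcompat} and~\eqref{eq:neighboringtransposition} are both in hand, they are meant to combine — the $A$-relation supplying the diagonal phase of a neighbouring transposition and the $\cO$-relation ensuring the MPO layer is transparent to it — to yield the swap relation for the combined MPS/MPO descendant tensor that feeds into Theorem~\ref{lem:permutations}.
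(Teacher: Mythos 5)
Your treatment of~\eqref{eq:neighboringtransposition} is correct and coincides with the paper's own argument (a direct four-case check); your observation that $A_1=Z$ and that all the non-commutativity sits in $ZA_0=\omega A_0$, $A_0Z=A_0$ is exactly the right way to organize it. Your reduction of~\eqref{eq:mpozcompat} is also set up correctly: with the convention~\eqref{eq:mpofirstparam} one indeed gets $\cO_{0,0}=\cO_{1,1}=I\otimes I$, $\cO_{0,1}=0$, and $\cO_{1,0}=\hspin^-\otimes I+I\otimes \hspin^-$, so the whole content concentrates in the one remaining case, just as you say.

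But the step you defer to ``careful bookkeeping of the relative phases'' is not a bookkeeping issue: the commutation $[\hspin^-\otimes I+I\otimes\hspin^-,\,Z^\dagger\otimes Z]=0$ that your plan requires is false. From $Z=\mathsf{diag}(1,\omega)$ one has the one-site relations $Z\hspin^-=\hspin^-$, $\hspin^- Z=\omega\,\hspin^-$, $Z^\dagger\hspin^-=\hspin^-$, $\hspin^- Z^\dagger=\overline{\omega}\,\hspin^-$, and therefore
\begin{align}
(Z^\dagger\otimes Z)\,\cO_{1,0}&=\hspin^-\otimes Z+Z^\dagger\otimes\hspin^-\ ,\\
\cO_{1,0}\,(Z^\dagger\otimes Z)&=\overline{\omega}\,\hspin^-\otimes Z+\omega\,Z^\dagger\otimes\hspin^-\ .
\end{align}
The two summands of $\cO_{1,0}$ acquire the \emph{opposite} phases $\overline{\omega}$ and $\omega$, so the commutator equals $(1-\overline{\omega})\,\hspin^-\otimes Z+(1-\omega)\,Z^\dagger\otimes\hspin^-\neq 0$ whenever $\omega\neq 1$. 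At the level of single matrix elements: $\bra{00}\cO_{1,0}(Z^\dagger\otimes Z)\ket{10}=\overline{\omega}$ while $\bra{00}(Z^\dagger\otimes Z)\cO_{1,0}\ket{10}=1$. Since the identity fails even up to an overall scalar (the two terms would need the simultaneous factors $\overline{\omega}$ and $\omega$), no choice of phase convention for $A_1$ versus $Z$, nor swapping $Z^\dagger\otimes Z$ for $Z\otimes Z^\dagger$, rescues it.

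What this means for your comparison with the paper: the paper's proof of~\eqref{eq:mpozcompat} consists of the single sentence ``shown by direct computation'' (plus an unprinted diagram), and the direct computation, carried out under the paper's stated conventions, refutes rather than verifies the displayed equation in the only nontrivial case $(a,b)=(1,0)$. The failure propagates: the downstream Lemma~\ref{lem:permutations} asserts $\tau S_-^s\ket{\Psi}=(Z^\dagger\otimes Z)_{k,k+1}S_-^s\ket{\Psi}$, which holds for $s=0$ but already fails for $n=3$, $s=1$, $\tau=(1\,2)$: one computes $S_-\ket{\Psi}\propto(\omega+\omega^2)\ket{001}+(\omega+1)\ket{010}+(\omega^2+1)\ket{100}$, and the $\ket{010}$ coefficient of $\tau S_-\ket{\Psi}$ is $\omega^2+1$ while that of $(Z^\dagger\otimes Z\otimes I)S_-\ket{\Psi}$ is $\omega^2+\omega$. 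The root cause is precisely your commutator: $Z^\dagger_kZ_{k+1}$ reproduces the transposition on the one-magnon state but does not commute with $S_-$, so it cannot reproduce it on descendants. So your proposal cannot be completed as written — not because your route differs from the paper's, but because the equation~\eqref{eq:mpozcompat} as displayed is not a correct operator identity; any honest completion of your computation should report this, and a repair would have to replace~\eqref{eq:mpozcompat} by a weaker statement that tracks the relative phases $\omega^{\pm 1}$ picked up by the two terms of $\cO_{1,0}$ (or holds only after contraction against suitable boundary vectors), with Lemma~\ref{lem:permutations} reexamined accordingly.
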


It is convenient to express the corresponding statements diagramatically. First observe that specializing~\eqref{eq:permutationinvariancesplus} to a neighboring transposition and inserting the MPO description of $S_-$ introduced in Section~~\ref{sec:mpsdescriptionmagnon}, we obtain the diagrammatic identity
\begin{align}
\includegraphics[width=12cm]{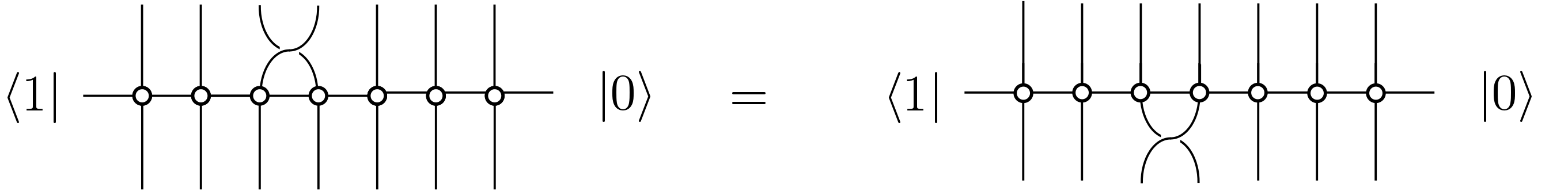}\mylabel{eq:braidingandmpo}
\end{align}
Claim~\eqref{eq:neighboringtransposition} describes the action of a neighboring transposition and can be written as
\begin{align}
\includegraphics[width=8.5cm]{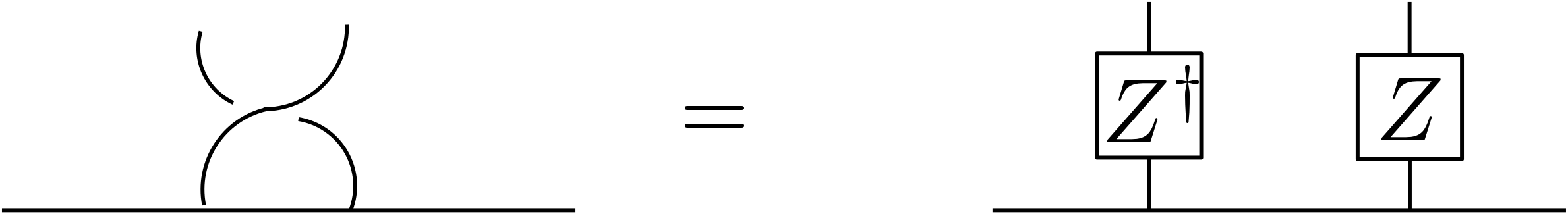}\mylabel{eq:braidingmps}
\end{align}
Claim~\eqref{eq:mpozcompat} can be written as
\begin{align}
\includegraphics[width=8.5cm]{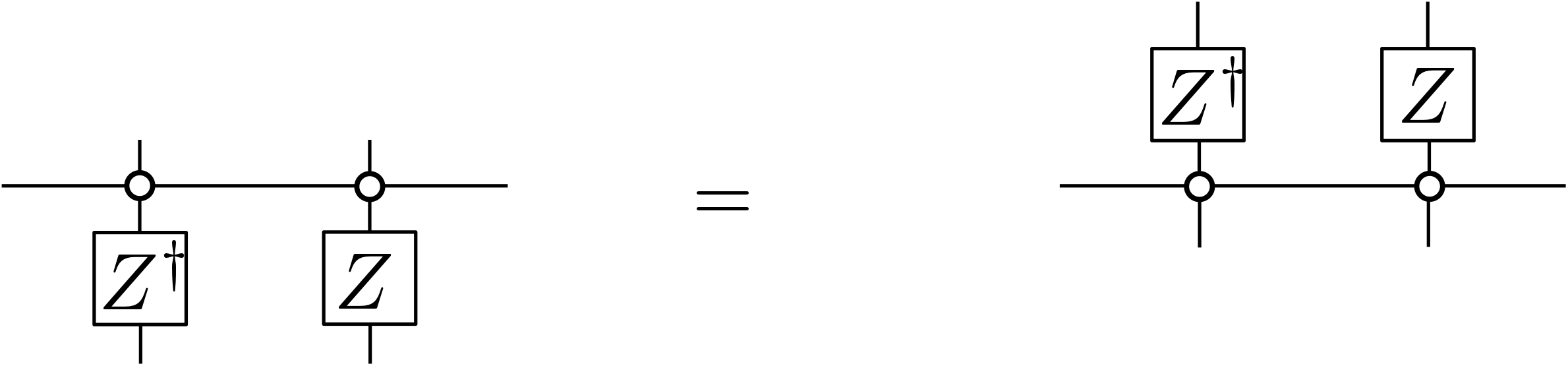}\mylabel{eq:zoperatorsmpolem}
\end{align}
\begin{proof}
Equation~\eqref{eq:neighboringtransposition} can be shown by checking each case:
\begin{center}
\includegraphics[width=16cm]{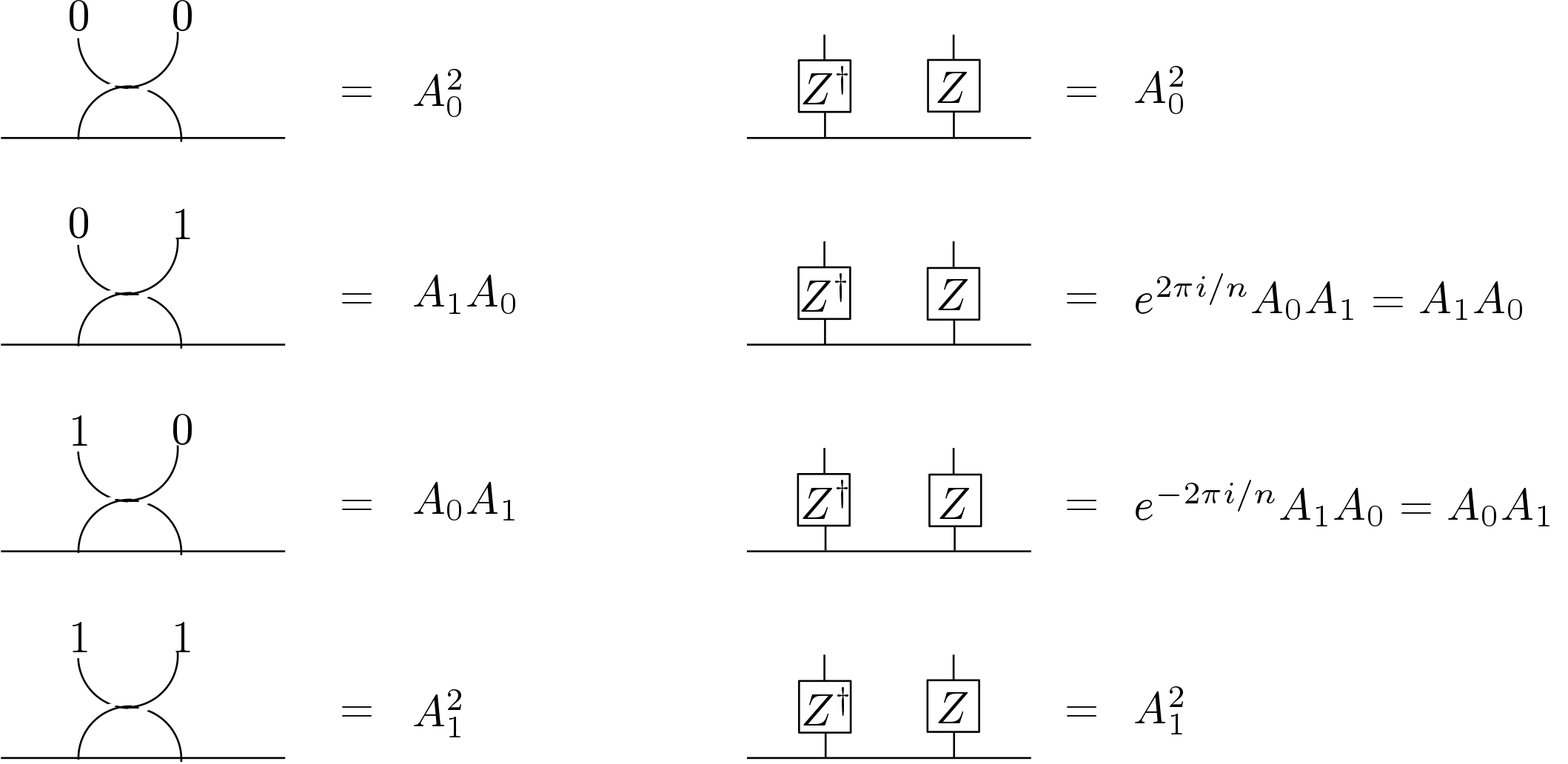}
\end{center}
Similarly,~\eqref{eq:mpozcompat} is shown by direct computation.
\end{proof}

The main feature we need in what follows is the following statement:
\begin{lemma}\mylabel{lem:permutations}
Consider the spin $j=n/2-1$ subspace~$\cH_{n/2-1}\subset (\mathbb{C}^2)^{\otimes n}$ introduced in equation~\eqref{eq:subspacenhalfminusone}. Let $\tau=(k\  k+1)\in S_n$ be an arbitrary transposition of nearest neighbors. Then the restriction of $\tau$ to $\cH_{n/2-1}$ is given by the operator
\begin{align}
\tau|_{\cH_{n/2-1}}&= I^{\otimes k-1}\otimes Z^{\dagger}\otimes Z\otimes I^{\otimes n-k-1}\ ,
\end{align}
where $I=\identityoperator_{\mathbb{C}^2}$.
\end{lemma}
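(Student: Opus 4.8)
The plan is to verify the claimed identity on the spanning set $\{\ket{\Psi_r}=S_-^r\ket{\Psi}\}_{r=0}^{n-2}$ of $\cH_{n/2-1}$ (cf.~\eqref{eq:subspacenhalfminusone} and~\eqref{eq:descendantdef}), writing $U:=I^{\otimes k-1}\otimes Z^\dagger\otimes Z\otimes I^{\otimes n-k-1}$ for the target operator. Since $\tau$ commutes with $S_\pm$ and $S_3$, it preserves $\cH_{n/2-1}$; and as the $\ket{\Psi_r}$ are linearly independent and span $\cH_{n/2-1}$, it suffices to prove $\tau\ket{\Psi_r}=U\ket{\Psi_r}$ for every $r$, which then yields $\tau|_{\cH_{n/2-1}}=U|_{\cH_{n/2-1}}$.

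First I would establish the base case $r=0$, i.e.\ $\tau\ket{\Psi}=U\ket{\Psi}$, directly from the MPS representation~\eqref{eq:amatrixmps}. Applying $\tau=(k\ k+1)$ to $\ket{\Psi}=\sum_{i_1\cdots i_n}\tr(A_{i_1}\cdots A_{i_n}X)\ket{i_1\cdots i_n}$ and relabelling the summation indices shows that $\tau$ replaces the adjacent product $A_{j_k}A_{j_{k+1}}$ by $A_{j_{k+1}}A_{j_k}$. The phase commutation relation~\eqref{eq:neighboringtransposition} of Lemma~\ref{lem:permutationaction}, namely $A_cA_b=\omega^c\overline{\omega}^b A_bA_c$, then converts this swap into the scalar factor $\omega^{j_{k+1}}\overline{\omega}^{j_k}$ multiplying the original coefficient. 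Since $Z^\dagger\ket{j}=\overline{\omega}^{j}\ket{j}$ and $Z\ket{j}=\omega^{j}\ket{j}$, this factor is precisely the eigenvalue of $U$ on $\ket{j_1\cdots j_n}$, giving $\tau\ket{\Psi}=U\ket{\Psi}$. This is the content of the diagrammatic identity~\eqref{eq:braidingmps}.

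For general $r$, I would use that $\tau$ commutes with $S_-$ (equation~\eqref{eq:permutationinvariancesplus}) to write
\begin{align}
\tau\ket{\Psi_r}=\tau S_-^r\ket{\Psi}=S_-^r\tau\ket{\Psi}=S_-^r U\ket{\Psi}\ ,
\end{align}
so the claim reduces to the push-through identity $S_-^r U\ket{\Psi}=U S_-^r\ket{\Psi}$. Here I would invoke the MPO compatibility relation~\eqref{eq:mpozcompat} of Lemma~\ref{lem:permutationaction}, in its diagrammatic form~\eqref{eq:zoperatorsmpolem}: the two-site local tensor of the $S_-$-MPO commutes with $Z^\dagger\otimes Z$ on the physical legs at sites $k,k+1$ for every choice $a,b$ of its virtual boundary labels, while on all other sites the MPO acts as the identity and trivially commutes with $I$. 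Since $S_-^r$ is the stack of $r$ identical MPO layers built from this tensor, applying the local relation once per layer transports $U$ from just above $\ket{\Psi}$ to the top of the network, yielding $S_-^r U\ket{\Psi}=US_-^r\ket{\Psi}=U\ket{\Psi_r}$.

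Combining the two displays gives $\tau\ket{\Psi_r}=U\ket{\Psi_r}$ for all $r$, and hence the claim. The main obstacle is the careful bookkeeping in the push-through step: one must check that the local identity~\eqref{eq:mpozcompat}, stated for the two-site MPO with fixed virtual labels, can be applied independently in each of the $r$ layers and is compatible with the contraction of the virtual indices between successive layers and with the boundary tensors, so that the net effect is exactly the relocation of $Z^\dagger\otimes Z$ across the whole MPO with no residual phases. Everything else is routine index relabelling together with the spanning property of the descendant basis.
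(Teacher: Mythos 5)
Your proof is correct and follows essentially the same route as the paper's: verify the identity on the descendants $S_-^r\ket{\Psi}$ by commuting $\tau$ through the $r$ MPO layers via $[\tau,S_-]=0$ (the paper's~\eqref{eq:braidingandmpo}), applying the braiding relation~\eqref{eq:neighboringtransposition}/\eqref{eq:braidingmps} at the level of $\ket{\Psi}$, and pushing $Z^{\dagger}\otimes Z$ back up layer by layer via~\eqref{eq:mpozcompat}/\eqref{eq:zoperatorsmpolem} --- exactly the three diagrammatic steps of the paper's proof, and the bookkeeping concern you flag is resolved precisely because~\eqref{eq:mpozcompat} holds for every choice of virtual boundary labels $a,b$, i.e.\ as an identity of the open-legged two-site MPO tensor, so it applies independently in each layer. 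One side remark in your preamble is false but harmless: $\tau$ does \emph{not} preserve $\cH_{n/2-1}$ (indeed $\tau\ket{\Psi}=U\ket{\Psi}$ is the one-magnon state with the coefficients at sites $k$ and $k+1$ swapped, hence not proportional to $\ket{\Psi}$ and therefore not in $\cH_{n/2-1}$); the lemma asserts only the pointwise equality $\tau v=Uv$ for $v\in\cH_{n/2-1}$, which is what your argument --- like the paper's --- actually establishes, without needing invariance.
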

\begin{proof}
It suffices to check that $\tau S_-^s\ket{\Psi}= (I^{\otimes k-1}\otimes Z^{\dagger}\otimes Z\otimes I^{\otimes n-k-1})S_-^s\ket{\Psi}$. This follows immediately from Lemma~\ref{lem:permutationaction}. 
A diagrammatic proof of the steps involved can be given as follows (illustrated for $s=3$):
\begin{align}
\includegraphics[width=11cm]{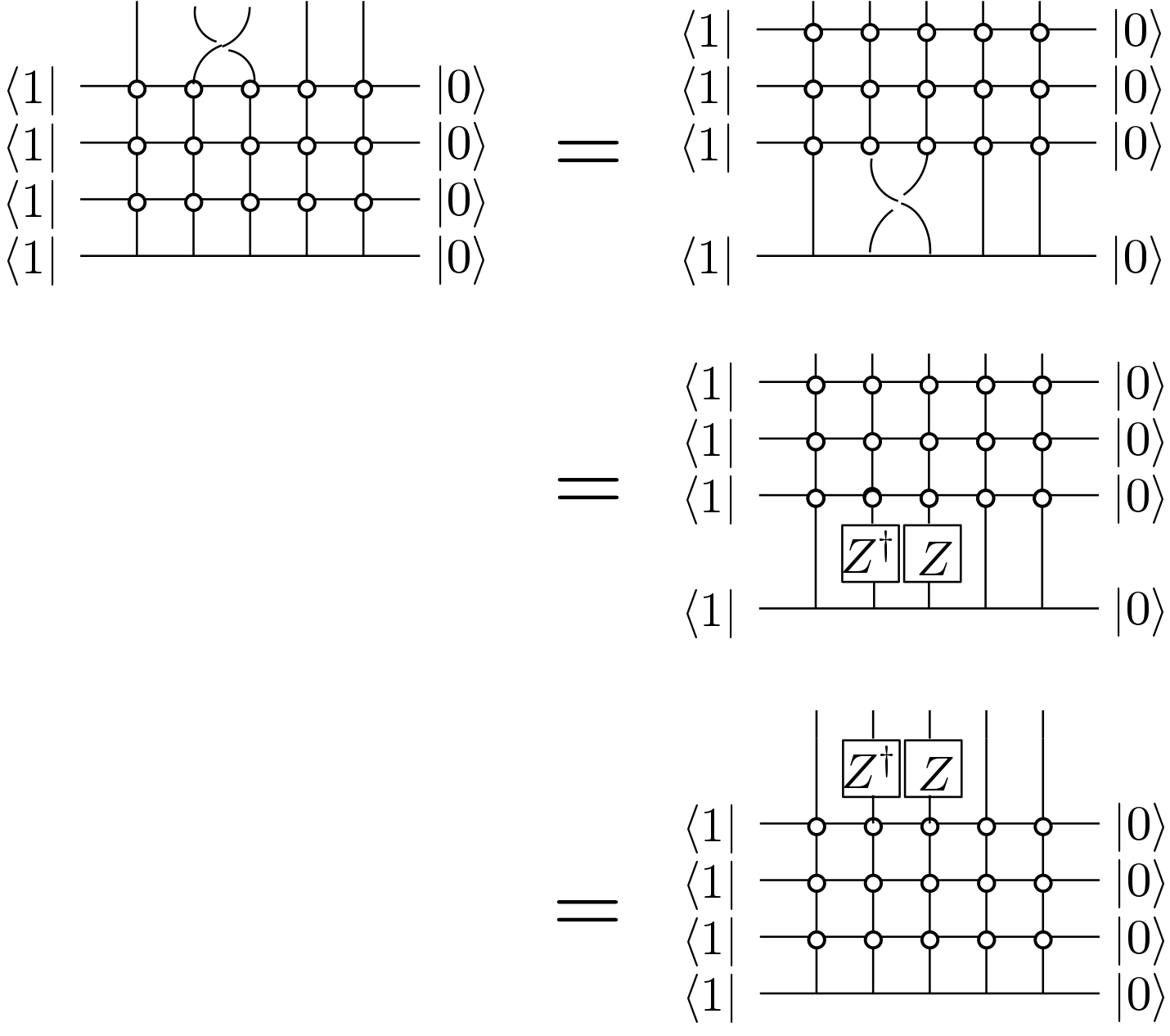}
\end{align}
Here we used~\eqref{eq:braidingandmpo} $s$~times in the first identity, equation~\eqref{eq:braidingmps} in the second identity, and equation~\eqref{eq:zoperatorsmpolem} (applied $s$~times) in the last step. 
\end{proof}

An immediate and crucial consequence of Lemma~\ref{lem:permutations} and the unitarity of $Z$ is the fact that matrix elements of an operator acting on $d$~arbitrary sites can be related to matrix elements of a  local operator on the $d$~first sites.  To express this concisely, we use the following notation: suppose $F=F_1\otimes\cdots \otimes F_d\in \cB((\mathbb{C}^2)^{\otimes d})$ is a tensor product operator and $A=\{a_1<\cdots <a_d\}\subset [n]$ a subset of $d=|A|$ (ordered) sites. Then we write $F_A\otimes \identityoperator_{[n]\backslash A}\in \cB((\mathbb{C}^2)^{\otimes n})$ for the operator acting as $F_k$ on site $a_k$, for $k\in [d]$. By linearity, this definition extends to general (not necessarily product) operators $F\in \cB((\mathbb{C}^2)^{\otimes d})$. Note that if $A=[d]$ are the first $d$ sites, then $F_A\otimes \identityoperator_{[n]\backslash A}=F\otimes I^{\otimes n-d}$. 

\begin{lemma}\mylabel{lem:braidingrelation}
Consider the magnon states $\ket{\Psi_\ell}=S_-^\ell\ket{\Psi}$ and let $r,s\in \{0,\ldots,n-2\}$ be arbitrary.  
Suppose $F\in\cB((\mathbb{C}^2)^{\otimes d})$ acts on a subset $A\subset [n]$ of $d=|A|$ sites. Then
\begin{align}
\bra{\Psi_r}(F_A\otimes \identityoperator_{[n]\backslash{A}})\ket{\Psi_s}&=
\bra{\Psi_r}(\tilde{F}_{[d]}\otimes I^{\otimes n-d})\ket{\Psi_s} \mylabel{eq:permutetobeginning}
\end{align}
where $\tilde{F}\in \cB((\mathbb{C}^2)^{\otimes d})$ is given by 
\begin{align}
\tilde{F}=((Z^{\dagger})^{a}\otimes (Z^{\dagger})^{a+1} \cdots \otimes (Z^{\dagger})^{a+d})  F
(Z^{a}\otimes Z^{a+1} \cdots \otimes Z^{a+d})\ 
\end{align}
where $a=(\min A) -1$. 

More generally, if $B\subset [n]$ is a subset of size $b=|B|$ located ``to the right of $A$'' (i.e., if $\min B>\max A$) and $G\in \cB((\mathbb{C}^2)^{\otimes b})$, then 
\begin{align}
\bra{\Psi_r}(F_A\otimes G_B\otimes  \identityoperator_{[n]\backslash (A\cup B)})\ket{\Psi_s}&=
\bra{\Psi_r}(\tilde{F}_{[d]}\otimes G_B\otimes \identityoperator_{[n]\backslash ([d]\cup B)})\ket{\Psi_s}\mylabel{eq:permutetobeginningwithop}
\end{align}
Furthermore, the analogous statement holds when $G$ is permuted to the right, but with $Z$ replaced by $Z^{\dagger}$.
\end{lemma}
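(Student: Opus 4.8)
The plan is to reduce the general matrix element $\bra{\Psi_r}(F_A\otimes\identityoperator_{[n]\backslash A})\ket{\Psi_s}$ to one in which the support has been transported to the first $d$ sites, using Lemma~\ref{lem:permutations} as the engine. The whole point of that lemma is that, although a nearest-neighbor transposition $\tau=(k\ k+1)$ does not commute in any simple way with a generic operator, its \emph{action on the magnon states} is that of the diagonal single-site unitary $I^{\otimes k-1}\otimes Z^\dagger\otimes Z\otimes I^{\otimes n-k-1}$. First I would fix an order-preserving permutation $\pi\in S_n$ mapping $A=\{a_1<\cdots<a_d\}$ onto $\{1,\ldots,d\}$ while preserving the relative order of all remaining sites, and write $\pi$ as a product of nearest-neighbor transpositions. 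Conjugation by $\pi$ transports supports, so as ordinary operators $\pi(F_A\otimes\identityoperator_{[n]\backslash A})\pi^{-1}$ acts on the first $d$ sites; the content of the argument is that, once sandwiched between magnon states, the associated relabeling is accompanied by conjugation by powers of $Z$.

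Second, I would carry out the transport one transposition at a time. Inserting $\tau^{-1}\tau=\identityoperator$ and using $[\pi,S_-]=0$ (equation~\eqref{eq:permutationinvariancesplus}), so that the states $\ket{\Psi_r},\ket{\Psi_s}$ stay within the family of magnon states, Lemma~\ref{lem:permutations} lets me replace each $\tau$ acting on $\ket{\Psi_s}$ (and, via the adjoint relation $\bra{\Psi_r}\tau=(\tau\ket{\Psi_r})^\dagger$, each $\tau$ acting on $\bra{\Psi_r}$) by the corresponding product $Z^\dagger\otimes Z$ of single-site diagonal unitaries. Because these unitaries are diagonal and single-site, they commute past every tensor factor of $F$ on which they do not act and conjugate the factor on which they do, converting a factor $F_k$ into $Z^\dagger F_k Z$ (or $Z F_k Z^\dagger$ when the swap slides it in the opposite direction), while the unitarity of $Z$ makes the spurious factors $Z^\dagger Z=I$ on the passive sites cancel. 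Iterating this as each tensor factor of $F$ is slid from $a_k$ down to $k$, the surviving diagonal factors accumulate into exactly the stated dressing $\tilde F=((Z^\dagger)^{a}\otimes(Z^\dagger)^{a+1}\otimes\cdots)\,F\,(Z^{a}\otimes Z^{a+1}\otimes\cdots)$ with $a=(\min A)-1$, the power attached to each factor being fixed by the number of sites it has been moved past.

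For the more general identities~\eqref{eq:permutetobeginningwithop}, I would choose the order-preserving permutation so as to move only the sites of $A$ to $\{1,\ldots,d\}$ while holding the sites of $B$ fixed. This is possible precisely because $B$ lies entirely to the right of $A$ (i.e.\ $\min B>\max A$), so the transpositions transporting $A$ never cross $B$, and the factor $G_B$ is untouched: its tensor factors commute with all the diagonal single-site unitaries, which act only on sites in, or to the left of, $A$. The variant in which $G$ is instead permuted to the right is identical, except that the transpositions now slide factors in the opposite direction; by Lemma~\ref{lem:permutations} this interchanges the roles of $Z$ and $Z^\dagger$, giving the stated replacement.

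The main obstacle is not conceptual but is the combinatorial bookkeeping: tracking the exact power of $Z$ each nontrivial tensor factor of $F$ acquires, checking that the identity factors on the sites lying between consecutive connected components of $A$ are conjugated trivially (so that they genuinely drop out), and confirming that the surviving powers assemble into the advertised increasing exponents. Care is needed because $Z$ is not Hermitian, so the order of the transpositions and the placement of daggers (arising from the adjoint relation on the bra) must be tracked consistently; it is exactly the cancellation of the $Z^\dagger Z=I$ pairs that collapses the accumulated product to the clean operator $\tilde F$.
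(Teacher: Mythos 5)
Your proposal follows essentially the same route as the paper: the paper also proves the single-site case of~\eqref{eq:permutetobeginningwithop} directly from Lemma~\ref{lem:permutations} (one adjacent transposition at a time, shown diagrammatically), then iterates over the tensor factors of $F$ and extends by linearity, with the same observation that transporting $A$ never crosses $B$ when $\min B>\max A$ and that permuting to the right interchanges $Z$ and $Z^\dagger$. So the engine, the induction structure, and the treatment of the general claims all coincide.

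One caveat on the bookkeeping you explicitly defer: if you carry it out, each nontrivial factor $F_j$ originally at site $a_j$ acquires the dressing $Z^{m_j}F_j (Z^\dagger)^{m_j}$ with $m_j=a_j-j$ equal to its displacement (note the direction: by Lemma~\ref{lem:permutations}, $\bra{\Psi_r}F_{(k+1)}\ket{\Psi_s}=\bra{\Psi_r}\bigl(ZFZ^\dagger\bigr)_{(k)}\ket{\Psi_s}$ for a single step, since the bra contributes the adjoint of the diagonal unitary). In particular, for connected $A$ every factor gets the \emph{same} exponent $a=(\min A)-1$, so the ``advertised increasing exponents'' $a,a+1,\ldots,a+d$ that you claim the powers assemble into do not actually emerge from your own displacement rule (they also list $d+1$ powers for $d$ slots). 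This discrepancy is in the paper's printed formula for $\tilde F$ as well, not just in your write-up, and it is harmless for every downstream use: in the proof of Theorem~\ref{thm:magnoncodemain} only the facts that $\tilde F$ is obtained from $F$ by conjugation with a tensor product of single-site diagonal unitaries — hence is $d$-local with $\|\tilde F\|=\|F\|$ — are used. But a careful version of your proof should state the dressing as the displacement-dependent conjugation rather than asserting the printed exponent pattern, since as literally written the two conflict.
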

Succintly, equation~\eqref{eq:permutetobeginningwithop} can be represented as follows in the case where $A$ consists of a connected set of sites (and $r=s=0$):
\begin{center}
\includegraphics[width=12cm]{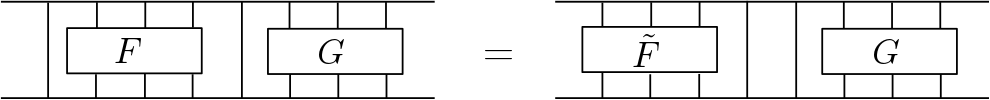}
\end{center}
We emphasize, however, the analogous statement is true for the more general case where $A$ is a union of disconnected components. 

\begin{proof}
The proof of equation~\eqref{eq:permutetobeginningwithop} for a single-site operator $F$ is immediate. We have 
(illustrated for $r=s=0$):
\begin{center}
\includegraphics[width=9cm]{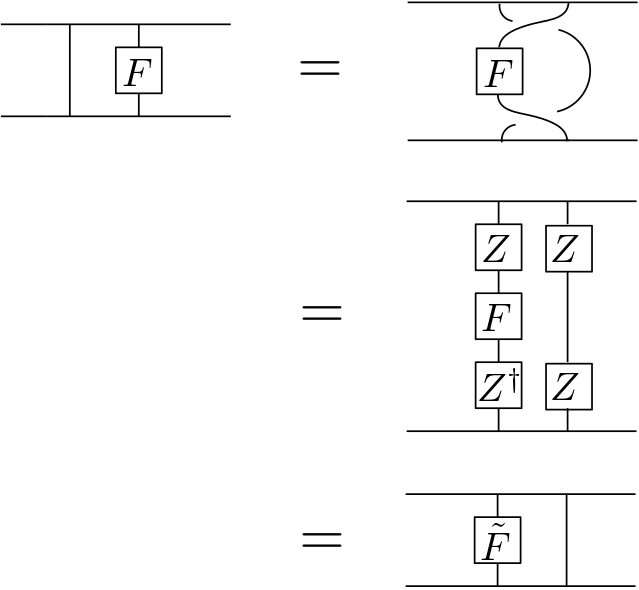}
\end{center}
Applying~\eqref{eq:permutetobeginningwithop}  (with $d=1$) iteratively then shows that the claim~\eqref{eq:permutetobeginningwithop} also holds for any tensor product operator $F=F_1\otimes\cdots\otimes F_d$. The general claim then follows by decomposing an arbitrary operator~$F$ into tensor products and using linearity.
\end{proof}

\subsection{The transfer operator of the magnon states and its Jordan structure\label{sec:jordanstructuremagnon}}
We are ultimately interested in matrix elements $\bra{\Psi_r} (F\otimes I^{\otimes (n-d)})\ket{\Psi_s}$ 
where $F\in\cB((\mathbb{C}^2)^{\otimes d})$ acts on $d$~sites. Using the compressed representation from Lemma~\ref{lem:compressedrep}, we may write these as 
\begin{align}
\bra{\Psi_r} (F\otimes I^{\otimes (n-d)})\ket{\Psi_s}&=
\bra{\Psi(\tilde{O}_r\diamond A,\tilde{X}_r\otimes X,n)}
(F\otimes I^{\otimes (n-d)})
\ket{\Psi(\tilde{O}_s\diamond A,\tilde{X}_s\otimes X,n)}\ .
\end{align}

We are thus interested in the (``overlap'') transfer
operator
\begin{align}
E_{r,s}&=E(\tilde{O}_r\diamond A,\tilde{O}_s\diamond A)\qquad\textrm{ for }r,s\in [n]\ .
\end{align}
For convenience, let us also set 
\begin{align}
E_{0,s}&=E(A,\tilde{O}_s\diamond A)\qquad\textrm{ for }s\in [n]\ ,\\
E_{r,0}&=E(\tilde{O}_r\diamond A,A)\qquad\textrm{ for }r\in [n]\ ,\\
E_{0,0}&=E(A,A)\ .
\end{align}
Observe that $E_{0,0}$ is the transfer operator~$E$ of the MPS~$\ket{\Psi}$, whereas $E_{0,s}$ is the transfer operator of $S_-^s\ket{\Psi}$. We will order the tensor factors such that the virtual spaces of the original MPS are the first two factors. Then $E_{r,s}\in\cB(\mathbb{C}^2\otimes\mathbb{C}^2\otimes \mathbb{C}^{r+1}\otimes\mathbb{C}^{s+1})$. Our main goal in this section is to show the following:
 
\begin{theorem}\label{thm:mainjordanmagnon}
Let $r,s\in \{0,\ldots,n\}$ be arbitrary. 
Then the operator $E_{r,s}\in \cB(\mathbb{C}^2\otimes\mathbb{C}^2\otimes \mathbb{C}^{r+1}\otimes\mathbb{C}^{s+1})$ has spectrum $\mathsf{spec}(E_{r,s})=\{1,\omega,\overline{\omega}\}$ 
(where $1$ has multiplicity $2\cdot (r+1)(s+1)$ and $\omega,\overline{\omega}$ each have multiplicity 
$(r+1)(s+1)$). The size $h^*$ of the largest Jordan block in $E_{r,s}$ is bounded by
\begin{align}
h^*\leq \min\{r,s\}+2\ .\label{eq:hstarupperboundx}
\end{align}
\end{theorem}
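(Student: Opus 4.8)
The plan is to make the overlap transfer operator completely explicit in a Jordan basis of the bond space, where both the spectrum and the appearance of $\min\{r,s\}$ in the Jordan bound become transparent. Using the compressed representation of Lemma~\ref{lem:compressedrep}, the MPS tensor of $\ket{\Psi_s}$ is $\tilde O_s\diamond A$, whose physical matrices are $(\tilde O_s\diamond A)_0 = I_{s+1}\otimes A_0 + J_+^{(s)}\otimes A_1$ and $(\tilde O_s\diamond A)_1 = I_{s+1}\otimes A_1$, with $J_+^{(s)}$ the spin-$s/2$ raising operator on $\mathbb{C}^{s+1}$. Substituting into $E_{r,s}=\sum_i \overline{(\tilde O_r\diamond A)_i}\otimes(\tilde O_s\diamond A)_i$ and regrouping the two spin factors ($R=\mathbb{C}^{r+1}$, $S=\mathbb{C}^{s+1}$) and the two bond factors ($\mathbb{C}^2\otimes\mathbb{C}^2$) separately, I obtain
\[ E_{r,s} = I_R\otimes I_S\otimes P + J_+^{(r)}\otimes I_S\otimes \tilde Q + I_R\otimes J_+^{(s)}\otimes Q + J_+^{(r)}\otimes J_+^{(s)}\otimes W, \]
where $P=E_{0,0}$ is the base transfer operator and $Q,\tilde Q,W$ are fixed operators on the four-dimensional bond space computed directly from $A_0,A_1$ (with $W$ diagonal and $Q,\tilde Q$ nilpotent ladder operators).

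For the spectrum, the point is that $J_+^{(r)},J_+^{(s)}$ are strictly weight-raising in the angular-momentum bases of $\mathbb{C}^{r+1},\mathbb{C}^{s+1}$. Hence, ordering the product basis by the weight pair $(m_R,m_S)$, $E_{r,s}$ is block upper-triangular with every diagonal block equal to $P$. Therefore $\mathsf{spec}(E_{r,s})=\mathsf{spec}(P)$ with all multiplicities multiplied by $(r+1)(s+1)$. A direct $4\times 4$ computation gives $\mathsf{spec}(P)=\{1,1,\omega,\overline{\omega}\}$, with eigenvalue $1$ carrying a single Jordan block of size $2$, which yields exactly the claimed multiplicities $2(r+1)(s+1)$, $(r+1)(s+1)$, $(r+1)(s+1)$.

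For the Jordan bound I would pass to the Jordan basis $\{f_1,f_2,f_3,f_4\}$ of $P$ on the bond space (eigenvalue $1$ on $f_1$ and, with $(P-1)f_2=f_1$, on $f_2$; eigenvalue $\omega$ on $f_3$; eigenvalue $\overline{\omega}$ on $f_4$). In this basis $Q,\tilde Q$ only move $f_2\to f_3,f_4$ and $f_3,f_4\to f_1$, while $W$ is diagonal, so $E_{r,s}$, viewed as a $4\times 4$ array of operators on $\mathbb{C}^{r+1}\otimes\mathbb{C}^{s+1}$, becomes block upper-triangular in the order $f_1,f_3,f_4,f_2$:
\[ E_{r,s}\ \cong\ \begin{pmatrix} U & \omega J_S & \overline{\omega} J_R & I \\ 0 & \omega U & 0 & J_R \\ 0 & 0 & \overline{\omega} U & J_S \\ 0 & 0 & 0 & U \end{pmatrix}, \qquad U = I + J_R J_S, \]
with $J_R=J_+^{(r)}\otimes I_S$, $J_S=I_R\otimes J_+^{(s)}$, and $K:=J_R J_S=J_+^{(r)}\otimes J_+^{(s)}$. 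Since $K^k=(J_+^{(r)})^k\otimes(J_+^{(s)})^k$ is nonzero exactly for $k\le\min\{r,s\}$, $K$ is nilpotent of index $\min\{r,s\}+1$, so each of $U,\omega U,\overline{\omega} U$ has largest Jordan block $\min\{r,s\}+1$. By the primary decomposition, the eigenvalues $\omega$ and $\overline{\omega}$ each arise from a single diagonal block and contribute Jordan blocks of size exactly $\min\{r,s\}+1$. The eigenvalue $1$ arises from the two blocks $f_1,f_2$ coupled through the $(f_1,f_2)$-entry; I would show its structure is that of $T=\begin{pmatrix}U & B\\ 0 & U\end{pmatrix}$, where $B$ commutes with $K$ and is invertible, so that $(T-1)^j=\begin{pmatrix}K^j & jK^{j-1}B\\ 0 & K^j\end{pmatrix}$ vanishes precisely at $j=\min\{r,s\}+2$. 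This gives $h^*=\min\{r,s\}+2$, and in particular the stated inequality.

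The main obstacle is this last step: the two eigenvalue-$1$ blocks sit at opposite ends of the triangular order, separated by the $\omega$- and $\overline{\omega}$-blocks, and $f_2$ reaches $f_1$ both directly and through the detours $f_2\to f_3\to f_1$ and $f_2\to f_4\to f_1$. I would handle this with a Sylvester-type similarity that decouples the $\omega$- and $\overline{\omega}$-blocks (solvable because $1,\omega,\overline{\omega}$ are distinct), checking that the detours compose to $J_S J_R=K$ times resolvent factors $(\omega U-I)^{-1},(\overline{\omega}U-I)^{-1}$, which are polynomials in $K$; hence the effective coupling is $B=I+(\text{multiple of }K)$, exactly the form needed for the block computation above to be insensitive to the correction. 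A cleaner alternative is to bound the nilpotency index directly by analysing $\ker(E_{r,s}-1)^j$. I would also record that distinctness of $1,\omega,\overline{\omega}$, used in the decoupling, requires $n\ge 3$ (the degenerate small-$n$ cases being handled separately).
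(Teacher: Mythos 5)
Your proposal is correct, and while the spectrum part coincides with the paper's argument, your route to the Jordan bound is genuinely different. For the spectrum, both proofs rest on the same triangularity mechanism: the paper works with~\eqref{eq:ersdefinitions} in the full tensor-product basis and observes that every term except $D=\overline{A_1}\otimes A_1\otimes I\otimes I$ is strictly lower triangular, so the eigenvalues are read off the diagonal of~$D$; your weight-graded block-triangularization with diagonal blocks $P=E_{0,0}$ is the same observation packaged per $4\times 4$ block. For the bound~\eqref{eq:hstarupperboundx}, however, the paper never touches a Jordan basis: it introduces the set~$\cX$ of polynomial combinations of the strictly triangular generators, proves $D\cX=\cX D$ and $\cX^m=\{0\}$ for $m\geq \min\{r,s\}+2$, and concludes via a binomial expansion that the minimal polynomial divides $[(x-1)(x-\omega)(x-\overline{\omega})]^{\min\{r,s\}+2}$. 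You instead compute the explicit operator-block matrix in the Jordan basis of~$P$ (your entries are correct; I checked them against the paper's formulas for $E$, $E_{\sigma_\pm}$, $E_{\sigma_+\sigma_-}$, using $f_1=\ket{11}$, $f_2=\ket{00}$, $f_3=\ket{01}$, $f_4=\ket{10}$) and decouple by Sylvester equations. The one step you left as a plan does check out: solving $\omega U Y_3-Y_3U=J_R$ and $\overline{\omega}UY_4-Y_4U=J_S$ inside the commutative algebra generated by $J_R,J_S$ gives $Y_3=(\omega-1)^{-1}(I+K)^{-1}J_R$ and $Y_4=(\overline{\omega}-1)^{-1}(I+K)^{-1}J_S$, and since $\frac{\omega}{\omega-1}+\frac{\overline{\omega}}{\overline{\omega}-1}=1$, the effective coupling is $B=I-K(I+K)^{-1}=(I+K)^{-1}$, which is invertible, commutes with $K$, and equals $I$ modulo $K$, exactly as your block computation requires. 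What your route buys is strictly more information: it pins down the Jordan structure exactly, with largest block $\min\{r,s\}+2$ at eigenvalue $1$ and $\min\{r,s\}+1$ at $\omega$ and $\overline{\omega}$, i.e., equality in~\eqref{eq:hstarupperboundx}, which the paper's annihilating-polynomial argument cannot yield. What the paper's route buys is brevity and robustness: no resolvents or Sylvester machinery, and no case analysis. Your caveat about $n\geq 3$ is well taken, and in fact applies to the paper too: for $n=2$ one has $\omega=\overline{\omega}=-1$, the stated multiplicities degenerate, and the paper's expansion then only bounds blocks at $-1$ by $2(\min\{r,s\}+2)$, so both proofs implicitly assume $1,\omega,\overline{\omega}$ distinct -- you are the more careful of the two in flagging this.
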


 To prove this theorem, we first rewrite the operator $E_{r,s}$.  We have 
\begin{align}
E_{r,s}&=E(\tilde{O}_r\diamond A,\tilde{O}_s\diamond A)=
E(A,\tilde{O}_r^\dagger\diamond \tilde{O}_s\diamond A)\ ,
\end{align}
where $\tilde{O}_r^\dagger$ is obtained from the defining matrices $\{\tilde{O}^{\alpha,\beta}\}_{\alpha,\beta}$ of $\tilde{O}$ by replacing $\tilde{O}^{\alpha,\beta}$ with its adjoint $(\tilde{O}^\dagger)^{\alpha,\beta}$. This amounts to replacing $\sigma_-$ by $\sigma_+$, or alternatively, swapping the indices in the defining matrices $\{(\tilde{O}_s)_{i,j}\}$ (cf.~\eqref{eq:btildeijdef}). That is,
\begin{align}
\begin{matrix}
(\tilde{O}_s)_{0,0}=(\tilde{O}_s)_{1,1}&=&I_{\mathbb{C}^{s+1}}\\
(\tilde{O}_s)_{1,0}&=&0\\
(\tilde{O}_s)_{0,1}&=&J_{+,s}
\end{matrix}\ ,\qquad\textrm{ and }\qquad 
\begin{matrix}
(\tilde{O^\dagger_r})_{0,0}=(\tilde{O^\dagger}_r)_{1,1}&=&I_{\mathbb{C}^{r+1}}\\
(\tilde{O^\dagger_r})_{1,0}&=&J_{+,r}\\
(\tilde{O^\dagger_r})_{0,1}&=&0
\end{matrix}\ .
\end{align}
where $J_{+,s}$ and $J_{+,r}$ are the raising operators in the spin-$s/2$ respectively the spin-$r/2$ representation, respectively. We conclude that
\begin{align}
E_{r,s}&=E\otimes I_{\mathbb{C}^{r+1}}\otimes I_{\mathbb{C}^{s+1}}+E_{\sigma_+}\otimes J_{+,r}\otimes I_{\mathbb{C}^{s+1}}+E_{\sigma_-}\otimes I_{\mathbb{C}^{r+1}}\otimes J_{+,s}+E_{\sigma_+ \sigma_-}\otimes J_{+,r}\otimes J_{+,s} \ ,
\end{align}
where $S_+$ are the  raising operator $S_+$ of the spin-$j$ representation with $j=r/2$ and $j=s/2$, respectively. Here
\begin{align}
E&=\proj{00}+\omega\proj{01}+\overline{\omega}\proj{10}+\proj{11}+\ket{11}\bra{00}\ ,\\
E_{\sigma_-}&=\ket{10}\bra{00}+\omega \ket{11}\bra{01}\ ,\\
E_{\sigma_+}&=\ket{01}\bra{00}+\overline{\omega}\ket{11}\bra{10}\ ,\\
E_{\sigma_+\sigma_-}&=\proj{00}+\overline{\omega}\proj{10}+\omega\proj{01}+\proj{11}\ , 
\end{align}
are the transfer operators of the MPS~$\ket{\Psi}$.  We can write down the transfer matrix $E_{r,s}$ more  explicitly as
\begin{align}
E_{r,s} &= \overline{A_0} \otimes A_0 \otimes I \otimes I +\overline{ A_1} \otimes A_1 \otimes I \otimes I + \overline{A_0}\otimes A_1\otimes I \otimes J_{+,s}\nonumber\\
& \ \ \ \ + \overline{A_1}\otimes A_0 \otimes J_{+,r} \otimes I + \overline{A_1}\otimes A_1 \otimes J_{+,r}\otimes J_{+,s}\ , \label{eq:ersdefinitions}
\end{align}
where
\begin{equation}
    A_0= \sigma_+ = \begin{pmatrix}0 & 0 \\ 1 & 0\end{pmatrix},\ \ \ \ \ \ \ \text{and}\ \ \ \ \ \ \ A_1 = \begin{pmatrix} 1 & 0 \\ 0 & \omega\end{pmatrix}
\end{equation}
are the defining tensors of the original state~$\ket{\Psi}$ (cf.~\eqref{eq:amatrixmps}). With this, we can give the proof of the above theorem as follows.

\begin{proof}[Proof of Theorem~\ref{thm:mainjordanmagnon}]
Observe that in the standard basis of the spin-$j$-representation, the raising operator $J_+$ is strictly lower diagonal. From~\eqref{eq:ersdefinitions} and the definition of $A_0$ and $A_1$, it follows that the transfer operator~$E_{r,s}$ is lower diagonal in the tensor product basis (consisting of these standard bases and the computational basis of~$\mathbb{C}^2$)   since each term in the sum is a tensor product of lower diagonal matrices. In fact, every term except 
\begin{align}
D \equiv \overline{A_1} \otimes A_1 \otimes I \otimes I
\end{align}
is strictly lower diagonal. Therefore, we see that the eigenvalues of $E_{r,s}$ are given by the diagonal entries of $D$, and consist of the eigenvalue $1$ with multiplicity $2(r+1)(s+1)$, and the eigenvalues $\omega$ and $\overline{\omega}$, both with multiplicity $(r+1)(s+1)$. Observe that $A_0$ and $A_1$  commute up to a factor of $\omega$, that is,
\begin{align}
A_1A_0 = \omega A_0A_1\ .\label{eq:acommut}
\end{align}
To shorten some of the expressions, let define
\begin{align}
N_1 =\overline {A_0} \otimes A_0\ ,\qquad N_2 = \overline{A_0} \otimes A_1\ ,\qquad \ N_3 = \overline{A_1} \otimes A_0\ ,  \qquad \text{and}\qquad   A = \overline{A_1} \otimes A_1\ .
\end{align}    
We note that each  $N_i$ is a nilpotent matrix of order $2$, i.e.,
\begin{align}
N_i^2 = 0\qquad\textrm{ for }i=1,2,3\label{eq:nilpotencyreln}
\end{align} since $A_0^2=0$. 
 Moreover, for the same reason and~\eqref{eq:acommut}, we have 
\begin{align}
N_2N_1 =N_1N_2 = N_3N_1 = N_1N_3= 0 \qquad \text{and}\qquad N_2N_3 = \omega^2 N_3 N_2\ .\label{eq:productrls}
\end{align}
equation~\eqref{eq:acommut} also implies that
\begin{align}
N_i A&=q_i(\omega) AN_i\qquad\textrm{ for}\qquad i=1,2,3\ ,\label{eq:comxpowy}
\end{align}
where $q_i(\omega)\in \{1,\omega,\overline{\omega}\}$. 
Now consider the transfer operator with its diagonal term removed, i.e.,
\begin{align} 
E_{r,s} - D = N_1 \otimes I \otimes I + N_2\otimes I \otimes J_{+,s} + N_3 \otimes J_{+,r} \otimes I + A \otimes J_{+,r}\otimes J_{+,s}\ .
\end{align}
Let $\mathbb{C}[\omega,\overline{\omega}]$ be the set of polynomials in $\omega$ and $\overline{\omega}$. Let us define the set\begin{align}
    \cX &= \bigg\{p_1(\omega,\overline{\omega})\, N_1 \otimes I \otimes I + p_2(\omega,\overline{\omega})\, N_2 \otimes I \otimes J_{+,s}\\ 
    &\qquad + p_3(\omega,\overline{\omega})\, N_3 \otimes J_{+,r} \otimes I + p_4(\omega,\overline{\omega})\, A \otimes J_{+,r} \otimes J_{+,s}\enskip \bigg|\enskip p_i\in \mathbb{C}[\omega,\overline{\omega}]\bigg\}\ 
\end{align}
such that $E_{r,s}-D\in\cX$. The key properties of $\cX$ which we need are the following:
\begin{enumerate}[(i)]
\item\label{eq:commutativityxone}
 If $X_1 \in \cX$, then $DX_1 = X_2D$ and $X_1D = DX_3$ for some $X_2,X_3\in \cX$.
 \item\label{eq:commutativityxtwo}
 The product of any $\min\{r,s\}+2$ operators in $\cX$ is equal to zero.
 \end{enumerate}
 Property~\eqref{eq:commutativityxone} follows immediately with the commutation relation~\eqref{eq:comxpowy}
  because $D=A\otimes I\otimes I$. Similarly, property~\eqref{eq:commutativityxtwo} follows from the nilpotency relation~\eqref{eq:nilpotencyreln}, the commutation relation~\eqref{eq:comxpowy} and the fact that 
  \begin{align}
  J_{+,r}^{\min\{r,s\}+2}=  J_{+,s}^{\min\{r,s\}+2}=0\ .
  \end{align}
  We can write these two properties succintly as equalities of sets, that is,
 \begin{align}
 D\cX&=\cX D\ ,\qquad\textrm{ and }\qquad \label{eq:dcxcommutativity}\\
\cX^m&=\{0\}\qquad\textrm{ for all }m\geq \min\{r,s\}+2\ ,\label{eq:exponentialxmvanish}
 \end{align}
 where e.g., $\cX^2=\{X_1X_2\ | X_1,X_2\in\cX\}$.   Let us write $D_\lambda = D-\lambda I$. Then 
 \begin{align}
 E_{r,s}-\lambda I=D_\lambda+(E_{r,s}-D)\in D_\lambda+\cX\ .
 \end{align}
 In particular, for $\ell,m,n\in\mathbb{N}_0$ we have 
 \begin{align}
  (E_{r,s} - I)^\ell(E_{r,s} - \omega I)^m(E_{r,s} - \overline{\omega}I)^n 
& \in \left(D_1 + \cX\right)^\ell\left(D_\omega + \cX\right)^m\left(D_{\overline{\omega}} + \cX\right)^n\\
  &\subseteq \sum_{\substack{a\in \{0,\ldots,\ell\}\\
  b\in \{0,\ldots,m\}\\
  c\in \{0,\ldots,n\}}}D_1^aD_\omega^bD_{\overline{\omega}}^c\cX^{(\ell-a)+(m-b)+(n-c)}\ ,
 \end{align}
 where in the last step, we used
 the binomial expansion,
 the pairwise commutativity of the matrices
 $D_1$, $D_\omega$ and $D_{\overline{\omega}}$, and~\eqref{eq:dcxcommutativity}. 
Since $D_1D_\omega D_{\overline{\omega}} = 0$, the non-zero terms in the expansion must have at least one of $a,b,c$ equal to zero. Choosing
\begin{align}
\ell = m = n = \min\{r,s\}+2\ ,
\end{align}
the exponent $(\ell-a)+(m-b)+(n-c)$ is lower bounded by $\min\{r,s\}+2$ 
for any such triple $(a,b,c)$. We conclude with~\eqref{eq:exponentialxmvanish}
that 
\begin{align}
\sum_{\substack{
  a\in \{0,\ldots,\ell\}\\
  b\in \{0,\ldots,m\}\\
  c\in \{0,\ldots,n\}  
  }} D_1^aD_\omega^bD_{\overline{\omega}}^c\cX^{(\ell-a)+(m-b)+(n-c)}&=\{0\}\ ,
\end{align}
and thus
\begin{align}
(E_{r,s}-I)^{\min\{r,s\}+2}(E_{r,s}-\omega I)^{\min\{r,s\}+2}(E_{r,s}-\overline{\omega}I)^{\min\{r,s\}+2}=0\ .
\end{align}
Therefore the minimal polynomial of $E_{r,s}$ must divide~$p(x)=[(x-1)(x-\omega)(x-\overline{\omega})]^{\min\{r,s\}+2}$. Thus the Jordan blocks of $E_{r,s}$ are bounded above in size by $\min\{r,s\}+2$, as claimed. 
\end{proof}

\subsection{Matrix elements of magnon states\label{sec:matrixelementsmagnon}}
With the established bounds on the Jordan structure of $E_{r,s}$, we can derive  upper bounds on overlaps of magnon states. Recall that $\ket{\Psi_r}=S_-^r\ket{\Psi}$ for $r=0,\ldots,n-2$ and $\ket{\psi_r}$ is its normalized version (cf.~\eqref{eq:psipdefinitionbasic}) .

\begin{theorem}\mylabel{thm:upperboundoverlap}
Let $F\in\cB((\mathbb{C}^2)^{\otimes d})$ be such that $\|F\|\leq 1$. Let $r\neq s$. 
Then 
\begin{align}
|\bra{\psi_r} (F\otimes I^{\otimes (n-d)})\ket{\psi_s}|=O\left(\frac{d}{n^{|s-r|/2}}\right)\ .
\end{align}
\end{theorem}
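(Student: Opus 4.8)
The plan is to reduce the estimate to the combined transfer operator $E_{r,s}$ of the two magnon states and then exploit its Jordan structure from Theorem~\ref{thm:mainjordanmagnon}. First, since $\bra{\psi_r}(F\otimes I^{\otimes(n-d)})\ket{\psi_s}=\overline{\bra{\psi_s}(F^{\dagger}\otimes I^{\otimes(n-d)})\ket{\psi_r}}$ and $\|F^{\dagger}\|=\|F\|$, I may assume without loss of generality that $r<s$ and set $m:=s-r=|s-r|$. Using the compressed MPS/MPO representation of Lemma~\ref{lem:compressedrep}, the unnormalized matrix element $\bra{\Psi_r}(F\otimes I^{\otimes(n-d)})\ket{\Psi_s}$ can be written as $\bbra{L}(E_{r,s})_F\,E_{r,s}^{\,n-d}\kket{R}$, where $(E_{r,s})_F$ carries the insertion of $F$ over the first $d$ sites, $E_{r,s}^{\,n-d}$ is the plain transfer operator over the remaining contiguous block, and $\kket{R},\bbra{L}$ are the rank-one boundary vectors coming from $\tilde X_r\otimes X$ and $\tilde X_s\otimes X$. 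A Cauchy--Schwarz split then gives
\begin{align}
\big|\bra{\Psi_r}(F\otimes I^{\otimes(n-d)})\ket{\Psi_s}\big|\;\le\;\big\|(E_{r,s})_F^{\dagger}\kket{L}\big\|\cdot\big\|E_{r,s}^{\,n-d}\kket{R}\big\|\ .
\end{align}

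Next I would control the two factors separately. For the bulk factor, Theorem~\ref{thm:mainjordanmagnon} shows that $E_{r,s}$ has spectral radius $1$ (its spectrum is $\{1,\omega,\overline{\omega}\}$) and largest Jordan block of size $h^{*}\le\min\{r,s\}+2=r+2$, so Lemma~\ref{lem:normscalingtransferop} yields $\|E_{r,s}^{\,n-d}\|=O\big((n-d)^{\,r+1}\big)$. The key point is that the decay in $n$ does \emph{not} come from the transfer operator, whose norm merely grows polynomially, but from the normalization of the magnon states: from \eqref{eq:psipdefinitionbasic} one has $\|\Psi_r\|^{2}=n\,(n-2)!\,r!/(n-2-r)!=\Theta(n^{\,r+1})$, hence $\|\Psi_r\|\,\|\Psi_s\|=\Theta\big(n^{(r+s)/2+1}\big)$. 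Dividing the bulk growth $(n-d)^{\,r+1}$ by this normalization already produces the factor $n^{-(s-r)/2}=n^{-m/2}$, i.e.\ the correct power of $n$.

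The remaining, and main, difficulty is to show that the $F$-region factor $\|(E_{r,s})_F^{\dagger}\kket{L}\|$ contributes only (linearly) in $d$. The naive estimate via the analogue of Lemma~\ref{lem:efboundsimplified}, namely $\|(E_{r,s})_F^{\dagger}\kket{L}\|\le\|F\|\sqrt{\|E_{r,r}^{d}\|\,\|E_{s,s}^{d}\|}$, is too lossy, since the self-transfer operators $E_{r,r},E_{s,s}$ carry Jordan blocks up to size $r+2,s+2$ and would introduce a spurious power of $d$. Instead I would exploit the magnetization structure: the states $\ket{\Psi_r},\ket{\Psi_s}$ lie in sectors with $r+1$ and $s+1$ lowered spins, and since $F\otimes I^{\otimes(n-d)}$ acts trivially off the block, only the component $F^{(m)}$ of $F$ changing the block magnetization by exactly $m$ survives, with $\|F^{(m)}\|\le\|F\|\le1$. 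Decomposing both states across the cut $[d]\,|\,\{d+1,\dots,n\}$ by the number of lowered spins inside the block, the dominant contribution comes from the minimal block occupancy, where the $\min\{r,s\}=r$ ``paired'' raisings are absorbed into the bulk (accounting for the full $n^{\,r+1}$) and exactly $m$ insertions remain in the block; these are summed against the fixed vector determined by $F^{(m)}$, so that a Cauchy--Schwarz over their placements, combined with the $SU(2)$/permutation structure of Lemma~\ref{lem:permutations} and the support-relocation of Lemma~\ref{lem:braidingrelation} (which ensure the placements enter coherently and with equal-magnitude weights), yields the desired factor linear in $d$ and completes the bound $O\big(d/n^{m/2}\big)$. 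The principal obstacle is precisely this last step: cleanly disentangling the $d$-scaling of the $F$-region from the $n^{-m/2}$-scaling of the bulk, for which the refined Jordan-block bound of Theorem~\ref{thm:mainjordanmagnon}, rather than the crude operator-norm bound, is indispensable.
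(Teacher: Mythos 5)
Your skeleton---conjugating to assume $r<s$, the Cauchy--Schwarz split into an $F$-block factor and a bulk factor, the bound $\|E_{r,s}^{n-d}\|=O((n-d)^{r+1})$ from Theorem~\ref{thm:mainjordanmagnon}, and extracting $n^{-(s-r)/2}$ from the normalization~\eqref{eq:psipdefinitionbasic}---is exactly the paper's proof: it invokes Theorem~\ref{thm:basicboundmatrixelements} with $h^*=r+2$ and $h_1^*=h_2^*=2$ to get $|\bra{\Psi_r}(F\otimes I^{\otimes(n-d)})\ket{\Psi_s}|\le 16\,d\,(n-d)^{r+1}$, and then divides by $\|\Psi_r\|\,\|\Psi_s\|=\Theta(\sqrt{r!s!}\,n^{(r+s)/2+1})$. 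Where you depart is the $F$-block factor, and your objection there is in fact a correct diagnosis of a defect in the paper's own argument: Theorem~\ref{thm:mainjordanmagnon} gives only $h^*(E_{r,r})\le r+2$, not $2$, and blocks of size $s+2$ genuinely occur in $E_{s,s}$---the open-chain norms $\|\Psi^k_s\|^2=\Omega_s(k^{s+1})$ (for $1\ll k\ll n$) are matrix elements of $E_{s,s}^k$ against fixed boundary vectors, while Lemma~\ref{lem:normscalingtransferop} caps such matrix elements at $O(k^{h^*-1})$; this is also how Lemma~\ref{lem:lowerboundtraceoverlap} treats $E_{s,s}$. So the honest output of Lemma~\ref{lem:efboundsimplified} is $d^{(r+s+2)/2}(n-d)^{r+1}$, the spurious power of $d$ you describe.

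However, your proposed repair fails at precisely the step you flag as the principal obstacle---and it must fail, because the statement is false for $|s-r|\ge 3$ once $d$ grows. Your magnetization decomposition is the right lens: in the dominant sector (bra occupancy zero in the block, $m:=s-r$ net flips of the ket inside the block), the weight multiplying $\bra{1}^{\otimes d}F\ket{A}$ is to leading order the \emph{same} constant $\Theta_{r,s}(n^{r+1})$ for every placement $A\subset[d]$, $|A|=m$, so the supremum over unit-norm $F$ of $\big|\sum_{A}\bra{1}^{\otimes d}F\ket{A}\big|$ equals $\binom{d}{m}^{1/2}=\Theta(d^{m/2})$: the ``coherent, equal-magnitude weights'' you invoke are exactly the configuration saturating Cauchy--Schwarz over placements, not a source of cancellation, and nothing linear in $d$ results. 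Concretely, for $r=0$, $s=3$, take $F=\ket{1}^{\otimes d}\bra{v}$ with $\ket{v}=\binom{d}{3}^{-1/2}\sum_{A\subset[d],\,|A|=3}\ket{A}$ (here $\ket{A}$ denotes the basis state with zeros exactly on $A$); then
\begin{align}
\bra{\Psi_0}(F\otimes I^{\otimes(n-d)})\ket{\Psi_3}
=6\binom{d}{3}^{-1/2}\Big[(n-d)\binom{d}{3}+O(d^4)\Big]=\Theta\big(n\,d^{3/2}\big)\ ,
\end{align}
and dividing by $\|\Psi_0\|\,\|\Psi_3\|=\Theta(n^{5/2})$ gives $|\bra{\psi_0}(F\otimes I^{\otimes(n-d)})\ket{\psi_3}|=\Theta\big((d/n)^{3/2}\big)$, which exceeds the claimed $O(d/n^{3/2})$ by $\Theta(\sqrt{d})$. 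What your decomposition actually proves is $O_{r,s}\big((d/n)^{|s-r|/2}\big)$, matching the theorem only for $|s-r|\le 2$. Since $(d/n)^{m/2}\le d/n$ for $m\ge 2$, this corrected bound still supports Theorem~\ref{thm:magnoncodemain} (whose code uses magnetizations differing by at least $2$, and whose $\gamma$ is dominated by the diagonal term), modulo tracking the $r,s$-dependence of the constants---but neither your argument nor the paper's establishes the theorem as stated.
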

\begin{proof}
We can take the complex conjugate, effectively interchanging $r$ and $s$. Thus we can without loss of generality assume that $r<s$.  
Recall that $\ket{\Psi_r}$ and  $\ket{\Psi_s}$
can be represented as MPS using bond dimensions
$D_r=2(r+1)$, $D_s=2(s+1)$
such that the associated transfer operators $E_{r}$, $E_s$ and the combined transfer operator~$E_{r,s}$ all have spectrum~$\{1,\omega,\omega\}$ and 
Jordan blocks bounded by $2$,~$2$, and $\min\{r,s\}+2=r+2$, respectively; see Theorem~\ref{thm:mainjordanmagnon}.  Applying Theorem~\ref{thm:basicboundmatrixelements}
  (with $h_1^*=2$, $h_2^*=2$, $h^*=r+2$) 
we get
\begin{align}
|\bra{\Psi_r}(F\otimes I^{\otimes n-d})\ket{\Psi_s}|&\leq 16\cdot 
d(n-d)^{r+1}=O(d\cdot n^{r+1})\ .
\end{align}
Inserting the normalization~\eqref{eq:psipdefinitionbasic}
\begin{align}
    \|\Psi_s\|^2&=\frac{n(n-2)!s!}{(n-2-s)!}  \geq s!\cdot n^{s+1}(1-O(s^2/n))\mylabel{eq:asymptoticscalingpsinorms}
\end{align}
gives
\begin{align}
|\bra{\psi_r}(F\otimes I^{\otimes n-d})\ket{\psi_s}|&=\frac{d n^{r+1}}{(r!s!)^{1/2}n^{(r+s)/2+1}}\cdot (1+O(s^2/n))=O(d\cdot n^{-(s-r)/2})
\end{align}as claimed. 
\end{proof}
We also need estimates for the difference of expectation values of magnon states. Let us first show that the reduced $d$-local operators are all essentially the same.

\begin{lemma}\mylabel{lem:lowerboundtraceoverlap}
Let $\{\ket{\psi_s}\}_{s=0}^{n-2}$ be the normalized magnon-states defined in equation~\eqref{eq:psipdefinitionbasic}. Then
\begin{align}
\bra{1}^{\otimes d}\left(\tr_{n-d}\proj{\psi_s}\right)\ket{1}^{\otimes d}\geq 1-O(ds/n)\ 
\end{align}
for all $s\in [n]$. 
\end{lemma}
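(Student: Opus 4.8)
The plan is to read $\bra{1}^{\otimes d}\left(\tr_{n-d}\proj{\psi_s}\right)\ket{1}^{\otimes d}$ as the probability that a computational-basis measurement of the first $d$ sites of $\ket{\psi_s}$ returns the all-ones string, and then to control the complementary probability by a union bound over single-site flips. Writing $P_0^{(m)}$ for the projector $\proj{0}$ applied to site $m$, I would begin from
\begin{align}
\bra{1}^{\otimes d}\left(\tr_{n-d}\proj{\psi_s}\right)\ket{1}^{\otimes d}
&=\bra{\psi_s}\Big(\proj{1}^{\otimes d}\otimes\identityoperator_{[n]\backslash[d]}\Big)\ket{\psi_s}
=\bra{\psi_s}\prod_{m=1}^d\big(\identityoperator-P_0^{(m)}\big)\ket{\psi_s}\ .
\end{align}
Since the projectors $\{P_0^{(m)}\}_{m=1}^d$ act on distinct sites they mutually commute, so the operator inequality $\identityoperator-\prod_{m=1}^d(\identityoperator-P_0^{(m)})\le\sum_{m=1}^d P_0^{(m)}$ holds (checked pointwise in the common eigenbasis). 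This gives the clean bound $\bra{1}^{\otimes d}(\tr_{n-d}\proj{\psi_s})\ket{1}^{\otimes d}\ge 1-\sum_{m=1}^d\bra{\psi_s}P_0^{(m)}\ket{\psi_s}$ and reduces everything to the single-site occupation numbers.

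The crux of the argument—and the one place where the non-permutation-invariance of the magnon states must be handled with care—is to show that $\bra{\psi_s}P_0^{(m)}\ket{\psi_s}$ does not depend on $m$. Here I would invoke Lemma~\ref{lem:permutations}: on the subspace $\cH_{n/2-1}$ every nearest-neighbor transposition $\tau$ acts as a diagonal unitary $U_\tau$ built from copies of $Z=\diag(1,\omega)$ and $Z^\dagger$. Because $Z\ket{0}=\ket{0}$, we have $Z^\dagger P_0 Z=P_0$, so $U_\tau$ commutes with every $P_0^{(m)}$. Combining this with the genuine-permutation identity $\tau^\dagger P_0^{(m)}\tau=P_0^{(\tau^{-1}(m))}$ and the invariance $\tau\ket{\psi_s}=U_\tau\ket{\psi_s}$ yields
\begin{align}
\bra{\psi_s}P_0^{(\tau^{-1}(m))}\ket{\psi_s}
=\bra{\psi_s}U_\tau^\dagger P_0^{(m)}U_\tau\ket{\psi_s}
=\bra{\psi_s}P_0^{(m)}\ket{\psi_s}\ .
\end{align}
As transpositions generate $S_n$, the occupation number is identical at every site.

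Finally I would pin down the common value using the angular momentum of $\ket{\psi_s}$. Since $\ket{\Psi_s}=S_-^s\ket{\Psi}$ sits in the sector $S_3=n/2-1-s$, and $S_3=\tfrac12\big(n-2\sum_{m}P_0^{(m)}\big)$, the total number operator $\sum_{m=1}^n P_0^{(m)}$ acts on $\ket{\psi_s}$ as the scalar $s+1$. Hence $\sum_{m=1}^n\bra{\psi_s}P_0^{(m)}\ket{\psi_s}=s+1$, and by the site-independence established above each summand equals $(s+1)/n$. Therefore $\sum_{m=1}^d\bra{\psi_s}P_0^{(m)}\ket{\psi_s}=d(s+1)/n$, and the union bound delivers
\begin{align}
\bra{1}^{\otimes d}\left(\tr_{n-d}\proj{\psi_s}\right)\ket{1}^{\otimes d}\ge 1-\frac{d(s+1)}{n}=1-O(ds/n)\ ,
\end{align}
which is the claim. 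I expect the only delicate point to be the rigorous passage from the diagonal-unitary action on $\cH_{n/2-1}$ (Lemma~\ref{lem:permutations}) to the exact site-independence of the marginals; the union bound and the $\mathfrak{su}(2)$ bookkeeping are both elementary.
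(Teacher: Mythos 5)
Your proof is correct, but it takes a genuinely different route from the paper's. The paper proves the lemma with transfer-matrix machinery: it splits $\ket{\Psi_s^n}=\omega^d\ket{1}^{\otimes d}\otimes\ket{\Psi_s^{n-d}}+\ket{\Phi}$ with $(\proj{1}^{\otimes d}\otimes\identityoperator^{\otimes n-d})\ket{\Phi}=0$, so that the quantity of interest becomes the norm ratio $\|\Psi_s^{n-d}\|^2/\|\Psi_s^n\|^2$, and then estimates this ratio as $(1-d/n)^{\ell-1}(1+O((n-d)^{-1}))$ using the polynomial scaling of matrix elements of powers of $E_{s,s}$ (Lemma~\ref{lem:normscalingtransferop}\eqref{it:thirdjordan}) together with the Jordan-block bound $h^*\leq s+2$ from Theorem~\ref{thm:mainjordanmagnon}. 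You instead avoid the transfer operator entirely: the union bound $\identityoperator-\prod_{m=1}^d(\identityoperator-P_0^{(m)})\le\sum_{m=1}^d P_0^{(m)}$ is valid since the projectors commute; the site-independence argument via Lemma~\ref{lem:permutations} is sound (the key point, which you correctly flag, is that $Z^\dagger P_0 Z=P_0$, so the diagonal unitaries implementing transpositions on $\cH_{n/2-1}$ leave each $P_0^{(m)}$ invariant, and nearest-neighbor transpositions generate $S_n$); and the $\mathfrak{su}(2)$ bookkeeping is exact, since $\ket{\psi_s}$ is in fact an eigenvector of $\sum_{m=1}^n P_0^{(m)}$ with eigenvalue $s+1$, giving the explicit bound $1-d(s+1)/n$, which is $1-O(ds/n)$ on the stated range $s\in[n]$. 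Two remarks: first, your argument is sharper than the paper's (an explicit constant rather than one inherited from Jordan-block asymptotics) and more elementary, whereas the paper's fits the MPS framework that the rest of Section~\ref{sec:qedcintegrable} is built on; second, for the site-independence step you do not even need Lemma~\ref{lem:permutations} — $\ket{\psi_s}$ is an eigenvector of the cyclic shift with eigenvalue $\omega$ (fixed momentum $2\pi/n$), so conjugating $P_0^{(m)}$ by powers of the shift already shows all single-site marginals coincide, which shortcuts the delicate point you were worried about.
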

\begin{proof}
Let us define
\begin{align}
\ket{\Psi^k_s}&=S_-^s\left(\sum_{j=1}^k \omega^j \sigma_j^-\ket{1}^{\otimes k}\right)\  .
\end{align} 
Observe that for $d<n$
\begin{align}
    \ket{\Psi^n_0}&=
    \ket{\Psi^d_0}\otimes\ket{1}^{\otimes n-d}+
    \omega^d \ket{1}^{\otimes d}\otimes \ket{\Psi^{n-d}_0}\ .
\end{align}
Writing $S_-=S_-^A+S_-^B$
with $S_-^A=\sum_{j=1}^d \sigma_j^-$ and $S_-^B=\sum_{j=d+1}^n \sigma_j^-$
we get
\begin{align}
 \ket{\Psi_s^n}&=\sum_{\ell=0}^s \binom{s}{\ell} (S_-^A)^\ell (S_-^B)^{s-\ell}
 \ket{\Psi_0^n}\\
 &=\omega^d \ket{1}^{\otimes d}\otimes (S_-^B)^s\ket{\Psi_0^{n-d}}+\ket{\Phi}\\
 &=\omega^d \ket{1}^{\otimes d}\otimes\ket{\Psi_s^{n-d}}+\ket{\Phi}\ ,
\end{align}
for a vector $\ket{\Phi}\in(\mathbb{C}^\physical)^{\otimes n}$
 satisfying $(\proj{1}^{\otimes d}\otimes \identityoperator^{\otimes n-d})\ket{\Phi}=0$.
In particular, we have 
\begin{align}
(\proj{1}^{\otimes d}\otimes \identityoperator^{\otimes (n-d)})
\ket{\Psi^n_s}&=\omega^d \ket{1}^{\otimes d}\ket{\Psi^{n-d}_s}\ .    
\end{align}
Tracing out the $(n-d)$ qubits, it follows that 
\begin{align}
    \bra{1}^{\otimes d}\tr_{n-d}\left(\proj{\Psi^n_s}\right)\ket{1}^{\otimes d}
    &=\|\Psi_s^{n-d}\|^2\ . 
    \end{align}
  Rewriting the term using the normalized vector $\psi^n_s = \Psi^n_s/\| \Psi^n_s\|$, we get
    \begin{align}
     \bra{1}^{\otimes d}\left(\tr_{n-d}\proj{\psi^n_s}\right)\ket{1}^{\otimes d}
    &=\frac{\|\Psi_s^{n-d}\|^2}{\|\Psi_s^{n}\|^2}\ .
    \end{align}
    Observe that the norm $\|\Psi^k_s\|^2$
    is a matrix element of the operator $E_{s,s}^k$.     With Lemma~\ref{lem:normscalingtransferop}~\eqref{it:thirdjordan}    we obtain
    \begin{align}
         \bra{1}^{\otimes d}\left(\tr_{n-d}\proj{\psi^n_s}\right)\ket{1}^{\otimes d}
    &=\frac{c\cdot (n-d)^{\ell-1}(1+O((n-d)^{-1}))}{c\cdot n^{\ell-1}(1+O(n^{-1}))}\\
    &=(1-d/n)^{\ell-1}(1+O((n-d)^{-1}))\\
    &\geq (1-d(\ell-1)/n))\cdot (1+O((n-d)^{-1}))\\
    &\geq 1-\left(\frac{d(\ell-1)}{n}+O(1/n)\right)\ .
    \end{align}
    for some $\ell\in \{1,\ldots,h^*\}$, where $h^*$ is the size of the largest Jordan block of the transfer operator~$E_{s,s}$. 
    Since $h^*\leq s+2$ by Theorem~\ref{thm:mainjordanmagnon}, the claim follows.

    \end{proof}

\begin{theorem}\mylabel{thm:matrixelementsdiagonal}
Let $F\in\cB((\mathbb{C}^2)^{\otimes d})$ be such that $\|F\|\leq 1$. 
Fix some $s_0\leq n-2$. 
Then 
\begin{align}
\big|\bra{\psi_{s}}(F\otimes I^{\otimes n-d})\ket{\psi_{s}}-\bra{\psi_{r}}(F\otimes I^{\otimes n-d})\ket{\psi_{r}}\big|&=O(\sqrt{\textfrac{ds_0}{n}})\qquad\textrm{ for all }\qquad r,s\leq s_0\ .
\end{align}
\end{theorem}

\begin{proof}
For any $F\in \cB((\mathbb{C}^2)^{\otimes d})$ with $\|F\|\leq 1$ we have 
\begin{align}
\big|\bra{\psi_{s}}(F \otimes I^{\otimes n-d} )\ket{\psi_{s}}-\bra{1}^{\otimes d} F\ket{1}^{\otimes d}\big| &\leq \| \tr _{n-d} \ket{\psi_s} \bra{\psi_s} - \ket 1 \bra 1 ^{\otimes n} \| \\
&\leq \sqrt{1-\bra{1}^{\otimes d} \left(\tr_{n-d} \ket{\psi_s} \bra{\psi_s}\right) \ket{1}^{\otimes d}}\ ,
\end{align}
using the Fuchs - van de Graaf inequality
$\frac{1}{2}\|\rho-\proj{\varphi}\|_1\leq \sqrt{1-\bra{\varphi}\rho\ket{\varphi}}$~\cite{fuchsgraaf}.
With Lemma \ref{lem:lowerboundtraceoverlap} we get 
\begin{align}
\big|\bra{\psi_{s}}(F \otimes I^{\otimes n-d} )\ket{\psi_{s}}-\bra{1}^{\otimes d} F\ket{1}^{\otimes d}\big| \leq O(\sqrt{\textfrac{ds_0}{n}})\ .
\end{align}
Using the triangle inequality, the claim follows.
\end{proof}

\subsection{The parameters of the magnon code\label{sec:parametersmagnon}}

Our main result is the following:

\begin{theorem}[Parameters of the magnon-code]\label{thm:magnoncodemain}
Let $\nu \in (0,1)$ and $\lambda,\kappa>0$ be such that
\begin{align}
6\kappa+\lambda<\nu\ .
\end{align}
Then there is a subspace~$\cC$ spanned by descendant states $\{S_-^r\ket{\Psi}\}_r$ with magnetization~$r$ pairwise differing by at least~$2$ such that  $\cC$ is an $(\epsilon,\delta)[[n,k,d]]$-AQEDC with parameters
\begin{align}
k&=\kappa \log_2 n\ ,\\
d&=n^{1-\nu}\ ,\\
\epsilon &=\Theta(n^{-(\nu-(6\kappa+\lambda))})\ ,\\
\delta&=n^{-\lambda}\ .
\end{align}
\end{theorem}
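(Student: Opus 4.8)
The plan is to verify the sufficient conditions of Corollary~\ref{cor:errordectionapproximate} for a carefully chosen subspace $\cC$ spanned by normalized magnon states $\{\ket{\psi_{r_1}},\ldots,\ket{\psi_{r_K}}\}$ whose magnetizations $r_1<r_2<\cdots<r_K$ differ pairwise by at least~$2$. Setting $K=2^k=n^\kappa$, I would choose the $r_j$ to all lie below some cutoff $s_0$, so that the two key technical inputs---Theorem~\ref{thm:upperboundoverlap} for the off-diagonal matrix elements and Theorem~\ref{thm:matrixelementsdiagonal} for the difference of diagonal elements---can be applied uniformly. Recall that since the $\ket{\psi_{r_j}}$ have distinct magnetizations they are automatically orthonormal, so they form a valid orthonormal basis of $\cC$, and by Lemma~\ref{lem:braidingrelation} it suffices to bound matrix elements of operators supported on the \emph{first} $d$ sites, reducing arbitrary $d$-local errors to geometrically local ones.

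The core estimate is to produce a single $\gamma$ with
\begin{align}
\big|\bra{\psi_{r_a}} F\ket{\psi_{r_b}}-\delta_{a,b}\bra{\psi_{r_1}} F\ket{\psi_{r_1}}\big|\leq \gamma\cdot \|F\|
\end{align}
for every $d$-local $F$ and all $a,b$. For $a\neq b$, since $|r_a-r_b|\geq 2$, Theorem~\ref{thm:upperboundoverlap} gives $|\bra{\psi_{r_a}}F\ket{\psi_{r_b}}|=O(d/n^{|r_a-r_b|/2})=O(d/n)$, which with $d=n^{1-\nu}$ is $O(n^{-\nu})$. For $a=b$, the diagonal difference $|\bra{\psi_{r_a}}F\ket{\psi_{r_a}}-\bra{\psi_{r_1}}F\ket{\psi_{r_1}}|$ is controlled by Theorem~\ref{thm:matrixelementsdiagonal} as $O(\sqrt{ds_0/n})$. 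The tension here is that a larger $s_0$ is needed to fit $K=n^\kappa$ distinct magnetizations (one needs $s_0\gtrsim 2K=2n^\kappa$), yet a larger $s_0$ worsens the diagonal bound. With $s_0=\Theta(n^\kappa)$ and $d=n^{1-\nu}$ the diagonal bound becomes $O(\sqrt{n^{1-\nu}\cdot n^{\kappa}/n})=O(n^{-(\nu-\kappa)/2})$. Taking the larger of the two contributions, I would set $\gamma=\Theta(n^{-(\nu-\kappa)/2})$, so that $\gamma^2=\Theta(n^{-(\nu-\kappa)})$.

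With $\gamma$ in hand, I plug into Corollary~\ref{cor:errordectionapproximate}: with $\physical=2$ and $K=2^k=n^\kappa$, the corollary requires $\delta>K^5\gamma^2=\Theta(n^{5\kappa}\cdot n^{-(\nu-\kappa)})=\Theta(n^{-(\nu-6\kappa)})$ and yields $\epsilon=K^5\gamma^2\delta^{-1}=\Theta(n^{-(\nu-6\kappa)}\delta^{-1})$. Choosing $\delta=n^{-\lambda}$, the constraint $\delta>K^5\gamma^2$ becomes $n^{-\lambda}>\Theta(n^{-(\nu-6\kappa)})$, i.e.\ $\lambda<\nu-6\kappa$, which is exactly the hypothesis $6\kappa+\lambda<\nu$. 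The resulting error parameter is $\epsilon=\Theta(n^{-(\nu-6\kappa)}\cdot n^{\lambda})=\Theta(n^{-(\nu-(6\kappa+\lambda))})$, matching the claimed values $k=\kappa\log_2 n$, $d=n^{1-\nu}$, $\epsilon=\Theta(n^{-(\nu-(6\kappa+\lambda))})$, $\delta=n^{-\lambda}$.

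The main obstacle I anticipate is bookkeeping the interplay between the three scales $d$, $s_0$, and $K$: one must confirm that a set of $K=n^\kappa$ magnetizations with pairwise gaps $\geq 2$ genuinely fits below a cutoff $s_0=\Theta(n^\kappa)$ (which is $\ll n$, so the states are well-defined and all asymptotic expansions such as~\eqref{eq:asymptoticscalingpsinorms} remain valid), and that this same $s_0$ feeds correctly into Theorem~\ref{thm:matrixelementsdiagonal}. A secondary subtlety is ensuring the factor $d$ from Theorem~\ref{thm:upperboundoverlap} (off-diagonal) is genuinely subdominant to the diagonal contribution $\sqrt{ds_0/n}$; since $\nu\in(0,1)$ and $\kappa>0$ one checks $d/n=O(n^{-\nu})$ decays faster than $n^{-(\nu-\kappa)/2}$, so the diagonal term dominates and sets $\gamma$. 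Once these scale relations are pinned down, the final substitution into Corollary~\ref{cor:errordectionapproximate} is routine.
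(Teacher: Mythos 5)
Your proposal is correct and follows essentially the same route as the paper: the paper takes $\cC=\myspan\{\psi_s \mid s \text{ even},\ s\leq 2n^\kappa\}$ (your cutoff $s_0=\Theta(n^\kappa)$ with pairwise gaps $\geq 2$), invokes Lemma~\ref{lem:braidingrelation} to reduce to geometrically local $F$, bounds off-diagonal elements by $O(d/n)$ via Theorem~\ref{thm:upperboundoverlap} and diagonal differences by $O(\sqrt{dn^{\kappa-1}})$ via Theorem~\ref{thm:matrixelementsdiagonal}, and notes exactly as you do that the diagonal term dominates, giving $\gamma=\Theta(d^{1/2}n^{(\kappa-1)/2})=\Theta(n^{-(\nu-\kappa)/2})$ before substituting into Corollary~\ref{cor:errordectionapproximate}. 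Your parameter bookkeeping ($K^5\gamma^2=\Theta(n^{6\kappa-\nu})$, $\delta=n^{-\lambda}$, $\epsilon=\Theta(n^{-(\nu-(6\kappa+\lambda))})$) matches the paper's verbatim.
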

\begin{proof}
 We claim that the subspace
\begin{align}
\cC=\myspan \{\psi_s\ |\ s\textrm{ even  and } s\leq 2n^\kappa\}\ 
\end{align}
spanned by a subset of magnon-states has the claimed property. Clearly, $\dim \cC=n^\kappa=2^{k}$. 

Let $F$ be an arbitrary $d$-local operator on $(\mathbb{C}^2)^{\otimes n}$ of unit norm.
According to Lemma~\ref{lem:braidingrelation}, the following considerations 
concerning matrix elements $\bra{\psi_q}F\ket{\psi_p}$ of  magnon  states do  not 
depend on the location of the support of $F$ as we are interested in the supremum over~$d$-local operators~$F$ and unitary conjugation does not change the locality or the norm. Thus, we can assume that $F=\tilde{F}\otimes I^{\otimes n-d}$. That is, we have
\begin{align}
\sup_{\substack{F \textrm{ $d$-local}\\
\|F\|\leq 1} } |\bra{\psi_r} F\ket{\psi_s}| &=
\sup_{\substack{\tilde{F}\in\cB((\mathbb{C}^2)^{\otimes d})
\\
\|\tilde{F}\|\leq 1}}
 |\bra{\psi_r} (\tilde{F}\otimes I^{\otimes n-d})\ket{\psi_s}|
=O(\textfrac{d}{n^{|r-s|/2}})\qquad\textrm{ for }r,s\leq 2n^\kappa\ .
\end{align}
by Theorem~\ref{thm:upperboundoverlap}. 
In particular, if $|r-s|\geq 2$, then this is bounded by $O(d/n)$.  Similarly, 
\begin{align}
\sup_{\substack{F \textrm{ $d$-local}\\
\|F\|\leq 1}}
\big|\bra{\psi_{s}}F\ket{\psi_{s}}-\bra{\psi_{r}}F\ket{\psi_{r}}\big|
&=\sup_{\substack{\tilde{F}\in\cB((\mathbb{C}^2)^{\otimes d})\\
\|\tilde{F}\|\leq 1}}
\big|\bra{\psi_{s}}(\tilde{F}\otimes I^{\otimes(n-d)})\ket{\psi_{s}}-\bra{\psi_{r}}(\tilde{F}\otimes I^{\otimes (n-d)})\ket{\psi_{r}}\big|\\
&=O(\sqrt{\textfrac{dn^{\kappa}}{n}})=O(\sqrt{d n^{\kappa-1}})\qquad\textrm{ for }r,s\leq 2n^\kappa\ .
\end{align}
by Theorem~\ref{thm:matrixelementsdiagonal}.  Since $d/n=O(\sqrt{d n^{\kappa-1}})$, we conclude that
for all $d$-local operators $F$ of unit norm, we have 
\begin{align}
\big|\bra{\psi_r}F\ket{\psi_s}-\delta_{r,s}\bra{\psi_0}F\ket{\psi_0}|=O(d^{1/2}n^{(\kappa-1)/2})\qquad\textrm{ for all }r,s\textrm{ even  with } r,s\leq 2n^\kappa\ .
\end{align}
The sufficient conditions of Corollary~\ref{cor:errordectionapproximate} for approximate error-detection,
applied with $\gamma=\Theta(d^{1/2}n^{(\kappa-1)/2})$, thus imply that~$\cC$ is an $(\Theta(2^{5k}d n^{\kappa-1})/\delta,\delta)[[n,k,d]]$-AQEDC for any $\delta$ satisfying
\begin{align}
\delta>\Theta(2^{5k} dn^{\kappa-1})=\Theta(n^{6\kappa-\nu})\ .
\end{align}
for the choice $d=n^{1-\nu}$. With $\delta=n^{-\lambda}$, the claim follows. 
\end{proof}

\newpage

\myacknowledgements
We thank Ahmed Almheiri, Fernando Brand\~ao, Elizabeth Crosson, Spiros Michalakis, and  John Preskill  for discussions. We thank the Kavli Institute for Theoretical Physics for their hospitality as part of a follow-on program, as well as the coordinators of the  QINFO17 program, where this work was initiated; this research was supported in part by the National Science Foundation under Grant No. PHY-1748958.

RK  acknowledges support by the Technical University of Munich -- Institute of Advanced Study, funded by the German Excellence Initiative and the European Union Seventh Framework Programme under grant agreement no.~291763 and by the German Federal Ministry of Education through the funding program Photonics Research Germany, contract no.~13N14776~(QCDA-QuantERA). BS acknowledges the support from the Simons Foundation through It from Qubit collaboration; this work was supported by a grant from the Simons Foundation/SFARI (385612, JPP). ET acknowledges the support of the Natural Sciences and Engineering Research Council of Canada (NSERC), PGSD3-502528-2017. BS and ET also acknowledge funding provided by the Institute for Quantum Information and Matter, an NSF Physics Frontiers Center (NSF Grant No. PHY- 1733907).
\newpage

\appendix
\section{Canonical form of excitation ansatz states\label{app:canonicalexcitation}}

For the reader's convenience, we include here a proof of the Lemma~\ref{lem:canonicalformexcitationansatz}  following~\cite{haegeman2014geometry}. 

\begin{replemma}{lem:canonicalformexcitationansatz}
Let $\ket{\Phi_p(B;A)}$ be an injective excitation ansatz state and assume that $A$ is normalized such that the transfer operator has spectral radius~$1$. Let $\ell$ and $r$ be the corresponding left- and right- eigenvectors corresponding to eigenvalue $1$. Assume $p\neq 0$. Then there exists a tensor $\tilde{B}$
such that $\ket{\Phi_p(B;A)}=\ket{\Phi_p(\tilde{B};A)}$, and such that
\begin{align}
\bbra{\ell} E_{\tilde{B}(p)}&=0\qquad\textrm{ and }\qquad \bbra{\ell} E_{\overline{\tilde{B}(p)}}=0\ .\tag{\eqref{eq:gaugeconditionrewrittenexcit}}
\end{align}
\end{replemma}
\begin{proof}
We note that the equations~\eqref{eq:gaugeconditionrewrittenexcit} can be written as
\begin{align}
      \sum_{i\in [\physical]} A_i^{\dagger}\ell \tilde{B}_i=0\ ,\qquad\text{and}\qquad \sum_{i\in [\physical]}\tilde{B}_i^{\dagger}\ell A_i =0\ .\mylabel{eq:toprovenormalformx}
\end{align}
Diagrammatically, they take the form
\begin{align}
\raisebox{-.38\height} {\includegraphics[scale=0.05]{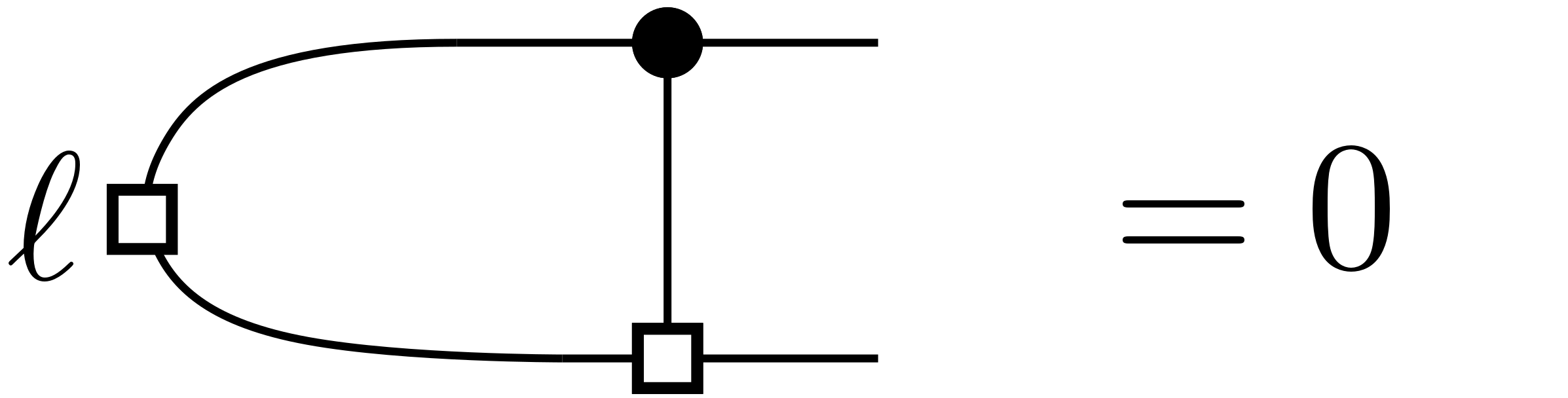}}\mylabel{eq:leftgauge},\\
\raisebox{-.38\height} {\includegraphics[scale=0.05]{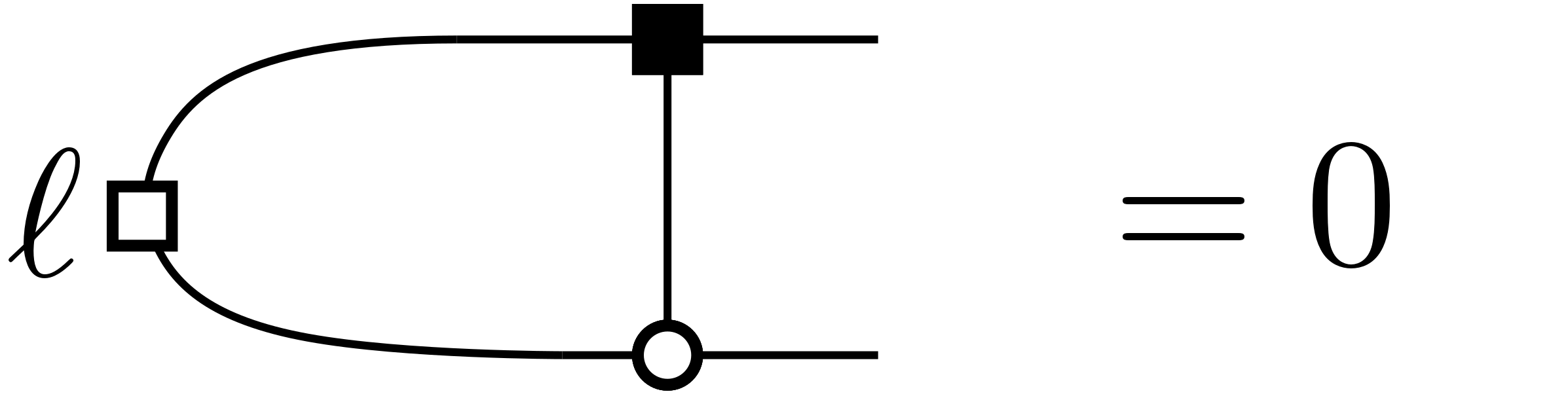}} ,\end{align}
where square and round boxes correspond to $\tilde{B}$ and $A$, respectively.

Let the original MPS tensors be~$A=\{A_j\}_{j=1}^{\physical}$ 
and $B=\{B_j\}_{j=1}^{\physical} \subset \cB(\mathbb{C}^\bond\otimes\mathbb{C}^\bond)$. Suppose $X\in \cB(\mathbb{C}^{\bond})$ is invertible.
Define the MPS tensor $C=\{\tilde{B}_j\}_{j=1}^{\physical} \subset\cB(\mathbb{C}^\bond)$ by
\begin{align}
C_j&= A_j X-e^{-ip}XA_j\qquad\textrm{ for }j=1,\ldots,\physical\ .
\end{align}
that is,
\begin{align}
\raisebox{-.42\height} {\includegraphics[scale=0.12]{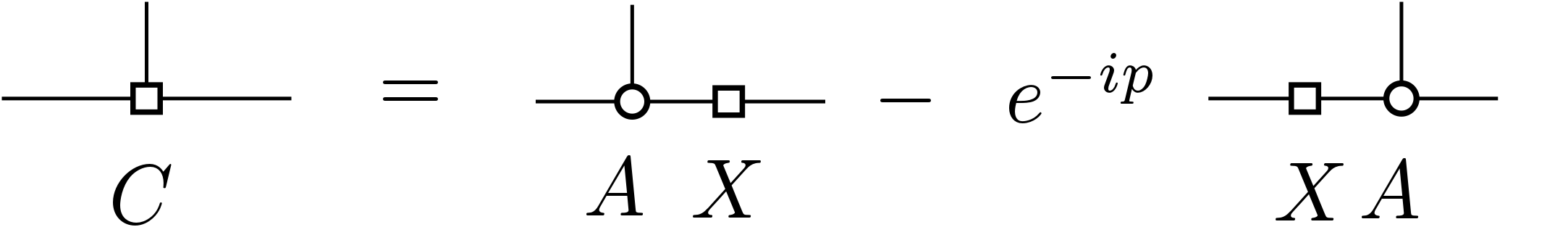}}.\mylabel{eq:gaugefreedom}
\end{align}
It is then easy to check that 
\begin{align}
\ket{\Phi_p(B;A)}&=
\ket{\Phi_p(B+C;A)}\ ,
\end{align}
where $B+C$ is the MPS tensor
obtained by setting $(B+C)_j=B_j+C_j$ for each $j=1,\ldots,\physical$. Indeed, the difference of these two vectors is
\begin{align}
&\qquad\ket{\Phi_p(B+C;A)}-
\ket{\Phi_p(B;A)}\\
&=\sum_{i_1,\ldots,i_n\in [\physical]}\sum_{k=1}^n e^{ipk}\tr(A_{i_1}\cdots A_{i_{k-1}}C_{i_k}A_{i_{k+1}}\cdots A_{i_n})\ket{i_1\cdots i_n}\\
&=\sum_{i_1,\ldots,i_n\in [\physical]}
\left(
\sum_{k=1}^n e^{ipk}\tr\big[A_{i_1}\cdots A_{i_{k-1}}(A_{i_k}X-e^{-ip}XA_{i_{k}})A_{i_{k+1}}\cdots A_{i_n}\big]
\right)\ket{i_1\cdots i_n}\\
&=0\ ,
\end{align}
since the terms in the square brackets vanish because of the cyclicity of the trace (alternatively, this  can be seen by substituting each square box (corresponding to $B$) in Figure~\ref{fig:excitationansatz}
by a formal linear combination of a square box ($B$) and diagram~\eqref{eq:gaugefreedom}).

Observe that the second equation in~\eqref{eq:toprovenormalformx} can be obtained from the first by taking the adjoint since $\ell$ is a selfadjoint operator.  It thus suffices to show that
there is an MPS tensor~$\tilde{B}$ with the desired property 
$\ket{\Phi_p(\tilde{B};A)}=\ket{\Phi_p(B;A)}$ such that
\begin{align}
    \sum_{i\in [\physical]} A_i^{\dagger}\ell \tilde{B}_i&=0\ .\mylabel{eq:tosatisfyequationx}
\end{align}
It turns out that setting $\tilde{B}=B+C$ for an appropriate choice of~$X$ (and thus~$C$) suffices. 
equation~\eqref{eq:tosatisfyequationx} then amounts to the identity
\begin{align}
        \sum_{j\in [\physical]} A_j^{\dagger}\ell 
    (B_j+A_j X-e^{-ip} XA_j)    
        &=0\ , \mylabel{eq:tosatisfyequationxyz}
\end{align}
or diagrammatically,
\begin{align}
\raisebox{-.42\height} {\includegraphics[scale=0.1]{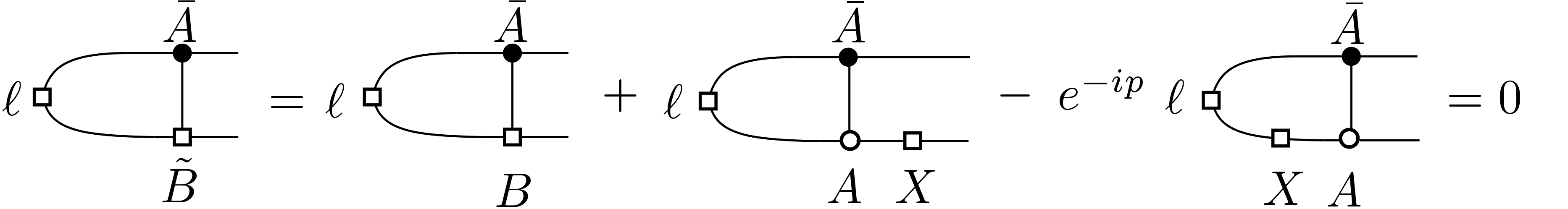}}. 
\end{align}
Because $\ell$ is the unique eigenvector of $\cE^\dagger(\rho)=\sum_{j \in [\physical]}  A_j^{\dagger}\rho A_j$ to eigenvalue~$1$, equation~\eqref{eq:tosatisfyequationxyz} simplifies to
\begin{align}
    \sum_{j\in [\physical]} A_j^{\dagger}\ell B_j+ \ell X-e^{-ip}\sum_{j\in [\physical]} A_j^{\dagger} \ell X A_j=0\ ,\mylabel{eq:tosatisfyxzyb}
\end{align}
or 
\begin{align}
\raisebox{-.42\height} {\includegraphics[scale=0.1]{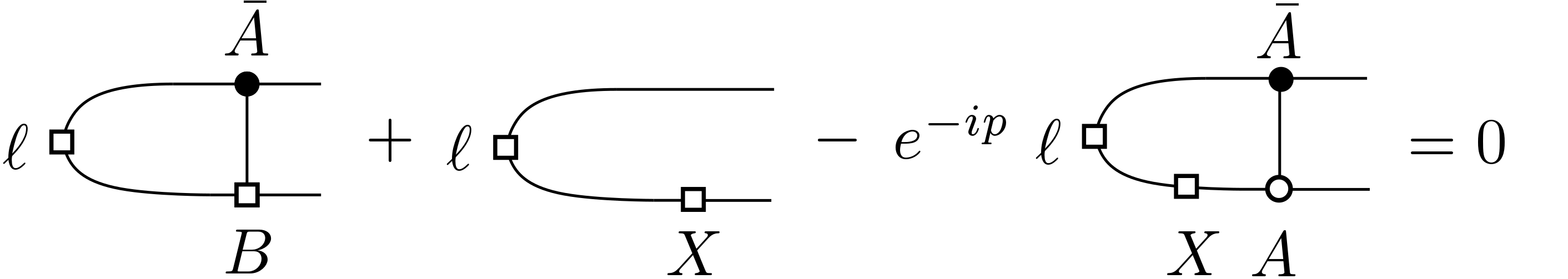}}.
\end{align}
Since $\ell$ is  full rank, we may substitute $X=\ell^{-1}Y$. Then~\eqref{eq:tosatisfyequationxyz} is satisfied if
\begin{align}
\raisebox{-.42\height} {\includegraphics[scale=0.1]{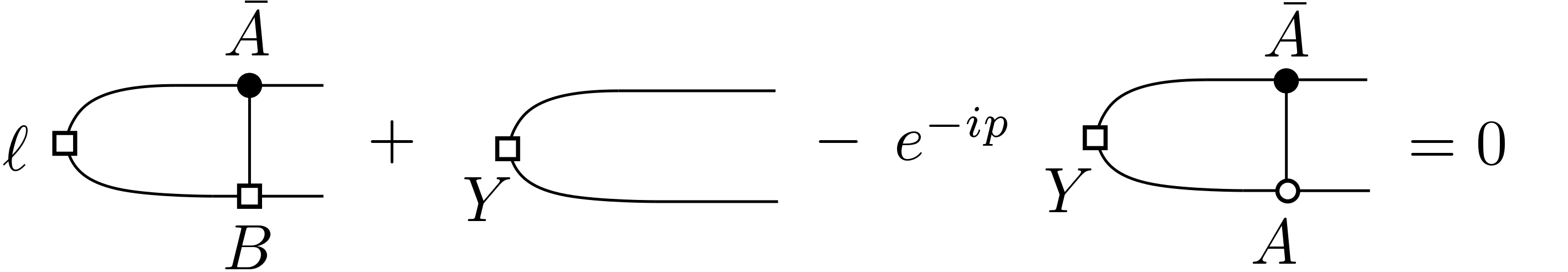}},
\end{align}
or
\begin{align}
    \sum_{j\in [\physical]} A_j^{\dagger}\ell B_j+(\identityoperator- e^{-ip}\cE)(Y)&=0\ .
\end{align}
Because $1$ is the unique eigenvalue of magnitude~$1$
of $\cE$, the map $(\lambda \identityoperator-e^{-ip}\cE)$ is
invertible under the assumption that $p\neq 0$, and we obtain
the solution
\begin{align}
    X&=\ell^{-1}Y\\
    &=-\ell^{-1}\left(\identityoperator-e^{-ip}\cE\right)^{-1}
    \left(\sum_{j\in [\physical]}A_j^{\dagger}\ell B_j\right)
\end{align}
to equation~\eqref{eq:tosatisfyequationxyz}, proving the claim for $p\neq 0$.

\end{proof}
\newpage

\nocite{*}
\bibliographystyle{IEEEtran}
\bibliography{mybib}

\end{document}